\DeclareMathOperator{\sgn}{sgn}
\newtheorem{theorem}{Theorem}
\newtheorem{lemma}[theorem]{Lemma}
\newtheorem{definition}[theorem]{Definition}
\newdefinition{remark}{Remark}
\newproof{proof}{Proof}
\journal{Journal of Mathematical Economics}
\begin{document}

\begin{frontmatter}

\title{On Optimal Mechanisms in the Two-Item Single-Buyer Unit-Demand Setting}  
\author[a1]{D.~Thirumulanathan\corref{cor1}}
\ead{thirumulanathan@gmail.com}
\author[a2]{Rajesh Sundaresan}
\ead{rajeshs@iisc.ac.in}
\author[a3]{Y Narahari}
\ead{narahari@iisc.ac.in}

\cortext[cor1]{Corresponding author}

\address[a1]{Department of Electrical Communication Engineering, Indian Institute of Science, Bengaluru, India 560012}
\address[a2]{Department of Electrical Communication Engineering, and Robert Bosch Centre for Cyber-Physical Systems, Indian Institute of Science, Bengaluru, India 560012}
\address[a3]{Department of Computer Science and Automation, Indian Institute of Science, Bengaluru, India 560012}

\begin{abstract}
We consider the problem of designing a revenue-optimal mechanism in the two-item, single-buyer, unit-demand setting when the buyer's valuations, $(z_1, z_2)$, are uniformly distributed in an arbitrary rectangle $[c,c+b_1]\times[c,c+b_2]$ in the positive quadrant. We provide a complete and explicit solution for arbitrary nonnegative values of $(c,b_1,b_2)$. We identify five simple structures, each with at most five (possibly stochastic) menu items, and prove that the optimal mechanism has one of the five structures. We also characterize the optimal mechanism as a function of $b_1, b_2$, and $c$. When $c$ is low, the optimal mechanism is a posted price mechanism with an exclusion region; when $c$ is high, it is a posted price mechanism without an exclusion region. Our results are the first to show the existence of optimal mechanisms with no exclusion region, to the best of our knowledge.
\end{abstract}

\begin{keyword}
Game theory \sep Economics \sep Optimal Auctions \sep Stochastic Orders \sep Convex Optimization.
\end{keyword}

\end{frontmatter}

\section{Introduction}
This paper studies the design of revenue-optimal mechanism in the two-item, one-buyer, unit-demand setting. The solution to the problem is well known when the buyer's value is one-dimensional (\citet{Mye81}). The problem however becomes much harder when the buyer's value is multi-dimensional. Though many partial results are available in the literature, finding the general solution remains open in the two-item setting, be it with or without the unit-demand constraint.

In this paper, we consider the problem of optimal mechanism design in the two-item one-buyer unit-demand setting, when the valuations of the buyer are uniformly distributed in arbitrary rectangles in the positive quadrant having their left-bottom corners on the line $z_1=z_2$. Observe that this is a setting that occurs often in practice. As one example, consider a setting where two houses in a locality are sold. The seller is aware of a minimum and a maximum value for each house. Further, the buyer has a unit-demand, i.e., he can buy at most one of the houses, but submits his bids for both the houses. We consider that the buyer's valuations are uniform in the rectangle formed by those intervals. We compute the optimal mechanism for all cases when the minimum value for both the houses is the same. Another example is one where two sports team franchises in a sports league are sold to a potential buyer. The buyer needs at most one franchise, but submits his bids for both franchises.

\subsection{Prior work}

Consider the setting where the buyer is not restricted by the unit-demand constraint. \citet{DDT13,DDT14,DDT15} provided a solution when the buyer's valuation vector $z$ arises from a rich class of distribution functions each of which gives rise to a so-called ``well-formed canonical partition" of the support set of the distribution. The authors of these papers formulate this problem as an optimization problem, identify its dual as a problem of optimal transport, and exploit its solution to obtain a primal solution. \citet{GK14} computed the solution for the multi-item setting, but only when the valuations for each item are uniformly distributed in $[0,1]$. \citet{GK15} also provided closed form solutions in the two-item setting, when the distribution satisfies some sufficient conditions, by using a dual approach similar to \cite{DDT13,DDT15,DDT17}. In a companion paper \cite{TRN17a} (see also \cite{TRN16}), we used the same approach of solving the optimal transport problem as in \cite{DDT15} to obtain the solution when $z\sim\mbox{Unif}[c_1,c_1+b_1]\times[c_2,c_2+b_2]$ for arbitrary nonnegative values of $(c_1,c_2,b_1,b_2)$. The exact solution in the unrestricted setting has largely been computed using the dual approach designed in \cite{DDT15}.

The exact solution in the unit-demand setting, on the other hand, has been computed using various other methods. \citet{Pav11} obtained a solution both in the unrestricted setting and in the restricted setting of unit-demand constraint, when $z\sim\mbox{Unif}[c,c+1]^2$. The above paper used a marginal profit function $V$, whose properties are analogous to the virtual valuation function in \cite{Mye81}, to compute the exact solution. We thus call this method the {\em virtual valuation method}. The function however depends on the region of zero allocation, the exclusion region, and is thus not as straightforward to compute as the virtual valuation function in \cite{Mye81} for the single item case. \citet{Lev11} provided a solution for the unit-demand setting when the distribution is uniform in certain polygons aligned with the co-ordinate axes; the approach involves analyzing the utility function of the optimal mechanism at the edges of the polygon. \citet{KM16} identified the dual when the valuation space is convex and the space of allocations is restricted. They also solved examples when the allocations are restricted to satisfy either the unit-demand constraint or the deterministic constraint. Other than this lone example solved in \cite{KM16}, we are not aware of any work that computes the exact solution in the unit-demand setting using the duality approach.

There are interesting characterization results on optimal mechanisms in the unit-demand setting. \citet{WT14}, \cite{TW17} proved that when the distributions are uniform in any rectangle in the positive quadrant, the optimal mechanism is a {\em menu} with at most five items. However, the exact menus and associated allocations were left open. \citet{HH15} did a reverse mechanism design; they constructed a mechanism and identified conditions under which there exists a virtual valuation thereby establishing that the mechanism is optimal.

There has been some interest in finding approximately optimal solutions when the distribution of the buyer's valuations satisfies certain conditions. See \cite{Bhat10, BCKW10, BCKW15, CD11, CD15, CDW12a, CDW12b, CDW13, CDW16, CZ17, CHK07, CMS15, CM16}, \cite{DW11}, \cite{DW12}, \cite{Yao14} for relevant literature on approximate solutions. In this paper however we shall focus on exact solutions.

\subsection{Our contributions}
Our contributions are as follows:
\begin{enumerate}
\item[(i)] We identify the dual to the problem of optimal auction in the restricted unit-demand setting, using a result in \cite{KM16}\footnote{The dual to the problem of optimal auction was derived independently in the PhD thesis of the first author.}. We then argue that the computation of the dual measure in the unit-demand setting using the approach of optimal transport in \cite{DDT15} is intricate. Specifically, we consider three examples, $z\sim\mbox{Unif}[1.26,2.26]^2$, $z\sim\mbox{Unif}[1.5,2.5]^2$, and $z\sim\mbox{Unif}[0,1]\times[0,1.2]$, and show that the optimal dual variable differs significantly with variation in $c$, thus making it hard to discover the correct dual measure.
\item[(ii)] Motivated by the above, we explore the virtual valuation method in \cite{Pav11} and nontrivially extend this method to compute the exact solution when $z\sim\mbox{Unif}[c,c+b_1]\times[c,c+b_2]$, for arbitrary nonnegative values of $(c,b_1,b_2)$. We establish that the structure of the optimal mechanism falls within a class of five simple structures, each having at most five constant allocation regions. We also make some remarks on the general case $[c_1,c_1+b_1]\times[c_2,c_2+b_2]$.
\item[(iii)] To the best of our knowledge, our results appear to be the first to show the existence of optimal mechanisms with no region of exclusion (see Figures \ref{fig:e-ini} and \ref{fig:e'-ini}). The results in \citet{Arm96a} and \citet{PBBK14} assert that the optimal multi-dimensional mechanisms have a nontrivial exclusion region under some sufficient conditions on the distributions and the utility functions. \citet{Arm96a} assumes strict convexity of the support set, and \citet{PBBK14} assume strict concavity of the utility function in the allocations. Neither of these assumptions holds in our setting.
\end{enumerate}

In the literature, we already have qualitative results on the structure of optimal mechanism for distributions satisfying certain conditions. For instance, Pavlov \cite{Pav10} considered distributions with negative power rate, while Wang and Tang \cite{WT14} considered uniform distributions on arbitrary rectangles (which do have negative power rate). Our work considers uniform distributions with support set $[c,c+b_1]\times[c,c+b_2]$, a special case of the settings in \cite{Pav10} and \cite{WT14}. It follows that the optimal mechanisms in our setting can have allocations only of the form $(0,0)$, $(a,1-a)$ in accordance with Pavlov's result, and the menus can have at most five items in accordance with Wang and Tang's result.

Though our work is on a further special case, we are able to obtain finer results. We prove that the optimal mechanisms can only be one among the structures depicted in Figures \ref{fig:a-ini}--\ref{fig:e'-ini}. Our results bring out some unexpected structures such as those in Figures \ref{fig:e-ini} and \ref{fig:e'-ini}. Furthermore, our results are explicit in that we can compute the optimal mechanism for uniform distributions on any rectangle of the form $[c,c+b_1]\times[c,c+b_2]$.

The optimal mechanisms for various values of $(c,b_1,b_2)$ are mentioned in Theorem \ref{thm:consolidate}. The phase diagram in Figure \ref{fig:phase-diagram} represents how the structure of optimal mechanism changes when the values of $(c,b_1,b_2)$ change. We interpret the solutions and highlight their features as follows.

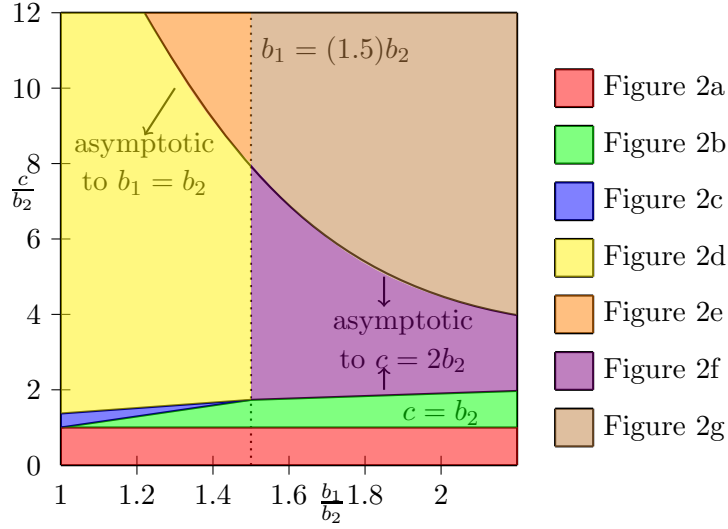
\begin{figure}[t]
\centering
\begin{tikzpicture}[scale=0.5,font=\normalsize,axis/.style={very thick, ->, >=stealth'}]
\draw [axis,thick,-] (0,0)--(0,12);
\node [right] at (6.5,-1) {$\frac{b_1}{b_2}$};
\draw [axis,thick,-] (0,0)--(12,0);
\node [above] at (-1,6.5) {$\frac{c}{b_2}$};
\draw [axis,thick,-] (12,0)--(12,12);
\draw [axis,thick,-] (0,12)--(12,12);
\node [left] at (-0.25,0) {$0$};
\draw [thin,-] (-0.25,2) -- (0.25,2);
\node [left] at (-0.25,2) {$2$};
\draw [thin,-] (-0.25,4) -- (0.25,4);
\node [left] at (-0.25,4) {$4$};
\draw [thin,-] (-0.25,6) -- (0.25,6);
\node [left] at (-0.25,6) {$6$};
\draw [thin,-] (-0.25,8) -- (0.25,8);
\node [left] at (-0.25,8) {$8$};
\draw [thin,-] (-0.25,10) -- (0.25,10);
\node [left] at (-0.25,10) {$10$};
\draw [thin,-] (-0.25,12) -- (0.25,12);
\node [left] at (-0.25,12) {$12$};
\node [below] at (0,-0.25) {$1$};
\draw [thin,-] (2,-0.25) -- (2,0.25);
\node [below] at (2,-0.25) {$1.2$};
\draw [thin,-] (4,-0.25) -- (4,0.25);
\node [below] at (4,-0.25) {$1.4$};
\draw [thin,-] (6,-0.25) -- (6,0.25);
\node [below] at (6,-0.25) {$1.6$};
\draw [thin,-] (8,-0.25) -- (8,0.25);
\node [below] at (8,-0.25) {$1.8$};
\draw [thin,-] (10,-0.25) -- (10,0.25);
\node [below] at (10,-0.25) {$2$};
\draw [thick,-] (0,1) to (12,1);
\draw [thick,-] (0,1) to (5,1.733) to (12,1.97);
\draw [thick,-] (0,1.372) to (5,1.733);
\draw [thick,-] (2.2,12) to[out=-60,in=170] (12,3.98);
\draw [thick,dotted] (5,0) to (5,12);

\node [right] at (5,11) {$b_1=(1.5)b_2$};
\draw [thick,->] (3,10) -- (2.2,8.75);
\node at (2.2,8.5) {asymptotic};
\node at (2.2,7.5) {to $b_1=b_2$};
\node at (9,3.8) {asymptotic};
\node at (9,2.8) {to $c=2b_2$};
\node at (10,1.4) {$c=b_2$};
\draw [thick,->] (8.5,2) -- (8.5,2.6);
\draw [thick,->] (8.5,5) -- (8.5,4.2);

\path[fill=red,opacity=.5] (0,1) to (12,1) to (12,0) to (0,0);
\path[fill=green,opacity=.5] (0,1) to (5,1.733) to (12,1.97) to (12,1) to (0,1);
\path[fill=blue,opacity=.5] (0,1) to (0,1.372) to (5,1.733) to (0,1);
\path[fill=yellow,opacity=.5] (0,1.372) to (0,12) to (2.2,12) to [out=-60,in=135] (5,8) to (5,1.733) to (0,1.372);
\path[fill=violet,opacity=.5] (5,8) to [out=-50,in=170] (12,3.98) to (12,1.97) to (5,1.733) to (5,8);
\path[fill=orange,opacity=.5] (2.2,12) to [out=-60,in=135] (5,8) to (5,12) to (2.2,12);
\path[fill=brown,opacity=.5] (5,12) to (5,8) to [out=-50,in=170] (12,3.98) to (12,12) to (5,12);

\draw [thick,-] (13,9.5) to (13,10.5);
\draw [thick,-] (13,10.5) to (14,10.5);
\draw [thick,-] (14,9.5) to (14,10.5);
\draw [thick,-] (13,9.5) to (14,9.5);
\path[fill=red,opacity=.5] (13,9.5) to (14,9.5) to (14,10.5) to (13,10.5) to (13,9.5);
\node [right] at (14,10) {Figure \ref{fig:a-ini}};

\draw [thick,-] (13,8) to (13,9);
\draw [thick,-] (13,9) to (14,9);
\draw [thick,-] (14,8) to (14,9);
\draw [thick,-] (13,8) to (14,8);
\path[fill=green,opacity=.5] (13,8) to (14,8) to (14,9) to (13,9) to (13,8);
\node [right] at (14,8.5) {Figure \ref{fig:b-ini}};

\draw [thick,-] (13,6.5) to (13,7.5);
\draw [thick,-] (13,7.5) to (14,7.5);
\draw [thick,-] (14,6.5) to (14,7.5);
\draw [thick,-] (13,6.5) to (14,6.5);
\path[fill=blue,opacity=.5] (13,6.5) to (14,6.5) to (14,7.5) to (13,7.5) to (13,7.5);
\node [right] at (14,7) {Figure \ref{fig:c-ini}};

\draw [thick,-] (13,5) to (13,6);
\draw [thick,-] (13,6) to (14,6);
\draw [thick,-] (14,5) to (14,6);
\draw [thick,-] (13,5) to (14,5);
\path[fill=yellow,opacity=.5] (13,5) to (14,5) to (14,6) to (13,6) to (13,5);
\node [right] at (14,5.5) {Figure \ref{fig:d-ini}};

\draw [thick,-] (13,3.5) to (13,4.5);
\draw [thick,-] (13,4.5) to (14,4.5);
\draw [thick,-] (14,3.5) to (14,4.5);
\draw [thick,-] (13,3.5) to (14,3.5);
\path[fill=orange,opacity=.5] (13,3.5) to (14,3.5) to (14,4.5) to (13,4.5) to (13,3.5);
\node [right] at (14,4) {Figure \ref{fig:e-ini}};

\draw [thick,-] (13,2) to (13,3);
\draw [thick,-] (13,3) to (14,3);
\draw [thick,-] (14,2) to (14,3);
\draw [thick,-] (13,2) to (14,2);
\path[fill=violet,opacity=.5] (13,2) to (14,2) to (14,3) to (13,3) to (13,2);
\node [right] at (14,2.5) {Figure \ref{fig:d'-ini}};

\draw [thick,-] (13,0.5) to (13,1.5);
\draw [thick,-] (13,1.5) to (14,1.5);
\draw [thick,-] (14,0.5) to (14,1.5);
\draw [thick,-] (13,0.5) to (14,0.5);
\path[fill=brown,opacity=.5] (13,0.5) to (14,0.5) to (14,1.5) to (13,1.5) to (13,0.5);
\node [right] at (14,1) {Figure \ref{fig:e'-ini}};
\end{tikzpicture}
\caption{A phase diagram of the optimal mechanism when $b_2\leq b_1\leq (2 .2)b_2$.}\label{fig:phase-diagram}
\end{figure}

\begin{itemize}
\item[$\bullet$] Beyond the exclusion (no sale) region, the allocation probabilities are the same for all $z$ falling in the same $45^\circ$ line (Theorem \ref{thm:pav-2}). Observe that this is in sharp contrast with the unrestricted setting, where the allocation probabilities are the same either for all $z$ falling in the same vertical line or the same horizontal line (see \cite[Fig.~1--3]{TRN16}). This is because, in the unit-demand case, the buyer demands at most one of the two items, and thus the seller decides the item to be sold based on the difference of valuations on the items\footnote{The item to be sold is decided based on the difference in valuations only for cases where $q_1 + q_2 = 1$ holds everywhere outside the exclusion region. It would be interesting to interpret the results for cases when $q_1 + q_2 < 1$ can occur outside the exclusion region, but this exploration is beyond the scope of this paper.}.
\item[$\bullet$] Consider the case when $c$ is low. The seller then knows that the buyer possibly could have very low valuations, and thus sets a high price $(c+\delta_i)$ to sell item $i$. Observe that this is a posted price mechanism with prices $c+\delta_1$ and $c+\delta_2$ for items $1$ and $2$ respectively (see Figure \ref{fig:a-ini}).
\item[$\bullet$] When $c$ increases, the seller now finds it optimal to set a second price over and above the first price $c+\delta_i$. He offers a lottery for the first price, and offers an individual item for the second and higher price (see Figures \ref{fig:b-ini} and \ref{fig:c-ini}).
\item[$\bullet$] When $c$ increases further, the seller sells item $i$ only when $z_i$ is very high compared to $z_{-i}$. In case the difference is not sufficiently high, then the seller finds it optimal to allocate randomly one or the other item (see Figures \ref{fig:d-ini} and \ref{fig:d'-ini}).
\item[$\bullet$] When $c$ is very high, the revenue gained by exclusion of certain valuations is always dominated by the revenue lost by it, and thus the seller finds no reason to withhold the items for any valuation profile\footnote{We refer the reader to Remark \ref{rem:armstrong} for a more precise explanation.}. So the optimal mechanism turns out to be a posted price mechanism with prices $c+\frac{b_1}{3}+\max(0,\frac{b_1}{6}-\frac{b_2}{4})$ and $c$ for items $1$ and $2$ respectively. In effect, it is a posted price mechanism with no exclusion region (see Figures \ref{fig:e-ini} and \ref{fig:e'-ini}).
\item[$\bullet$] Starting at $c=0$, consider moving the support set rectangle to infinity. Then, the optimal mechanism starts as a posted price mechanism with an exclusion region, and ends up again as a posted price mechanism but without an exclusion region. The other structures in Figures \ref{fig:b-ini}--\ref{fig:d-ini}, and Figure \ref{fig:d'-ini} are optimal for various intermediate values.
\end{itemize}

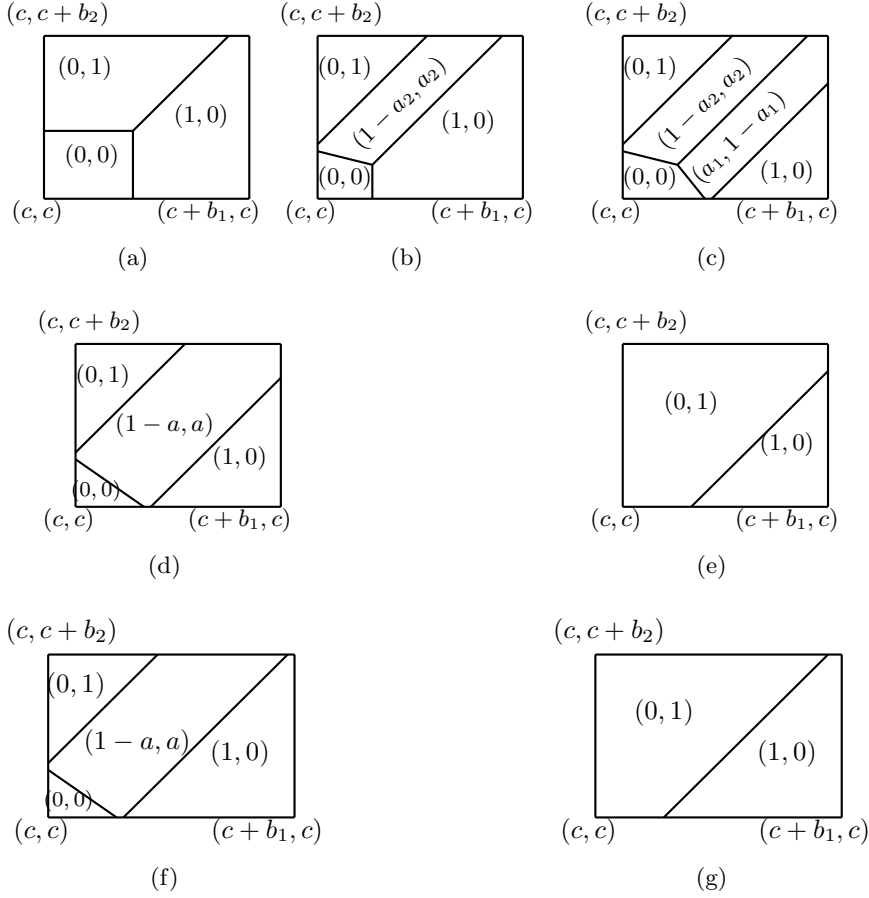
\begin{figure}[t]
\centering
\setlength\tabcolsep{0pt}
\begin{tabular}{cccccc}
\multicolumn{2}{c}{\subfloat[]{\label{fig:a-ini}\begin{tikzpicture}[scale=0.18,font=\footnotesize,axis/.style={very thick, -}]
\node at (-0.5,-1) {$(c,c)$};
\draw [axis,thick,-] (0,0)--(15,0);
\node at (12,-1) {$(c+b_1,c)$};
\draw [axis,thick,-] (0,0)--(0,12);
\node [above] at (1,12) {$(c,c+b_2)$};
\draw [axis,thick,-] (0,12)--(15,12);
\draw [axis,thick,-] (15,0)--(15,12);
\draw [axis,thick,-] (0,5)--(6.5,5);
\draw [axis,thick,-] (6.5,0)--(6.5,5);
\draw [thick,-] (6.5,5)--(13.5,12);
\node [above] at (3.5,1.5) {$(0,0)$};
\node [above] at (11.5,4.5) {$(1,0)$};
\node [above] at (3,8) {$(0,1)$};
\end{tikzpicture}}}&
\multicolumn{2}{c}{\subfloat[]{\label{fig:b-ini}\begin{tikzpicture}[scale=0.18,font=\footnotesize,axis/.style={very thick, -}]
\node at (-0.5,-1) {$(c,c)$};
\draw [axis,thick,-] (0,0)--(15,0);
\node at (12,-1) {$(c+b_1,c)$};
\draw [axis,thick,-] (0,0)--(0,12);
\node [above] at (1,12) {$(c,c+b_2)$};
\draw [axis,thick,-] (0,12)--(15,12);
\draw [axis,thick,-] (15,0)--(15,12);
\draw [axis,thick,-] (0,4)--(8,12);
\draw [axis,thick,-] (0,3.5)--(4,2.5);
\draw [axis,thick,-] (4,0)--(4,2.5);
\draw [thick,-] (4,2.5)--(13.5,12);
\node at (2,1.5) {$(0,0)$};
\node [above] at (11,4) {$(1,0)$};
\node [above] at (2,8) {$(0,1)$};
\node [rotate=45] at (6,7) {$(1-a_2,a_2)$};
\end{tikzpicture}}}&
\multicolumn{2}{c}{\subfloat[]{\label{fig:c-ini}\begin{tikzpicture}[scale=0.18,font=\footnotesize,axis/.style={very thick, -}]
\node at (-0.5,-1) {$(c,c)$};
\draw [axis,thick,-] (0,0)--(15,0);
\node at (12,-1) {$(c+b_1,c)$};
\draw [axis,thick,-] (0,0)--(0,12);
\node [above] at (1,12) {$(c,c+b_2)$};
\draw [axis,thick,-] (0,12)--(15,12);
\draw [axis,thick,-] (15,0)--(15,12);
\draw [axis,thick,-] (0,4)--(8,12);
\draw [axis,thick,-] (0,3.5)--(4,2.5);
\draw [axis,thick,-] (4,2.5)--(6,0);
\draw [axis,thick,-] (6.5,0)--(15,8.5);
\draw [thick,-] (4,2.5)--(13.5,12);
\node at (2,1.5) {$(0,0)$};
\node at (12,2) {$(1,0)$};
\node [above] at (2,8) {$(0,1)$};
\node [rotate=45] at (6,7) {$(1-a_2,a_2)$};
\node [rotate=45] at (8.5,4.5) {$(a_1,1-a_1)$};
\end{tikzpicture}}}\\
&\multicolumn{2}{c}{\subfloat[]{\label{fig:d-ini}\begin{tikzpicture}[scale=0.18,font=\footnotesize,axis/.style={very thick, -}]
\node at (-0.5,-1) {$(c,c)$};
\draw [axis,thick,-] (0,0)--(15,0);
\node at (12,-1) {$(c+b_1,c)$};
\draw [axis,thick,-] (0,0)--(0,12);
\node [above] at (1,12) {$(c,c+b_2)$};
\draw [axis,thick,-] (0,12)--(15,12);
\draw [axis,thick,-] (15,0)--(15,12);
\draw [axis,thick,-] (0,4)--(8,12);
\draw [axis,thick,-] (0,3.5)--(5,0);
\draw [thick,-] (5.5,0)--(15,9.5);
\node at (1.5,1.2) {\scriptsize$(0,0)$};
\node [above] at (12,2) {$(1,0)$};
\node [above] at (2,8) {$(0,1)$};
\node at (6.5,6) {$(1-a,a)$};
\end{tikzpicture}}}&
&\multicolumn{2}{c}{\subfloat[]{\label{fig:e-ini}\begin{tikzpicture}[scale=0.18,font=\footnotesize,axis/.style={very thick, -}]
\node at (-0.5,-1) {$(c,c)$};
\draw [axis,thick,-] (0,0)--(15,0);
\node at (12,-1) {$(c+b_1,c)$};
\draw [axis,thick,-] (0,0)--(0,12);
\node [above] at (1,12) {$(c,c+b_2)$};
\draw [axis,thick,-] (0,12)--(15,12);
\draw [axis,thick,-] (15,0)--(15,12);
\draw [thick,-] (5,0)--(15,10);
\node [above] at (12,3) {$(1,0)$};
\node [above] at (5,6) {$(0,1)$};
\end{tikzpicture}}}\\
&\multicolumn{2}{c}{\subfloat[]{\label{fig:d'-ini}\begin{tikzpicture}[scale=0.18,font=\small,axis/.style={very thick, -}]
\node at (-0.5,-1) {$(c,c)$};
\draw [axis,thick,-] (0,0)--(18,0);
\node at (16,-1) {$(c+b_1,c)$};
\draw [axis,thick,-] (0,0)--(0,12);
\node [above] at (1,12) {$(c,c+b_2)$};
\draw [axis,thick,-] (0,12)--(18,12);
\draw [axis,thick,-] (18,0)--(18,12);
\draw [axis,thick,-] (0,4)--(8,12);
\draw [axis,thick,-] (0,3.5)--(5,0);
\draw [thick,-] (5.5,0)--(17.5,12);
\node at (1.5,1.2) {\scriptsize$(0,0)$};
\node [above] at (14,3) {$(1,0)$};
\node [above] at (2,8) {$(0,1)$};
\node at (6.5,5.5) {$(1-a,a)$};
\end{tikzpicture}}}&
&\multicolumn{2}{c}{\subfloat[]{\label{fig:e'-ini}\begin{tikzpicture}[scale=0.18,font=\small,axis/.style={very thick, -}]
\node at (-0.5,-1) {$(c,c)$};
\draw [axis,thick,-] (0,0)--(18,0);
\node at (16,-1) {$(c+b_1,c)$};
\draw [axis,thick,-] (0,0)--(0,12);
\node [above] at (1,12) {$(c,c+b_2)$};
\draw [axis,thick,-] (0,12)--(18,12);
\draw [axis,thick,-] (18,0)--(18,12);
\draw [thick,-] (5,0)--(17,12);
\node [above] at (14,3) {$(1,0)$};
\node [above] at (5,6) {$(0,1)$};
\end{tikzpicture}}}
\end{tabular}
\caption{An illustration of all possible structures that an optimal mechanism can have.}\label{fig:all-structure-ini}
\end{figure}

\subsection{Our method}

Our method is as follows. We initially formulate the problem at hand (in the unit-demand setting) into an optimization problem, and compute its dual using a result in \cite{KM16}. The dual problem turns out to be an optimal transport problem that transfers mass from the support set $D$ to itself. Mass transfer must occur subject to the constraint that the difference between the mass densities transferred out of and transferred into the set convex-dominates a signed measure that depends only on the distribution of the valuations. The dual problem is similar to that in \cite{DDT15} for the unrestricted setting, but differs in the transportation cost.

The key challenge in solving the dual problem lies in constructing the  ``shuffling measure'' that convex-dominates $0$, and in finding the location in the support set $D$ where the shuffling measure sits. The shuffling measure was always added at fixed locations in the unrestricted setting, and had a fixed structure for the uniform distribution of valuations over any rectangle in the positive quadrant (see \cite{TRN16}). In the unit-demand setting, however, we see that both the locations and the structures of the shuffling measure vary significantly for different values of $c$. There is as yet no clear understanding on how to construct the shuffling measure, and hence on how to compute the optimal solution via the dual method.

Motivated by the above, we explore the virtual valuation method used by \citet{Pav11}. \citet{Pav11} computed the optimal mechanism when the buyer's valuations are given by $z\sim\mbox{Unif}[c,c+1]^2$; the optimal mechanism was obtained only for distributions that are symmetric across the two items. When compared with the case of symmetric distributions, the case of asymmetric distributions poses the following challenge. The optimal mechanism is symmetric along a diagonal in the case of symmetric distributions. For asymmetric distributions, the mechanism must be computed over the larger region of the entire support set. The asymmetry leads to more parameters, more conditions to check for optimality, and a more complex variety of solutions determined, as we will soon see, by a larger number of polynomials. All these make the computation more difficult.

In this paper, we demonstrate how to compute the optimal mechanism for asymmetric distributions, when $z\sim\mbox{Unif}[c,c+b_1]\times[c,c+b_2]$. Specifically, we do the following.

\begin{itemize}
\item[$\bullet$] Taking cue from the result in \cite{WT14} that the optimal mechanism is a menu with at most five items, we first construct some possible menus, parametrized by at most four parameters.
\item[$\bullet$] We find the relation between the parameters using the sufficient conditions on the marginal profit function $V$. We show that the parameters can be computed by simultaneously solving at most two polynomials, each of degree at most $4$.
\item[$\bullet$] We then use continuity of the polynomials to prove that there exists a solution having desired values for all parameters. We then prove that the optimal mechanism has one of the five simple structures for arbitrary nonnegative values of $(c,b_1,b_2)$ (see Theorem \ref{thm:consolidate}).
\item[$\bullet$] We conjecture that the optimal mechanisms have a similar structure even when $z\in[c_1,c_1+b_1]\times[c_2,c_2+b_2]$ for all $(c_1,c_2,b_1,b_2)\geq 0$. We provide preliminary results to justify the conjecture (see Theorem \ref{thm:extension}).
\end{itemize}

Proofs of some case use Mathematica to verify certain algebraic inequalities. This is because (i) the parameters turn out to be solutions that simultaneously satisfy two polynomials of degree at most $4$; and (ii) the solutions are complicated functions of $(c,b_1,b_2)$ involving fifth roots and eighth roots of some expressions. Verifying that these expressions satisfy some bounds were automated via the Mathematica software. The results that use Mathematica have been marked with an asterisk in the statement of Theorem \ref{thm:consolidate}. We believe that all of these results can be proved in the strict mathematical sense; but we leave this for the future in the interest of timely dissemination of our conclusions and observations. The skeptical reader could proceed by interpreting the Mathematica-based conclusions as conjectures.

Our work thus provides insights into two well-known approaches to solve representative problems on optimal mechanisms in the multi-item setting, besides solving, in the process, one such problem for asymmetric distributions. Specifically, our work clarifies under what situations the duality approach is likely to work well, and the intrinsic difficulties in using that approach in some other settings. Furthermore, the special cases that we solve provide insights into various possible structures of the optimal mechanisms which, we feel, would act as a guideline to solve the problem of computing good menus in practical settings. We believe that our work is an important step towards understanding the applicability of the two different approaches, and a useful step addition to the growing canvas of canonical problems in multi-dimensional optimal auctions. 

The rest of the paper is organized as follows. In Section 2, we first formulate an optimization problem under the unit-demand setting. We next compute its dual using a result in \cite{KM16}, and solve it for three representative examples of $(c,b_1,b_2)$. The main purpose behind these examples is to bring out the variety in structure, and therefore the difficulty in guessing and computing, the dual measure for more general settings. In Section 3, we nontrivially extend the {\em virtual valuation method\/} of \cite{Pav11} to provide a complete and explicit solution for the case of asymmetric distributions. In particular, we prove that the optimal mechanism has one of the five simple structures. In Section 4, we conjecture, with promising preliminary results, that the optimal mechanism when the valuations are uniformly distributed in an arbitrary rectangle $[c_1,c_1+b_1]\times[c_2,c_2+b_2]$ also has similar structures. In Section 5, we conclude the paper and provide some directions for future work.

\section{Exploring The Dual Approach}\label{sec:prelim}
Consider a two-item, single-buyer, unit-demand setting. The buyer's valuation is $z=(z_1,z_2)$ for the two items, sampled according to the joint density $f(z)=f_1(z_1)f_2(z_2)$, where $f_1(z_1)$ and $f_2(z_2)$ are marginal densities. The support set of $f$ is defined as $D:=\{z:f(z)>0\}$. Throughout the paper, we consider $D=[c,c+b_1]\times[c,c+b_2]$, where $(c,b_1,b_2)$ are nonnegative.

Our aim is to design a revenue-optimal mechanism. By the revelation principle \cite[Prop.~9.25]{NRTV07}, it suffices to focus only on direct mechanisms. Further, we focus on mechanisms where the buyer has a quasilinear utility. Specifically, we assume an allocation function $q:D\rightarrow\{(q_1,q_2):0\leq q_1,q_2,q_1+q_2\leq 1\}$ and a payment function $t:D \rightarrow \mathbb{R}_+$ that represent, respectively, the probabilities of allocation of the items to the buyer and the amount of transfer from the buyer to the seller. If the buyer's true valuation is $z$, and he reports $\hat{z}$, his realized (quasilinear) utility is $\hat{u}(z, \hat{z}) := z \cdot q(\hat{z}) - t(\hat{z})$, which is the expected value of the lottery he receives minus the payment.

A mechanism $(q,t)$ satisfies {\em incentive compatibility} (IC) when truth telling is a weakly dominant strategy for the buyer, i.e., $\hat{u}(z,z) \geq \hat{u}(z,\hat{z})$ for every $z,\hat{z}\in D$. In this case the buyer's realized utility is $u(z) := \hat{u}(z, z) = z\cdot q(z)-t(z)$. An incentive compatible mechanism satisfies {\em individual rationality} (IR) if the buyer is not worse off by participating in the mechanism, i.e., $u(z) \geq 0$ for every $z \in D$, with zero being the buyer's utility if he chooses not to participate.

The following result is well known:
\begin{theorem}\cite{Roc87}\/.\label{thm:Rochet}
A mechanism $(q,t)$, with $u(z) = z\cdot q(z)-t(z)$, is incentive compatible if and only if $u$ is continuous, convex and $\nabla u(z)=q(z)$ for a.e. $z\in D$.
\end{theorem}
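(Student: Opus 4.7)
The plan is to prove the two implications of Rochet's characterization separately, exploiting the classical duality between incentive-compatible mechanisms and convex functions. For the forward direction ($\Rightarrow$), I would start from the IC inequality $u(z) \geq z \cdot q(\hat z) - t(\hat z)$ for all $z,\hat z \in D$, and use $t(\hat z) = \hat z \cdot q(\hat z) - u(\hat z)$ to rewrite it as
\[u(z) \;=\; \sup_{\hat z \in D} \bigl[u(\hat z) + (z - \hat z) \cdot q(\hat z)\bigr].\]
This exhibits $u$ as a pointwise supremum of affine functions of $z$, yielding convexity (and lower semicontinuity) immediately; continuity on the interior of $D$ then follows from the general fact that a convex function on an open subset of $\mathbb{R}^n$ is automatically continuous. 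The same inequality, read differently, says that $q(\hat z)$ lies in the subdifferential $\partial u(\hat z)$ for every $\hat z$. Since convex functions on open subsets of $\mathbb{R}^n$ are differentiable almost everywhere, at a.e. $\hat z$ the subdifferential reduces to the singleton $\{\nabla u(\hat z)\}$, forcing $q(\hat z) = \nabla u(\hat z)$ a.e.

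For the backward direction ($\Leftarrow$), assume $u$ is continuous and convex and $\nabla u = q$ a.e. The subgradient inequality for convex functions gives $u(z) \geq u(\hat z) + (z - \hat z) \cdot g$ for every $g \in \partial u(\hat z)$. At a.e. $\hat z$ we have $q(\hat z) = \nabla u(\hat z) \in \partial u(\hat z)$, so substituting and using $t(\hat z) = \hat z \cdot q(\hat z) - u(\hat z)$ recovers $u(z) \geq z \cdot q(\hat z) - t(\hat z)$, i.e., IC, for every $z$ and almost every $\hat z$. To extend this to every $\hat z$, I would note that $\partial u(\hat z)$ is nonempty at each interior point, and that redefining $q$ on the residual null set to be any measurable selection from $\partial u$ (with $t$ adjusted accordingly) produces an equivalent mechanism — one with the same expected revenue and the same buyer utility $u$ — for which IC now holds everywhere.

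The main obstacle I anticipate is precisely the interface between the pointwise character of IC (which must hold for \emph{every} report $\hat z$) and the almost-everywhere character of the identity $\nabla u = q$. I would handle this by treating $(q,t)$ as the canonical equivalence class under a.e. agreement and selecting a subgradient-valued representative, a standard device in mechanism design that nevertheless deserves explicit mention. A secondary technical issue is continuity of $u$ on the boundary of the closed rectangle $D$, not just on its interior; I would deal with this by observing that the unit-demand constraint $q_1+q_2 \leq 1$ makes each affine function in the supremum representation uniformly Lipschitz in $z$, so $u$ itself is Lipschitz and extends continuously to the closure of $D$.
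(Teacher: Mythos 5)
The paper does not give its own proof of this theorem; it cites \citet{Roc87} (and variants appear in \citet{Pav11}, \citet{DDT15}, etc.), so there is no in-paper argument to compare against. Your reconstruction is correct and is, up to cosmetic choices, the standard proof of Rochet's characterization: the supremum-of-affines representation $u(z) = \sup_{\hat z}\bigl[u(\hat z) + (z - \hat z)\cdot q(\hat z)\bigr]$ obtained from IC, convexity as a pointwise supremum of affine maps, the observation that IC is exactly the statement $q(\hat z) \in \partial u(\hat z)$, a.e.\ differentiability of convex functions for the forward direction, and the subgradient inequality for the converse.

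Two small remarks. First, the Lipschitz bound used to push continuity to the boundary of the closed rectangle $D$ comes from $q(\hat z)\in[0,1]^2$ alone; the unit-demand inequality $q_1+q_2\le 1$ tightens the gradient bound but is not what makes the family uniformly Lipschitz, so attributing it to that constraint is slightly misleading. Second, you correctly identify the genuine subtlety in the ``if'' direction: with $\nabla u = q$ only almost everywhere, the subgradient inequality yields IC only for a.e.\ report $\hat z$, and one can produce a null set of reports on which it fails. Your fix---select a subgradient-valued representative $q'(\hat z)\in\partial u(\hat z)$ on the exceptional set and adjust $t$ so that $u$ is unchanged---is the standard repair, but it does replace $(q,t)$ by an a.e.-equal mechanism rather than proving IC for the original one. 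For the optimization problem (\ref{eqn:optim}) this is harmless because the objective is invariant under a.e.\ modification, which is implicitly why the paper can state the theorem in this slightly informal a.e.\ form; it is worth being explicit, as you are, that the equivalence holds at the level of equivalence classes rather than pointwise mechanisms.
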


An optimal mechanism is one that maximizes the expected revenue to the seller subject to incentive compatibility and individual rationality (\citet[p.~67]{VKBook}). By virtue of Theorem \ref{thm:Rochet}, an optimal mechanism solves the problem
\begin{alignat*}{2}
  &\max_u \int_D(z\cdot\nabla u(z)-u(z))f(z)\,dz \\
  \mbox{ subject to }\hspace*{.5in}
  &(a)\,u\mbox{ continuous, convex},\nonumber\\
  &(b)\,\nabla u(z)\in[0,1]^2,\nabla u(z)\cdot\mathbf{1}\in[0,1],\mbox{ a.e. }z \in D,\nonumber\\
  &(c)\,u(z)\geq 0,\,\,\forall z\in D.\nonumber
\end{alignat*}
Using the arguments in \cite[Sec.~2.1]{DDT17}, we simplify the aforementioned problem as
\begin{align}\label{eqn:optim}
  &\max_u\int_D(z\cdot\nabla u(z)-(u(z)-u(c,c)))f(z)\,dz \\
  \mbox{ subject to }\hspace*{.5in}
  &(a)\,u\mbox{ continuous, convex},\nonumber\\
  &(b)\,\nabla u(z)\in[0,1]^2,\nabla u(z)\cdot\mathbf{1}\in[0,1],\mbox{ a.e. }z \in D.\nonumber
\end{align}

We now further simplify the objective function of the problem. Using integration by parts, the objective function can be written as $\int_D u(z)\mu(z)\,dz+\int_{\partial D}u(z)\mu_s(z)\,dz+u(c,c)\mu_p(c,c)$, where the functions $\mu$, $\mu_s$, and $\mu_p$ are defined as
\begin{align}\label{eqn:mu-measure}
 \mu(z)&:=-z\cdot\nabla f(z)-3f(z),\,z\in D,\nonumber\\
 \mu_s(z)&:=(z\cdot n(z))f(z),\,z\in\partial D,\\
 \mu_p(z)&:=\delta_{\{(c,c)\}}(z).\nonumber
\end{align}
The vector $n(z)$ is the normal to the surface $\partial D$ at $z$ if it is defined, and $0$ otherwise (at corners). We regard $\mu$ as the density of a signed measure on the support set $D$ that is absolutely continuous with respect to (w.r.t.) the two-dimensional Lebesgue measure, and $\mu_s$ as the density of a signed measure on $\partial D$ that is absolutely continuous w.r.t. the surface Lebesgue measure. We shall refer to both Lebesgue measures as $dz$. We regard $\mu_p$ as a point measure of unit mass at the specified point. The notation $\delta$ denotes the Dirac-delta function. So $\mu_p(z)=1$ if $z=(c,c)$, and $0$ otherwise. By taking $u(z) = 1 ~ \forall z \in D$, we observe that
\begin{align}\label{eqn:mu-D}
  &\int_D \mu(z)\,dz+\int_{\partial D}\mu_s(z)\,dz+\mu_p(c,c)\nonumber\\&\hspace*{.5in}=\int_D u(z)\mu(z)\,dz+\int_{\partial D}u(z)\mu_s(z)\,dz+u(c,c)\mu_p(c,c)\nonumber\\&\hspace*{.5in}=\int_D(z\cdot\nabla u(z)-u(z))f(z)\,dz+u(c,c)\nonumber\\&\hspace*{.5in}=\int_D(0-1)f(z)\,dz+u(c,c)=0.
\end{align}
We now define the measure $\bar{\mu}$, supported on set $D$, as
$$
\bar{\mu}(A):=\int_D\mathbf{1}_A(z)\mu(z)\,dz+\int_{\partial D}\mathbf{1}_A(z)\mu_s(z)\,dz+\mu_p(A\cap(c,c))
$$
for all measurable sets $A$. We thus observe that $\bar{\mu}(D)=0$. Observe that $\bar{\mu}$ is a signed Radon measure in $D$, and that the functions $\mu$ and $\mu_s$ are just the Radon-Nikodym derivatives of the respective components of $\bar{\mu}$ w.r.t. the two-dimensional and one-dimensional Lebesgue measures respectively. Based on the discussion in the paragraph after (\ref{eqn:optim}), the objective function of problem (\ref{eqn:optim}) can now be written as $\int_D u\,d\bar{\mu}$.

We now rewrite the constraint (b) in problem (\ref{eqn:optim}) as the following three constraints.
\begin{alignat*}{2}
  &u(z_1,z_2)-u(z_1',z_2)\leq (z_1-z_1')_+,\, \forall z_1,z_1'\in D_1,\,\forall z_2\in D_2,\nonumber\\
  &u(z_1,z_2)-u(z_1,z_2')\leq (z_2-z_2')_+,\, \forall z_1\in D_1,\,\forall z_2,z_2'\in D_2,\nonumber\\
  &u(z_1,z_2)-u(z_1',z_2-z_1+z_1')\leq (z_1-z_1')_+,\, \forall z_1,z_1'\in D_1,\,\forall z_2\in D_2,\nonumber
\end{alignat*}
where $(\cdot)_+=\max(0,\cdot)$. Observe that these three constraints are equivalent to
$$
  u(z)-u(z')\leq\max((z_1-z_1')_+,(z_2-z_2')_+),\,\forall z,z'\in D.
$$

So the optimization problem can now be written as
\begin{alignat}{2}\label{eqn:primal}
  &\max_u \int_D u\,d\bar{\mu} \\
  \mbox{ subject to }\hspace*{.5in}
  &(a)\,u\mbox{ continuous, convex, increasing},\nonumber\\
  &(b)\,u(z)-u(z')\leq\|z-z'\|_\infty,\,\forall z,z'\in D.\nonumber
\end{alignat}
Note that the objective function of the problem satisfies $\int_Dt(z)f(z)\,dz=\int_Du\,d\bar{\mu}$; thus the $\bar{\mu}$-measure can be interpreted as the marginal contribution of the utility $u$ to the revenue of the seller.

We now recall the definition of the convex ordering relation. A function $f$ is increasing if $z\geq z'$ component-wise implies $f(z)\geq f(z')$.
\begin{definition} (See for e.g., \cite{DDT14})
 Let $\alpha$ and $\beta$ be measures defined on a set $D$. We say $\alpha$ {\em convex-dominates} $\beta$ ($\alpha\succeq_{cvx}\beta$) if $\int_Df\,d\alpha\geq\int_Df\,d\beta$ for all continuous, convex and increasing $f$.
\end{definition}

One can understand convex dominance as follows: A {\em risk-seeking} buyer\footnote{This is to be contrasted with second-order stochastic dominance which says that $\alpha$ second-order dominates $\beta$ (denoted as $\alpha\succeq_2\beta$) if a {\em risk-averse} buyer with an {\em increasing and concave} utility function prefers $\alpha$ to $\beta$. Mathematically, convex dominance and second-order stochastic dominance are related inversely under some conditions. More specifically, $\alpha\succeq_{cvx}\beta\Leftrightarrow\alpha\preceq_2\beta$ if (i) $D$ is a bounded rectangle in the positive orthant and (ii) $\int_D\|x\|_1\,d(\alpha-\beta)=0$ \cite[Lem. 8]{DDT14}}., with $u$ as his utility function (increasing and convex), will choose the lottery $\alpha$ over $\beta$ if $\alpha$ convex-dominates $\beta$.
 
The dual problem of (\ref{eqn:primal}) is found to be \cite[Thm.~3.1]{KM16}.
\begin{align}\label{eqn:dual}
  &\min_\gamma\int_{D\times D}\|z-z'\|_\infty\,d\gamma(z,z')\\
  \mbox{ subject to }\hspace*{.25in}
  &(a)\,\gamma\in Radon_+(D\times D),\nonumber\\
  &(b)\,\gamma(\cdot,D)=\gamma_1,\,\gamma(D,\cdot)=\gamma_2,\,\gamma_1-\gamma_2\succeq_{cvx}\bar{\mu}.\nonumber
 \end{align}
By $\gamma\in Radon_+(D\times D)$, we mean that $\gamma$ is an unsigned Radon measure in $D\times D$. The dual is computed by using the following expressions in the statement of \cite[Thm.~3.1]{KM16}: (i) $l_S(z,z')=\|z-z'\|_\infty$, and (ii) $U(D,S)$ is the set of all utility functions that are continuous, convex, and increasing. We derive the weak duality result in \ref{app:new} to provide an understanding of how the dual arises and why $\gamma$ may be interpreted as prices for violating the primal constraint.


The next lemma gives a sufficient condition for strong duality.
\begin{lemma}\cite[Cor.~4.1]{KM16}\label{lem:compslack}
Let $u^*$ and $\gamma^*$ be feasible for the aforementioned primal (\ref{eqn:primal}) and dual (\ref{eqn:dual}) problems, respectively\/. Then the objective functions of (\ref{eqn:primal}) and (\ref{eqn:dual}) with $u=u^*$ and $\gamma=\gamma^*$ are equal if and only if (i) $\int_Du^*\,d(\gamma_1^*-\gamma_2^*)=\int_Du^*\,d\bar{\mu}$, and (ii) $u^*(z)-u^*(z')=\|z-z'\|_\infty$, hold $\gamma^*-$a.e.
\end{lemma}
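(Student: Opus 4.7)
The plan is to extract the lemma by examining the weak duality chain (derived in Appendix A) and identifying precisely where slack can occur. There are exactly two inequalities that separate $\int_D u^* \, d\bar{\mu}$ from $\int_{D\times D} \|z-z'\|_\infty \, d\gamma^*$: one that exploits the dual feasibility constraint $\gamma_1^* - \gamma_2^* \succeq_{cvx} \bar{\mu}$, and one that exploits the primal feasibility constraint $u^*(z) - u^*(z') \le \|z-z'\|_\infty$. Equality in the first will turn out to be exactly condition (i); equality in the second will be exactly condition (ii).

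First I would rewrite the difference of marginals in terms of $\gamma^*$ itself. Since $u^*$ is continuous on the compact set $D$ (hence bounded) and $\gamma^*$ is a finite Radon measure, Fubini gives
$$\int_D u^* \, d(\gamma_1^* - \gamma_2^*) = \int_{D \times D} \bigl( u^*(z) - u^*(z') \bigr) \, d\gamma^*(z,z').$$
Dual feasibility together with the fact that $u^*$ is continuous, convex, and increasing then yields
$$\int_D u^* \, d\bar{\mu} \,\le\, \int_D u^* \, d(\gamma_1^* - \gamma_2^*),$$
and this inequality is tight exactly when (i) holds. Primal feasibility (b), together with $\gamma^* \ge 0$, gives
$$\int_{D \times D} \bigl(u^*(z) - u^*(z')\bigr) \, d\gamma^*(z,z') \,\le\, \int_{D \times D} \|z-z'\|_\infty \, d\gamma^*(z,z'),$$
and this is tight exactly when the nonnegative integrand $\|z-z'\|_\infty - (u^*(z) - u^*(z'))$ vanishes $\gamma^*$-almost everywhere, i.e., when (ii) holds.

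Chaining the two displayed inequalities reproduces weak duality, so equality of the primal and dual objectives is equivalent to both being tight simultaneously, which in turn is equivalent to (i) and (ii) holding together. I do not expect a substantive obstacle: the Fubini step needs only the mild observation that $u^*$ is bounded and $\gamma^*$ is finite, and the remainder is bookkeeping. The one subtlety worth flagging is that condition (i) is strictly weaker than the defining property of $\succeq_{cvx}$, which would demand the corresponding inequality against \emph{every} admissible test function; thus the lemma is a genuine refinement at the particular optimizer $u^*$, not a restatement of dual feasibility.
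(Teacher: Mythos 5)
Your proof is correct, and since the paper does not actually prove this lemma (it is cited from \cite{KM16}, with only the weak-duality chain sketched in the appendix), your argument amounts to the natural completion of that sketch. The decomposition into two inequalities --- one from dual feasibility ($\gamma_1^*-\gamma_2^*\succeq_{cvx}\bar{\mu}$ applied to the admissible test function $u^*$) and one from primal feasibility (the Lipschitz-type bound on $u^*$ integrated against the nonnegative $\gamma^*$), with the marginal identity $\int_D u^*\,d(\gamma_1^*-\gamma_2^*)=\int_{D\times D}(u^*(z)-u^*(z'))\,d\gamma^*(z,z')$ linking them --- is exactly the standard complementary-slackness argument implicit in Appendix~C, and your observations about when each inequality is tight (equality in~(i), resp.\ vanishing of the nonnegative integrand $\gamma^*$-a.e.\ in~(ii)) are exactly right.
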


We now present a few examples to indicate why it is hard to compute the optimal mechanism using this dual approach. We first compute the components of $\bar{\mu}$ (i.e., $\mu,\mu_s,\mu_p$), with $f(z)=\frac{1}{b_1b_2}$ for $z\in D=[c,c+b_1]\times[c,c+b_2]$, from (\ref{eqn:mu-measure}), as
\begin{align}
 \mbox{(area density) } \, & \mu(z)=-3/(b_1b_2),\quad z \in D,\nonumber\\
 \mbox{(line density) } \, & \mu_s(z)=\sum_{i=1}^2(-c\mathbf{1}(z_i=c)+(c+b_i)\mathbf{1}(z_i=c+b_i))/(b_1b_2),\nonumber\\&\hspace*{2.5in} z\in\partial D,\nonumber\\
 \mbox{(point measure)} \, & \mu_p(z)=\delta_{\{(c,c)\}}(z).\label{eqn:bar-mu}
\end{align}

In the examples that we consider, we start by suggesting a certain mechanism, and prove that it is indeed the optimal mechanism by constructing a feasible $u$ and a feasible $\gamma$ that satisfy the complementary slackness constraints of Lemma \ref{lem:compslack}. While $u$ can be constructed easily from the allocations $q$, the construction of the transport variable $\gamma$ needs some work. This involves transporting mass from each point on the top and right boundaries of $D$ along the $45^\circ$ line containing the point. We shuffle the measure across the points on the boundary in case there is an excess or a deficit. The construction of the shuffling measure is the main challenge; it differs significantly across the examples we consider. We now fill in the details.

\subsection{Example 1: $z\sim\mbox{Unif }[1.26,2.26]^2$}
\begin{theorem}\label{thm:eg-1}\cite{Pav11}
Consider the case when $c=1.26$, and $b_1=b_2=1$. Then, the optimal mechanism is as depicted in Figure \ref{fig:illust-1}, with $\delta_1=\delta_2=20/63$ and $a_1=a_2=a=0.6615$.
\end{theorem}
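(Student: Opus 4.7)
The plan is to verify the theorem by applying Lemma \ref{lem:compslack}: exhibit a primal $u^*$ consistent with the mechanism of Figure \ref{fig:illust-1}, construct a feasible dual transport plan $\gamma^*$ for \eqref{eqn:dual}, and check the two complementary slackness conditions. Because $b_1=b_2$ and the density is diagonal-symmetric, it is natural to look for a mechanism that itself is symmetric under $(z_1,z_2)\mapsto(z_2,z_1)$, which a priori justifies $\delta_1=\delta_2$ and $a_1=a_2$; the proposed five-region structure (exclusion near $(c,c)$, deterministic $(1,0)$ and $(0,1)$ near the far edges, and symmetric lotteries $(1-a,a)$ and $(a,1-a)$ adjacent to the exclusion region) matches the templates allowed by Pavlov's and Wang--Tang's structural results cited in the introduction.

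First, I would construct $u^*$. Set $u^*(c,c)=0$ and, in each constant-allocation region $R_j$, define $u^*(z)=z\cdot q^*(z)-t_j$ where $t_j$ is the posted payment for region $j$. The exclusion region gives $t=0$, the deterministic regions give $t=c+\delta$, and the lottery prices are determined by imposing continuity of $u^*$ across the region interfaces. The interfaces are $45^\circ$ lines, so continuity, the pointwise bound $\nabla u^*\cdot\mathbf 1\le 1$, and the global Lipschitz condition $u^*(z)-u^*(z')\le\|z-z'\|_\infty$ reduce to elementary checks on the jumps of $q^*$; convexity is automatic because $q^*=\nabla u^*$ takes values in a totally ordered chain of gradients as one moves outward along a $45^\circ$ line. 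Next, I would build $\gamma^*$ by transporting the positive boundary density $\mu_s$ from the top and right edges of $D$ inward along $45^\circ$ lines, using it to cancel the uniform negative density $\mu=-3/(b_1b_2)$ on the interior and, near the corner, the unit point mass $\mu_p$ at $(c,c)$. Along a single $45^\circ$ line $u^*$ differs by $\|z-z'\|_\infty$ between its endpoints on the opposite boundaries (because $q_1+q_2$ is constant on such a line outside the exclusion region), so $\gamma^*$-a.e.\ the tightness $u^*(z)-u^*(z')=\|z-z'\|_\infty$ of condition (ii) of Lemma \ref{lem:compslack} holds automatically.

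The main obstacle is the construction of the \emph{shuffling measure}, which rearranges mass orthogonally to the $45^\circ$ direction along the top and right boundaries in order to meet the convex-dominance constraint $\gamma_1-\gamma_2\succeq_{cvx}\bar\mu$ globally. The existence and support of this shuffling measure constrain the free parameters: testing the convex-dominance inequality against the family of increasing convex functions that are piecewise linear along $45^\circ$ lines produces a first-order condition whose solution forces $\delta=20/63$, while testing against the natural convex function associated with the lottery-to-deterministic interface produces a polynomial whose relevant root is $a=0.6615$. With these values the two marginals of $\gamma^*$ balance, Lemma \ref{lem:compslack}(i) reduces to the identity $\bar\mu(D)=0$ established in \eqref{eqn:mu-D}, and the primal and dual objectives agree, certifying optimality.
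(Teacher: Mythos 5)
Your overall plan matches the paper's: propose the mechanism, build $u^*$ with $\nabla u^* = q^*$, construct a dual $\gamma^*$ transporting boundary mass along $45^\circ$ lines, and invoke Lemma~\ref{lem:compslack}. You also correctly isolate the shuffling measure as the crux. But two concrete pieces are missing or wrong.

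First, the shuffling measure is \emph{not} constructed. You say that ``testing the convex-dominance inequality against the family of increasing convex functions that are piecewise linear along $45^\circ$ lines produces a first-order condition whose solution forces $\delta=20/63$,'' and similarly for $a$, but this describes a hope, not a proof. The paper's proof explicitly writes down piecewise-linear densities $\alpha^{(i)}$ and $\beta^{(i)}$ on the top and right boundaries (equations~\eqref{eqn:alpha}--\eqref{eqn:beta}) and then, in Lemma~\ref{lem:cvx}, verifies the two key facts that make the construction work: $\bar\alpha^{(i)}$ has total mass zero and nonnegative first moment on the segment where $u^*$ is constant, while $\bar\beta^{(i)}$ has total mass zero \emph{and} first moment zero on the segment where $u^*$ is affine, after which a comparison-against-affine-shifts argument yields $\succeq_{cvx}0$. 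Your proposal never identifies that it is precisely these moment identities (which are what fix $\delta$ and $a$) that must be established; without them the claim that ``$\gamma^*$ is feasible'' is unproved. In particular, the split between a constant piece and an affine piece of $u^*$ along the boundary is exactly why two separate measures $\alpha$ and $\beta$ are needed, and this is not visible in your sketch.

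Second, the verification of Lemma~\ref{lem:compslack}(i) does not ``reduce to the identity $\bar\mu(D)=0$.'' Condition~(i) is $\int_D u^*\,d(\gamma_1^*-\gamma_2^*)=\int_D u^*\,d\bar\mu$, which here amounts to $\int_{D\backslash Z}u^*\,d(\text{shuffling})=0$ together with $\int_Z u^*\,d\bar\mu=0$. The latter holds because $u^*\equiv 0$ on $Z$; the former holds only because of the Lemma~\ref{lem:cvx} moment identities (mass zero where $u^*$ is constant, mass and first moment zero where $u^*$ is affine). The identity $\bar\mu(D)=0$ is used elsewhere (to show feasibility of the exclusion region, namely $\bar\mu_+(Z)=\bar\mu_-(Z)=1$ so that $\bar\mu^Z\preceq_{cvx}0$), not to establish condition~(i). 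As stated, your conclusion is incorrect, and fixing it requires the same explicit shuffling-measure analysis that is missing above.
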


\begin{proof}
\citet{Pav11} proved this via virtual valuations. We shall use the dual method. To prove this theorem, we must find a feasible $u$ and a feasible $\gamma$, and show that they satisfy the conditions of Lemma \ref{lem:compslack}. We define the allocation $q$ as given in Figure \ref{fig:illust-1}. The primal variable $u$ can be derived by fixing $u(c,c)=0$ and by using the allocation variable $q$, since $\nabla u=q$.

We now define functions $\alpha^{(1)},\beta^{(1)}:D\rightarrow\mathbb{R}$ as follows (see Figures \ref{fig:alpha} and \ref{fig:beta}).

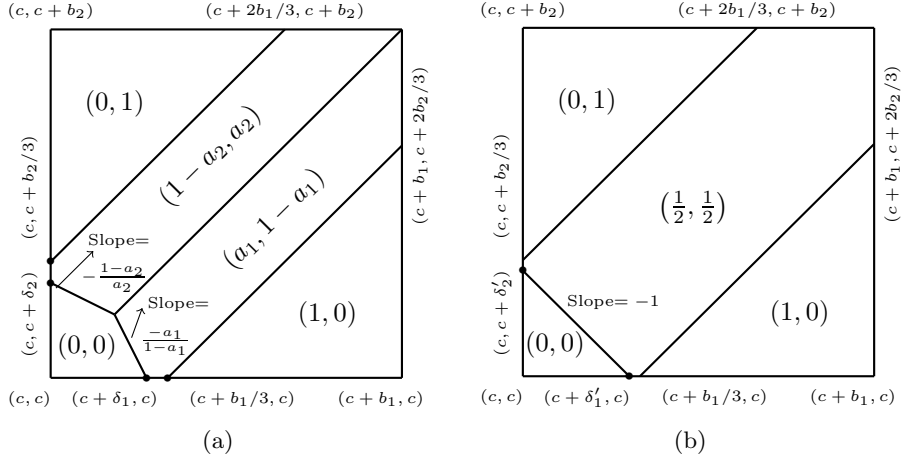
\begin{figure}[H]
\centering
\begin{tabular}{cc}
\subfloat[]{\label{fig:illust-1}\begin{tikzpicture}[scale=0.28,font=\tiny,axis/.style={very thick, -}]
\node at (-1,-1) {$(c,c)$};
\draw [axis,thick,-] (0,0)--(16.5,0);
\node at (15.5,-1) {$(c+b_1,c)$};
\draw [axis,thick,-] (0,0)--(0,16.5);
\node [above] at (0,16.5) {$(c,c+b_2)$};
\draw [axis,thick,-] (0,16.5)--(16.5,16.5);
\draw [axis,thick,-] (16.5,0)--(16.5,16.5);
\draw [axis,thick,-] (5.5,0)--(16.5,11);
\node [below,rotate=90] at (16.5,11) {$(c+b_1,c+2b_2/3)$};
\draw [axis,thick,-] (0,5.5)--(11,16.5);
\node [above] at (11,16.5) {$(c+2b_1/3,c+b_2)$};
\draw [axis,thick,-] (0,4.5)--(3,3);
\draw [axis,thick,-] (4.5,0)--(3,3);
\draw [axis,thick,-] (3,3)--(16.5,16.5);
\node at (9,-1) {$(c+b_1/3,c)$};
\node at (3,-1) {$(c+\delta_1,c)$};
\node [rotate=90] at (-1,9) {$(c,c+b_2/3)$};
\node [rotate=90] at (-1,3) {$(c,c+\delta_2)$};
\foreach \Point in {(4.5,0),(5.5,0),(0,4.5),(0,5.5)}{
 \node at \Point {\textbullet};}
\node at (1.7,1.5) {\small$(0,0)$};
\node at (13,3) {\small$(1,0)$};
\node at (3,13) {\small$(0,1)$};
\node [rotate=45] at (7.5,10.5) {\small$(1-a_2,a_2)$};
\node [rotate=45] at (10.5,7.5) {\small$(a_1,1-a_1)$};
\node at (3.2,6.5) {Slope$=$};
\node at (3,4.75) {$-\frac{1-a_2}{a_2}$};
\draw [thin,->] (0.25,4.25)--(2,6);
\node at (6,3.5) {Slope$=$};
\node at (5.4,1.75) {$\frac{-a_1}{1-a_1}$};
\draw [thin,->] (3.8,1.9)--(4.3,3.3);
\end{tikzpicture}}&
\subfloat[]{\label{fig:illust-2}\begin{tikzpicture}[scale=0.28,font=\tiny,axis/.style={very thick, -}]
\node at (-1,-1) {$(c,c)$};
\draw [axis,thick,-] (0,0)--(16.5,0);
\node at (15.5,-1) {$(c+b_1,c)$};
\draw [axis,thick,-] (0,0)--(0,16.5);
\node [above] at (0,16.5) {$(c,c+b_2)$};
\draw [axis,thick,-] (0,16.5)--(16.5,16.5);
\draw [axis,thick,-] (16.5,0)--(16.5,16.5);
\draw [axis,thick,-] (5.5,0)--(16.5,11);
\node [below,rotate=90] at (16.5,11) {$(c+b_1,c+2b_2/3)$};
\draw [axis,thick,-] (0,5.5)--(11,16.5);
\node [above] at (11,16.5) {$(c+2b_1/3,c+b_2)$};
\draw [axis,thick,-] (0,5)--(5,0);
\node at (9,-1) {$(c+b_1/3,c)$};
\node at (3,-1) {$(c+\delta_1',c)$};
\node [rotate=90] at (-1,9) {$(c,c+b_2/3)$};
\node [rotate=90] at (-1,3) {$(c,c+\delta_2')$};
\foreach \Point in {(5,0),(0,5)}{
 \node at \Point {\textbullet};}
\node at (1.5,1.5) {\small$(0,0)$};
\node at (13,3) {\small$(1,0)$};
\node at (3,13) {\small$(0,1)$};
\node at (8,8) {\small$\left(\frac{1}{2},\frac{1}{2}\right)$};
\node at (4.25,3.5) {Slope$=-1$};
\end{tikzpicture}}
\end{tabular}
\caption{Optimal mechanism when $b_1=b_2=1$ and (a) $c=1.26$, (b) $c=1.5$.}\label{fig:illust}
\end{figure}

\begin{align}
\alpha^{(1)}(c+t,c+t'):&=\begin{cases}3t-1&(t,t')\in[0,2/3]\times\{1\},\\0&\mbox{otherwise}.\end{cases}\label{eqn:alpha}\\
\beta^{(1)}(c+t,c+t'):&=\begin{cases}3t-1&(t,t')\in[2/3,1-\delta_2]\times\{1\},\\3t+3a(1-t-\delta_2)-c-1&(t,t')\in[1-\delta_2,1]\times\{1\},\\0&\mbox{otherwise}.\end{cases}\label{eqn:beta}
\end{align}
The functions $\alpha^{(2)}$ and $\beta^{(2)}$ are defined similarly on the intervals $(\{c+1\}\times[c,c+2/3])$ and $(\{c+1\}\times[c+2/3,c+1])$ respectively. Observe that $\alpha^{(i)}$ and $\beta^{(i)}$ are densities (Radon-Nikodym derivatives) of measures that are absolutely continuous w.r.t. the surface Lebesgue measure. The measures themselves are denoted $\bar{\alpha}^{(i)}$ and $\bar{\beta}^{(i)}$, respectively.

We now construct the dual variable $\gamma$ as follows. First, let (i) $\gamma_1:=\gamma_1^Z+\gamma_1^{D\backslash Z}$, where $Z$ is the exclusion region; (ii) $\gamma_1^Z=\bar{\mu}^Z$, the $\bar{\mu}$ measure restricted to $Z$; and (iii) $\gamma_1^{D\backslash Z}=(\bar{\mu}^{D\backslash Z}+\sum_i(\bar{\alpha}^{(i)}+\bar{\beta}^{(i)}))_+$. So $\gamma_1$ is supported on $Z\cup([1.26,2.26]\times\{2.26\})\cup(\{2.26\}\times[1.26,2.26])$. We define $\gamma_1^s$ as the Radon-Nikodym derivative of $\gamma_1$ w.r.t. the surface Lebesgue measure. It is easy to see that $\gamma_1^s(z)=\mu_s(z)+\sum_i(\alpha^{(i)}(z)+\beta^{(i)}(z))$ when $z\in(Z\cap D)\cup([1.26,2.26]\times\{2.26\})\cup(\{2.26\}\times[1.26,2.26])$, and zero otherwise. We now specify a transition probability kernel $\gamma(\cdot~|~x)$ for all $x$ in the support of $\gamma_1$.

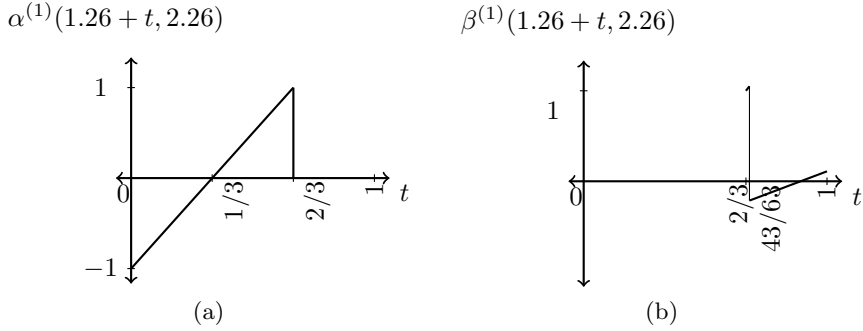
\begin{figure}[t]
\centering
\begin{tabular}{cc}
\subfloat[]{\label{fig:alpha}\begin{tikzpicture}[scale=0.2,font=\small,axis/.style={very thick, -}]
\draw [axis,thick,<->] (-1,7)--(17,7);
\node [right] at (17,6) {$t$};
\draw [axis,thick,<->] (0,0)--(0,15);
\node [above] at (-1,16) {$\alpha^{(1)}(1.26+t,2.26)$};
\draw [thick,-] (0,1)--(10.67,13);
\draw [thick,-] (10.67,13)--(10.67,7);
\node at (-0.5,6) {$0$};
\draw [thin,-] (5.33,6.75) -- (5.33,7.25);
\node [below,rotate=90] at (5.33,5.5) {$1/3$};
\draw [thin,-] (10.67,6.75) -- (10.67,7.25);
\node [below,rotate=90] at (10.67,5.5) {$2/3$};
\draw [thin,-] (16,6.75) -- (16,7.25);
\node [rotate=90] at (16,6.25) {$1$};
\draw [thin,-] (-0.25,1) -- (0.25,1);
\node at (-2,1) {$-1$};
\draw [thin,-] (-0.25,13) -- (0.25,13);
\node at (-2,13) {$1$};
\end{tikzpicture}}&
\subfloat[]{\label{fig:beta}\begin{tikzpicture}[scale=0.2,font=\small,axis/.style={very thick, -}]
\draw [axis,thick,<->] (-1,7)--(17,7);
\node [right] at (17,6) {$t$};
\draw [axis,thick,<->] (0,0)--(0,15);
\node [above] at (-1,16) {$\beta^{(1)}(1.26+t,2.26)$};
\draw [thick,-] (10.67,13)--(10.9,13.3);
\draw [thin,-] (10.9,13.3)--(10.9,5.72);
\draw [thick,-] (10.9,5.72)--(16,7.66);
\node at (-0.5,6) {$0$};
\draw [thin,-] (10.67,6.75) -- (10.67,7.25);
\node [rotate=90] at (10.3,5.5) {$2/3$};
\draw [thin,-] (10.9,6.75) -- (10.9,7.25);
\node [below,rotate=90] at (10.9,4.5) {$43/63$};
\draw [thin,-] (16,6.75) -- (16,7.25);
\node [rotate=90] at (16,6.25) {$1$};
\draw [thin,-] (-0.25,13) -- (0.25,13);
\node at (-2,11.67) {$1$};
\end{tikzpicture}}
\end{tabular}
\caption{(a) The measure $\alpha^{(1)}$. (b) The measure $\beta^{(1)}$.}
\end{figure}

\begin{enumerate}
 \item[(a)] For $x\in Z$, we define $\gamma(y~|~x)=\delta_x(y)$. This is interpreted as no mass being transferred.
 \item[(b)] For $x\in([1.26,2.26]\times\{2.26\})\cup(\{2.26\}\times[1.26,2.26])$, we define $\gamma(y~|~x)=(\mu(y)+\mu_s(y))_-/\gamma_1^s(x)$ if $y\in\{y\in D\backslash Z:y_1-y_2=x_1-x_2\}$, and zero otherwise. This is interpreted as a transfer of $\gamma_1^s(x)$ from the boundary point $x$ to (the $45^\circ$ line segment) $\{y\in D\backslash Z:y_1-y_2=x_1-x_2\}$, which has $x$ as one end-point.
\end{enumerate}
We then define $\gamma(F)=\int_{(x,y)\in F}\gamma_1(dx)\gamma(dy~|~x)$ for any measurable $F\in D\times D$. It is now easy to check that $\gamma_2^Z=\bar{\mu}^Z$, and $\gamma_2^{D\backslash Z}=(\bar{\mu}^{D\backslash Z}+\sum_i(\bar{\alpha}^{(i)}+\bar{\beta}^{(i)}))_-$. Thus we have $(\gamma_1-\gamma_2)^Z=0$, and $(\gamma_1-\gamma_2)^{D\backslash Z}=\bar{\mu}^{D\backslash Z}+\sum_i(\bar{\alpha}^{(i)}+\bar{\beta}^{(i)})$.

We now verify that $\gamma$ is feasible. Observe that the components of $\bar{\mu}^Z$ are positive only at the left-bottom corner of $D$ (i.e., at $(c,c)$) and negative elsewhere, and that $\bar{\mu}_+(Z)=1=\bar{\mu}_-(Z)$ (the second equality requires some calculations). So we have $\int_Z f\,d\bar{\mu}\leq 0$ for any increasing function $f$, and thus $\bar{\mu}^Z\preceq_{cvx}0=(\gamma_1-\gamma_2)^Z$. We next prove that $(\gamma_1-\gamma_2)^{D\backslash Z}\succeq_{cvx}\bar{\mu}^{D\backslash Z}$. Since $(\gamma_1-\gamma_2-\bar{\mu})^{D\backslash Z}=\sum_i(\bar{\alpha}^{(i)}+\bar{\beta}^{(i)})$, it suffices to prove that $\sum_i(\bar{\alpha}^{(i)}+\bar{\beta}^{(i)})\succeq_{cvx}0$. We do this in the next lemma.
\begin{lemma}\label{lem:cvx}
\begin{enumerate}
\item[(i)] The measure $\bar{\alpha}^{(1)}$ is such that $\bar{\alpha}^{(1)}([1.26,1.26+2/3]\times\{2.26\})=0$ and $\int_{1.26}^{1.26+2/3}(t-1.26)\,\bar{\alpha}^{(1)}(dt,2.26)\geq 0$. Hence for any $f$ constant on $[1.26,1.26+2/3]$, we have $\int_{1.26}^{1.26+2/3}f(t)\,d\bar{\alpha}^{(1)}(dt,2.26)=0$. Further, $\bar{\alpha}^{(1)}\succeq_{cvx}0$. A similar result holds for $\bar{\alpha}^{(2)}$.
\item[(ii)] $\bar{\beta}^{(1)}([1.26+2/3,2.26]\times\{2.26\})=0$ and $\int_{1.26+2/3}^{2.26}(t-1.26)\,\bar{\beta}^{(1)}(dt,2.26)=0$. Hence we have $\int_{1.26+2/3}^{2.26}f(t)\,\bar{\beta}^{(1)}(dt,2.26)=0$ for any affine $f$ on $[1.26+2/3,2.26]$. Further, $\bar{\beta}^{(1)}\succeq_{cvx}0$. A similar result holds for $\bar{\beta}^{(2)}$.
\end{enumerate}
\end{lemma}
\begin{proof}
See \ref{app:a}.\qed
\end{proof}
We have thus established that $\gamma_1-\gamma_2\succeq_{cvx}0$. We now verify if $u$ and $\gamma$ satisfy the conditions in Lemma \ref{lem:compslack}.
\begin{multline*}
  \int_Du\,d(\gamma_1-\gamma_2)=\int_Zu\,d(\gamma_1-\gamma_2)+\int_{D\backslash Z}u\,d(\gamma_1-\gamma_2)\\=\int_{D\backslash Z}u\,d\left(\bar{\mu}+\sum_i(\bar{\alpha}^{(i)}+\bar{\beta}^{(i)})\right)=\int_{D\backslash Z}u\,d\bar{\mu}=\int_Du\,d\bar{\mu},
\end{multline*}
where the second equality holds because $(\gamma_1-\gamma_2)^Z=0$; the third equality holds because $u(z)$ is a constant when $z\in([1.26,1.26+2/3]\times\{2.26\})\cup(\{2.26\}\times[1.26,1.26+2/3])$, and $u(z)$ is affine when $z\in([1.26+2/3,2.26]\times\{2.26\})\cup(\{2.26\}\times[1.26+2/3,2.26])$; and the last equality holds because $u(z)=0$ when $z\in Z$. To see why $u(z)-u(z')=\|z-z'\|_\infty$ holds $\gamma$-a.e., it suffices to check this condition for those $(z,z')$ for which $\gamma(\cdot~|~z)$ is nonzero, as in the cases (a) and (b) above. For $z,z'$ in (a), $z=z'$ and hence $u(z)-u(z')=0$; in (b), $(z,z')$ lie on a $45^\circ$ line, and hence $u(z)-u(z')=(z_1-z_1')=(z_2-z_2')=\|z-z\|_\infty$. Thus $u(z)-u(z')=\|z-z'\|_\infty$ holds $\gamma$-a.e.\qed
\end{proof}

The dual measure $\gamma$ was defined so that the measure $\gamma_1-\gamma_2-\bar{\mu}$, called the shuffling measure, convex-dominates $0$. Our key challenge in computing the optimal mechanism lies in constructing the shuffling measure. In the next example, we use a significantly different shuffling measure.

\subsection{Example 2: $z\sim\mbox{Unif }[1.5,2.5]^2$}
\begin{theorem}\cite{Pav11}\label{thm:eg-2}
Consider the case when $c=1.5$, and $b_1=b_2=1$. Then, the optimal mechanism is as depicted in Figure \ref{fig:illust-2}, with $\delta_1'=\delta_2'=\sqrt{5/3}-1$.
\end{theorem}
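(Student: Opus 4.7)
The proof would follow the same template as Theorem \ref{thm:eg-1}: starting from the allocation $q$ depicted in Figure \ref{fig:illust-2}, recover the primal utility $u$ via $\nabla u = q$ and $u(c,c)=0$; construct a feasible dual measure $\gamma$ that transports mass along $45^\circ$ lines from the top and right boundaries of $D$ into the interior $D\setminus Z$; and then invoke Lemma \ref{lem:compslack}. Two features specific to this example simplify matters: (a) the central allocation is the symmetric lottery $(1/2,1/2)$, so $u$ is affine with gradient $(1/2,1/2)$ there and, crucially, is constant along the $45^\circ$ exclusion boundary $\{z_1+z_2=2c+2\delta'\}$; and (b) symmetry reduces the free parameters of the mechanism to the single $\delta'=\delta_1'=\delta_2'$.

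First I would pin down $\delta'=\sqrt{5/3}-1$ from the requirement $\bar\mu(Z)=0$, where $Z=\{z\in D:z_1+z_2\leq 2c+2\delta'\}$ is the triangular exclusion region with corners $(c,c)$, $(c+\delta',c)$, $(c,c+\delta')$. Using (\ref{eqn:bar-mu}) with $b_1=b_2=1$ and $c=1.5$, the interior contributes $-\tfrac{3}{2}(\delta')^2$, each of the two edge segments (left and bottom of $Z$) contributes $-c\delta'$, and the point mass at $(c,c)$ contributes $+1$; setting the total to zero gives $3(\delta')^2+6\delta'-2=0$, whose positive root is $\sqrt{5/3}-1$. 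This condition is both required in order to set $\gamma_1^Z=\bar\mu^Z$ (so that $(\gamma_1-\gamma_2)^Z=0$) and is precisely the revenue-optimality condition for the exclusion region size.

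Next I would construct $\gamma$ as in the proof of Theorem \ref{thm:eg-1}. For each $x$ on the top or right boundary, transport mass from $x$ along the $45^\circ$ line through $x$ into $D\setminus Z$, in amounts exactly canceling the negative density of $\bar\mu$ along that fiber. Because $q_1+q_2=1$ everywhere in $D\setminus Z$, $u$ grows along any such segment at the rate $\|z-z'\|_\infty$, so the second condition of Lemma \ref{lem:compslack} holds automatically $\gamma$-a.e. The boundary density $\gamma_1^s$ would decompose as $\mu_s+\alpha^{(i)}+\beta^{(i)}$, with $\alpha^{(i)}$ supported on the portion of the top/right edge bordering the $(0,1)$ or $(1,0)$ region (where $u$ is constant along the edge) and $\beta^{(i)}$ on the portion bordering the $(1/2,1/2)$ region (where $u$ is affine along the edge with slope $1/2$). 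Following the template of (\ref{eqn:alpha})--(\ref{eqn:beta}), I would choose $\alpha^{(i)}$ to have zero net mass and positive first moment, and $\beta^{(i)}$ to have vanishing zeroth and first moments and positive second moment, so that both measures convex-dominate $0$ by the same argument as Lemma \ref{lem:cvx}.

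Feasibility then reduces to verifying $\bar\mu^Z\preceq_{cvx}0$, which follows from $\bar\mu(Z)=0$ together with the observation that the positive $\bar\mu$-mass in $Z$ is concentrated at $(c,c)$ while the negative mass lies at strictly larger values, so $\int_Z f\,d\bar\mu\leq 0$ for every continuous, convex, increasing $f$. The first complementary slackness identity $\int_D u\,d(\gamma_1-\gamma_2)=\int_D u\,d\bar\mu$ then holds since $u$ is constant (respectively affine) on the support of $\alpha^{(i)}$ (respectively $\beta^{(i)}$), annihilating both correction terms. The main obstacle I anticipate is the explicit construction of $\alpha^{(i)}$ and $\beta^{(i)}$: the slope along the lottery portion of the boundary is fixed at $1/2$ rather than a free parameter $a$ as in Example 1, so although the moment conditions are cleaner, one still has to verify nonnegativity of $\gamma_1^s$ throughout and check that the mass routed into each $45^\circ$ fiber exactly matches the negative $\bar\mu$-density along that fiber without requiring additional corrections.
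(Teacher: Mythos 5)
Your derivation of $\delta' = \sqrt{5/3}-1$ from $\bar\mu(Z)=0$ is correct, and the overall skeleton (primal $u$ from $q$, dual transport along $45^\circ$ fibers, verify Lemma \ref{lem:compslack}) matches the paper's. But there is a genuine gap at exactly the point you flag as a ``potential obstacle,'' and the answer is the opposite of what your construction assumes: additional corrections \emph{are} required. Transporting from the top/right boundary along $45^\circ$ fibers, with only boundary shuffling measures $\bar\alpha^{(i)},\bar\beta^{(i)}$, does not yield a feasible $\gamma$ here. The culprit is the $(1/2,1/2)$ region, which spans the top-right corner of $D$: the portion of the boundary bordering this region consists of two edge pieces meeting at a kink, and along each piece $u$ restricted to the boundary is a \emph{different} affine function (one in $z_1$, one in $z_2$). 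The moment conditions (zero mass, zero first moment) you propose to impose on $\bar\beta^{(i)}$ therefore no longer guarantee $\int u\,d\bar\beta^{(i)}=0$, and the balance of mass along the short fibers near the corner fails. The paper resolves this by introducing a third shuffling measure $\bar\lambda$ supported on the \emph{interior} line $z_1+z_2 = 2c+\delta_2$ (see (\ref{eqn:shuffle-eg-2}) and Figure \ref{fig:measure}), with a corresponding third case (c) in the transport kernel that moves mass from points on this line into the two small wedges $\delta_1 P_1\delta_1'$ and $\delta_2 P_2\delta_2'$ (Figure \ref{fig:segmentation}). One must then separately show $\bar\lambda\succeq_{cvx}0$, which works because $\bar\lambda$ has zero total mass and, by the $\delta_1=\delta_2$ symmetry about $z_1=z_2$, zero first moment in the relevant coordinate. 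This interior shuffling measure is precisely the ``significantly different structure'' the paper emphasizes in the remark immediately preceding the theorem, and it is the entire reason this example is included.

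A secondary inaccuracy: you suggest taking $\alpha^{(i)},\beta^{(i)}$ ``following the template of (\ref{eqn:alpha})--(\ref{eqn:beta})'' with the lottery slope fixed at $1/2$. The paper instead reuses the formulas (\ref{eqn:alpha})--(\ref{eqn:beta}) but with \emph{auxiliary} parameters $\delta_1=\delta_2 = \frac{((3+\sqrt{33})/8)-1}{(27-3\sqrt{33})/32} > \delta_2'$ and $a=(27-3\sqrt{33})/32 \ne 1/2$ --- that is, with the threshold and slope of a fictitious Example-1-style mechanism, not those of the actual Figure \ref{fig:illust-2} mechanism. The mismatch between $\delta_2$ and $\delta_2'$ is exactly what $\bar\lambda$ reconciles. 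Without noticing this, one would not know where to place $\bar\lambda$ nor how to choose its density.
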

We use the shuffling measure $\bar{\lambda}+\sum_i(\bar{\alpha}^{(i)}+\bar{\beta}^{(i)})$, defined as follows. We define $\alpha^{(i)}$ and $\beta^{(i)}$, the respective Radon-Nikodym derivatives of the measures $\bar{\alpha}^{(i)}$ and $\bar{\beta}^{(i)}$ w.r.t. the surface Lebesgue measure, as in (\ref{eqn:alpha}) and (\ref{eqn:beta}) respectively, but with $\delta_1=\delta_2=\frac{((3+\sqrt{33})/8)-1}{(27-3\sqrt{33})/32}>\delta_2'$ and $a=(27-3\sqrt{33})/32$. We define $\lambda:D\rightarrow\mathbb{R}$, the Radon-Nikodym derivative of the measure $\bar{\lambda}$ w.r.t. the surface Lebesgue measure, as follows (see Figure \ref{fig:measure}):
\begin{align}
  \lambda(c&+(t-1+\delta_2)/2,c+\delta_2-(t-1+\delta_2)/2)\nonumber\\&=\lambda(c+\delta_2-(t-1+\delta_2)/2,c+(t-1+\delta_2)/2)\label{eqn:shuffle-eg-2}\\&=\begin{cases}3a(t-1+\delta_2)+c&t\in[1-\delta_2,1-\delta_2'],\\3t(a-1/2)+3/2(1-\delta_2')-3a(1-\delta_2)&t\in[1-\delta_2',1].\nonumber\end{cases}
\end{align}
$\lambda$ is defined to be $0$ at every other point in $D$. Observe that the function is defined on the line $z_1+z_2=2c+\delta_2$, and thus is symmetric about the line $z_1=z_2$.

\begin{figure}[t]
\centering
\begin{tikzpicture}[scale=0.2,font=\small,axis/.style={very thick, -}]
\draw [axis,thick,<->] (-1,6)--(16.5,6);
\node [right] at (16,5) {$t$};
\draw [axis,thick,<->] (0,-1)--(0,13);
\node [above] at (1,16.5) {\tiny L: $\lambda(c+(t-1+\delta_2)/2,c+\delta_2-(t-1+\delta_2)/2)$};
\draw [thick,-] (2,9)--(4,10);
\node [above] at (1,14.75) {\tiny R: $\lambda(c+\delta_1+(t-1+\delta_1)/2,c-(t-1+\delta_1)/2)$};
\draw [thin,-] (4,10)--(4,7);
\draw [thick,-] (4,7)--(8,3);
\draw [thick,-] (8,3)--(12,7);
\draw [thin,-] (12,7)--(12,10);
\draw [thick,-] (12,10)--(14,9);
\draw [thin,-] (2,5.75) -- (2,6.25);
\node [rotate=90] at (2,3) {$(1-\delta_2)$};
\draw [thin,-] (4,5.75) -- (4,6.25);
\node [rotate=90] at (4,3) {$(1-\delta_2')$};
\draw [thin,-] (8,5.75) -- (8,6.25);
\node [rotate=90] at (8,5.25) {$1$};
\draw [thin,-] (12,5.75) -- (12,6.25);
\node [rotate=90] at (12,3) {$(1-\delta_1')$};
\draw [thin,-] (14,5.75) -- (14,6.25);
\node [rotate=90] at (14,3) {$(1-\delta_1)$};
\draw [thin,-] (-0.25,9) -- (0.25,9);
\node at (-1,9) {$c$};
\draw [thin,dotted] (8,-1)--(8,14);
\node at (4,11.5) {Left};
\node at (12,11.5) {Right};
\end{tikzpicture}
\caption{The measure $\lambda$. The $y$-axis expressions for the left and the right portions of the graph are indicated using $L$ and $R$. The measure is symmetric because we have $\delta_1=\delta_2$.}\label{fig:measure}
\end{figure}
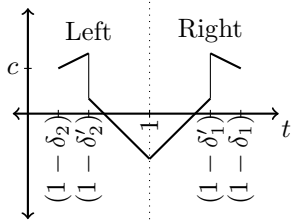

We construct the dual measure using $\bar{\lambda}+\sum_i(\bar{\alpha}^{(i)}+\bar{\beta}^{(i)})$ as the shuffling measure. Observe that the shuffling measure has a significantly different structure compared to Example-1. For a detailed proof of the theorem, we refer the reader to \ref{app:a}.

The results of Theorems \ref{thm:eg-1} and \ref{thm:eg-2} are parts of a more general result shown in \cite{Pav11}. Pavlov's proof uses a virtual valuation method, but our proofs use the dual approach. We now solve another example via the dual approach, going beyond those considered in \cite{Pav11}.

\subsection{Example 3: $z\sim\mbox{Unif }[0,1.2]\times[0,1]$}
\begin{theorem}\label{thm:eg-3}
Consider the case when $c=0$, $b_1=1.2$, and $b_2=1$. Then, the optimal mechanism is as in Figure \ref{fig:illust-3}, with $(\delta_1,\delta_2)$ simultaneously solving
 \begin{align*}
  &-3\delta_1\delta_2-c(\delta_1+\delta_2)+b_1b_2=0.\\
  &-\frac{3}{2}\delta_2^2+2b_2\delta_2-\frac{b_2^2}{2}+(c-2b_2+3\delta_2)\delta_1=0.
 \end{align*}
The values of $(\delta_1,\delta_2)$ can be solved numerically to be
$$
(\delta_1,\delta_2)\approx(0.678837,0.589243).
$$
\end{theorem}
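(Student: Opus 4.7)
The plan is to follow the dual-measure construction used in Theorems~\ref{thm:eg-1} and~\ref{thm:eg-2}. First I would fix the candidate mechanism of Figure~\ref{fig:illust-3}---a posted-price mechanism with prices $c+\delta_1,\,c+\delta_2$, exclusion region $Z=[c,c+\delta_1]\times[c,c+\delta_2]$, and pure allocations $(1,0)$ and $(0,1)$ separated by the $45^\circ$ line $z_1-z_2=\delta_1-\delta_2$---and define the primal variable
\[
u(z)=\max\bigl(z_1-c-\delta_1,\; z_2-c-\delta_2,\; 0\bigr),
\]
which is continuous, convex, increasing, and satisfies $\nabla u\in\{(0,0),(1,0),(0,1)\}$ almost everywhere, hence is feasible for~\eqref{eqn:primal}.

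Next I would construct $\gamma$ by mimicking Example~1: identity transport on $Z$, and from each point $x$ on the top (resp.\ right) boundary outside $Z$, transport its $\mu_s$-mass along the $45^\circ$ line through $x$ into the $(0,1)$-region (resp.\ $(1,0)$-region), terminating either at the upper edge of $Z$ or at the separating diagonal. Unlike Examples~1--2, the candidate has no stochastic menu item, so the auxiliary shuffling corrections $\bar{\alpha},\bar{\beta},\bar{\lambda}$ used there collapse, and the transport simply redistributes boundary line density against the interior area density. Complementary slackness from Lemma~\ref{lem:compslack} then holds by inspection: $u\equiv 0$ on $Z$, and along every transport line inside a single allocation region $u$ is affine of unit slope, so $u(z)-u(z')=\|z-z'\|_\infty$ $\gamma$-almost everywhere. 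What remains is the feasibility condition $\gamma_1-\gamma_2\succeq_{cvx}\bar{\mu}$, which I would split into its restrictions to $Z$ and to $D\setminus Z$.

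For $Z$, the sign pattern of $\bar{\mu}^Z$ (a unit point mass at $(c,c)$ with nonpositive densities elsewhere) mirrors Example~1, so $\bar{\mu}^Z\preceq_{cvx}0$ reduces to $\bar{\mu}(Z)=0$; summing the point, line, and area contributions from~\eqref{eqn:bar-mu} directly gives $1-c(\delta_1+\delta_2)/(b_1b_2)-3\delta_1\delta_2/(b_1b_2)=0$, which, after clearing the denominator, is the first displayed polynomial. For $D\setminus Z$, the role played in Example~1 by Lemma~\ref{lem:cvx} is replaced by zero-mass identities on the top and right boundary segments for the residual shuffling measure $\gamma_1-\gamma_2-\bar{\mu}$; computing the top-boundary identity, after splitting $\{z_2=c+b_2\}\setminus Z$ into its $(0,1)$-portion (whose $45^\circ$ line hits the exclusion set) and its $(1,0)$-portion (whose line hits the separating diagonal), yields the second displayed polynomial. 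The analogous right-boundary identity turns out to be a linear consequence of the two (using $b_1b_2=c(\delta_1+\delta_2)+3\delta_1\delta_2$).

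The main obstacle I anticipate is establishing convex-positivity of the residual shuffling measure on $D\setminus Z$: since $b_1\neq b_2$ the symmetric construction used in Example~2 is unavailable, and one has to verify a moment-type inequality on the top and right boundary segments separately, in the spirit of Lemma~\ref{lem:cvx}. Once this is in hand, existence of $(\delta_1,\delta_2)\in(0,b_1)\times(0,b_2)$ simultaneously solving the two polynomials---and in particular the numerical root $(\delta_1,\delta_2)\approx(0.678837,0.589243)$---follows by eliminating $\delta_1$ through the first equation to obtain a cubic in $\delta_2$ and invoking the intermediate value theorem on $(0,b_2)$.
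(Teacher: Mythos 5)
Your overall framework---fix $u$ from the mechanism, construct a transport $\gamma$, verify $\gamma_1-\gamma_2\succeq_{cvx}\bar{\mu}$ and complementary slackness via Lemma~\ref{lem:compslack}---matches the paper's, and your identification of the two polynomials is right: the first is $\bar{\mu}(Z)=0$ computed from \eqref{eqn:bar-mu}, and the second is the zero-total-mass identity for the top-boundary shuffling density (what the paper calls $\bar{\alpha}$, equation~\eqref{eqn:shuffle-eg-3}). Your formula for $u$ is also correct.

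However, there is a genuine gap in your construction of $\gamma$. You assert that since the candidate mechanism has no stochastic menu item, ``the auxiliary shuffling corrections $\bar{\alpha},\bar{\beta},\bar{\lambda}$ used there collapse, and the transport simply redistributes boundary line density against the interior area density'' along $45^\circ$ lines. Both parts of this are false, and the failure is exactly what the paper singles out as the novel feature of this example. First, the need for shuffling is not caused by stochastic allocations; it arises because the boundary line density $\mu_s$ does not match, segment by segment, the interior area deficit $|\mu|$ along each $45^\circ$ line. The paper therefore has to construct a nontrivial three-part shuffling measure $\bar{\alpha}+\bar{\alpha}^{(o)}+\bar{\alpha}^{(h)}$ as in \eqref{eqn:shuffle-eg-3}--\eqref{eqn:shuffle-eg-3-h}. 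Second, and more importantly, purely oblique transport is infeasible here: because $b_1>b_2$, the wedge $\{\delta^*\leq z_1-z_2\leq b_1-b_2\}$ has $45^\circ$ segments terminating on the top-boundary piece $[1+\delta^*,1.2]\times\{1\}$, and the mass placed there only covers the upper portion $y_2\in[2/3,1]$ of each such segment (case~(c) of the paper's kernel). The remainder of the wedge is filled by a \emph{horizontal} transfer from the right-boundary segment $\{1.2\}\times[\delta_1-0.2,2/3]$, encoded by the density $\alpha^{(h)}$ (case~(d)). Without that horizontal component, the marginal $\gamma_2$ you generate cannot cover $(\bar\mu+\text{shuffling})_-$ on the wedge, and the constraint $\gamma_1-\gamma_2\succeq_{cvx}\bar{\mu}$ fails there. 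Your plan flags the asymmetric-case moment inequalities as the anticipated obstacle, but the actual missing idea is upstream of that: you need to discover the oblique-plus-horizontal splitting of the transport kernel before there is any residual shuffling measure to analyze.
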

We construct the shuffling measure $\bar{\alpha}+\bar{\alpha}^{(o)}+\bar{\alpha}^{(h)}$ as follows, using its respective Radon-Nikodym derivatives $\alpha$, $\alpha^{(o)}$, $\alpha^{(h)}$ w.r.t. the surface Lebesgue measure. The superscripts $(o)$ and $(h)$ stand for 'oblique' and 'horizontal'.
\begin{align}
\alpha(c+t,c+t'):&=\frac{1}{1.2}\begin{cases}3t-1&(t,t')\in[0,1-\delta_2]\times\{1\},\\3(1-\delta_2)-c-1&(t,t')\in[1-\delta_2,1+\delta^*]\times\{1\},\\0&\mbox{otherwise}.\end{cases}\label{eqn:shuffle-eg-3}\\
\alpha^{(o)}(c+t,c+t'):&=\frac{1}{1.2}\begin{cases}3t-1.2&(t,t')\in\{1.2\}\times[0,1.2-\delta_1],\\2(1.2)-3\delta_1&(t,t')\in\{1.2\}\times[1.2-\delta_1,1],\\0&\mbox{otherwise}.\end{cases}\label{eqn:shuffle-eg-3-o}\\
\alpha^{(h)}(c+t,c+t'):&=\frac{1}{1.2}\begin{cases}3(t-\delta_1+0.2)&(t,t')\in\{1.2\}\times[\delta_1-0.2,\delta_2],\\3(0.2-\delta^*)&(t,t')\in\{1.2\}\times[\delta_2,2/3],\\0&\mbox{otherwise}.\end{cases}\label{eqn:shuffle-eg-3-h}
\end{align}
We construct the dual measure using $\bar{\alpha}+\bar{\alpha}^{(o)}+\bar{\alpha}^{(h)}$ as the shuffling measure. For a detailed proof of Theorem \ref{thm:eg-3}, see \ref{app:a}. In point (d) of that proof, mass from certain points on the right-hand side boundary will be transferred to two line segments -- a $45^\circ$ line (oblique transfer via $\alpha^{(o)}$) and a horizontal line (via $\alpha^{(h)}$). Observe that the shuffling measure has a significantly different structure compared to Examples 1 and 2. 

\begin{minipage}{.46\textwidth}
\begin{figure}[H]
\centering
\begin{tikzpicture}[scale=0.2,font=\tiny,axis/.style={very thick, -}]
\node at (-1,-1) {$(0,0)$};
\draw [axis,thick,-] (0,0)--(16.5,0);
\node [right] at (15,-1) {$(b_1,0)$};
\draw [axis,thick,-] (0,0)--(0,15);
\node [above] at (0,15) {$(0,b_2)$};
\draw [axis,thick,-] (0,15)--(16.5,15);
\draw [axis,thick,-] (16.5,0)--(16.5,15);
\draw [axis,thick,-] (0,8.8)--(9.5,8.8);
\node [rotate=90] at (-1,9) {$(0,\delta_2)$};
\draw [axis,thick,-] (9.5,0)--(9.5,8.8);
\node at (9.5,-1) {$(\delta_1,0)$};
\draw [thick,-] (9.5,8.8)--(15.7,15);
\draw [thick,dotted] (9.5,8.8)--(0.7,0);
\node [right] at (0.2,-1) {$(\delta^*,0)$};
\node at (14.4,16) {$(b_2+\delta^*,b_2)$};
\node [above] at (5,3) {\small$(0,0)$};
\node [above] at (13,6) {\small$(1,0)$};
\node [above] at (6,10) {\small$(0,1)$};
\end{tikzpicture}
\caption{Optimal mechanism when $c=0$, $b_1=1.2$, $b_2=1$.}\label{fig:illust-3}
\end{figure}
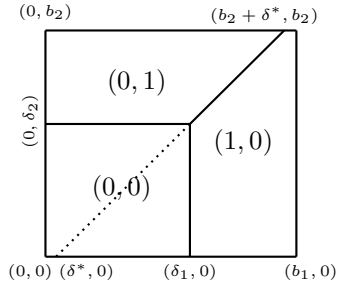
\end{minipage}
\begin{minipage}{.02\textwidth}
\hspace*{.02\textwidth}
\end{minipage}
\begin{minipage}{.46\textwidth}
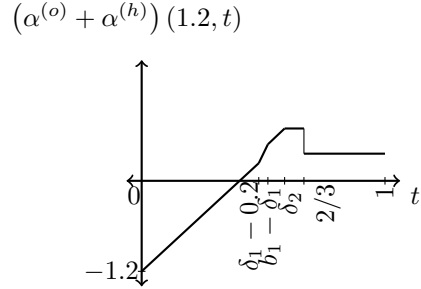
\begin{figure}[H]
\centering
\begin{tikzpicture}[scale=0.2,font=\small,axis/.style={very thick, -}]
\draw [axis,thick,<->] (-1,7)--(17,7);
\node [right] at (17,6) {$t$};
\draw [axis,thick,<->] (0,0)--(0,15);
\node [above] at (-1,16) {$\left(\alpha^{(o)}+\alpha^{(h)}\right)(1.2,t)$};
\draw [thick,-] (0,1)--(7.7,8.17);
\draw [thick,-] (7.7,8.17)--(8.3,9.42);
\draw [thick,-] (8.3,9.42)--(9.4,10.47);
\draw [thick,-] (9.4,10.47)--(10.67,10.47);
\draw [thin,-] (10.67,10.47)--(10.67,8.8);
\draw [thick,-] (10.67,8.8)--(16,8.8);
\node at (-0.5,6) {$0$};
\draw [thin,-] (7.7,6.75) -- (7.7,7.25);
\node [rotate=90] at (7.15,4) {$\delta_1-0.2$};
\draw [thin,-] (8.3,6.75) -- (8.3,7.25);
\node [rotate=90] at (8.4,4) {$b_1-\delta_1$};
\draw [thin,-] (9.4,6.75) -- (9.4,7.25);
\node [rotate=90] at (9.75,5.5) {$\delta_2$};
\draw [thin,-] (10.67,6.75) -- (10.67,7.25);
\node [below,rotate=90] at (10.67,5.5) {$2/3$};
\draw [thin,-] (16,6.75) -- (16,7.25);
\node [rotate=90] at (16,6.25) {$1$};
\draw [thin,-] (-0.25,1) -- (0.25,1);
\node at (-2,1) {$-1.2$};
\end{tikzpicture}
\caption{The measure $\alpha^{(o)}+\alpha^{(h)}$, for $c=0$, $b_1=1.2$, $b_2=1$.}\label{fig:measure2}
\end{figure}
\end{minipage}

\vspace*{10pt}
We have computed the optimal mechanisms for three representative examples using the dual approach. The challenge in each of the examples was to construct the appropriate shuffling measure $\gamma_1-\gamma_2-\bar{\mu}$ that convex-dominates $0$.  We now make some observations on the constructed shuffling measures.

\begin{enumerate}
\item[$\bullet$] The locations of the shuffling measure exhibit significant variations in our examples. For instance, the shuffling measure was non-zero only at the top boundary and the right boundary of $D$ in Theorems \ref{thm:eg-1} and \ref{thm:eg-3}, whereas, it was non-zero additionally on the line $z_1+z_2=2c+\delta_2$ in Theorem \ref{thm:eg-2}.
\item[$\bullet$] The structures of the shuffling measure also exhibit significant variations. The variations can be observed from the structures in Figures \ref{fig:measure} and \ref{fig:measure2}. This is in contrast to the unrestricted setting solved in \cite{TRN16}, where the shuffling measures were added at a fixed location and had a fixed structure.
\item[$\bullet$] In the case of $c=0, b_1=1.2, b_2=1$, the shuffling measure had to be constructed partly for a mass transfer along  the $45^\circ$ line segment, and partly for a transfer along the horizontal line segment (see point (d) in the proof of Theorem \ref{thm:eg-3}, \ref{app:a}). The example thus had two shuffling measures: $\bar{\alpha}^{(o)}$ and $\bar{\alpha}^{(h)}$.
\end{enumerate}

The variability in the examples above makes it difficult for us to arrive at a general algorithmic method to construct shuffling measures, even for the restricted setting of uniform distributions. This motivates us to tackle the general problem using the virtual valuation method in \cite{Pav11}.

\section{Exploring The Virtual Valuation Method}
Recall that we consider the problem of optimal mechanism design in a two-item, one-buyer, unit-demand setting. In this section, we compute the optimal mechanism when the buyer's valuation $z\sim\mbox{Unif}[c,c+b_1]\times[c,c+b_2]$, using the virtual valuation method in \cite{Pav11}. We start with the following general result from \cite{Pav11}.
\begin{theorem}\cite[Prop.~1]{Pav11}\label{thm:pav-1}
 If the distribution $f$ satisfies
 $$
   3f_1(z)f_2(z)+z_1f_1'(z)f_2(z)+z_2f_1(z)f_2'(z)\geq 0\,\forall z\in D,
 $$
 then the allocation function $q$ in the optimal mechanism is such that $q_1+q_2\in\{0,1\}$.
\end{theorem}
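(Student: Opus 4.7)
The plan is to rewrite the hypothesis as a sign condition on $\mu$ and then derive a contradiction via a downward perturbation of the optimal utility $u^*$. By (\ref{eqn:mu-measure}) applied to the product density $f(z)=f_1(z_1)f_2(z_2)$,
\begin{align*}
\mu(z) = -z\cdot\nabla f(z) - 3f(z) = -\bigl(3 f_1 f_2 + z_1 f_1' f_2 + z_2 f_1 f_2'\bigr)(z),
\end{align*}
so the hypothesis reads $\mu(z)\leq 0$ for all $z\in D$. In view of the decomposition $\int_D u\,d\bar\mu = \int_D u\,\mu\,dz + \int_{\partial D} u\,\mu_s\,dz + u(c,c)$, any feasible modification of $u$ that preserves boundary values while decreasing $u$ in the interior weakly increases the objective of (\ref{eqn:primal}).

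Suppose for contradiction that the optimal $u^*$ has $q^*=\nabla u^*$ with $B:=\{z\in D:0<q_1^*(z)+q_2^*(z)<1\}$ of positive Lebesgue measure. Constraint (\ref{eqn:primal})(b), which by $\ell_\infty$--$\ell_1$ duality is $\|q^*\|_1\leq 1$, is strictly slack on $B$. The key step is to build a feasible competitor $\tilde u\leq u^*$ that (i) equals $u^*$ on $\partial D$ including the corner $(c,c)$, and (ii) is strictly smaller than $u^*$ on a positive-measure subset of $B$ on which $\mu<0$ strictly. Granted such $\tilde u$, the change in objective is
\begin{align*}
\int_D(\tilde u-u^*)\,d\bar\mu = \int_D (\tilde u-u^*)\,\mu\,dz \geq 0,
\end{align*}
with strict inequality on that subset, contradicting the optimality of $u^*$.

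The main obstacle will be constructing $\tilde u$ while simultaneously preserving convexity, monotonicity, and the $\ell_\infty$-Lipschitz bound. A direct pointwise pulldown $\tilde u(z)=\max\{0,\sup_{z'\in\partial D}(u^*(z')-\|z-z'\|_\infty)\}$ is feasible for the Lipschitz and monotonicity constraints but may fail convexity. I would resolve this by a localized convex-hull modification at a Lebesgue density point $z_0\in B$: on a small ball $U\ni z_0$, replace $u^*$ by the pointwise supremum of affine minorants of $u^*$ whose slopes lie in $\{q\geq 0,\,q_1+q_2\leq 1\}$ and that agree with $u^*$ on $\partial U$, chosen so that the resulting function strictly undercuts $u^*$ at $z_0$ while matching $u^*$ outside $U$. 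A secondary subtlety is the degenerate case $\mu\equiv 0$ on a sub-region, where the perturbation yields only weak improvement; one handles this by selecting, among the optima, one whose allocation satisfies $q_1+q_2\in\{0,1\}$.
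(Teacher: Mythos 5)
The paper does not prove this statement; it is cited from Pavlov (2011, Prop.~1), so there is no in-text argument to compare against, and what follows evaluates your attempt on its own terms. Your reformulation of the hypothesis as $\mu(z)\leq 0$ on $D$ via (\ref{eqn:mu-measure}) is correct, and so is the observation that with the boundary density $\mu_s$ and the corner mass $\mu_p$ held fixed, the objective $\int_D u\,d\bar\mu$ of (\ref{eqn:primal}) can only weakly improve when $u$ is pushed downward in the interior. The proof therefore stands or falls on producing a feasible competitor $\tilde u\leq u^*$ that is strictly smaller on a positive-measure set where $\mu<0$, and that construction is exactly where the attempt breaks down.

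The localized modification you sketch cannot produce such a $\tilde u$. If $u^*$ is affine on a neighborhood of the density point $z_0$ --- which is not an edge case here but the expected one, since the allocation $q^*=\nabla u^*$ is piecewise constant in these problems, so a region with $0<q_1^*+q_2^*<1$ is a maximal affine piece of $u^*$ --- then any convex $\tilde u$ with $\tilde u=u^*$ on $D\setminus U$ and $\tilde u\leq u^*$ must in fact equal $u^*$ on $U$: pick $a\in D\setminus U$ near $\partial U$ inside the affine neighborhood; there $\nabla\tilde u(a)=\nabla u^*(a)$, and the supporting hyperplane of $\tilde u$ at $a$ coincides with the affine $u^*$, so convexity forces $\tilde u(z)\geq u^*(z)$ for every $z\in U$. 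The specific device ``pointwise supremum of affine minorants of $u^*$ that agree with $u^*$ on $\partial U$'' compounds the problem: unless $u^*$ is affine, no affine function agrees with it along the entire curve $\partial U$, so the family being sup'd over is empty, and when $u^*$ is affine the supremum returns $u^*$ unchanged. Two secondary issues: the hypothesis gives only $\mu\leq 0$, so strict improvement also requires $\mu<0$ on a positive-measure subset of $B$, which you assert but do not establish; and the fallback for the degenerate case $\mu\equiv 0$ --- ``select, among the optima, one whose allocation satisfies $q_1+q_2\in\{0,1\}$'' --- assumes precisely the conclusion being proved. Any workable version of this strategy will need a global, not compactly supported, deformation of $u^*$ that is allowed to change the boundary values (with the resulting $\mu_s$ and $\mu_p$ contributions tracked explicitly), since convex functions cannot be dipped in a hole while pinned to their own values on the hole's complement.
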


Thus, if $f$ satisfies the above sufficient condition, then for every $z\in D\backslash Z$, $q(z)$ satisfies $q_1(z)+q_2(z)=1$. Recall that $Z$ is the exclusion region. Observe that the sufficient condition in Theorem \ref{thm:pav-1} is clearly satisfied for the uniform distribution $\mbox{Unif}[c,c+b_1]\times[c,c+b_2]$. The utility of the buyer in $D\backslash Z$ can be written as $u(z)=(z_1-z_2)q_1(z)+z_2-t(z)$, where we have used $q_2=1-q_1$. Defining $\delta:=z_1-z_2$, we have $\delta\in[-b_2,b_1]$ for the case under consideration. The following theorem from \cite{Pav11} reduces the domains of $q$ and $t$ from two-dimensions to one-dimension.

\begin{theorem}\cite[Prop.~2]{Pav11}\label{thm:pav-2}
 In the optimal mechanism, the allocations and the payments, $(q,t)$, can be rewritten so that they are a constant for every $\{z\in D\backslash Z:z_1-z_2=\delta\}$.
\end{theorem}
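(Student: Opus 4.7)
The plan is to exploit the identity $q_1+q_2=1$ on $D\backslash Z$ (given by Theorem \ref{thm:pav-1}) together with Rochet's characterization (Theorem \ref{thm:Rochet}) to show that the buyer's utility $u$ depends on $z$ only through $\delta:=z_1-z_2$ and $z_2$ in a very rigid way, after which $(q,t)$ inherit the same property. First, I would observe that by Theorem \ref{thm:Rochet}, $u$ is convex with $\nabla u(z)=q(z)$ a.e., and hence the directional derivative of $u$ in the direction $(1,1)$ equals $q_1(z)+q_2(z)$ a.e. By Theorem \ref{thm:pav-1}, this directional derivative equals $1$ at a.e. $z\in D\backslash Z$.

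Next, I would fix any $\delta\in[-b_2,b_1]$ and consider the $45^\circ$ line segment $L_\delta:=\{z\in D:z_1-z_2=\delta\}$. Because $u$ is convex and nonnegative on $D$, the set $Z=\{z\in D:u(z)=0\}$ is convex, so $L_\delta\cap(D\backslash Z)$ is a single connected line segment (a convex set intersected with a line). For any two points $z^0,z^1$ lying on this segment, the line from $z^0$ to $z^1$ stays in $D\backslash Z$, so by the fundamental theorem of calculus applied to the absolutely continuous function $s\mapsto u(z^0+s(z^1-z^0))$,
\begin{equation*}
u(z^1)-u(z^0)=(z^1_1-z^0_1)\int_0^1 (q_1+q_2)(z^0+s(z^1-z^0))\,ds=z^1_1-z^0_1=z^1_2-z^0_2.
\end{equation*}
Consequently $u(z)-z_2$ is constant along $L_\delta\cap(D\backslash Z)$; call this constant $\tilde u(\delta)$, so that $u(z)=\tilde u(\delta)+z_2$ on $D\backslash Z$.

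From this representation, $q_1(z)=\partial u/\partial z_1=\tilde u'(\delta)$ and $q_2(z)=\partial u/\partial z_2=1-\tilde u'(\delta)$ at every point of $D\backslash Z$ where $u$ is differentiable, and the payment satisfies $t(z)=z\cdot q(z)-u(z)=(z_1-z_2)q_1(z)+z_2-u(z)=\delta\,\tilde u'(\delta)-\tilde u(\delta)$. Each of these quantities is a function of $\delta$ alone. Where $\nabla u$ fails to exist (a Lebesgue-null set), $q$ and $t$ may be redefined to equal the common $\delta$-dependent values without changing the seller's revenue or altering the IC/IR constraints (the resulting mechanism remains feasible because $u$ itself is unchanged, and the revenue integral is invariant under modifications of $q$ on a null set). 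This gives the rewriting asserted in the statement.

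The only real subtlety is justifying connectedness of $L_\delta\cap(D\backslash Z)$, and I expect this to be the main obstacle to a fully rigorous write-up. It follows from the convexity of the zero set of the convex, nonnegative function $u$, but one should also confirm that $u(z)=0$ implies $q(z)=0$ (so that the ``exclusion region'' identified with $\{u=0\}$ coincides with the region where no item is allocated); this follows from $\nabla u=q\geq 0$ together with $u\geq 0$ attaining its minimum on $Z$, and from the standard optimality reduction that sets $u(c,c)=0$.
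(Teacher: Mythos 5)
The paper does not supply its own proof of this result; it simply cites Pavlov (Prop.~2), so there is no in-paper argument to compare against. Your route --- use $q_1+q_2=1$ on $D\backslash Z$ from Theorem~\ref{thm:pav-1}, integrate the directional derivative of the convex $u$ along $45^\circ$ lines to get $u(z)=\tilde u(\delta)+z_2$, then read off $q_1=\tilde u'(\delta)$, $q_2=1-\tilde u'(\delta)$, $t=\delta\tilde u'(\delta)-\tilde u(\delta)$ --- is the natural and correct one, and the FTC computation and the reconstruction of $(q,t)$ from $\tilde u$ are right.

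One justification is off, though, and it is the one you flagged as the ``subtlety.'' Convexity of $Z=\{u=0\}$ does \emph{not} by itself make $L_\delta\cap(D\backslash Z)$ connected: a convex $Z$ intersected with a line gives a connected $L_\delta\cap Z$, but the complement $L_\delta\cap(D\backslash Z)$ is a segment with a middle sub-segment removed, which can a priori split into two pieces. What actually saves you is exactly the fact you relegate to the final sentence: since $\nabla u=q\geq0$, $u$ is nondecreasing, and $u(c,c)=0$, so $Z$ is southwest-closed in $D$; hence $L_\delta\cap Z$ is always the \emph{lower} sub-segment of $L_\delta\cap D$, and what remains is one connected upper segment. (Equivalently, $u$ restricted to $L_\delta$ is convex and nondecreasing, so its zero set is an initial segment.) You should promote that to the main justification rather than convexity of $Z$. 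A smaller expository point: the FTC step needs the derivative of $s\mapsto u(z^0+s(z^1-z^0))$ to equal $1$ a.e.\ \emph{along the line}, not just a.e.\ in two dimensions; this follows from the fact that, at points of one-dimensional differentiability in the interior of $D\backslash Z$, every subgradient of $u$ is a limit of gradients taken in $D\backslash Z$ and therefore has coordinate sum $1$, but it is worth a line of text.
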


The theorem indicates that if $Z$ is fixed, then the domains of $(q,t)$ become one-dimensional in the region $D\backslash Z$; they can be written as $t(\delta)$ and $q_1(\delta)$, where $t:[-b_2,b_1]\rightarrow\mathbb{R}_+$, $q_1:[-b_2,b_1]\rightarrow[0,1]$, and $q_2=1-q_1$. As done in \cite{Pav11}, define $u_1:[-b_2,b_1]\rightarrow\mathbb{R}$, $u_1(\delta):=\delta q_1(\delta)-t(\delta)$, and define 
\[
  g(u_1(\delta),\delta):=\int_{\substack{z:z_1-z_2=\delta,\\u_1(\delta)+z_2>0}}f(z)\,dz.
\]

The function $g(u_1(\delta),\delta)$ resembles the marginal of $f$ along the $z_1-z_2$ axis, but for the fact that the marginal is computed by integrating only up to the point where $u_1(z_1-z_2)+z_2=0$. Call this point $z_2^*(\delta)$, and observe that $u(\delta+z_2^*(\delta),z_2^*(\delta))=u_1(\delta)+z_2^*(\delta)=0$. So $z_2=z_2^*(\delta)$ is the boundary point between the exclusion region $Z$, and the other regions. Further, $\{z:z_1-z_2=\delta,\,z_2<z_2^*(\delta)\}$ belongs to $Z$. So the function $g(u_1(\delta),\delta)$ is actually the marginal of $f$ in $D\backslash Z$, along the $z_1-z_2$ axis.

Consider the problem of maximizing the expected revenue subject to IC and IR constraints. The IC constraint, from \cite[Lem.~2]{Mye81}, can equivalently be written as (i) $q_1$ increasing, and (ii) $u_1(\delta)$ has the representation $u_1(\delta)=u_1(-b_2)+\int_{-b_2}^{\delta}q_1(\tilde{\delta})\,d\tilde{\delta}$ for every $\delta\in[-b_2,b_1]$. The optimal mechanism can thus be computed by solving the following optimization problem.
\begin{align}\label{eqn:optim-myerson}
&\max_{q_1(\cdot),u_1(\cdot)}\int_{-b_2}^{b_1}(\delta q_1(\delta)-u_1(\delta))g(u_1(\delta),\delta)\,d\delta\\
\mbox{subject to}\hspace*{0.5in}
&(a)\,q_1(\delta)\in[0,1]\,\forall\delta\in[-b_2,b_1];\,q_1\mbox{ increasing};\nonumber\\
&(b)\,u_1(\delta)=u_1(-b_2)+\int_{-b_2}^{\delta}q_1(\tilde{\delta})\,d\tilde{\delta}\,\forall\delta\in[-b_2,b_1].\nonumber
\end{align}

The IR constraint is already taken into account because the integral in the objective function of (\ref{eqn:optim-myerson}) is over $D\backslash Z$, i.e., where $u(z)\geq 0$.

Observe that the problem (\ref{eqn:optim-myerson}) is similar to the optimization problem in \cite[Lem.~3]{Mye81}. To solve the problem in a similar way, we now search for an equivalent of the virtual valuation function $\phi$ in our setting.

Applying integration by parts to the objective function of (\ref{eqn:optim-myerson}), we get $\int_{-b_2}^{b_1}\bar{V}(\delta)q_1(\delta)\,d\delta$, where the marginal profit function $\bar{V}:[-b_2,b_1]\rightarrow\mathbb{R}$ is defined as\footnote{We use the term marginal profit function, see \citet{Pav11}, based on the fact that $\bar{V}$ denotes the marginal contribution of allocation $q_1(\delta)$ to the profit of the seller.}
$$
  \bar{V}(\delta):=\delta g(u_1(\delta),\delta)-\int_{\delta}^{b_1}g(u_1(\tilde{\delta}),\tilde{\delta})\,d\tilde{\delta}+\int_{\delta}^{b_1}(\tilde{\delta} q_1(\tilde{\delta})-u_1(\tilde{\delta}))\frac{\partial}{\partial u_1}g(u_1(\tilde{\delta}),\tilde{\delta})\,d\tilde{\delta}.
$$
Notice that in Myerson's setting, we have $g(u_1(\delta),\delta)=f(\delta)$, and thus $\bar{V}(\delta)=\delta f(\delta)-\int_{\delta}^{b_1}f(\tilde{\delta})\,d\tilde{\delta}=\phi(\delta)f(\delta)$. We thus expect $\bar{V}$ to have similar properties of $\phi$. The following result from \cite{Pav06} provides some ``ironing conditions'' on $\bar{V}$, similar to those on $\phi$ in Myerson's setting.

\begin{theorem}\cite[Lem.~3,~Prop.~5]{Pav06}\label{thm:Myerson}
A mechanism is optimal if and only if it satisfies the following conditions:
\begin{enumerate}
 \item $q_1(\delta)$ is strictly increasing on $(\delta',\delta'')$ if and only if (iff) $\bar{V}(\delta)=0$ on this interval.
 \item $q_1(\delta)=0$ for $\delta\in[\delta',\delta'']$ iff (a) $\delta'=-b_2$, (b) $\bar{V}(\delta'')=0$ unless $\delta''=b_1$, (c) $\int_{\delta'}^{\delta''}\bar{V}(\delta)\,d\delta=\underline{k}\leq 0$, and (d) $\int_{\delta'}^x\bar{V}(\delta)\,d\delta\geq\underline{k}$ for all $x\in[\delta',\delta'']$.
 \item $q_1(\delta)=q\in(0,1)$ for $\delta\in[\delta',\delta'']$ iff (a) $\bar{V}(\delta')=0$ unless $\delta'=-b_2$, (b) $\bar{V}(\delta'')=0$ unless $\delta''=b_1$, (c) $\int_{\delta'}^{\delta''}\bar{V}(\delta)\,d\delta=0$, and (d) $\int_{\delta'}^x\bar{V}(\delta)\,d\delta\geq0$ for all $x\in[\delta',\delta'']$.
 \item $q_1(\delta)=1$ for $\delta\in[\delta',\delta'']$ iff (a) $\bar{V}(\delta')=0$ unless $\delta'=-b_2$, (b) $\delta''=b_1$, (c) $\int_{\delta'}^{\delta''}\bar{V}(\delta)\,d\delta=\overline{k}\geq 0$, and (d) $\int_x^{\delta''}\bar{V}(\delta)\,d\delta\leq\overline{k}$ for all $x\in[\delta',\delta'']$.
\end{enumerate}
\end{theorem}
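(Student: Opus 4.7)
The plan is to derive the stated conditions as first-order optimality conditions for the reduced problem $\max_{q_1} \int_{-b_2}^{b_1}\bar{V}(\delta)q_1(\delta)\,d\delta$ over nondecreasing functions $q_1:[-b_2,b_1]\to[0,1]$, treating $\bar{V}$ as the generalized virtual valuation evaluated at the candidate optimum $(q_1,u_1)$. The reduced problem is linear in $q_1$ once $\bar{V}$ is regarded as fixed at the optimizer, and the admissible perturbations are precisely those that preserve monotonicity and the range $[0,1]$. A clean way to organize the argument is to write $q_1(\delta)=q_1(-b_2)+M([-b_2,\delta])$ for a nonnegative measure $M$ with $M([-b_2,b_1])\le 1-q_1(-b_2)$, so the objective becomes an affine functional of $(q_1(-b_2),M)$ whose coefficient at $\delta$ is $H(\delta):=\int_{\delta}^{b_1}\bar{V}(\tilde\delta)\,d\tilde\delta$; the problem then reduces to locating the support of $M$ on the maximizers of $H$ and choosing $q_1(-b_2)$ according to the sign of $H(-b_2)$.

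First I would establish condition (1) by a local variational argument. On an open interval $(\delta',\delta'')$ where $q_1$ is strictly increasing, one can add a small monotone perturbation $\epsilon\,h$ (with $h$ of either sign and the resulting function still monotone for small $|\epsilon|$) supported on any subinterval, and the first-order stationarity then forces $\bar{V}\equiv 0$ a.e.\ on $(\delta',\delta'')$. Conversely, if $\bar{V}\equiv 0$ on the interval, any monotone rearrangement supported there leaves the objective unchanged, so a strictly increasing $q_1$ is consistent with optimality. For the flat pieces (conditions (2)--(4)), I would consider two families of admissible perturbations on a maximal interval $[\delta',\delta'']$ on which $q_1\equiv q$: (i) a uniform vertical shift of the plateau, permitted in both directions when $q\in(0,1)$ but only upward when $q=0$ and only downward when $q=1$, which gives the integral equalities/inequalities in part (c); and (ii) a ``ramp'' perturbation that raises $q_1$ on a subinterval $[x,\delta'']$ (or symmetrically lowers it on $[\delta',x]$), preserving monotonicity, which yields the pointwise integral inequalities in part (d). The boundary conditions $\bar{V}(\delta')=0$ and $\bar{V}(\delta'')=0$ (unless one is a global endpoint) come from requiring the optimality to persist as the plateau is extended or contracted slightly into a neighboring strictly-increasing region where $\bar{V}=0$ holds by continuity.

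The main obstacle is the implicit dependence of $\bar{V}$ on the pair $(q_1,u_1)$ through $g$ and $\partial g/\partial u_1$: strictly speaking the objective is not linear in $q_1$, since $u_1$ and hence $g$ shift under a perturbation. For the necessity direction this is handled by linearization around the optimum, after verifying that the induced variation of $u_1$ contributes only a second-order correction once the integration-by-parts identity that defines $\bar{V}$ is respected; the first-order term in any feasible perturbation is exactly $\int \bar{V}^{*}\delta q_1$. For sufficiency, one shows that any feasible $(q_1,u_1)$ produces an objective no larger than $(q_1^{*},u_1^{*})$ by splitting $[-b_2,b_1]$ into the regions defined by the four cases and using, on each, the sign of $\bar{V}^{*}$ together with the monotonicity of $q_1-q_1^{*}$ relative to $q_1^{*}$; this is the direct analogue of the ironing argument in \cite[Lem.~3]{Mye81}, and it can be formalized as the strong-duality statement for the moment problem above with $H^{*}$ as the dual certificate.
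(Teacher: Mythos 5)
The paper cites Theorem~\ref{thm:Myerson} as \cite[Lem.~3,~Prop.~5]{Pav06} and does not reproduce a proof, so there is no in-paper argument to compare against; you are in effect reconstructing Pavlov's result from scratch. Your treatment of the necessity direction is sound in outline: writing $q_1$ through a nonnegative measure $M$, choosing the perturbations (vertical shift of a plateau, one-sided ramps), and extracting the first-order variation as $\int\bar{V}^{*}\,\delta q_1$ via the integration-by-parts identity that defines $\bar{V}$ is the correct machinery, and it does isolate conditions (a)--(d) for each plateau.

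The genuine gap is in sufficiency, and the appeal to Myerson's Lemma~3 as a ``direct analogue'' conceals the difficulty rather than resolving it. Myerson's reduced objective is linear in $q$ because $\phi f$ is fixed by the primitive distribution. Here $\bar{V}$ depends on $(q_1,u_1)$ through $g$ and $\partial g/\partial u_1$, so the revenue of a competing feasible pair $(q_1,u_1)$ equals $\int\bar{V}(q_1,u_1)\,q_1\,d\delta$, not $\int\bar{V}^{*}\,q_1\,d\delta$, and your proposed sign-splitting argument bounds the wrong quantity. Your observation that the induced variation of $u_1$ is ``second order'' is correct as a statement about the first derivative at the candidate, but that is exactly why it does not settle sufficiency: one must show the second-order term is nonpositive, and even in the uniform case (where $g_{uu}\equiv 0$ away from the boundary) the second variation retains a bilinear cross term involving $g_u$, $H$, and $\delta\eta - H$ whose sign is not manifest, so concavity is not free. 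A complete sufficiency argument must either (i) actually verify concavity of the reduced functional in the coordinates $(q_1,u_1(-b_2))$, or (ii) return to the two-dimensional representation $\int_D u\,d\bar{\mu}$, which \emph{is} linear in $u$, and certify optimality by a dual variable as in Lemma~\ref{lem:compslack}. As written, the closing ``strong-duality statement for the moment problem \dots\ with $H^{*}$ as the dual certificate'' is a correct guess about where such a proof would have to go, not a step you have carried out.
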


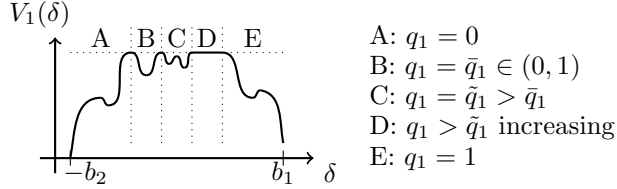
\begin{figure}[H]
\centering
\begin{tikzpicture}[scale=0.2,font=\small,axis/.style={very thick, -}]
\draw [axis,thick,->] (-1,0)--(17,0);
\node [right] at (17,-1) {$\delta$};
\draw [thin,-] (1,-0.5)--(1,0.5);
\node at (2,-1) {$-b_2$};
\draw [thin,-] (15,-0.5)--(15,0.5);
\node at (15,-1) {$b_1$};
\draw [axis,thick,->] (0,-1)--(0,8);
\node [above] at (-1,8) {$V_1(\delta)$};
\draw [axis,thick,-] (1,0) to[out=80,in=180] (3,4);
\draw [axis,thick,-] (3,4) to[out=0,in=180] (3.5,3.5);
\draw [axis,thick,-] (3.5,3.5) to[out=0,in=180] (5,7);
\draw [axis,thick,-] (5,7) to[out=0,in=180] (6,5.5);
\draw [axis,thick,-] (6,5.5) to[out=0,in=180] (7,7);
\draw [axis,thick,-] (7,7) to[out=0,in=180] (7.5,6.25);
\draw [axis,thick,-] (7.5,6.25) to[out=0,in=180] (8,6.75);
\draw [axis,thick,-] (8,6.75) to[out=0,in=180] (8.5,6);
\draw [axis,thick,-] (8.5,6) to[out=0,in=180] (9,7);
\draw [axis,thick,-] (9,7) to[out=0,in=180] (11,7);
\draw [axis,thick,-] (11,7) to[out=0,in=180] (13,4);
\draw [axis,thick,-] (13,4) to[out=0,in=180] (13.5,4.5);
\draw [axis,thick,-] (13.5,4.5) to[out=-10,in=100] (15,1);
\draw [thin,dotted] (1,7)--(15,7);
\node at (3,8) {A};
\draw [thin,dotted] (5,1)--(5,9);
\node at (6,8) {B};
\draw [thin,dotted] (7,1)--(7,9);
\node at (8,8) {C};
\draw [thin,dotted] (9,1)--(9,9);
\node at (10,8) {D};
\draw [thin,dotted] (11,1)--(11,9);
\node at (13,8) {E};
\node [right] at (20,8) {A: $q_1=0$};
\node [right] at (20,6) {B: $q_1=\bar{q}_1\in(0,1)$};
\node [right] at (20,4) {C: $q_1=\tilde{q}_1>\bar{q}_1$};
\node [right] at (20,2) {D: $q_1>\tilde{q}_1$ increasing};
\node [right] at (20,0) {E: $q_1=1$};
\end{tikzpicture}
\caption{Illustration of the conditions in Theorem \ref{thm:Myerson}.}\label{fig:my-illust-1}
\end{figure}

Define $V_1(\delta)=-\int_{-b_2}^{\delta}\bar{V}(\tilde{\delta})\,d\tilde{\delta}$. We now argue that the conditions in Theorem \ref{thm:Myerson} can be interpreted as conditions on $\delta$ where $V_1$ attains its global maximum. The theorem states that the mechanism is optimal if and only if the following conditions hold. Take $\delta'$ and $\delta''$ to be the left and right end points of an interval under consideration.
\begin{itemize}
\item Let $q_1(\delta)=0$ $\forall$ $\delta\in[\delta',\delta'']$. Then (a) $\delta'=-b_2$ and (b) $V_1(\delta)$ is maximized at $\delta''$ (see region A, Figure \ref{fig:my-illust-1}).
\item Let $q_1(\delta)=q\in(0,1)$ when $\delta\in[\delta',\delta'']$. Then $V_1(\delta)$ is maximized at both $\delta'$ and $\delta''$ (see regions B and C, Figure \ref{fig:my-illust-1}).
\item Let $q_1(\delta)$ be strictly increasing when $\delta\in[\delta',\delta'']$. Then $V_1(\delta)=\max_\delta V_1(\delta)$ for all $\delta\in[\delta',\delta'']$ (see region D, Figure \ref{fig:my-illust-1}).
\item Let $q_1(\delta)=1$ $\forall$ $\delta\in[\delta',\delta'']$. Then (a) $\delta''=b_1$ and (b) $V_1(\delta)$ is maximized at $\delta'$ (see region E, Figure \ref{fig:my-illust-1}).
\end{itemize}

Observe that the conditions mentioned above are a consequence of the conditions stated in Theorem \ref{thm:Myerson}. The conditions 2(c)--(d), 3(c)--(d), and 4(c)--(d), are representations that indicate that the global maximum must occur at certain end points of the interval. The value of $q_1$ changes only at those $\delta$ where $V_1$ attains its global maximum. We have a similar result in one-dimension, where the value of $q$ changes only at those $z$ where $-\int_0^{z}\phi(t)f(t)\,dt=z(1-F(z))$ is maximized \cite[p.~338]{NRTV07}.

Theorem \ref{thm:Myerson} and the above interpretation highlight the similarity between the virtual valuation functions $\phi$ and $\bar{V}$. The key difference between $\phi$ and $\bar{V}$ is that the former depends only on $f$, whereas the latter depends on $u_1(\delta)$ in addition, which is known only when the optimal mechanism is known. So the computation of $\bar{V}$ requires the knowledge of the mechanism itself. However, given a mechanism, we can use the theorem to determine if the mechanism is optimal or not.

We now simplify the computation of the marginal profit function. We define virtual valuation function $V:[-b_2,b_1]\rightarrow\mathbb{R}$ as $V(\delta):=\bar{\mu}(\{z:z_1-z_2\geq\delta\}\backslash Z)$ where $\bar{\mu}$ is as defined in Section \ref{sec:prelim}. We then have $\bar{\mu}(D)=0$ (see (\ref{eqn:mu-D})). The following lemma shows that $V$ is equal to the marginal profit function $\bar{V}$.
\begin{lemma}\label{lem:V-V'}
Let the allocation function $q$ be such that there exists a $u:D\rightarrow\mathbb{R}$ with $\nabla u=q$. Then, the functions $V$ and $\bar{V}$ are one and the same.
\end{lemma}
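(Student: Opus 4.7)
The plan is to show $V \equiv \bar{V}$ by establishing that both functions satisfy the same boundary condition at $\delta = b_1$ and the same first-order differential equation on $(-b_2, b_1)$. First I would verify $V(b_1) = \bar{V}(b_1) = 0$. On the $\bar{V}$ side, the two integrals over $[b_1, b_1]$ vanish and $g(u_1(b_1), b_1) = 0$ because the fiber $\{z \in D : z_1 - z_2 = b_1\}$ consists of a single point; on the $V$ side, the set $\{z \in D : z_1 - z_2 \geq b_1\}$ is the single corner $(c+b_1, c)$, which carries no $\mu$-mass (Lebesgue-null in the bulk), no $\mu_s$-mass (the outward normal at a corner is undefined, so $\mu_s = 0$ there by the convention of (\ref{eqn:mu-measure})), and no $\mu_p$-mass.

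Next I would differentiate $\bar{V}$ directly. Using $u_1'(\delta) = q_1(\delta)$ and the chain rule $\frac{d}{d\delta} g(u_1(\delta),\delta) = q_1(\delta)\,\frac{\partial g}{\partial u_1} + \frac{\partial g}{\partial \delta}$, the $\delta q_1\frac{\partial g}{\partial u_1}$ contributions cancel between the first term of $\bar{V}$ and the Leibniz boundary term of the last integral, leaving
$$\bar{V}'(\delta) \;=\; 2\,g(u_1(\delta), \delta) \;+\; \delta\,\frac{\partial g}{\partial \delta}(u_1(\delta), \delta) \;+\; u_1(\delta)\,\frac{\partial g}{\partial u_1}(u_1(\delta), \delta).$$

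The substantive step is differentiating $V(\delta) = \bar{\mu}(A_\delta)$ with $A_\delta := \{z \in D \setminus Z : z_1 - z_2 \geq \delta\}$. I would parametrize $(s,t) = (z_1-z_2, z_2)$ and decompose $\bar{\mu}$ via its Radon-Nikodym components $\mu$, $\mu_s$, $\mu_p$. Applying Leibniz's rule to the bulk piece, written as $\int_\delta^{b_1}\!\bigl[\int_{L_s}\mu(s+t,t)\,dt\bigr]\,ds$ where $L_s$ is the diagonal fiber in $D \setminus Z$ at level $s$, produces (i) the ``natural'' fiber contribution $-\int_{L_\delta}\mu(\delta+t,t)\,dt$, plus (ii) correction terms from the moving endpoints of $L_s$: one at the outer boundary $\partial D$ (yielding a $\partial g/\partial \delta$ term after invoking $\mu(z) = -z\cdot \nabla f(z) - 3 f(z)$ and the product form $f = f_1 f_2$), and one at the inner boundary of $Z$ where $z_2 = -u_1(\delta)$ (yielding a $\partial g/\partial u_1$ term, since this endpoint moves at rate $-q_1(\delta)$). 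The $\mu_s$ contributions from $\partial D \cap A_\delta$ contribute pointwise boundary values at the relevant intersections, and combine with the bulk terms via the product rule $\frac{d}{d\delta}(\delta g) = g + \delta\,\frac{d g}{d\delta}$. After grouping, $V'(\delta)$ simplifies to exactly the expression above for $\bar{V}'(\delta)$.

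The main obstacle is the bookkeeping in this last step: as $\delta$ varies, the fiber $L_\delta$ shifts across $D$ while its lower endpoint moves along $\partial Z$ according to $u_1(\delta)$, and its upper endpoint transitions between different sides of $\partial D$ (top, right, bottom, or left) depending on the value of $\delta$ and on the aspect ratio of the rectangle. The accounting must correctly combine the $\mu_s$ contributions from all the relevant sides with the $\mu$-bulk contribution, and use the identity $z_2^*(\delta) = -u_1(\delta)$ to convert the moving-endpoint derivative into the $\partial g/\partial u_1$ term that distinguishes $\bar{V}$ from the classical single-item Myerson virtual value. Once the derivatives are shown to match, $V \equiv \bar{V}$ follows by integrating $V' = \bar{V}'$ from $\delta$ to $b_1$ and invoking the common boundary value computed in the first step.
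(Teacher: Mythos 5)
Your strategy is a genuine alternative to the paper's. The paper's proof does not differentiate: instead it recasts $\int_\delta^{b_1}g\,d\tilde\delta$ as $-\bar\nu(X)$ via the divergence theorem on $X=\{z\in D\setminus Z: z_1-z_2\geq\delta\}$, where $\bar\nu$ is the analogue of $\bar\mu$ built on $X$, and then shows $\delta g(u_1(\delta),\delta)+\int_\delta^{b_1}(\tilde\delta q_1-u_1)\partial_{u_1}g\,d\tilde\delta=(\bar\mu-\bar\nu)(X)$ by observing that $\bar\mu-\bar\nu$ is supported on $\partial X\setminus\partial D$, namely the fiber at level $\delta$ and the portion of $\partial Z$ inside $X$. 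Your approach — match the boundary value at $\delta=b_1$ and then show $V'=\bar V'$ — is correct in outline, and your computation of $\bar V'(\delta)=2g+\delta\,\partial_\delta g+u_1\,\partial_{u_1}g$, including the cancellation of the two $\delta q_1\partial_{u_1}g$ terms, is exactly right. Both routes require a regime-by-regime case analysis; the paper organizes it around the shape of $\partial X\setminus\partial D$, yours around the shape of $\partial D\cap A_\delta$. The paper's version is somewhat tidier because the IBP identity hands it all the boundary pieces in one package.

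There is, however, a genuine error in your account of $V'(\delta)$. You attribute the $\partial g/\partial\delta$ and $\partial g/\partial u_1$ corrections to ``Leibniz correction terms from the moving endpoints of $L_s$'' arising when you differentiate the bulk piece $\int_\delta^{b_1}\bigl[\int_{L_s}\mu\,dt\bigr]\,ds$. That is not where they come from: the inner integral's endpoints depend on $s$, the variable of integration, not on $\delta$, so Leibniz gives only the single term $-\int_{L_\delta}\mu\,dt$ and nothing else. For the uniform density this bulk term equals $3g$ (since $\mu=-3f$), which already exceeds the $2g$ in $\bar V'$; the remaining reconciliation comes from the $\mu_s$ surface contributions on $\partial D\cap A_\delta$ (the top/right/bottom/left arcs that shrink at rate one, each weighted by the boundary value $z\cdot n(z)\,f(z)$). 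Concretely, for $\delta\in(-\delta_2,\delta_1)$ one gets $V'=3g-(c+b_2)/(b_1b_2)$, which equals $2g+u_1\partial_{u_1}g$ only after using $g=(c+b_2+u_1)/(b_1b_2)$; no separate ``moving $\partial Z$ endpoint'' term appears at all because the bulk term already integrates over the full current fiber. Since you also list the $\mu_s$ contributions as a \emph{separate} ingredient in the next sentence, your accounting as written double-counts. If you drop the spurious Leibniz correction and carry the $\mu_s$ accounting through in each regime, the derivatives do match — I verified this in all three regimes of $\delta$ — so the overall strategy survives, but the description of the key step needs to be fixed.
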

\begin{proof}
 See \ref{app:b}.\qed
\end{proof}

This lemma could be understood as follows.
\begin{itemize}
\item Recall that the expected revenue equals $\int_{-b_2}^{b_1}\bar{V}(\delta)q_1(\delta)\,d\delta$. The expected revenue thus increases by $\bar{V}(\delta)$ for a differential increase in $q_1$ at $\delta$.
\item A differential increase in $q_1$ increases $u$ uniformly for all $\delta'\geq\delta$, since $q=\nabla u$.
\item From (\ref{eqn:primal}), we know that the expected revenue equals $\int_D u\,d\bar{\mu}$. So a uniform increase for all $\delta'\geq\delta$ increases the expected revenue by $\bar{\mu}(\{z:z_1-z_2\geq\delta\}\backslash Z)$.
\item Thus we have $\bar{V}(\delta)=\bar{\mu}(\{z:z_1-z_2\geq\delta\}\backslash Z)$.
\end{itemize}

Observe that the virtual valuation function $V$ can be computed if the exclusion region $Z$ is known. In the rest of the paper, we propose some structures for all possible values of $(c,b_1,b_2)\geq 0$, and then prove that the optimal mechanisms indeed have those structures, using Theorem \ref{thm:Myerson}.

\subsection{Optimal mechanisms for the uniform distribution on a rectangle}
Without loss of generality, we assume $b_1\geq b_2$. The following theorem asserts that the optimal mechanism falls within one of the structures depicted in Figures \ref{fig:a-ini}--\ref{fig:e'-ini}.

\begin{theorem}\label{thm:consolidate}
Consider $z\sim\mbox{Unif }[c,c+b_1]\times[c,c+b_2]$. The optimal mechanism in the unit-demand setting is described as follows.
\begin{multicols}{2}
\begin{enumerate}
\item Case $b_1\in[b_2,3b_2/2]$:
\begin{enumerate}
\item[(a) ] $c\in[0,b_2]$: Figure \ref{fig:a-ini}
\item[(b)$^*$] $c\in[b_2,\alpha_1]$: Figure \ref{fig:b-ini}
\item[(c)$^*$] $c\in[\alpha_1,\alpha_2]$: Figure \ref{fig:c-ini}
\item[(d)$^*$] $c\in[\alpha_2,\frac{27b_1^2b_2^2}{4(b_1^3-b_2^3)}]$: Figure \ref{fig:d-ini}
\item[(e) ] $c\geq\frac{27b_1^2b_2^2}{4(b_1^3-b_2^3)}$: Figure \ref{fig:e-ini}
\end{enumerate}
\item Case $b_1\geq 3b_2/2$:
\begin{enumerate}
\item[(a) ] $c\in[0,b_2]$: Figure \ref{fig:a-ini}
\item[(b)$^*$] $c\in[b_2,\beta]$: Figure \ref{fig:b-ini}
\item[(c) ] $c\in[\beta,\frac{216b_1^2b_2}{108b_1^2-108b_1b_2-5b_2^2}]$: Figure \ref{fig:d'-ini}
\item[(d) ] $c\geq\frac{216b_1^2b_2}{108b_1^2-108b_1b_2-5b_2^2}$: Figure \ref{fig:e'-ini}
\end{enumerate}
\end{enumerate}
\end{multicols}

The values of $\alpha_1$, $\alpha_2$ and $\beta$ are defined as follows.
\begin{itemize}
 \item $c=\alpha_1$ is obtained by solving the following equations simultaneously for $(c,h,\delta^*)$.
\begin{align}
  &3h^2/2+ch+2b_2\delta^*-b_1b_2+b_2^2/2=0.\label{eqn:fig-b-first}\\
 27(c+h+\delta^*)&(b_2+\delta^*)^2-4(4b_2+3\delta^*)(3(h+\delta^*)/2+c)^2=0.\label{eqn:fig-b-second}\\
  2b_1^3/27&-(c+h)h^2/2+b_2(\delta^*)^2-b_2\delta^*(b_1-b_2/2)=0.\label{eqn:fig-b-third}
\end{align}
 \item $c=\alpha_2$ is the solution obtained by solving (\ref{eqn:fig-b-second}) and the following equations simultaneously for $(c,h,\delta^*)$.
\begin{align}
 &(2b_1^3/27+b_2(\delta^*)^2-b_2\delta^*(b_1-b_2/2))(3h/2+c)^2\nonumber\\&\hspace*{1.5in}-(c+h)(2b_2\delta^*+b_2^2/2-b_1b_2)^2/2=0.\label{eqn:fig-c-second}\\
 &2b_1b_2(b_2^2+4b_2\delta^*-2c(\delta^*+h)-3h(2\delta^*+h))\nonumber\\&\hspace*{0.5in}-(b_2^2+4b_2\delta^*-3\delta^*h)(b_2^2+4b_2\delta^*-2c\delta^*-3\delta^*h)=0.\label{eqn:fig-c-third}
\end{align}
\item $c=\beta\geq b_2$ solves
\begin{equation}\label{eqn:menu-2-bound}
  72b_1^2b_2+144b_1b_2^2-90b_2^3+(-36b_1^2+84b_1b_2+399b_2^2)c-(96b_1+208b_2)c^2=0.
\end{equation}
\end{itemize}
\end{theorem}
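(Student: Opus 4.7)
The plan is to establish each sub-case of Theorem~\ref{thm:consolidate} by constructing a candidate mechanism of the claimed structure, pinning down its parameters via the ironing conditions of Theorem~\ref{thm:Myerson}, and then verifying the remaining sign conditions on the virtual valuation function $V$ given by Lemma~\ref{lem:V-V'}.

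For a fixed case -- take, for concreteness, Case~1(b) with $c \in [b_2, \alpha_1]$ and the structure of Figure~\ref{fig:b-ini} -- I parametrize the candidate by the widths $\delta_1, \delta_2$ of the individual-item regions along the axes, the slope-break position $h$, and the lottery probability $a_2$. Using $\nabla u = q$ together with the normalization $u(c,c) = 0$, I recover $u$ and the payment $t$, and hence $u_1(\delta) = \delta q_1(\delta) - t(\delta)$, which in turn fixes the exclusion boundary $z_2^*(\delta)$. The virtual valuation $V(\delta) = \bar{\mu}(\{z_1 - z_2 \ge \delta\} \setminus Z)$ is then a piecewise polynomial in $\delta$, obtained by integrating the explicit density/surface/point components of $\bar{\mu}$ from (\ref{eqn:bar-mu}) over the triangular pieces above each level set $\{z_1-z_2 \ge \delta\}\setminus Z$.

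I then impose the conditions from Theorem~\ref{thm:Myerson}: $V(\delta) = 0$ at each boundary of a constant-$q_1$ region interior to $[-b_2, b_1]$, and $\int_{\delta'}^{\delta''} V(\delta)\,d\delta = 0$ on each flat interior region (so that $V_1$ attains the same global maximum at $\delta'$ and $\delta''$). After eliminating the lottery probability by continuity of $u$ along the internal slope, these collapse to the polynomial systems listed in the statement, such as (\ref{eqn:fig-b-first})--(\ref{eqn:fig-b-third}) for Figure~\ref{fig:b-ini}; each system comprises at most two polynomials of degree at most four. Existence of a valid solution, with parameters lying in the correct open ranges, is obtained by tracking the solution branch continuously in $c$: starting from $c = b_2$, where the Figure~\ref{fig:b-ini} structure degenerates to the Figure~\ref{fig:a-ini} structure, an intermediate-value argument shows the branch persists until it collides with the next structural threshold at $c = \alpha_1$, where some parameter reaches the boundary of its range and the mechanism transitions into the next structure. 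An analogous derivation yields $\alpha_2$, $\beta$, and the two algebraic thresholds $c = 27 b_1^2 b_2^2/(4(b_1^3-b_2^3))$ and $c = 216 b_1^2 b_2/(108 b_1^2-108 b_1 b_2-5 b_2^2)$.

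With the parameters determined, the final step verifies the global sign conditions summarized after Theorem~\ref{thm:Myerson}: $V_1$ must attain its global maximum exactly at the boundary points where $q_1$ jumps and stay at that maximum on intervals where $q_1$ is strictly increasing, while lying strictly below the maximum on the exclusion region and the full-allocation region. This amounts to a finite list of inequalities on the parameters. For the easy cases~1(a), 1(e), 2(c), and 2(d), which involve no non-trivial lotteries, $V$ is low-degree in $\delta$ and the inequalities can be checked by hand. The main obstacle lies in the asterisked cases 1(b), 1(c), 1(d), and 2(b): there the parameters are roots of degree-four polynomials whose closed forms involve fifth- and eighth-roots of expressions in $(c, b_1, b_2)$, and the required inequalities on $V$ become highly nonlinear in these radicals; this is exactly where Mathematica verification is invoked. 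Consistency across the phase diagram then follows because each candidate structure degenerates into its neighbor at the corresponding transition value of $c$, so revenues agree on the boundary and, combined with the local optimality already established on each side, no other structure can outperform in a neighborhood.
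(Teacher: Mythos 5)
Your proposal follows the paper's own four-step strategy exactly: compute $V$ via Lemma~\ref{lem:V-V'}, impose the equality/integral conditions from Theorem~\ref{thm:Myerson} to collapse the parameters to the stated polynomial systems, establish existence of admissible parameters by continuity and intermediate-value arguments, and verify the residual sign/ironing inequalities (by hand where tractable, via Mathematica for the starred sub-cases). So this is essentially the same approach as the paper.

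Two small corrections. First, case 2(c) (Figure~\ref{fig:d'-ini}) does contain a non-trivial lottery $(1-a,a)$; the reason it is tractable without Mathematica is that its determining system is just two polynomials whose root bounds can be checked directly, not that lotteries are absent. Second, your closing argument that ``revenues agree on the boundary and, combined with local optimality, no other structure can outperform in a neighborhood'' is superfluous and somewhat off: Theorem~\ref{thm:Myerson} is an if-and-only-if characterization, so once its conditions are verified for a candidate mechanism, global optimality follows immediately without any continuity-of-revenue or local-to-global patching argument.
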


\begin{remark}
 The starred portions in the theorem statement indicate that we used Mathematica to verify certain inequalities in proving those parts.
\end{remark}
\begin{remark}
 The values of $\alpha_1$ fall in the interval $[b_2,tb_2]$, where $t=3(37+3\sqrt{465})/176\approx 1.733379$. Similarly, the values of $\alpha_2\in[kb_2,tb_2]$ where $k\geq 1$ is the root of $32k^3-54k^2+19=0$ ($k\approx 1.37214$), and the values of $\beta\in[tb_2,2b_2)$. See Figure \ref{fig:phase-diagram}.
\end{remark}

The following is a pictorial representation of the results in Theorem \ref{thm:consolidate}. It depicts the regions in $(c,b_1,b_2)$ space at which each of the mechanisms depicted in Figures 2a--2g turns out to be optimal.

\begin{figure}[H]
\centering
\begin{minipage}{.45\textwidth}
\centering
\begin{tikzpicture}[scale=0.36,font=\small,axis/.style={very thick, ->, >=stealth'}]
\draw [axis,thick,->] (0,-1)--(0,13);
\node [right] at (11,-1) {$\frac{b_1}{b_2}$};
\draw [axis,thick,->] (-0.5,0)--(12,0);
\node [above] at (1,11) {$\frac{c}{b_2}$};
\node at (-1,0.25) {$0$};
\draw [thin,-] (-0.25,2) -- (0.25,2);
\node [left] at (-0.25,2) {$2$};
\draw [thin,-] (-0.25,4) -- (0.25,4);
\node [left] at (-0.25,4) {$4$};
\draw [thin,-] (-0.25,6) -- (0.25,6);
\node [left] at (-0.25,6) {$6$};
\draw [thin,-] (-0.25,8) -- (0.25,8);
\node [left] at (-0.25,8) {$8$};
\draw [thin,-] (-0.25,10) -- (0.25,10);
\node [left] at (-0.25,10) {$10$};
\draw [thin,-] (-0.25,12) -- (0.25,12);
\node [left] at (-0.25,12) {$12$};
\node at (0.5,-1) {$1$};
\draw [thin,-] (2,-0.25) -- (2,0.25);
\node [below] at (2,-0.25) {$1.2$};
\draw [thin,-] (4,-0.25) -- (4,0.25);
\node [below] at (4,-0.25) {$1.4$};
\draw [thin,-] (6,-0.25) -- (6,0.25);
\node [below] at (6,-0.25) {$1.6$};
\draw [thin,-] (8,-0.25) -- (8,0.25);
\node [below] at (8,-0.25) {$1.8$};
\draw [thin,-] (10,-0.25) -- (10,0.25);
\node [below] at (10,-0.25) {$2$};
\draw [thick,-] (0,1) to (12,1);
\draw [thick,-] (0,1) to (5,1.733) to (12,1.97);
\draw [thick,-] (0,1.372) to (5,1.733);
\draw [thick,-] (2.2,12) to[out=-60,in=170] (12,3.98);
\draw [thick,dotted] (5,0) to (5,12);

\node [right] at (5,11) {$b_1=(1.5)b_2$};
\node at (2.2,8.5) {asymptotic};
\node at (2.2,7.5) {to $b_1=b_2$};
\node at (9,3.8) {asymptotic};
\node at (9,2.8) {to $c=2b_2$};
\node at (10,1.4) {$c=b_2$};
\draw [thick,->] (8.5,2) -- (8.5,2.6);
\draw [thick,->] (8.5,5) -- (8.5,4.2);

\path[fill=gray!50,opacity=.5] (0,1) to (12,1) to (12,0) to (0,0);
\end{tikzpicture}
\end{minipage}
\begin{minipage}{0.025\textwidth}
 \hspace*{0.025\textwidth}
\end{minipage}
\begin{minipage}{0.45\textwidth}
\centering
\begin{tikzpicture}[scale=0.25,font=\normalsize,axis/.style={very thick, -}]
\node [rotate=45] at (-1,0) {$c$};
\node [rotate=45] at (0,-1) {$c$};
\draw [axis,thick,-] (0,0)--(15,0);
\node [rotate=45] at (15,-1.2) {$c+b_1$};
\draw [axis,thick,-] (0,0)--(0,12);
\node [rotate=45] at (-1,12) {$c+b_2$};
\draw [axis,thick,-] (0,12)--(15,12);
\draw [axis,thick,-] (15,0)--(15,12);
\draw [axis,thick,-] (0,5)--(6.5,5);
\draw [thin,-] (-0.4,5)--(0.4,5);
\node [rotate=45] at (-2,4) {$c+\delta_2$};
\draw [axis,thick,-] (6.5,0)--(6.5,5);
\draw [thin,-] (6.5,-0.4)--(6.5,0.4);
\node [rotate=45] at (6.5,-1.8) {$c+\delta_1$};
\draw [thick,-] (6.5,5)--(13.5,12);
\draw [thick,dotted] (6.5,5)--(1.5,0);
\draw [thin,-] (1.5,-0.4)--(1.5,0.4);
\node [rotate=45] at (1,-1.8) {$c+\delta^*$};
\draw [thin,-] (13.5,11.6)--(13.4,12.4);
\node at (12.5,13) {$c+b_2+\delta^*$};
\node [above] at (3.5,1.5) {$(0,0)$};
\node [above] at (11.5,4.5) {$(1,0)$};
\node [above] at (3,8) {$(0,1)$};
\end{tikzpicture}
\end{minipage}
\caption{When $(c,b_1,b_2)$ falls in the shaded region in the left, the optimal mechanism is as depicted in the right. Item $1$ is offered for a price of $c+\delta_1$, and item $2$ is offered for a price of $c+\delta_2$.}\label{fig:a-new}
\end{figure}
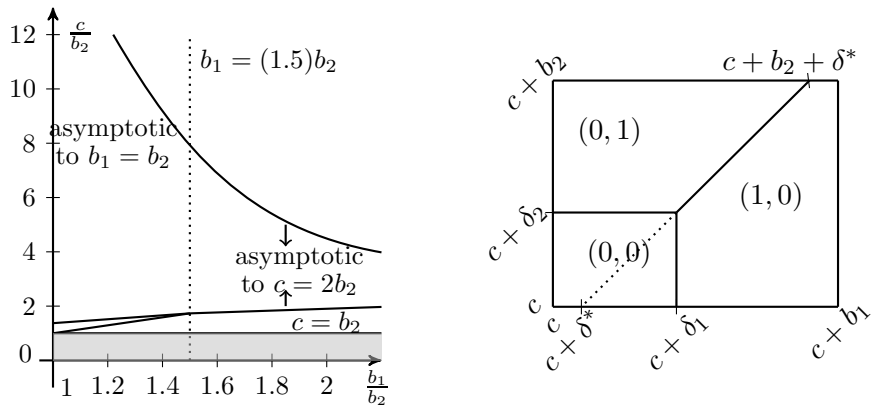

\begin{figure}[H]
\centering
\begin{minipage}{.45\textwidth}
\centering
\begin{tikzpicture}[scale=0.36,font=\small,axis/.style={very thick, ->, >=stealth'}]
\draw [axis,thick,->] (0,-1)--(0,13);
\node [right] at (11,-1) {$\frac{b_1}{b_2}$};
\draw [axis,thick,->] (-0.5,0)--(12,0);
\node [above] at (1,11) {$\frac{c}{b_2}$};
\node at (-1,0.25) {$0$};
\draw [thin,-] (-0.25,2) -- (0.25,2);
\node [left] at (-0.25,2) {$2$};
\draw [thin,-] (-0.25,4) -- (0.25,4);
\node [left] at (-0.25,4) {$4$};
\draw [thin,-] (-0.25,6) -- (0.25,6);
\node [left] at (-0.25,6) {$6$};
\draw [thin,-] (-0.25,8) -- (0.25,8);
\node [left] at (-0.25,8) {$8$};
\draw [thin,-] (-0.25,10) -- (0.25,10);
\node [left] at (-0.25,10) {$10$};
\draw [thin,-] (-0.25,12) -- (0.25,12);
\node [left] at (-0.25,12) {$12$};
\node at (0.5,-1) {$1$};
\draw [thin,-] (2,-0.25) -- (2,0.25);
\node [below] at (2,-0.25) {$1.2$};
\draw [thin,-] (4,-0.25) -- (4,0.25);
\node [below] at (4,-0.25) {$1.4$};
\draw [thin,-] (6,-0.25) -- (6,0.25);
\node [below] at (6,-0.25) {$1.6$};
\draw [thin,-] (8,-0.25) -- (8,0.25);
\node [below] at (8,-0.25) {$1.8$};
\draw [thin,-] (10,-0.25) -- (10,0.25);
\node [below] at (10,-0.25) {$2$};
\draw [thick,-] (0,1) to (12,1);
\draw [thick,-] (0,1) to (5,1.733) to (12,1.97);
\draw [thick,-] (0,1.372) to (5,1.733);
\draw [thick,-] (2.2,12) to[out=-60,in=170] (12,3.98);
\draw [thick,dotted] (5,0) to (5,12);

\node [right] at (5,11) {$b_1=(1.5)b_2$};
\node at (2.2,8.5) {asymptotic};
\node at (2.2,7.5) {to $b_1=b_2$};
\node at (9,3.8) {asymptotic};
\node at (9,2.8) {to $c=2b_2$};
\node at (10,1.4) {$c=b_2$};
\draw [thick,->] (8.5,2) -- (8.5,2.6);
\draw [thick,->] (8.5,5) -- (8.5,4.2);

\path[fill=gray!50,opacity=.5] (0,1) to (5,1.733) to (12,1.97) to (12,1) to (0,1);
\end{tikzpicture}
\end{minipage}
\begin{minipage}{0.025\textwidth}
 \hspace*{0.025\textwidth}
\end{minipage}
\begin{minipage}{0.45\textwidth}
\centering
\begin{tikzpicture}[scale=0.25,font=\normalsize,axis/.style={very thick, -}]
\node [rotate=45] at (-1,0) {$c$};
\node [rotate=45] at (0,-1) {$c$};
\draw [axis,thick,-] (0,0)--(15,0);
\node [rotate=45] at (15,-1.2) {$c+b_1$};
\draw [axis,thick,-] (0,0)--(0,12);
\node [rotate=45] at (-1,12) {$c+b_2$};
\draw [axis,thick,-] (0,12)--(15,12);
\draw [axis,thick,-] (15,0)--(15,12);
\draw [axis,thick,-] (0,4)--(8,12);
\draw [thin,-] (-0.4,4)--(0.4,4);
\node [rotate=45] at (-2,3.5) {$c+b_2/3$};
\draw [axis,thick,-] (0,3.5)--(4,2.5);
\draw [thin,-] (-0.4,3.5)--(0.4,3.5);
\node [rotate=45] at (-1.8,1.8) {$c+\delta_2$};
\draw [axis,thick,-] (4,0)--(4,2.5);
\draw [thin,-] (4,-0.4)--(4,0.4);
\node [rotate=45] at (3.4,-1.8) {$c+\delta_1$};
\draw [thin,<->] (4.5,0)--(4.5,2.5);
\node at (5,1) {$h$};
\draw [thick,-] (4,2.5)--(13.5,12);
\draw [thick,dotted] (4,2.5)--(1.5,0);
\draw [thin,-] (1.5,-0.4)--(1.5,0.4);
\node [rotate=45] at (1,-1.8) {$c+\delta^*$};
\draw [thin,-] (13.5,11.6)--(13.5,12.4);
\node at (14.5,13) {$c+b_2+\delta^*$};
\draw [thin,-] (8,11.6)--(8,12.4);
\node at (7.25,13) {\footnotesize$c+2b_2/3$};
\node at (2,1.5) {$(0,0)$};
\node [above] at (11,4) {$(1,0)$};
\node [above] at (2,8) {$(0,1)$};
\node [rotate=45] at (6,7) {$(1-a_2,a_2)$};
\end{tikzpicture}
\end{minipage}
\caption{When $(c,b_1,b_2)$ falls in the shaded region in the left, the optimal mechanism is as depicted in the right. Item $1$ is offered for a price of $c+\delta_1$, item $2$ is offered for a price of $c+b_2/3$, and a lottery with probabilities $(1-a_2,a_2)$ is offered for a price of $c+a_2\delta_2$.}\label{fig:b-new}
\end{figure}

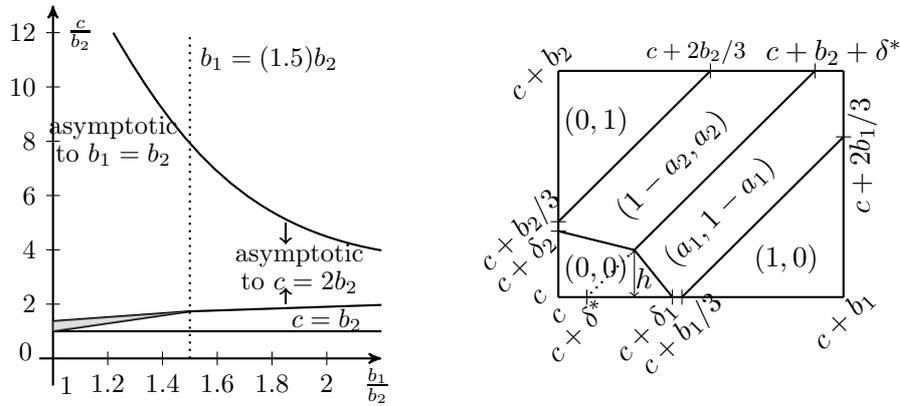
\begin{figure}[H]
\centering
\begin{minipage}{.45\textwidth}
\centering
\begin{tikzpicture}[scale=0.36,font=\small,axis/.style={very thick, ->, >=stealth'}]
\draw [axis,thick,->] (0,-1)--(0,13);
\node [right] at (11,-1) {$\frac{b_1}{b_2}$};
\draw [axis,thick,->] (-0.5,0)--(12,0);
\node [above] at (1,11) {$\frac{c}{b_2}$};
\node at (-1,0.25) {$0$};
\draw [thin,-] (-0.25,2) -- (0.25,2);
\node [left] at (-0.25,2) {$2$};
\draw [thin,-] (-0.25,4) -- (0.25,4);
\node [left] at (-0.25,4) {$4$};
\draw [thin,-] (-0.25,6) -- (0.25,6);
\node [left] at (-0.25,6) {$6$};
\draw [thin,-] (-0.25,8) -- (0.25,8);
\node [left] at (-0.25,8) {$8$};
\draw [thin,-] (-0.25,10) -- (0.25,10);
\node [left] at (-0.25,10) {$10$};
\draw [thin,-] (-0.25,12) -- (0.25,12);
\node [left] at (-0.25,12) {$12$};
\node at (0.5,-1) {$1$};
\draw [thin,-] (2,-0.25) -- (2,0.25);
\node [below] at (2,-0.25) {$1.2$};
\draw [thin,-] (4,-0.25) -- (4,0.25);
\node [below] at (4,-0.25) {$1.4$};
\draw [thin,-] (6,-0.25) -- (6,0.25);
\node [below] at (6,-0.25) {$1.6$};
\draw [thin,-] (8,-0.25) -- (8,0.25);
\node [below] at (8,-0.25) {$1.8$};
\draw [thin,-] (10,-0.25) -- (10,0.25);
\node [below] at (10,-0.25) {$2$};
\draw [thick,-] (0,1) to (12,1);
\draw [thick,-] (0,1) to (5,1.733) to (12,1.97);
\draw [thick,-] (0,1.372) to (5,1.733);
\draw [thick,-] (2.2,12) to[out=-60,in=170] (12,3.98);
\draw [thick,dotted] (5,0) to (5,12);

\node [right] at (5,11) {$b_1=(1.5)b_2$};
\node at (2.2,8.5) {asymptotic};
\node at (2.2,7.5) {to $b_1=b_2$};
\node at (9,3.8) {asymptotic};
\node at (9,2.8) {to $c=2b_2$};
\node at (10,1.4) {$c=b_2$};
\draw [thick,->] (8.5,2) -- (8.5,2.6);
\draw [thick,->] (8.5,5) -- (8.5,4.2);

\path[fill=gray!50,opacity=.5] (0,1) to (0,1.372) to (5,1.733)  to (0,1);
\end{tikzpicture}
\end{minipage}
\begin{minipage}{0.025\textwidth}
 \hspace*{0.025\textwidth}
\end{minipage}
\begin{minipage}{0.45\textwidth}
\centering
\begin{tikzpicture}[scale=0.25,font=\normalsize,axis/.style={very thick, -}]
\node [rotate=45] at (-1,0) {$c$};
\node [rotate=45] at (0,-1) {$c$};
\draw [axis,thick,-] (0,0)--(15,0);
\node [rotate=45] at (15,-1.2) {$c+b_1$};
\draw [axis,thick,-] (0,0)--(0,12);
\node [rotate=45] at (-1,12) {$c+b_2$};
\draw [axis,thick,-] (0,12)--(15,12);
\draw [axis,thick,-] (15,0)--(15,12);
\draw [axis,thick,-] (0,4)--(8,12);
\draw [thin,-] (-0.4,4)--(0.4,4);
\node [rotate=45] at (-2,3.5) {$c+b_2/3$};
\draw [axis,thick,-] (0,3.5)--(4,2.5);
\draw [thin,-] (-0.4,3.5)--(0.4,3.5);
\node [rotate=45] at (-1.8,1.8) {$c+\delta_2$};
\draw [thin,<->] (4,0)--(4,2.5);
\node at (4.5,1) {$h$};
\draw [axis,thick,-] (4,2.5)--(6,0);
\draw [axis,thick,-] (6.5,0)--(15,8.5);
\draw [thin,-] (6,-0.4)--(6,0.4);
\node [rotate=45] at (4.6,-1.8) {$c+\delta_1$};
\draw [thin,-] (6.5,-0.4)--(6.5,0.4);
\node [rotate=45] at (6.5,-2) {$c+b_1/3$};
\draw [thin,-] (14.6,8.5)--(15.4,8.5);
\node [rotate=90] at (16,7.5) {$c+2b_1/3$};
\draw [thick,-] (4,2.5)--(13.5,12);
\draw [thick,dotted] (4,2.5)--(1.5,0);
\draw [thin,-] (1.5,-0.4)--(1.5,0.4);
\node [rotate=45] at (1,-1.8) {$c+\delta^*$};
\draw [thin,-] (13.5,11.6)--(13.5,12.4);
\node at (14.5,13) {$c+b_2+\delta^*$};
\draw [thin,-] (8,11.6)--(8,12.4);
\node at (7.25,13) {\footnotesize$c+2b_2/3$};
\node at (2,1.5) {$(0,0)$};
\node at (12,2) {$(1,0)$};
\node [above] at (2,8) {$(0,1)$};
\node [rotate=45] at (6,7) {$(1-a_2,a_2)$};
\node [rotate=45] at (8.5,4.5) {$(a_1,1-a_1)$};
\end{tikzpicture}
\end{minipage}
\caption{When $(c,b_1,b_2)$ falls in the shaded region in the left, the optimal mechanism is as depicted in the right. Item $1$ is offered for a price of $c+b_1/3$, item $2$ is offered for a price of $c+b_2/3$, a lottery with probabilities $(1-a_2,a_2)$ is offered for a price of $c+a_2\delta_2$, and a lottery with probabilities $(a_1,1-a_1)$ is offered for a price of $c+a_1\delta_1$.}\label{fig:c-new}
\end{figure}

\begin{remark}
 The mechanisms depicted below in Figures \ref{fig:d-new} and \ref{fig:d'-new} differ only in that the line separating the regions with allocations $(1-a,a)$ and $(1,0)$ falls to the right of the line $z_1-z_2=b_1-b_2$ in the former, and to the left of it in the latter. These two structures meet at $b_1=3b_2/2$ when the line of separation exactly falls at $z_1-z_2=b_1-b_2$.
\end{remark}

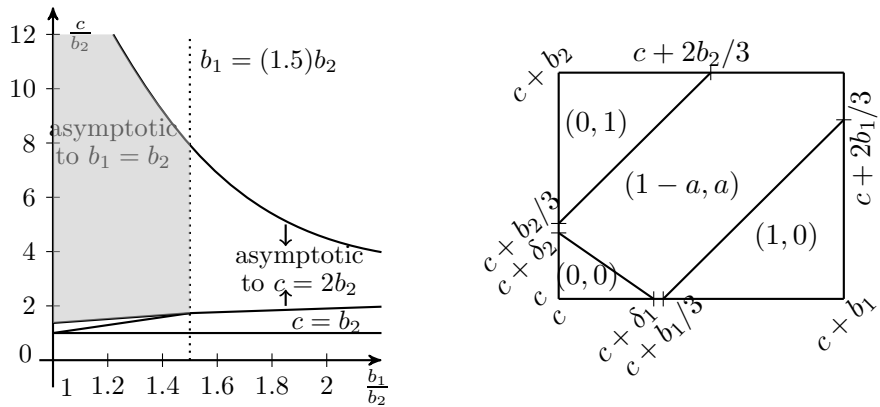
\begin{figure}[H]
\centering
\begin{minipage}{.45\textwidth}
\centering
\begin{tikzpicture}[scale=0.36,font=\small,axis/.style={very thick, ->, >=stealth'}]
\draw [axis,thick,->] (0,-1)--(0,13);
\node [right] at (11,-1) {$\frac{b_1}{b_2}$};
\draw [axis,thick,->] (-0.5,0)--(12,0);
\node [above] at (1,11) {$\frac{c}{b_2}$};
\node at (-1,0.25) {$0$};
\draw [thin,-] (-0.25,2) -- (0.25,2);
\node [left] at (-0.25,2) {$2$};
\draw [thin,-] (-0.25,4) -- (0.25,4);
\node [left] at (-0.25,4) {$4$};
\draw [thin,-] (-0.25,6) -- (0.25,6);
\node [left] at (-0.25,6) {$6$};
\draw [thin,-] (-0.25,8) -- (0.25,8);
\node [left] at (-0.25,8) {$8$};
\draw [thin,-] (-0.25,10) -- (0.25,10);
\node [left] at (-0.25,10) {$10$};
\draw [thin,-] (-0.25,12) -- (0.25,12);
\node [left] at (-0.25,12) {$12$};
\node at (0.5,-1) {$1$};
\draw [thin,-] (2,-0.25) -- (2,0.25);
\node [below] at (2,-0.25) {$1.2$};
\draw [thin,-] (4,-0.25) -- (4,0.25);
\node [below] at (4,-0.25) {$1.4$};
\draw [thin,-] (6,-0.25) -- (6,0.25);
\node [below] at (6,-0.25) {$1.6$};
\draw [thin,-] (8,-0.25) -- (8,0.25);
\node [below] at (8,-0.25) {$1.8$};
\draw [thin,-] (10,-0.25) -- (10,0.25);
\node [below] at (10,-0.25) {$2$};
\draw [thick,-] (0,1) to (12,1);
\draw [thick,-] (0,1) to (5,1.733) to (12,1.97);
\draw [thick,-] (0,1.372) to (5,1.733);
\draw [thick,-] (2.2,12) to[out=-60,in=170] (12,3.98);
\draw [thick,dotted] (5,0) to (5,12);

\node [right] at (5,11) {$b_1=(1.5)b_2$};
\node at (2.2,8.5) {asymptotic};
\node at (2.2,7.5) {to $b_1=b_2$};
\node at (9,3.8) {asymptotic};
\node at (9,2.8) {to $c=2b_2$};
\node at (10,1.4) {$c=b_2$};
\draw [thick,->] (8.5,2) -- (8.5,2.6);
\draw [thick,->] (8.5,5) -- (8.5,4.2);

\path[fill=gray!50,opacity=.5] (0,1.372) to (0,12) to (2.2,12) to[out=-60,in=130] (5,8) to (5,1.733)  to (0,1.372);
\end{tikzpicture}
\end{minipage}
\begin{minipage}{0.025\textwidth}
 \hspace*{0.025\textwidth}
\end{minipage}
\begin{minipage}{0.45\textwidth}
\centering
\begin{tikzpicture}[scale=0.25,font=\normalsize,axis/.style={very thick, -}]
\node [rotate=45] at (-1,0) {$c$};
\node [rotate=45] at (0,-1) {$c$};
\draw [axis,thick,-] (0,0)--(15,0);
\node [rotate=45] at (15,-1.2) {$c+b_1$};
\draw [axis,thick,-] (0,0)--(0,12);
\node [rotate=45] at (-1,12) {$c+b_2$};
\draw [axis,thick,-] (0,12)--(15,12);
\draw [axis,thick,-] (15,0)--(15,12);
\draw [axis,thick,-] (0,4)--(8,12);
\draw [thin,-] (-0.4,4)--(0.4,4);
\node [rotate=45] at (-2,3.5) {$c+b_2/3$};
\draw [axis,thick,-] (0,3.5)--(5,0);
\draw [thin,-] (-0.4,3.5)--(0.4,3.5);
\node [rotate=45] at (-1.8,1.8) {$c+\delta_2$};
\draw [thin,-] (5,-0.4)--(5,0.4);
\node [rotate=45] at (3.5,-1.8) {$c+\delta_1$};
\draw [thick,-] (5.5,0)--(15,9.5);
\draw [thin,-] (5.5,-0.4)--(5.5,0.4);
\node [rotate=45] at (5.5,-2) {$c+b_1/3$};
\draw [thin,-] (14.6,9.5)--(15.4,9.5);
\node [rotate=90] at (16,8) {$c+2b_1/3$};
\draw [thin,-] (8,11.6)--(8,12.4);
\node at (7,13) {$c+2b_2/3$};
\node at (1.5,1.2) {$(0,0)$};
\node [above] at (12,2) {$(1,0)$};
\node [above] at (2,8) {$(0,1)$};
\node at (6.5,6) {$(1-a,a)$};
\end{tikzpicture}
\end{minipage}
\caption{When $(c,b_1,b_2)$ falls in the shaded region in the left, the optimal mechanism is as depicted in the right. Item $1$ is offered for a price of $c+b_1/3$, item $2$ is offered for a price of $c+b_2/3$, and a lottery with probabilities $(1-a,a)$ is offered for a price of $c+a\delta_2$.}\label{fig:d-new}
\end{figure}

\begin{figure}[H]
\centering
\begin{minipage}{.45\textwidth}
\centering
\begin{tikzpicture}[scale=0.36,font=\small,axis/.style={very thick, ->, >=stealth'}]
\draw [axis,thick,->] (0,-1)--(0,13);
\node [right] at (11,-1) {$\frac{b_1}{b_2}$};
\draw [axis,thick,->] (-0.5,0)--(12,0);
\node [above] at (1,11) {$\frac{c}{b_2}$};
\node at (-1,0.25) {$0$};
\draw [thin,-] (-0.25,2) -- (0.25,2);
\node [left] at (-0.25,2) {$2$};
\draw [thin,-] (-0.25,4) -- (0.25,4);
\node [left] at (-0.25,4) {$4$};
\draw [thin,-] (-0.25,6) -- (0.25,6);
\node [left] at (-0.25,6) {$6$};
\draw [thin,-] (-0.25,8) -- (0.25,8);
\node [left] at (-0.25,8) {$8$};
\draw [thin,-] (-0.25,10) -- (0.25,10);
\node [left] at (-0.25,10) {$10$};
\draw [thin,-] (-0.25,12) -- (0.25,12);
\node [left] at (-0.25,12) {$12$};
\node at (0.5,-1) {$1$};
\draw [thin,-] (2,-0.25) -- (2,0.25);
\node [below] at (2,-0.25) {$1.2$};
\draw [thin,-] (4,-0.25) -- (4,0.25);
\node [below] at (4,-0.25) {$1.4$};
\draw [thin,-] (6,-0.25) -- (6,0.25);
\node [below] at (6,-0.25) {$1.6$};
\draw [thin,-] (8,-0.25) -- (8,0.25);
\node [below] at (8,-0.25) {$1.8$};
\draw [thin,-] (10,-0.25) -- (10,0.25);
\node [below] at (10,-0.25) {$2$};
\draw [thick,-] (0,1) to (12,1);
\draw [thick,-] (0,1) to (5,1.733) to (12,1.97);
\draw [thick,-] (0,1.372) to (5,1.733);
\draw [thick,-] (2.2,12) to[out=-60,in=170] (12,3.98);
\draw [thick,dotted] (5,0) to (5,12);

\node [right] at (5,11) {$b_1=(1.5)b_2$};
\node at (2.2,8.5) {asymptotic};
\node at (2.2,7.5) {to $b_1=b_2$};
\node at (9,3.8) {asymptotic};
\node at (9,2.8) {to $c=2b_2$};
\node at (10,1.4) {$c=b_2$};
\draw [thick,->] (8.5,2) -- (8.5,2.6);
\draw [thick,->] (8.5,5) -- (8.5,4.2);

\path[fill=gray!50,opacity=.5] (5,8) to[out=-50,in=170] (12,3.98) to (12,1.97) to (5,1.733) to (5,8);
\end{tikzpicture}
\end{minipage}
\begin{minipage}{0.025\textwidth}
 \hspace*{0.025\textwidth}
\end{minipage}
\begin{minipage}{0.45\textwidth}
\centering
\begin{tikzpicture}[scale=0.225,font=\normalsize,axis/.style={very thick, -}]
\node [rotate=45] at (-1,0) {$c$};
\node [rotate=45] at (0,-1) {$c$};
\draw [axis,thick,-] (0,0)--(18,0);
\node [rotate=45] at (18,-1) {$c+b_1$};
\draw [axis,thick,-] (0,0)--(0,12);
\node [rotate=45] at (-1,12) {$c+b_2$};
\draw [axis,thick,-] (0,12)--(18,12);
\draw [axis,thick,-] (18,0)--(18,12);
\draw [axis,thick,-] (0,4)--(8,12);
\draw [thin,-] (-0.4,4)--(0.4,4);
\node [rotate=45] at (-2,3.8) {$c+b_2/3$};
\draw [axis,thick,-] (0,3.5)--(5,0);
\draw [thin,-] (-0.4,3.5)--(0.4,3.5);
\node [rotate=45] at (-2,2) {$c+\delta_2$};
\draw [thin,-] (5,-0.4)--(5,0.4);
\node [rotate=45] at (3.8,-1.6) {$c+\delta_1$};
\draw [thick,-] (5.5,0)--(17.5,12);
\draw [thin,-] (5.5,-0.4)--(5.5,0.4);
\node [rotate=45] at (5.3,-2.7) {$c+\frac{b_1}{2}-\frac{b_2}{4}$};
\draw [thin,-] (17.5,11.6)--(17.5,12.4);
\node at (16,13.1) {$c+\frac{b_1}{2}+\frac{3b_2}{4}$};
\draw [thin,-] (8,11.6)--(8,12.4);
\node at (8,13) {\footnotesize$c+2b_2/3$};
\node at (1.5,1.2) {$(0,0)$};
\node [above] at (14,3) {$(1,0)$};
\node [above] at (2,8) {$(0,1)$};
\node at (6.5,5.5) {$(1-a,a)$};
\end{tikzpicture}
\end{minipage}
\caption{When $(c,b_1,b_2)$ falls in the shaded region in the left, the optimal mechanism is as depicted in the right. Item $1$ is offered for a price of $c+b_1/2-b_2/4$, item $2$ is offered for a price of $c+b_2/3$, and a lottery with probabilities $(1-a,a)$ is offered for a price of $c+a\delta_2$.}\label{fig:d'-new}
\end{figure}
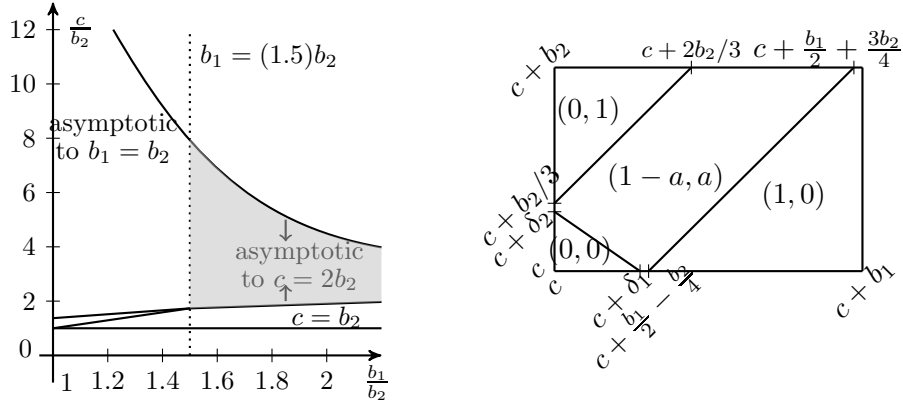

\begin{remark}
 Observe that the mechanisms depicted in Figures \ref{fig:b-new}, \ref{fig:c-new}, \ref{fig:d-new}, and \ref{fig:d'-new} meet at $b_1=3b_2/2$, $c=tb_2$. They meet because at this $(c,b_1,b_2)$, the parameter $h$ (in Figures \ref{fig:b-new} and \ref{fig:c-new}) becomes $0$, and $\delta^*=\delta_1=b_1/2-b_2/4=b_1/3=b_1-b_2$.
\end{remark}
\begin{remark}
 The mechanisms depicted below in Figures \ref{fig:e-new} and \ref{fig:e'-new} differ only in that the line separating the regions with allocations $(0,1)$ and $(1,0)$ falls to the right of the line $z_1-z_2=b_1-b_2$ in the former, and to the left of it in the latter. These two structures meet at $b_1=3b_2/2$ when the line of separation exactly falls at $z_1-z_2=b_1-b_2$.
\end{remark}

\begin{figure}[H]
\centering
\begin{minipage}{.45\textwidth}
\centering
\begin{tikzpicture}[scale=0.36,font=\small,axis/.style={very thick, ->, >=stealth'}]
\draw [axis,thick,->] (0,-1)--(0,13);
\node [right] at (11,-1) {$\frac{b_1}{b_2}$};
\draw [axis,thick,->] (-0.5,0)--(12,0);
\node [above] at (1,11) {$\frac{c}{b_2}$};
\node at (-1,0.25) {$0$};
\draw [thin,-] (-0.25,2) -- (0.25,2);
\node [left] at (-0.25,2) {$2$};
\draw [thin,-] (-0.25,4) -- (0.25,4);
\node [left] at (-0.25,4) {$4$};
\draw [thin,-] (-0.25,6) -- (0.25,6);
\node [left] at (-0.25,6) {$6$};
\draw [thin,-] (-0.25,8) -- (0.25,8);
\node [left] at (-0.25,8) {$8$};
\draw [thin,-] (-0.25,10) -- (0.25,10);
\node [left] at (-0.25,10) {$10$};
\draw [thin,-] (-0.25,12) -- (0.25,12);
\node [left] at (-0.25,12) {$12$};
\node at (0.5,-1) {$1$};
\draw [thin,-] (2,-0.25) -- (2,0.25);
\node [below] at (2,-0.25) {$1.2$};
\draw [thin,-] (4,-0.25) -- (4,0.25);
\node [below] at (4,-0.25) {$1.4$};
\draw [thin,-] (6,-0.25) -- (6,0.25);
\node [below] at (6,-0.25) {$1.6$};
\draw [thin,-] (8,-0.25) -- (8,0.25);
\node [below] at (8,-0.25) {$1.8$};
\draw [thin,-] (10,-0.25) -- (10,0.25);
\node [below] at (10,-0.25) {$2$};
\draw [thick,-] (0,1) to (12,1);
\draw [thick,-] (0,1) to (5,1.733) to (12,1.97);
\draw [thick,-] (0,1.372) to (5,1.733);
\draw [thick,-] (2.2,12) to[out=-60,in=170] (12,3.98);
\draw [thick,dotted] (5,0) to (5,12);

\node [right] at (5,11) {$b_1=(1.5)b_2$};
\node at (2.2,8.5) {asymptotic};
\node at (2.2,7.5) {to $b_1=b_2$};
\node at (9,3.8) {asymptotic};
\node at (9,2.8) {to $c=2b_2$};
\node at (10,1.4) {$c=b_2$};
\draw [thick,->] (8.5,2) -- (8.5,2.6);
\draw [thick,->] (8.5,5) -- (8.5,4.2);

\path[fill=gray!50,opacity=.5] (2.2,12) to[out=-60,in=130] (5,8) to (5,12) to (2.2,12);
\end{tikzpicture}
\end{minipage}
\begin{minipage}{0.025\textwidth}
 \hspace*{0.025\textwidth}
\end{minipage}
\begin{minipage}{0.45\textwidth}
\centering
\begin{tikzpicture}[scale=0.25,font=\normalsize,axis/.style={very thick, -}]
\node [rotate=45] at (-1,0) {$c$};
\node [rotate=45] at (0,-1) {$c$};
\draw [axis,thick,-] (0,0)--(15,0);
\node [rotate=45] at (15,-1.2) {$c+b_1$};
\draw [axis,thick,-] (0,0)--(0,12);
\node [rotate=45] at (-1,12) {$c+b_2$};
\draw [axis,thick,-] (0,12)--(15,12);
\draw [axis,thick,-] (15,0)--(15,12);
\draw [thick,-] (5,0)--(15,10);
\draw [thin,-] (5,-0.4)--(5,0.4);
\node [rotate=45] at (5,-2) {$c+b_1/3$};
\draw [thin,-] (14.6,10)--(15.4,10);
\node [rotate=90] at (16,9) {$c+2b_1/3$};
\node [above] at (12,3) {$(1,0)$};
\node [above] at (5,6) {$(0,1)$};
\end{tikzpicture}
\end{minipage}
\caption{When $(c,b_1,b_2)$ falls in the shaded region in the left, the optimal mechanism is as depicted in the right. Item $1$ is offered for a price of $c+b_1/3$, and item $2$ is offered for a price of $c$.}\label{fig:e-new}
\end{figure}
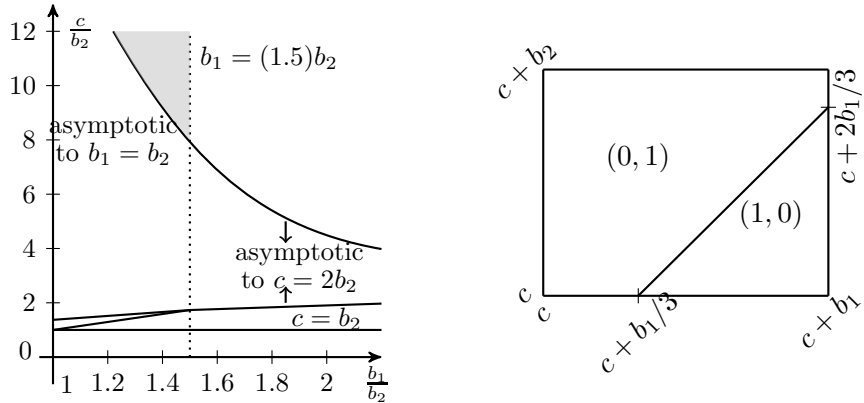

\begin{figure}[H]
\centering
\begin{minipage}{.45\textwidth}
\centering
\begin{tikzpicture}[scale=0.36,font=\small,axis/.style={very thick, ->, >=stealth'}]
\draw [axis,thick,->] (0,-1)--(0,13);
\node [right] at (11,-1) {$\frac{b_1}{b_2}$};
\draw [axis,thick,->] (-0.5,0)--(12,0);
\node [above] at (1,11) {$\frac{c}{b_2}$};
\node at (-1,0.25) {$0$};
\draw [thin,-] (-0.25,2) -- (0.25,2);
\node [left] at (-0.25,2) {$2$};
\draw [thin,-] (-0.25,4) -- (0.25,4);
\node [left] at (-0.25,4) {$4$};
\draw [thin,-] (-0.25,6) -- (0.25,6);
\node [left] at (-0.25,6) {$6$};
\draw [thin,-] (-0.25,8) -- (0.25,8);
\node [left] at (-0.25,8) {$8$};
\draw [thin,-] (-0.25,10) -- (0.25,10);
\node [left] at (-0.25,10) {$10$};
\draw [thin,-] (-0.25,12) -- (0.25,12);
\node [left] at (-0.25,12) {$12$};
\node at (0.5,-1) {$1$};
\draw [thin,-] (2,-0.25) -- (2,0.25);
\node [below] at (2,-0.25) {$1.2$};
\draw [thin,-] (4,-0.25) -- (4,0.25);
\node [below] at (4,-0.25) {$1.4$};
\draw [thin,-] (6,-0.25) -- (6,0.25);
\node [below] at (6,-0.25) {$1.6$};
\draw [thin,-] (8,-0.25) -- (8,0.25);
\node [below] at (8,-0.25) {$1.8$};
\draw [thin,-] (10,-0.25) -- (10,0.25);
\node [below] at (10,-0.25) {$2$};
\draw [thick,-] (0,1) to (12,1);
\draw [thick,-] (0,1) to (5,1.733) to (12,1.97);
\draw [thick,-] (0,1.372) to (5,1.733);
\draw [thick,-] (2.2,12) to[out=-60,in=170] (12,3.98);
\draw [thick,dotted] (5,0) to (5,12);

\node [right] at (5,11) {$b_1=(1.5)b_2$};
\node at (2.2,8.5) {asymptotic};
\node at (2.2,7.5) {to $b_1=b_2$};
\node at (9,3.8) {asymptotic};
\node at (9,2.8) {to $c=2b_2$};
\node at (10,1.4) {$c=b_2$};
\draw [thick,->] (8.5,2) -- (8.5,2.6);
\draw [thick,->] (8.5,5) -- (8.5,4.2);

\path[fill=gray!50,opacity=.5] (5,12) to (5,8) to[out=-50,in=170] (12,3.98) to (12,12) to (5,12);
\end{tikzpicture}
\end{minipage}
\begin{minipage}{0.025\textwidth}
 \hspace*{0.025\textwidth}
\end{minipage}
\begin{minipage}{0.45\textwidth}
\centering
\begin{tikzpicture}[scale=0.225,font=\normalsize,axis/.style={very thick, -}]
\node [rotate=45] at (-1,0) {$c$};
\node [rotate=45] at (0,-1) {$c$};
\draw [axis,thick,-] (0,0)--(18,0);
\node [rotate=45] at (18,-1) {$c+b_1$};
\draw [axis,thick,-] (0,0)--(0,12);
\node [rotate=45] at (-1,12) {$c+b_2$};
\draw [axis,thick,-] (0,12)--(18,12);
\draw [axis,thick,-] (18,0)--(18,12);
\draw [thick,-] (5,0)--(17,12);
\draw [thin,-] (5,-0.4)--(5,0.4);
\node [rotate=45] at (4.8,-2.7) {$c+\frac{b_1}{2}-\frac{b_2}{4}$};
\draw [thin,-] (17,11.6)--(17,12.4);
\node at (16,13.1) {$c+\frac{b_1}{2}+\frac{3b_2}{4}$};
\node [above] at (14,3) {$(1,0)$};
\node [above] at (5,6) {$(0,1)$};
\end{tikzpicture}
\end{minipage}
\caption{When $(c,b_1,b_2)$ falls in the shaded region in the left, the optimal mechanism is as depicted in the right. Item $1$ is offered for a price of $c+b_1/2-b_2/4$, and item $2$ is offered for a price of $c$.}\label{fig:e'-new}
\end{figure}
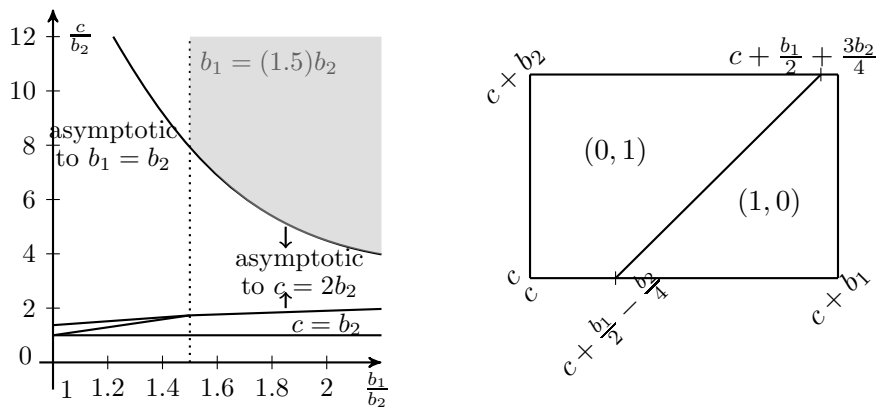

\begin{remark}
  The mechanisms depicted in Figures \ref{fig:d-new}, \ref{fig:d'-new}, \ref{fig:e-new}, and \ref{fig:e'-new} meet at $b_1=3b_2/2$, $c=(243/38)b_2$. They meet because at this $(c,b_1,b_2)$, the parameter $a$ (in Figures \ref{fig:d-new} and \ref{fig:d'-new}) becomes $0$, and $b_1/2-b_2/4=b_1/3=b_1-b_2$.
\end{remark}

\begin{remark}\label{rem:armstrong}
The mechanisms in Figures \ref{fig:e-new} and \ref{fig:e'-new} show an interesting result -- the existence of an optimal multi-dimensional mechanism without an exclusion region. An intuitive explanation for the absence of exclusion region in Figure \ref{fig:e'-new} is as follows. Consider the case where the seller offers each allocation with a small increase in price, say $\epsilon$. The seller then loses a revenue of $c$ from the valuations $\{z:u(z)\leq\epsilon\}$, and gains an extra revenue of $\epsilon$ from the valuations $\{z:u(z)\geq\epsilon\}$. The mechanism will have no exclusion region when the loss dominates the gain. Observe that the expected loss in revenue is \begin{eqnarray*}
c\cdot Pr(\{z:u(z)\leq\epsilon\}) & = & \frac{c}{b_1b_2}(\epsilon(b_1/2-b_2/4+\epsilon))+\frac{(b_1/2-b_2/4)}{b_1b_2}\frac{\epsilon^2}{2}\\
& \approx & \frac{c}{b_1b_2}(\epsilon(b_1/2-b_2/4)),
\end{eqnarray*} 
and that the expected gain in revenue is $$\epsilon\cdot Pr(\{u(z)\geq\epsilon\})=\epsilon\cdot(1-Pr(\{u(z)\leq\epsilon\}))\approx\epsilon.$$ The loss dominates the gain when $c \geq \frac{4b_1b_2}{2b_1-b_2}$. (The actual threshold will depend on more precise calculations than our order estimates.) Observe that both the loss and the gain are of the order of $\epsilon$, which explains the possibility of the loss dominating the gain at very high values of $c$. Figure \ref{fig:e-new} has no exclusion region due to a similar reason.
\end{remark}

\begin{remark}\label{rem:role}
The notations $\delta_1$, $\delta_2$, and $\delta^*$, used in various mechanism depictions, can be understood as follows. (i) The first transition from $q=(0,0)$ on the bottom boundary of $D$ occurs at $\delta=\delta_1$. (ii) Similarly, the first transition on the left boundary of $D$ occurs at $\delta=-\delta_2$. (iii) The final transition of $q$ on the top/right boundary of $D$ (in mechanisms depicted in Figures \ref{fig:a-new}--\ref{fig:c-new}) occurs at $\delta=\delta^*$. 
\end{remark}

For a summarizing phase diagram see Figure \ref{fig:phase-diagram}. To see a portrayal of all possible structures that an optimal mechanism can take, see Figures \ref{fig:a-ini}--\ref{fig:e'-ini}.

We now proceed to prove Theorem \ref{thm:consolidate}. We consider every structure separately, and go through the following steps in order to prove that the optimal mechanism has the specific structure.
\begin{enumerate}
\item[\bf{Step 1:}] We compute the virtual valuation function $V(\delta)$ for every $\delta\in[-b_2,b_1]$.
\item[\bf{Step 2:}] We find the relation between the variables of interest, ($\delta_1$, $\delta_2$, $\delta^*$, $h$, $a_1$, $a_2$), using the equality conditions in Theorem \ref{thm:Myerson}.
\item[\bf{Step 3:}] We prove that the solution that satisfies the relations obtained in Step 2 are indeed meaningful, by evaluating bounds for the variables of interest.
\item[\bf{Step 4:}] We verify that all the inequality conditions of Theorem \ref{thm:Myerson} hold. The bounds evaluated in Step 3 are crucially used in this process of verification.
\end{enumerate}

We now proceed to prove parts 1(a) and 2(a) of Theorem \ref{thm:consolidate}.
\begin{theorem}\label{thm:menu-1}
 Let $c\in[0,b_2]$. Then the optimal mechanism is as depicted in Figure \ref{fig:a-ini} (see also Figure \ref{fig:a-new}). The values of $\delta_1$ and $\delta_2$ are computed by solving the following equations simultaneously.
 \begin{align}
  &-3\delta_1\delta_2-c(\delta_1+\delta_2)+b_1b_2=0.\label{eqn:fig-a-first}\\
  &-\frac{3}{2}\delta_2^2+2b_2\delta_2-\frac{b_2^2}{2}+(c-2b_2+3\delta_2)\delta_1=0.\label{eqn:fig-a-second}
 \end{align}
\end{theorem}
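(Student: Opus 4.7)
The plan is to execute the four-step recipe stated just before the theorem. First, I would compute the virtual valuation function $V(\delta)$ for the candidate exclusion region $Z=[c,c+\delta_1]\times[c,c+\delta_2]$. Inserting the explicit expressions (\ref{eqn:bar-mu}) for $\mu$, $\mu_s$ and $\mu_p$ into $V(\delta)=\bar{\mu}(\{z\in D:z_1-z_2\geq\delta\}\setminus Z)$, and tracking where the $45^{\circ}$ line $z_1-z_2=\delta$ meets the boundaries of $D$ and of $Z$, gives a piecewise quadratic expression for $V$ with breakpoints contained in $\{-b_2,-\delta_2,\delta_1-\delta_2,\delta_1,b_1-b_2,b_1\}$.

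Next, I would read off the two equations from Theorem~\ref{thm:Myerson}. In the Figure~\ref{fig:a-ini} mechanism the allocation satisfies $q_1=0$ on $[-b_2,\delta_1-\delta_2]$ and $q_1=1$ on $[\delta_1-\delta_2,b_1]$, so condition~4(a) of Theorem~\ref{thm:Myerson} forces $V(\delta_1-\delta_2)=0$; substituting the Step~1 expression yields (\ref{eqn:fig-a-second}) (modulo a multiple of (\ref{eqn:fig-a-first})). The mirror-image analysis for $q_2$, obtained by swapping the two coordinates, yields $V_2(\delta_2-\delta_1)=0$, where $V_2(\delta):=\bar{\mu}(\{z_2-z_1\geq\delta\}\setminus Z)$. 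Since
\[
  V(\delta_1-\delta_2)+V_2(\delta_2-\delta_1)\;=\;\bar{\mu}(D\setminus Z)\;=\;-\bar{\mu}(Z),
\]
the conjunction of the two transition conditions is equivalent to $\bar{\mu}(Z)=0$, which is precisely (\ref{eqn:fig-a-first}) after substitution of (\ref{eqn:bar-mu}).

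Third, I would show that the system (\ref{eqn:fig-a-first})--(\ref{eqn:fig-a-second}) admits a solution with $(\delta_1,\delta_2)\in(0,b_1)\times(0,b_2)$ for every $c\in[0,b_2]$. At $c=0$, (\ref{eqn:fig-a-first}) reduces to $\delta_1\delta_2=b_1b_2/3$, and substitution into (\ref{eqn:fig-a-second}) produces a cubic in $\delta_2$ whose unique root in $(0,b_2)$ can be identified explicitly. For $c\in(0,b_2]$ I would continue this root via the implicit function theorem, verifying that the Jacobian of the system is nondegenerate along the curve and that at $c=b_2$ the solution meets the boundary structure of Figure~\ref{fig:b-ini} (with $\delta_2=b_2/3$), which pins down where Theorem~\ref{thm:menu-1} transitions to the next theorem in Theorem~\ref{thm:consolidate}.

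Finally, I would verify the remaining inequality conditions of Theorem~\ref{thm:Myerson}. They reduce to the statement that $V_1(\delta):=-\int_{-b_2}^{\delta}V(\tilde{\delta})\,d\tilde{\delta}$ attains its global maximum on $[-b_2,b_1]$ at $\delta_1-\delta_2$; equivalently, $V\leq 0$ on $[-b_2,\delta_1-\delta_2]$ and $V\geq 0$ on $[\delta_1-\delta_2,b_1]$. With the three zeros $V(-b_2)=V(\delta_1-\delta_2)=V(b_1)=0$ already pinned down (the first and last trivially, the middle one by Step~2), the sign check on each quadratic piece reduces to bounding a discriminant or checking the one-sided slope at one of the known zeros, using the bounds from Step~3. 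The main obstacle I expect is this Step~4 sign verification: the pieces of $V$ depend nonlinearly on $(\delta_1,\delta_2,c,b_1,b_2)$ and the argument must be uniform in $c\in[0,b_2]$ and across both regimes $b_1\in[b_2,3b_2/2]$ and $b_1\geq 3b_2/2$ (which differ in whether $\delta_1\geq\delta_2$ and in which boundary piece of $D$ the diagonal exits).
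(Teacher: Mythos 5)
Your four-step outline mirrors the paper's recipe, and your Step~2 derivation of $\bar{\mu}(Z)=0$ is arguably \emph{more} self-contained than the paper's: the paper simply asserts the condition, whereas you obtain it from the identity $V(\delta^*)+V_2(-\delta^*)=\bar{\mu}(D\setminus Z)=-\bar{\mu}(Z)$ together with the transition conditions of Theorem~\ref{thm:Myerson} applied once for $q_1$ and once for the item-swapped problem. That is a clean way of seeing why the single first-order condition on $q_1$ is supplemented by a second equation.

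The genuine gap is the interaction between your Steps~3 and~4. You acknowledge that Step~4 reduces to showing $V\leq 0$ on $[-b_2,\delta^*]$ and $V\geq 0$ on $[\delta^*,b_1]$, and you expect to do this by ``bounding a discriminant or checking a one-sided slope, using the bounds from Step~3.'' But your Step~3 --- solve explicitly at $c=0$ and continue by the implicit function theorem --- produces only existence of the solution curve $(\delta_1(c),\delta_2(c))$; it produces no quantitative envelope for that curve. The paper's Step~4 depends crucially on the specific bounds $\delta_2\leq\frac{2b_2-c}{3}$ and $\delta_1\leq\frac{2b_1-c}{3}$: these are exactly what make the piecewise-linear $V'$ first negative, then nonnegative, on $[-b_2,\delta^*]$ (the slope of the second piece is $-(c-2b_2+3\delta_2)\geq 0$ precisely because $\delta_2\leq\frac{2b_2-c}{3}$), and symmetrically on $[\delta^*,b_1]$. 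This single-crossing of $V'$, combined with the three zeros $V(-b_2)=V(\delta^*)=V(b_1)=0$, gives the sign of $V$ directly, with no discriminant estimate needed. The paper's Step~3 is therefore not an alternative existence argument that could be swapped for IFT; it is an \emph{interval-confinement} argument --- continuity and monotonicity of $\delta_1|_{\delta_2}$ together with sign checks of the left-hand side of \eqref{eqn:fig-a-second} at the four corner/edge points of the box $[\frac{b_1}{2}-\frac{b_2}{6},\frac{2b_1-c}{3}]\times[\frac{b_2}{3},\frac{2b_2-c}{3}]$ --- chosen specifically to pin the solution inside the box whose faces are the thresholds Step~4 requires. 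To repair your plan you would have to augment the IFT continuation with a priori estimates showing the curve never exits that box (equivalently, that $\delta_2(c)\leq\frac{2b_2-c}{3}$ and $\delta_1(c)\leq\frac{2b_1-c}{3}$ for all $c\in[0,b_2]$), at which point you would essentially be reconstructing the paper's Step~3. A smaller issue: the ordering of breakpoints used when writing $V$ piecewise assumes $\delta^*\leq b_1-b_2\leq\delta_1$, which your Step~1 does not verify; the paper's Step~3 bounds are also what make this ordering checkable.
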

\begin{proof}
{\bf Step 1:} We compute the virtual valuation function for the mechanism depicted in Figure \ref{fig:a-new}. Since $\bar{\mu}(D)=0$, we compute $V$ using the formula
\begin{equation}\label{eqn:V-alternate}
  V(\delta)=-\bar{\mu}(\{z:z_1-z_2<\delta\}\cup Z).
\end{equation}
\begin{equation}\label{eqn:V-fig-a}
V(\delta)=\frac{1}{b_1b_2}\begin{cases}\bar{\mu}(Z)+\frac{3}{2}\delta^2+2b_2\delta+\frac{b_2^2}{2}&\delta\in[-b_2,-\delta_2]\\V(-\delta_2)-(c-2b_2+3\delta_2)(\delta+\delta_2)&\delta\in[-\delta_2,\delta^*]\\V(\delta^*)-(c-2b_2)(\delta-\delta^*)+\frac{3}{2}((\delta_1-\delta)^2-\delta_2^2)&\delta\in[\delta^*,b']\\V(b')-(c-2b_1+3\delta_1)(\delta-b_1+b_2)&\delta\in[b',\delta_1]\\-\frac{3}{2}\delta^2+2b_1\delta-\frac{b_1^2}{2}&\delta\in[\delta_1,b_1]\end{cases}
\end{equation}
where $b_1-b_2$ is denoted as $b'$. For ease of notation, we drop the factor $\frac{1}{b_1b_2}$ in the rest of the paper. 

{\bf Step 2:} The mechanism has three unknowns: $\delta^*$, $\delta_1$, and $\delta_2$. Observe that the line between the points $(c+b_2+\delta^*,c+b_2)$ and $(c+\delta^*,c)$ passes through $(c+\delta_1,c+\delta_2)$. So we have $\delta^*=\delta_1-\delta_2$.

We now proceed to compute $\delta_1$ and $\delta_2$. We do so by equating $\bar{\mu}(Z)=0$ and $V(\delta^*)=0$. The latter follows from Theorem \ref{thm:Myerson} because $q_1=0$ for $\delta\in[-b_2,\delta^*]$. We thus obtain equations (\ref{eqn:fig-a-first}) and (\ref{eqn:fig-a-second}).

{\bf Step 3:} We now show that there exists a meaningful solution $(\delta_1,\delta_2)$ that simultaneously solves (\ref{eqn:fig-a-first}) and (\ref{eqn:fig-a-second}). Specifically, we show that there exists a $(\delta_1,\delta_2)\in[\frac{b_1}{2}-\frac{b_2}{6},\frac{2b_1-c}{3}]\times[\frac{b_2}{3},\frac{2b_2-c}{3}]$ as a simultaneous solution to (\ref{eqn:fig-a-first}) and (\ref{eqn:fig-a-second}). To show this, we do the following.
\begin{itemize}
\item We first define $\delta_1|_{\delta_2=x}$ to be the value of $\delta_1$ that satisfies (\ref{eqn:fig-a-first}) when $\delta_2=x$ and $\delta_2|_{\delta_1=x}$ to be the value of $\delta_2$ that satisfies (\ref{eqn:fig-a-first}) when $\delta_1=x$. We then show that there exists a $(\delta_1,\delta_2)\in[\frac{b_1}{2}-\frac{b_2}{6},\frac{2b_1-c}{3}]\times[\frac{b_2}{3},\frac{2b_2-c}{3}]$ satisfying (\ref{eqn:fig-a-first}). We do this by showing that (a) $\delta_1|_{\delta_2=x}$ is continuous in $x$, (b) $\delta_1|_{\delta_2=\frac{b_2}{3}}\geq\frac{b_1}{2}-\frac{b_2}{6}$, and (c) $\delta_1|_{\delta_2=\frac{2b_2-c}{3}}\leq\frac{2b_1-c}{3}$. We further show that in addition to continuity, $\delta_1|_{\delta_2=x}$ is also monotone; it decreases as $x$ increases.
\item It now suffices to show that the entry and the exit points of the curve $(\delta_1|_{\delta_2=x},x)$ in the rectangle $[\frac{b_1}{2}-\frac{b_2}{6},\frac{2b_1-c}{3}]\times[\frac{b_2}{3},\frac{2b_2-c}{3}]$ changes sign when substituted on the left-hand side of (\ref{eqn:fig-a-second}). The possible entry points are $(\frac{b_1}{2}-\frac{b_2}{6},\delta_2|_{\delta_1=\frac{b_1}{2}-\frac{b_2}{6}})$ and $(\delta_1|_{\delta_2=\frac{2b_2-c}{3}},\frac{2b_2-c}{3})$; we substitute the entry points on left-hand side of (\ref{eqn:fig-a-second}) and show that the expression is nonnegative in both cases. Similarly, the possible exit points are $(\delta_1|_{\delta_2=\frac{b_2}{3}},\frac{b_2}{3})$ and $(\frac{2b_1-c}{3},\delta_2|_{\delta_1=\frac{2b_1-c}{3}})$; we substitute the exit points on left-hand side of (\ref{eqn:fig-a-second}) and show that the expression is nonpositive in both cases.
\end{itemize}
We now fill in the details. We have $\delta_1|_{\delta_2}=\frac{b_1b_2-c\delta_2}{3\delta_2+c}$ and  $\delta_2|_{\delta_1}=\frac{b_1b_2-c\delta_1}{3\delta_1+c}$ from (\ref{eqn:fig-a-first}). It is clear that $\delta_1|_{\delta_2=x}$ is continuous, and also monotonically decreases in $x$. We now verify that $\delta_1|_{\delta_2=\frac{b_2}{3}}\geq\frac{b_1}{2}-\frac{b_2}{6}$; indeed,
$$
  \frac{b_1b_2-cb_2/3}{c+b_2}\geq\frac{b_1b_2-cb_2/3}{2b_2}\geq\frac{b_1b_2-b_2^2/3}{2b_2}=\frac{b_1}{2}-\frac{b_2}{6},
$$
where both the inequalities hold because $c\leq b_2$. We now verify that $\delta_1|_{\delta_2=\frac{2b_2-c}{3}}\leq\frac{2b_1-c}{3}$:
$$
  \frac{b_1b_2-c(2b_2-c)/3}{2b_2}\leq\frac{4b_1b_2/3-2b_2c/3}{2b_2}=\frac{2b_1-c}{3},
$$
where the inequality $c^2\leq b_1b_2$ holds because of $c\leq b_2\leq b_1$.

We now consider the points $(\delta_1|_{\delta_2=\frac{2b_2-c}{3}},\frac{2b_2-c}{3})$ and $(\delta_1|_{\delta_2=\frac{b_2}{3}},\frac{b_2}{3})$. Substituting $\delta_1=\frac{b_1b_2-c\delta_2}{c+3\delta_2}$ in (\ref{eqn:fig-a-second}), we obtain
\begin{equation}\label{eqn:fig-a-delta_2}
  -\frac{9}{2}\delta_2^3+\delta_2^2(6b_2-\frac{9}{2}c)+\delta_2(4b_2 c-c^2-\frac{3}{2}b_2^2+3b_1b_2)-\frac{1}{2}b_2^2c+b_1b_2c-2b_1b_2^2=0.
\end{equation}
When $\delta_2=\frac{2b_2-c}{3}$, the left-hand side of (\ref{eqn:fig-a-delta_2}) equals $\frac{1}{3}b_2(b_2^2-c^2)\geq 0$, and when $\delta_2=\frac{b_2}{3}$, it equals $-b_2(b_1-c/3)(b_2-c)\leq 0$.

We now consider the points $(\frac{2b_1-c}{3},\delta_2|_{\delta_1=\frac{2b_1-c}{3}})$ and $(\frac{b_1}{2}-\frac{b_2}{6},\delta_2|_{\delta_1=\frac{b_1}{2}-\frac{b_2}{6}})$. Substituting $\delta_2=\frac{b_1b_2-c\delta_1}{3\delta_1+c}$ in (\ref{eqn:fig-a-second}), we obtain
 \begin{multline}\label{eqn:fig-a-delta_1}
   -\frac{3}{2}b_1^2b_2^2+2b_1b_2^2c-\frac{1}{2}b_2^2c^2+(6b_1b_2^2+6b_1b_2c-3b_2^2c-4b_2c^2+c^3)\delta_1\\+(9b_1b_2-\frac{9}{2}
 b_2^2-18b_2c+\frac{3}{2}c^2)\delta_1^2-18b_2\delta_1^3=0.
 \end{multline}
When $\delta_1=\frac{2b_1-c}{3}$, the left-hand side of (\ref{eqn:fig-a-delta_1}) equals $\frac{1}{6}(-8b_1^3b_2+3b_1^2b_2^2+4b_1^2c^2+2b_1b_2c^2-c^4)$. We claim that this expression is negative for $b_1\geq b_2$, $c\in[0,b_2]$. Observe that its derivative with respect to $c$ satisfies $4c(b_1(2b_2+b_2)-c^2)\geq 0$ for all $c\in[0,b_2]$, and thus the expression attains its maximum when $c=b_2$. At $c=b_2$, the expression equals $b_2(b_1-b_2)(-8b_1^2-b_1b_2+b_2^2)$ which clearly is nonpositive when $b_1\geq b_2$. We have proved our claim.

Now when $\delta_1=\frac{b_1}{2}-\frac{b_2}{6}$, the left-hand side of (\ref{eqn:fig-a-delta_1}) equals 
 \begin{eqnarray*}
 \frac{1}{24}(b_2-c)(27b_1^2b_2-18b_1b_2^2-b_2^3+(42b_1b_2-9b_1^2-b_2^2)c+4(b_2-3b_1)c^2) \\
 =\frac{1}{24}(b_2-c)(A_0+A_1c+A_2c^2).
 \end{eqnarray*}
Observe that we have a quadratic expression in $c$, with $A_2$ being negative. So to prove that this quadratic expression is nonnegative for $c\in[0,b_2]$, it suffices to prove that it is nonnegative at $c=0$ and $c=b_2$. At $c=0$, the expression equals $27b_1^2b_2-18b_1b_2^2-b_2^3\geq 0$ for $b_1\geq b_2$, and at $c=b_2$, it equals $18b_1^2b_2+12b_1b_2^2+2b_2^3\geq 0$.

We have thus shown that there exists a solution $(\delta_1\,\delta_2)\in[\frac{b_1}{2}-\frac{b_2}{6},\frac{2b_1-c}{3}]\times[\frac{b_2}{3},\frac{2b_2-c}{3}]$ that simultaneously solves (\ref{eqn:fig-a-first}) and (\ref{eqn:fig-a-second}), for every $c\in[0,b_2]$ and $b_1\geq b_2$.

{\bf Step 4:} We now proceed to prove parts (c) and (d) in Theorem \ref{thm:Myerson}(2) and \ref{thm:Myerson}(4). Observe that the proof is complete if we prove that $V(\delta)\leq 0$ when $\delta\in[-b_2,\delta^*]$, and $V(\delta)\geq 0$ when $\delta\in[\delta^*,b_1]$. We now compute $V'(\delta)$ for almost every $\delta\in[-b_2,b_1]$.

\begin{equation}\label{eqn:menu-1-V'}
  V'(\delta)=\begin{cases}3\delta+2b_2&\delta\in(-b_2,-\delta_2)\\-(c-2b_2+3\delta_2)&\delta\in(-\delta_2,\delta^*]\\-(c-2b_2)-3(\delta_1-\delta)&\delta\in[\delta^*,b_1-b_2)\\-(c-2b_1+3\delta_1)&\delta\in(b_1-b_2,\delta_1)\\-3\delta+2b_1&\delta\in(\delta_1,b_1).\end{cases}
\end{equation}

Observe that $V'(\delta)$ is negative when $\delta\in[-b_2,-\frac{2b_2}{3}]$, and positive when $\delta\in[-\frac{2b_2}{3},\delta^*]$ (follows because $\delta_2\leq\frac{2b_2-c}{3})$. We also have $V(-b_2)=V(\delta^*)=0$. So $V(\delta)=V(-b_2)+\int_{-b_2}^{\delta}V'(\tilde{\delta})\,d\tilde{\delta}\leq 0$ for all $\delta\in[-b_2,\delta^*]$, and hence $\int_{-b_2}^{\delta^*}V(\delta)\,d\delta\leq 0$, and $\int_{-b_2}^xV(\delta)\,d\delta\geq\int_{-b_2}^{\delta^*}V(\delta)\,d\delta$ for all $x\in[-b_2,\delta^*]$.

We now claim that $V'(\delta)$ is positive when $\delta\in[\delta^*,\frac{2b_1}{3}]$, and negative when $\delta\in[\frac{2b_1}{3},b_1]$. Observe that $V'(\delta)$ is continuous at $\delta=\delta^*$, and that it increases in the interval $[\delta^*,b_1-b_2]$. So $V'(\delta)\geq 0$ when $\delta\in[\delta^*,b_1-b_2]$. Also, $V'(\delta)\geq 0$ when $\delta\in[b_1-b_2,\delta_1]$ because $\delta_1\leq\frac{2b_1-c}{3}$. That $V'(\delta)$ is positive when $\delta\in[\delta_1,\frac{2b_1}{3}]$, and negative when $\delta\in[\frac{2b_1}{3},b_1]$ is obvious. We have proved our claim.

Since we also have $V(\delta^*)=V(b_1)=0$, it follows that $V(\delta)=V(\delta^*)+\int_{\delta^*}^{\delta}V'(\tilde{\delta})\,d\tilde{\delta}\geq 0$ for all $\delta\in[\delta^*,b_1]$. So we have $\int_{\delta^*}^{b_1}V(\delta)\,d\delta\geq 0$ and $\int_x^{b_1}V(\delta)\,d\delta\leq\int_{\delta^*}^{b_1}V(\delta)\,d\delta$ for all $x\in[\delta^*,b_1]$.\qed
\end{proof}

With the above theorem, we have completely solved the $c\leq b_2$ case. We now analyze the case at which the transition occurs. At $c=b_2$, when we solve (\ref{eqn:fig-a-first}) and (\ref{eqn:fig-a-second}) simultaneously, we obtain $\delta_2=\frac{b_2}{3}=\frac{2b_2-c}{3}$ and $\delta_1=\frac{b_1}{2}-\frac{b_2}{6}$. When $c>b_2$, the left-hand side of (\ref{eqn:fig-a-delta_2}) still continues to change sign at $\delta_2=\frac{b_2}{3}$ and $\delta_2=\frac{2b_2-c}{3}$, but since $\frac{b_2}{3}>\frac{2b_2-c}{3}$, the solution $\delta_2$ now belongs to the interval $[\frac{2b_2-c}{3},\frac{b_2}{3}]$. We thus have (i) $V(-\frac{b_2}{3})=0=V(\delta^*)$, and (ii) $V'(\delta)\geq 0$ when $\delta\in[-\frac{2b_2}{3},-\delta_2]$ and $V'(\delta)\leq 0$ when $\delta\in[-\delta_2,\delta^*]$. These both imply that $V(\delta)\geq 0$ when $\delta\in[-\frac{b_2}{3},\delta^*]$. So the minimum of $\int_{-b_2}^{x}V(\delta)\,d\delta$ can never occur at $x=\delta^*$, causing the condition in part (d) of Theorem \ref{thm:Myerson}(2) to fail.

At $c=b_2$, a transition occurs from the structure depicted in Figure \ref{fig:a-ini} to that in Figure \ref{fig:b-ini}. We now proceed to prove the optimality of the structure in \ref{fig:b-ini}, i.e., parts 1(b) and 2(b) in Theorem \ref{thm:consolidate}.
\begin{theorem}\label{thm:menu-2}
Let $c\in[b_2,\beta]$ if $b_1\geq 3b_2/2$ and let $c\in[b_2,\alpha_1]$ if $b_1\in[b_2,3b_2/2]$ with $\alpha_1$ and $\beta$ as defined in Theorem \ref{thm:consolidate}. Then, the optimal mechanism is as depicted in Figure \ref{fig:b-ini} (see also Figure \ref{fig:b-new}). The values of $h$ and $\delta^*$ are obtained by solving (\ref{eqn:fig-b-first}) and (\ref{eqn:fig-b-second}) simultaneously, and the values of $(\delta_1,\delta_2)$ are given by
$$
  (\delta_1,\delta_2)=\left(h+\delta^*,\frac{b_1b_2-(3h/2+c)(h+\delta^*)}{3/2(h+\delta^*)+c}\right).
$$
The probability of allocation $a_2$ is given by $a_2=\frac{h+\delta^*}{\delta_2+\delta^*}$.
\end{theorem}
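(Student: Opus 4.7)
The argument follows the four-step blueprint established in the proof of Theorem \ref{thm:menu-1}.

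\textbf{Step 1 (computing $V$).} Using the representation $V(\delta)=-\bar{\mu}(\{z:z_1-z_2<\delta\}\cup Z)$ together with (\ref{eqn:bar-mu}), I compute $V(\delta)$ piecewise on the sub-intervals $[-b_2,-b_2/3]$, $[-b_2/3,-\delta_2]$, $[-\delta_2,0]$, $[0,\delta^*]$, $[\delta^*,\delta_1]$, $[\delta_1,b_1]$. These are the intervals on which the $45^{\circ}$-line $z_1-z_2=\delta$ has an unchanged intersection pattern with the quadrilateral $Z$ of vertices $(c,c)$, $(c+\delta_1,c)$, $(c+\delta_1,c+h)$, $(c,c+\delta_2)$, and on each of them $V$ is a polynomial in $\delta$ of degree at most two.

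\textbf{Step 2 (deriving the parameter equations).} By Theorem \ref{thm:Myerson}, the proposed three-piece allocation ($q_1=0$ on $[-b_2,-b_2/3]$, $q_1=1-a_2$ on $[-b_2/3,\delta^*]$, $q_1=1$ on $[\delta^*,b_1]$) requires the three equalities (A) $V(-b_2/3)=0$, (B) $V(\delta^*)=0$, and (C) $\int_{-b_2/3}^{\delta^*}V\,d\delta=0$. A direct computation shows $\bar{\mu}(\{z_1-z_2<-b_2/3\}\cap D)=0$, and this triangle is disjoint from $Z$ because $\delta_2\le b_2/3$; thus (A) is equivalent to $\bar{\mu}(Z)=0$. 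Combined with the geometric identity $\delta_1=h+\delta^*$ (enforcing that the upper-right vertex of $Z$ lies on the same $45^{\circ}$-line as the top endpoint $(c+b_2+\delta^*,c+b_2)$ of the $(1,0)$-region), this yields the closed form for $\delta_2$ in the theorem statement. The formula $a_2=\delta_1/(\delta_2+\delta^*)$ then follows by matching the slope of the oblique part of $\partial Z$ (along which $u\equiv 0$) to the indifference slope $-(1-a_2)/a_2$ between the null allocation and the lottery. Condition (B) unwinds to (\ref{eqn:fig-b-first}), and substituting the piecewise expressions of Step 1 into (C) gives (\ref{eqn:fig-b-second}).

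\textbf{Step 3 (existence).} I will treat (\ref{eqn:fig-b-first}) as implicitly defining a continuous, monotone curve $h\mapsto\delta^*(h)$ on a suitable range (from (\ref{eqn:fig-b-first}), $\delta^*(h)=(b_1b_2-b_2^2/2-3h^2/2-ch)/(2b_2)$, which is decreasing and concave in $h$), and apply the intermediate value theorem to the left-hand side of (\ref{eqn:fig-b-second}) along this curve. Two natural reference values of $h$ anchor the argument: at $h=b_2/3$ the formulas collapse onto the boundary case of Theorem \ref{thm:menu-1} at $c=b_2$ (one obtains $\delta_2=b_2/3$ and $a_2\uparrow 1$, so the lottery menu item merges with $(0,1)$ and $Z$ becomes the rectangle of Figure \ref{fig:a-ini}); at the other reference value of $h$, one of $\{\delta^*,\delta_2,a_2\}$ hits the boundary of its feasible range, marking the transition to the next structure in Figure \ref{fig:phase-diagram}. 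Verifying that (\ref{eqn:fig-b-second}) has opposite signs at these two reference values, for every $c$ in the claimed interval, produces a solution with $h\in(0,b_2/3)$, $\delta^*>0$, $\delta_2\in(0,b_2/3)$, and $a_2\in(0,1)$. The thresholds $\alpha_1$ and $\beta$ are characterized precisely as the values of $c$ at which the auxiliary equations (\ref{eqn:fig-b-third}) and (\ref{eqn:menu-2-bound}) are satisfied together with (\ref{eqn:fig-b-first})--(\ref{eqn:fig-b-second}), signalling that one of the feasibility bounds has become tight and that the current structure is at the verge of being superseded.

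\textbf{Step 4 (verifying the inequalities).} The remaining content of Theorem \ref{thm:Myerson} is $V\le 0$ on $[-b_2,-b_2/3]$, $\int_{-b_2/3}^{x}V\,d\delta\ge 0$ on $[-b_2/3,\delta^*]$, and $\int_{x}^{b_1}V\,d\delta\le\int_{\delta^*}^{b_1}V\,d\delta$ on $[\delta^*,b_1]$. Once $\bar{\mu}(Z)=0$ is in force, one obtains the explicit factorization $V(\delta)=(b_2+\delta)(b_2+3\delta)/(2b_1b_2)$ on $[-b_2,-\delta_2]$, which is a non-positive parabola vanishing at $-b_2$ and $-b_2/3$; a sign-of-$V'$ argument analogous to Step 4 of Theorem \ref{thm:menu-1} then extends the non-positivity to $[-\delta_2,-b_2/3]$. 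The tail inequality on $[\delta^*,b_1]$ reduces to a piecewise-quadratic sign check of the same flavour. The principal obstacle is the middle interval, where $V$ has kinks at $\delta=-\delta_2$ and $\delta=0$ and the running-integral inequality depends delicately on the relative sizes of $(c,h,\delta^*,\delta_2)$: I will use the bounds of Step 3 to control each polynomial piece, arguing that $V$ has a single sign change on $(-b_2/3,\delta^*)$ with positive and negative lobes that balance (as enforced by (\ref{eqn:fig-b-second})), yielding a non-negative running integral. The asterisk attached to this case in Theorem \ref{thm:consolidate} acknowledges that some of these algebraic inequalities may require Mathematica-assisted verification.
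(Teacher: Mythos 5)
Your proposal follows the same four-step scaffold as the paper, and the main moves are all correct: the three equalities from Theorem~\ref{thm:Myerson}, the geometric identity $\delta_1=h+\delta^*$ from the $45^\circ$ line through $(c+\delta_1,c+h)$ and $(c+b_2+\delta^*,c+b_2)$, the slope-matching that gives $a_2$, and the intermediate-value argument in Step~3. Two of your reformulations are actually cleaner than the paper's: the observation that $V(-b_2/3)=-\bar\mu(Z)$ (because $\bar\mu(\{z_1-z_2<-b_2/3\}\cap D)=0$ identically, so condition (A) \emph{is} $\bar\mu(Z)=0$ with no separate stipulation), and the factorization $V(\delta)=(b_2+\delta)(b_2+3\delta)/(2b_1b_2)$ on $[-b_2,-\delta_2]$, which makes the non-positivity on $[-b_2,-b_2/3]$ manifest in a way the paper never states.

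There are, however, a few points where the proposal as written would not close:
\begin{enumerate}
\item \textbf{Wrong break points in Step~1.} There is no change of polynomial piece at $\delta=0$: the quadratic formula for $V$ holds on the whole of $[-\delta_2,\delta^*]$. Conversely, you \emph{do} need a break at $\delta=b_1-b_2$, where the cross-section $\{z_1-z_2=\delta\}$ stops hitting the top edge of $D$ and starts hitting the right edge; on $[\delta^*,b_1-b_2]$ the formula is quadratic and on $[b_1-b_2,\delta_1]$ it is linear. Your interval $[\delta^*,\delta_1]$ therefore straddles two pieces, and the subsequent computations of (C) would be off.
\item \textbf{Bounds in Step~3 are too loose to drive Step~4.} You claim $\delta_2\in(0,b_2/3)$ and $h\in(0,b_2/3)$, but what the verification of Theorem~\ref{thm:Myerson}(3) actually needs is the sharper two-sided bound $h\le\frac{2b_2-c}{3}\le\delta_2\le\frac{b_2}{3}$. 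It is precisely $\delta_2\ge\frac{2b_2-c}{3}$ that forces $V'(-\delta_2^+)\le 0$, and $h\le\frac{2b_2-c}{3}$ that forces $V'(\delta^{*-})\ge 0$, giving the sign pattern $(+,-,+)$ of $V'$ on $(-b_2/3,\delta^*)$ on which your ``single sign change / balancing lobes'' argument depends. With only $\delta_2,h\in(0,b_2/3)$ the sign pattern is not determined. Also, the paper's anchor value is $h=\frac{2b_2-c}{3}$, not $h=b_2/3$; these coincide only at $c=b_2$.
\item \textbf{Step~4 buries the hard case.} The tail inequality $\int_x^{b_1}V\le\int_{\delta^*}^{b_1}V$ on $[\delta^*,b_1]$ is \emph{not} merely a ``sign check of the same flavour.'' When $\delta_1>\frac{2b_1-c}{3}$, $V'$ goes negative on a subinterval of $[b_1-b_2,\delta_1]$, and one must split on $b_1\gtrless 3b_2/2$: for $b_1\ge 3b_2/2$ one shows $V\ge 0$ pointwise (using $\delta_1\ge b_1-b_2\ge b_1/3$); for $b_1\in[b_2,3b_2/2]$ one must instead verify $\int_{\delta^*}^{b_1/3}V\,d\delta\ge 0$, which is exactly the left-hand side of (\ref{eqn:fig-b-third}) and is where the threshold $\alpha_1$ is defined. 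Your Step~3 characterization of $\alpha_1$ as a generic ``feasibility bound becoming tight'' misattributes the role of (\ref{eqn:fig-b-third}); it is a running-integral condition that can only be identified during the Step~4 verification. Without this case analysis the argument does not establish condition (4) of Theorem~\ref{thm:Myerson}.
\end{enumerate}
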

\begin{proof}
{\bf Step 1:} We compute the virtual valuation function for the mechanism depicted in Figure \ref{fig:b-new}.
$$
  V(\delta)=\begin{cases}V(-\delta_2)-(c-2b_2+3\delta_2)(\delta+\delta_2)+\frac{3}{2}\frac{\delta_2-h}{\delta_2+\delta^*}(\delta+\delta_2)^2&\delta\in[-\delta_2,\delta^*]\\V(\delta^*)-(c-2b_2)(\delta-\delta^*)+\frac{3}{2}((\delta_1-\delta)^2-h^2)&\delta\in[\delta^*,b']\end{cases}
$$
where $b_1-b_2$ is denoted by $b'$. The expression for $V(\delta)$ when $\delta\in[-b_2,-\delta_2]\cup[b_1-b_2,b_1]$ remains the same as in (\ref{eqn:V-fig-a}).

{\bf Step 2:} The mechanism has five parameters: $h$, $\delta^*$, $\delta_1$, $\delta_2$, and $a_2$. Observe that the $45^\circ$ line segment joining the points $(c+b_2+\delta^*,c+b_2)$ and $(c+\delta^*,c)$ passes through $(c+\delta_1,c+h)$. So we have $\delta_1=h+\delta^*$. Since $q=\nabla u$, a conservative field, we must have the slope of the line separating $(0,0)$ and $(1-a_2,a_2)$ allocation regions satisfying $-\frac{1-a_2}{a_2}=\frac{h-\delta_2}{h+\delta^*}$. This yields $a_2=\frac{h+\delta^*}{\delta_2+\delta^*}$.

We now proceed to compute $h$, $\delta_2$ and $\delta^*$. We do so by equating $\bar{\mu}(Z)=0$, $V(\delta^*)=0$, and $\int_{-\frac{b_2}{3}}^{\delta^*}V(\delta)\,d\delta=0$. The latter two conditions follow from Theorem \ref{thm:Myerson} 3(b) and 3(c) because $q_1(\delta)=1-a_2\in(0,1)$ for $\delta\in[-\frac{b_2}{3},\delta^*]$. We then have the following implications.
\begin{equation}
 \bar{\mu}(Z)=0\Rightarrow-\frac{3}{2}(h+\delta^*)(h+\delta_2)-c(\delta_2+h+\delta^*)+b_1b_2=0.\label{eqn:fig-b-delta2}
\end{equation}
From (\ref{eqn:V-alternate}), we see that $V(\delta^*)$ is the negative of $\bar{\mu}$ measure of the nonconvex pentagon bound by $(c,c)$, $(c,c+b_2)$, $(c+b_2+\delta^*,c+b_2)$, $(c+\delta_1,c+h)$, and $(c+\delta_1,c)$. Thus
\begin{align}
 V(\delta^*)=0&\Rightarrow-\frac{3}{2}h^2-ch-\frac{3}{2}b_2(b_2+2\delta^*)+b_2(b_2+\delta^*)+b_1b_2=0\label{eqn:fig-b-h-initial}\\&\Rightarrow h=\frac{-c+\sqrt{c^2+3b_2(2b_1-b_2-4\delta^*)}}{3}.\label{eqn:fig-b-h}
\end{align}
Next,
\begin{align}
 &\int_{-\frac{b_2}{3}}^{\delta^*}V(\delta)\,d\delta=0\Rightarrow\int_{-\frac{b_2}{3}}^{-\delta_2}V(\delta)\,d\delta+\int_{-\delta_2}^{\delta^*}V(\delta)\,d\delta=0\nonumber\\&\Rightarrow b_2(\delta_2^2-b_2^2/9)+\frac{1}{2}(b_2^3/27-\delta_2^3)+b_2^2/2(b_2/3-\delta_2)\nonumber\\&\hspace*{.1in}-(2b_2\delta_2-3\delta_2^2/2-b_2^2/2)(\delta^*+\delta_2)-(c-2b_2+2\delta_2+h)(\delta^*+\delta_2)^2/2=0\nonumber\\&\Rightarrow\frac{1}{54}(4b_2+3\delta^*)(b_2+3\delta^*)^2-\frac{(c+h+\delta^*)}{2}(\delta^*+\delta_2)^2=0.\label{eqn:fig-b-last}
\end{align}

The values of $h$, $\delta^*$, and $\delta_2$ can be obtained by solving (\ref{eqn:fig-b-delta2}), (\ref{eqn:fig-b-h}), and (\ref{eqn:fig-b-last}) simultaneously. We now proceed to prove that $(h,\delta^*)$ can be obtained by solving (\ref{eqn:fig-b-first}) and (\ref{eqn:fig-b-second}) simultaneously. From (\ref{eqn:fig-b-h-initial}), we get
\begin{equation}\label{eqn:fig-b-first-replica}
  3h^2/2+ch+2b_2\delta^*-b_1b_2+b_2^2/2=0
\end{equation}
which is (\ref{eqn:fig-b-first}). We next find an expression for $\delta_2+\delta^*$. Rearranging (\ref{eqn:fig-b-delta2}), we get
\begin{equation}\label{eqn:fig-b-delta2-clear}
  \delta_2=\frac{b_1b_2-(3h/2+c)(h+\delta^*)}{3/2(h+\delta^*)+c}=\frac{2b_2\delta^*+b_2^2/2-\delta^*(3h/2+c)}{3/2(h+\delta^*)+c}
\end{equation}
where we have used (\ref{eqn:fig-b-first-replica}). Thus
$$
  \delta_2+\delta^*=\frac{(b_2+3\delta^*)(b_2+\delta^*)/2}{3/2(h+\delta^*)+c}.
$$
Plugging this into (\ref{eqn:fig-b-last}), we eliminate $\delta_2$, and obtain
\begin{equation}\label{eqn:fig-b-second-replica}
  27(c+h+\delta^*)(b_2+\delta^*)^2-4(4b_2+3\delta^*)(3(h+\delta^*)/2+c)^2=0
\end{equation}
which is (\ref{eqn:fig-b-second}). It is thus clear that $(h,\delta^*)$ can be obtained by simultaneously solving (\ref{eqn:fig-b-first}) and (\ref{eqn:fig-b-second}).

{\bf Step 3:} We now prove that a meaningful solution that satisfies (\ref{eqn:fig-b-first-replica}) and (\ref{eqn:fig-b-second-replica}) exists, by evaluating the bounds of the variables $h$, $\delta^*$, and $\delta_2$ . In Step 3a, we prove the bounds on $(h,\delta^*)$ when $b_1\geq 3b_2/2$. In Step 3b, we prove the bounds on $(h,\delta^*)$ when $b_1\in[b_2, 3b_2/2]$. In Step 3c, we prove the bounds on $\delta_2$ for all $b_1$.

{\bf Step 3a:} Consider the case when $b_1\geq 3b_2/2$. We consider a pair of $(\delta^*,h)$ values that satisfy (\ref{eqn:fig-b-first-replica}) as the end points, and prove that the expression on the left-hand side of (\ref{eqn:fig-b-second-replica}) changes sign at those end points. Given that $h$ is a decreasing function of $\delta^*$ (see (\ref{eqn:fig-b-h})), this suffices to show the bounds of $(\delta^*,h)$.

We claim that when $c\in[0,\beta]$, there exists a $(\delta^*,h)\in[\frac{c^2+6b_1b_2-7b_2^2}{12b_2},\frac{b_1}{2}-\frac{b_2}{4}]\times[0,\frac{2b_2-c}{3}]$ that solves (\ref{eqn:fig-b-first-replica}) and (\ref{eqn:fig-b-second-replica}) simultaneously. Observe that $h$ is a decreasing function of $\delta^*$ (see (\ref{eqn:fig-b-h})), and that the pairs $(\delta^*,h)=(\frac{b_1}{2}-\frac{b_2}{4},0)$ and $(\delta^*,h)=(\frac{c^2+6b_1b_2-7b_2^2}{12b_2},\frac{2b_2-c}{3})$ satisfy (\ref{eqn:fig-b-first-replica}). The choice $h=\frac{2b_2-c}{3}$ will be motivated later. It suffices now to indicate that it is to satisfy condition 3(d) of Theorem \ref{thm:Myerson}. We now prove that the left-hand side of (\ref{eqn:fig-b-second-replica}) has opposite signs at these pairs of $(\delta^*,h)$. Substituting $(\delta^*,h)=(\frac{c^2+6b_1b_2-7b_2^2}{12b_2},\frac{2b_2-c}{3})$, we obtain
\begin{equation}\label{eqn:fig-b-lower-bound}
  -\frac{(c-b_2)(6b_1b_2^2+b_2^3+6b_1b_2c+9b_2^2c+b_2c^2+c^3)}{4b_2}\leq 0
\end{equation}
for every $c\geq b_2$. Substituting $(\delta^*,h)=(\frac{b_1}{2}-\frac{b_2}{4},0)$, we obtain
$$
  \frac{1}{16}(72b_1^2b_2+144b_1b_2^2-90b_2^3+(-36b_1^2+84b_1b_2+399b_2^2)c-(96b_1+208b_2)c^2)
$$
which is nonnegative for every $c\in[0,\beta]$. So by continuity of (\ref{eqn:fig-b-second-replica}), there exists a $(\delta^*,h)$ in the rectangle $[\frac{c^2+6b_1b_2-7b_2^2}{12b_2},\frac{b_1}{2}-\frac{b_2}{4}]\times[0,\frac{2b_2-c}{3}]$, and by the continuity of (\ref{eqn:fig-b-first-replica}), the pair $(\delta^*,h)$ also satisfies (\ref{eqn:fig-b-first-replica}). We have thus proved our claim.

{\bf Step 3b:} Consider the case when $b_1\in[b_2,3b_2/2]$. We claim that there exists a $(\delta^*,h)\in[\frac{c^2+6b_1b_2-7b_2^2}{12b_2},b_1-b_2]\times[\frac{-c+\sqrt{c^2+3b_2(3b_2-2b_1)}}{3},\frac{2b_2-c}{3}]$ simultaneously solving (\ref{eqn:fig-b-first-replica}) and (\ref{eqn:fig-b-second-replica}). As before, substitution of $(\delta^*,h)=(\frac{c^2+6b_1b_2-7b_2^2}{12b_2},\frac{2b_2-c}{3})$ yields (\ref{eqn:fig-b-lower-bound}). We now substitute the other pair of $(\delta^*,h)$ on the left-hand side of (\ref{eqn:fig-b-second-replica}), and obtain
\begin{multline}\label{eqn:menu-2-upper}
  9b_1^2\left(3b_1-3b_2+2c+\sqrt{9b_2^2-6b_1b_2+c^2}\right)\\-(3b_1+b_2)\left(3b_1-3b_2+c+\sqrt{9b_2^2-6b_1b_2+c^2}\right)^2.
\end{multline}

We now show that this expression is nonnegative for every $b_1\in[b_2,3b_2/2]$, $c\in[b_2,\alpha_1]$. We do so by the following steps: (a) We first differentiate the expression with respect to $c$ and show that the differential is nonpositive; (b) We then evaluate the expression at $c=2(t-1)(b_1-b_2)+b_2$ (recall from Remark 2 that $t=3(37+3\sqrt{465})/176$) and show that it is nonnegative; and (c) We finally show that $\alpha_1\leq 2(t-1)(b_1-b_2)+b_2$.

We now differentiate the expression w.r.t. $c$. Fix $v=\sqrt{9b_2^2-6b_1b_2+c^2}$. When $b_1\in[b_2,3b_2/2]$ and $c\geq b_2$, we have
\begin{align*}
\mbox{(i) }&v=\sqrt{9b_2^2-6b_1b_2+c^2}\geq\sqrt{9b_2^2-6(3b_2/2)b_2+c^2}=c,\\
\mbox{(ii) }&v=\sqrt{9b_2^2-6b_1b_2+c^2}\leq\sqrt{9b_2^2-6(b_2)b_2+c^2}=\sqrt{3b_2^2+c^2}\leq 2c.
\end{align*}
So we have $c\leq v\leq 2c$. Differentiating (\ref{eqn:menu-2-upper}) with respect to $c$, we have
\begin{align*}
  &18b_1^2+\frac{9b_1^2c}{v}-2(3b_1+b_2)(-3b_2+3b_1+c+v)(1+c/v)\\&=\frac{18b_1^2v+9b_1^2c-2(3b_1+b_2)(c+v)^2-(18b_1^2+2(-6b_1b_2-3b_2^2)(c+v))}{v}\\&=\frac{-9b_1^2c+2(c+v)(3b_2(2b_1+b_2)-(3b_1+b_2)(c+v))}{v}\\&=\frac{-9b_1^2c+2(c+v)((2b_1+b_2)(2b_2-c-v)+b_2(2b_1+b_2)-b_1(c+v))}{v}\\&\leq\frac{-9b_1^2c+2(c+v)b_2^2}{v}\leq\frac{-9b_1^2c+6cb_2^2}{v}\leq 0
\end{align*}
where the first inequality follows from $c+v\geq 2c\geq 2b_2$, the second inequality from $c+v\leq 3c$, and the third inequality from $b_2\leq b_1$.

We now proceed to evaluate the expression at $c=2(t-1)(b_1-b_2)+b_2$. Substituting $c=2(t-1)(b_1-b_2)+b_2$ in (\ref{eqn:menu-2-upper}), we now verify if
\begin{multline*}
  \frac{15(117\sqrt{465}-4189)b_1^3+13(13417-225\sqrt{465})b_1^2b_2}{1936}\\-\frac{(70269-981\sqrt{465})b_1b_2^2+9(5021-21\sqrt{465})b_2^3}{1936}\\+\left(\frac{-(201+27\sqrt{465})b_1^2+(134+18\sqrt{465})b_1b_2+(111+9\sqrt{465})b_2^2}{44}\right)\\\sqrt{9b_2^2-6b_1b_2+\left(2\left(\frac{3(37+3\sqrt{465})}{176}-1\right)(b_1-b_2)+b_2\right)^2}\geq 0
\end{multline*}
Writing the above expression as $X+Y\sqrt{Z}$, we note that (i) $X\leq 0$ when $b_1\in b_2[1,1.03873]$, and $X\geq 0$ when $b_1\in b_2[1.03873,1.5]$; (ii) $Y\geq 0$ when $b_1\in b_2[1,1.04088]$, and $Y\leq 0$ when $b_1\in b_2[1.04088,1.5]$. So we now verify if $X^2-Y^2Z\leq 0$ when $b_1\in b_2[1,1.03873]$, and if $X^2-Y^2Z\geq 0$ when $b_1\in b_2[1.04088,1.5]$. That $X+Y\sqrt{Z}\geq 0$ when $b_1\in b_2[1.03873,1.04088]$ is clear since both $X$ and $Y$ are positive in that interval. Evaluating $X^2-Y^2Z$, we have
\begin{multline*}
  \frac{9}{42592}(b_1-b_2)(3b_2-2b_1)((20196\sqrt{465}-447876)b_2^4\\+(108900\sqrt{465}-2234628)b_1b_2^3+(32337\sqrt{465}-952857)b_1^2b_2^2\\+(4841141-276237\sqrt{465})b_1^3b_2+(140940\sqrt{465}-1820460)b_1^4)
\end{multline*}
which is negative when $b_1\in b_2[1,1.03977]$ and positive when $b_1\in b_2[1.03977,1.5]$. We have thus shown that the expression in (\ref{eqn:menu-2-bound}) is nonnegative when $b_1\in[b_2,3b_2/2]$, $b_2\leq c\leq 2(t-1)(b_1-b_2)+b_2$. That $\alpha_1\leq 2(t-1)(b_1-b_2)+b_2$ is shown via Mathematica (see \ref{app:c.1}(4)).

{\bf Step 3c:} For both the cases, we now claim that $\delta_2\in[\frac{2b_2-c}{3},\frac{b_2}{3}]$. To prove the claim, we do the following.
\begin{itemize}
\item We show the upper bound $\delta_2\leq\frac{b_2}{3}$ via Mathematica (see \ref{app:c.1}(2)).
\item We next show the lower bound. Since $\delta_2=(b_1b_2-(3h/2+c)(h+\delta^*))/(3(h+\delta^*)/2+c)$ decreases with $(h+\delta^*)$, we first find the upper bound on $(h+\delta^*)$.
\item We then substitute this obtained upper bound on $(h+\delta^*)$ and simplify, resulting in the lower bound $\delta_2\geq\frac{2b_2-c}{3}$.
\end{itemize}

We now fill in the details. To find the upper bound on $\delta_1=h+\delta^*$, we first show that $\delta_1$, as a function of $\delta^*$, decreases with increase in $\delta^*$. Differentiating the expression for $\delta_1=(h+\delta^*)$ with $h$ as in (\ref{eqn:fig-b-h}), we get $1-\frac{2b_2}{\sqrt{c^2+3b_2(2b_1-b_2-4\delta^*)}}$ which is nonpositive for $\delta^*\geq(c^2+6b_1b_2-7b_2^2)/(12b_2)$. But this is exactly the lower bound that we computed for $\delta^*$. The highest value of $\delta_1$ thus occurs at $(h,\delta^*)=(\frac{2b_2-c}{3},\frac{c^2+6b_1b_2-7b_2^2}{12b_2})$. Using these expressions, we get $\delta_1=(h+\delta^*)\leq\frac{c^2+6b_1b_2+b_2^2-4b_2c}{12b_2}$.

We now substitute the end points of $h+\delta^*$ in (\ref{eqn:fig-b-delta2-clear}), to evaluate the lower bound of $\delta_2$.
\begin{align*}
  \delta_2&=\frac{b_1b_2-(3h/2+c)(h+\delta^*)}{3(h+\delta^*)/2+c}\\&\geq\frac{b_1b_2-(b_2+c/2)(c^2+6b_1b_2+b_2^2-4b_2c)/(12b_2)}{(c^2+6b_1b_2+b_2^2+4b_2c)/(8b_2)}\\&=\frac{2b_2-c}{3}+\frac{4b_2(c^2-b_2^2)}{3(c^2+6b_1b_2+b_2^2+4b_2c)}\\&\geq\frac{2b_2-c}{3}
\end{align*}
where the first inequality occurs from the upper bound $h\leq (2b_2-c)/3$ and the above upper bound on $(h+\delta^*)$, and the second inequality from $c\geq b_2$. We have thus shown the lower bound. We have also shown that the probability of allocation $a_2=\frac{h+\delta^*}{\delta_2+\delta^*}\leq 1$, since $\delta_2\geq\frac{2b_2-c}{3}\geq h$.

{\bf Step 4:} We now proceed to prove parts (c) and (d) of Theorem \ref{thm:Myerson} (2)--(4). The expression for $V'(\delta)$ is the same as in the proof of Theorem \ref{thm:menu-1}, except in $[-\delta_2,\delta^*]$, where it is given by
\begin{equation}\label{eqn:menu-2-V'}
  V'(\delta)=-(c-2b_2+3\delta_2)+3\frac{\delta_2-h}{\delta_2+\delta^*}(\delta+\delta_2),\forall\delta\in(-\delta_2,\delta^*].
\end{equation}
From (\ref{eqn:menu-1-V'}), observe that $V'(\delta)$ is negative when $\delta\in[-b_2,-\frac{2b_2}{3}]$ and positive when $\delta\in[-\frac{2b_2}{3},-\frac{b_2}{3}]$. We also have from (\ref{eqn:V-fig-a}) that $V(-b_2)=V(-\frac{b_2}{3})=0$. So $V(\delta)=V(-b_2)+\int_{-b_2}^{\delta}V'(\tilde{\delta})\,d\tilde{\delta}\leq 0$ for all $\delta\in[-b_2,-\frac{b_2}{3}]$. It follows that $\int_{-b_2}^{-\frac{b_2}{3}}V(\delta)\,d\delta\leq 0$, and that $\int_{-b_2}^xV(\delta)\,d\delta\geq\int_{-b_2}^{-\frac{b_2}{3}}V(\delta)\,d\delta$ for all $x\in[-b_2,-\frac{b_2}{3}]$. Thus condition (2) of Theorem \ref{thm:Myerson} is verified.

We now prove that $\int_{-\frac{b_2}{3}}^{x}V(\delta)\,d\delta\geq 0$ for every $x\in[\frac{b_2}{3},\delta^*]$. Observe that $V'(\delta)$ is positive when $\delta\in[-\frac{b_2}{3},-\delta_2]$, negative when $\delta\in[-\delta_2,l_2]$ for some $l_2\in[-\delta_2,\delta^*]$, and positive when $\delta\in[l_2,\delta^*]$. These statements follow from (i) $\delta_2\geq\frac{2b_2-c}{3}$, (ii) $V'(\delta)$ increasing in the interval $[-\delta_2,\delta^*]$, and (iii) $h\leq\frac{2b_2-c}{3}$, all of which can be obtained from (\ref{eqn:menu-2-V'}). We also have $V(-\frac{b_2}{3})=V(\delta^*)=\int_{-\frac{b_2}{3}}^{\delta^*}V(\delta)\,d\delta=0$, which we used to derive the parameters $h$, $\delta_2$, and $\delta^*$. It follows that $\int_{-\frac{b_2}{3}}^xV(\delta)\,d\delta\geq 0$ for all $x\in[-\frac{b_2}{3},\delta^*]$. Thus condition (3) of Theorem \ref{thm:Myerson} is verified.

The proof that the conditions of Theorem \ref{thm:Myerson} (4) are satisfied trace the same steps as in the proof of Theorem \ref{thm:menu-1}, provided $\delta_1\leq\frac{2b_1-c}{3}$. If $\delta_1>\frac{2b_1-c}{3}$, then $V'(\delta)$ is no more positive in the interval $[b_1-b_2,\delta_1]$. We consider two cases.

Let $b_1\geq 3b_2/2$. Then we claim that $V(\delta)\geq 0$ holds for all $\delta\in[\delta^*,b_1]$, even when $V'(\delta)\leq 0$ for $\delta\in[b_1-b_2,\delta_1]$. Observe that (i) $V(\delta)=\frac{1}{2}(3\delta-b_1)(b_1-\delta)\geq 0$ for all $\delta\in[\max(b_1-b_2,\frac{b_1}{3}),b_1]$, and (ii) $\delta_1\geq b_1-b_2\geq\frac{b_1}{3}$, when $b_1\geq 3b_2/2$. So, $V(\delta)\geq 0$ for all $\delta\in[b_1-b_2,b_1]$. Now $V(\delta)\geq 0$ also holds in the interval $\delta\in[\delta^*,b_1-b_2]$ since $V'(\delta)\geq 0$ in that interval (see the discussion following (\ref{eqn:menu-1-V'})), and since $V(\delta^*)=0$. We have proved our claim.

We now consider the case when $b_1\in[b_2,3b_2/2]$. $V(\delta)$ could possibly be negative at some values of $\delta$. We now evaluate $\int_{\delta^*}^{\frac{b_1}{3}}V(\delta)\,d\delta$:
\begin{align*}
  &\int_{\delta^*}^{\frac{b_1}{3}}V(\delta)\,d\delta\\&=\int_{\delta^*}^{b_1-b_2}V(\delta)\,d\delta+\int_{b_1-b_2}^{\delta_1}V(\delta)\,d\delta+\int_{\delta_1}^{\frac{b_1}{3}}V(\delta)\,d\delta\\&=-\frac{2}{27}b_1^3-b_2(\delta^*)^2+b_2\delta^*(b_1-b_2/2)+b_1b_2h-\frac{b_2^2h}{2}-2b_2h\delta^*-\frac{ch^2}{2}-h^3\\&=-\frac{2}{27}b_1^3+\frac{(c+h)}{2}h^2-b_2(\delta^*)^2+b_2\delta^*(b_1-b_2/2)+hV(\delta^*)
\end{align*}
where $V(\delta^*)$ is obtained from (\ref{eqn:fig-b-h-initial}). The last expression is the same as (\ref{eqn:fig-b-third}), since $V(\delta^*)=0$. From Mathematica, (\ref{eqn:fig-b-third}) is nonnegative for all $c\in[b_2,\alpha_1]$ (see
\ref{app:c.1}(3)). Since $\int_{\frac{b_1}{3}}^{b_1}V(\delta)\,d\delta=\frac{2}{27}b_1^3\geq 0$, we have $\int_{\delta^*}^{b_1}V(\delta)\,d\delta\geq 0$. This verifies condition 4(c) of Theorem \ref{thm:Myerson}.

Observe that $V'(\delta)\leq 0$ only when $\delta\in[b_1-b_2,\delta_1]$. Also, $V(\delta^*)=0=V(\frac{b_1}{3})$. So $V(\delta)$ can be negative only when $\delta$ is in some subset of $[b_1-b_2,\frac{b_1}{3}]$, say in the interval $[l_1,\frac{b_1}{3}]$. Observe that the integral $\int_x^{b_1}V(\delta)\,d\delta$ thus attains its maximum either at $\delta^*$ or at $\frac{b_1}{3}$. But we just evaluated $\int_{\delta^*}^{\frac{b_1}{3}}V(\delta)\,d\delta\geq 0$, and so the maximum cannot be at $x=\frac{b_1}{3}$. Thus we have $\int_x^{b_1}V(\delta)\,d\delta\leq\int_{\delta^*}^{b_1}V(\delta)\,d\delta$ for all $x\in[\delta^*,b_1]$. Hence the result.\qed
\end{proof}

Observe that at $c=\alpha_1$, we have $\int_{\delta^*}^{\frac{b_1}{3}}V(\delta)\,d\delta=0$. When $c>\alpha_1$, the quantity turns negative, causing the condition in Theorem \ref{thm:Myerson}(4d) to fail. A transition occurs from the structure depicted in Figure \ref{fig:b-ini} to that depicted in Figure \ref{fig:c-ini}. We now proceed to prove the optimality of the structure in Figure \ref{fig:c-ini}, i.e., part 1(c) of Theorem \ref{thm:consolidate}.
\begin{theorem}\label{thm:menu-3}
Consider the case when $b_1\in[b_2,3b_2/2]$, and $c\in[\alpha_1,\alpha_2]$, where $\alpha_1$ and $\alpha_2$ are as defined as in Theorem \ref{thm:consolidate}. Then the optimal mechanism is as depicted in Figure \ref{fig:c-ini} (see also Figure \ref{fig:c-new}). The values of $h$ and $\delta^*$ are found by solving (\ref{eqn:fig-b-second}) and (\ref{eqn:fig-c-second}) simultaneously, and the values of $\delta_1$ and $\delta_2$ are given by
$$
(\delta_1,\delta_2)=\left(\delta^*+\frac{b_1b_2-2b_2\delta^*-b_2^2/2}{3h/2+c},\frac{b_1b_2-(3h/2+c)\delta_1}{3/2(h+\delta^*)+c}\right).
$$
The values of $a_1$ and $a_2$ are given by $(a_1,a_2)=\left(\frac{h}{\delta_1-\delta^*},\frac{h+\delta^*}{\delta_2+\delta^*}\right)$.
\end{theorem}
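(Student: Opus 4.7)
The plan is to mirror the four-step template used for Theorem \ref{thm:menu-2}, noting that the structure in Figure \ref{fig:c-ini} differs from Figure \ref{fig:b-ini} by the addition of one extra lottery region $(a_1, 1-a_1)$. This adds one new piece to $V(\delta)$ (on the interval $[\delta^*, b_1/3]$), one new Myerson-style ironing condition, and one new unknown, preserving the count balance.

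For Step 1, extend the piecewise formula of $V(\delta)$ from the proof of Theorem \ref{thm:menu-2} by inserting a quadratic-in-$(\delta - \delta^*)$ piece on $[\delta^*, b_1/3]$ reflecting the new allocation value $q_1 = a_1$. For Step 2, the conservative-field constraint $q = \nabla u$ matches slopes to give $a_1 = h/(\delta_1 - \delta^*)$ and $a_2 = (h+\delta^*)/(\delta_2 + \delta^*)$, and the pentagonal exclusion region still satisfies $\bar\mu(Z) = 0$, yielding the expression for $\delta_2$ in terms of $\delta_1, h, \delta^*$. Theorem \ref{thm:Myerson}(3) applied to the two lottery intervals yields four conditions: $V(-b_2/3) = V(\delta^*) = V(b_1/3) = 0$ together with $\int_{-b_2/3}^{\delta^*} V\,d\delta = 0$ and $\int_{\delta^*}^{b_1/3} V\,d\delta = 0$. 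After eliminating $\delta_1$ and $\delta_2$ via the exclusion condition and the geometric slope matches, I expect the first integral condition to reduce to (\ref{eqn:fig-b-second}) (unchanged because the $(1-a_2, a_2)$ region has the same geometry as in Theorem \ref{thm:menu-2}), and the new integral condition together with $V(b_1/3) = 0$ to reduce, after some algebra, to (\ref{eqn:fig-c-second}).

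Step 3 is the technical heart. I would parametrise along the curve defined by (\ref{eqn:fig-b-second}) in $(h, \delta^*)$-space and demonstrate, by a sign-change/continuity argument like the one in Step 3a of Theorem \ref{thm:menu-2}, that the left-hand side of (\ref{eqn:fig-c-second}) changes sign across this curve on a rectangle whose corners correspond to the Figure b configuration (where $a_1 = 1 - a_2$ and the two lottery regions merge) at $c = \alpha_1$, and to a Figure d-style collapse (where $h = 0$ and the $(a_1, 1-a_1)$ region vanishes) at $c = \alpha_2$. The latter boundary coincides with (\ref{eqn:fig-c-third}) up to the substitutions required when $h \to 0$, and this is precisely how $\alpha_2$ is defined in the theorem statement. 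Along the way I would verify the meaningful bounds $\delta_2 \in [\tfrac{2b_2-c}{3}, \tfrac{b_2}{3}]$, $\delta_1 \in [\tfrac{b_1}{3}, \tfrac{2b_1-c}{3}]$, $0 \leq h \leq \delta_2$, and $1-a_2 \leq a_1 \leq 1$, so that $q_1$ is monotone in $\delta$.

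Step 4 verifies the Myerson inequalities. On $[-b_2, -b_2/3]$ the argument carries over verbatim from Theorem \ref{thm:menu-2}. On the first lottery interval, monotonicity of $V'$ combined with the endpoint zeros and the integral zero forces $V$ to have exactly one sign cycle, yielding $\int_{-b_2/3}^{x} V\,d\delta \geq 0$ for all $x \leq \delta^*$; the same argument applied to $[\delta^*, b_1/3]$ handles the new lottery region. The terminal interval $[b_1/3, b_1]$ is handled as in Theorem \ref{thm:menu-1}, since $\delta_1 \leq \tfrac{2b_1 - c}{3}$ in this regime (note $b_1 \leq 3b_2/2$ is assumed). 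The main obstacle, and the reason this part of Theorem \ref{thm:consolidate} is starred, is the polynomial sign verification in Step 3: the parameters are roots of two simultaneous polynomials of degree up to four, and proving that the sign of (\ref{eqn:fig-c-second}) flips at the right corners of the rectangle, for all $(b_1, b_2)$ with $b_2 \leq b_1 \leq 3b_2/2$ and all $c \in [\alpha_1, \alpha_2]$, will likely require the same Mathematica-assisted inequality checks as were used for Theorem \ref{thm:menu-2}.
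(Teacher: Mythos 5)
Your four-step template matches the paper's proof structure, the observation that (\ref{eqn:fig-b-second}) carries over unchanged from Theorem~\ref{thm:menu-2} is correct and is exactly how the paper derives the system of polynomials, and the organization of Step~4 via the monotonicity of $V'$ on each lottery interval is sound. But there are two substantive gaps in Step~3 that would prevent the argument from closing.

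First, your claimed bound $\delta_1 \in [\tfrac{b_1}{3}, \tfrac{2b_1-c}{3}]$ is wrong in both the direction of the inequality and the relevance. In the Figure~\ref{fig:c-new} structure, $c+\delta_1$ is the $z_1$-coordinate where the exclusion boundary meets $z_2 = c$, and this point must lie \emph{inside} the $(a_1, 1-a_1)$ strip, which terminates at $\delta = b_1/3$; so the correct constraint is $\delta_1 \le b_1/3$ (an upper bound), together with the lower bound $\delta_1 \ge h + \delta^*$ which is equivalent to $a_1 \le 1$. The bound $[\tfrac{b_1}{3}, \tfrac{2b_1-c}{3}]$ is the right shape for the Figure~\ref{fig:a-new} parameter $\delta_1$ (the item-1 price), which is a different quantity; with it, your Step~4 claim that ``the terminal interval $[b_1/3,b_1]$ is handled as in Theorem~\ref{thm:menu-1} since $\delta_1 \le \tfrac{2b_1-c}{3}$'' is also off, because on $[b_1/3,b_1]$ the function $V$ is simply $-\tfrac{3}{2}\delta^2 + 2b_1\delta - \tfrac{b_1^2}{2} \ge 0$ with no dependence on $\delta_1$ at all.

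Second, and more seriously, you gloss over the lower bound $\delta_1 \ge h + \delta^*$ (i.e., $a_1 \le 1$). This is \emph{not} a routine corner check: the paper establishes it by a contradiction argument that superposes the Figure~\ref{fig:c-ini} mechanism on the Figure~\ref{fig:b-ini} mechanism, shows monotonicity of $h$ as a function of $\delta^*$ along (\ref{eqn:fig-b-second}), and derives opposite inequalities for $\int_{\delta^*}^{b_1/3} V\,d\delta$ under the assumption $\delta_1 < h+\delta^*$. A proposal that only lists the desired bounds without this argument leaves the central monotonicity-of-$q_1$ condition unverified.

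Finally, you have swapped the two transition descriptions. At $c = \alpha_1$, the Figure~\ref{fig:c-ini} mechanism collapses into Figure~\ref{fig:b-ini} because $a_1 = 1$ (equivalently $\delta_1 = h+\delta^*$), so the $(a_1,1-a_1)$ strip merges with the $(1,0)$ region; it is \emph{not} the configuration $a_1 = 1 - a_2$. The condition $a_1 + a_2 = 1$ is what occurs at $c = \alpha_2$, which is the transition to Figure~\ref{fig:d-ini} (the two lottery regions merge into one); $h = 0$ is not the operative condition there, and in fact $h=0$ characterizes the Figure~\ref{fig:b-ini} $\to$ Figure~\ref{fig:d'-ini} transition at $c = \beta$ in the $b_1 \ge 3b_2/2$ regime. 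If you build the continuity/sign-change rectangle on the wrong corner conditions, the sign checks at the endpoints will not deliver the existence of a solution.
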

\begin{proof}
See \ref{app:b}. This too relies on Mathematica for verification of certain inequalities.\qed
\end{proof}

Consider $b_1\in[b_2,3b_2/2]$. The proof (in \ref{app:b}) indicates that at $c=\alpha_2$, we have $a_1+a_2=1$, and that when $c>\alpha_2$, we have $a_1+a_2<1$. This causes the monotonicity of $q_1$ to fail (recall that $q_1$ increasing is one of the constraints of Problem (\ref{eqn:optim-myerson})). Further, when $a_1+a_2=1$, the slope of the line segment joining $(c,c+\delta_2)$, $(c+h+\delta^*,c+h)$, and the slope of the line segment joining $(c+h+\delta^*,c+h)$, $(c+\delta_1,c)$, are equal, i.e., $-\frac{1-a_2}{a_2}=-\frac{a_1}{1-a_1}$. The two line segments thus turn into a single line segment that joins $(c,c+\delta_2)$, $(c+\delta_1,c)$. A transition thus occurs from the structure depicted in Figure \ref{fig:c-ini} to that in Figure \ref{fig:d-ini}, with $a_2=1-a_1=a$.

Consider $b_1\geq 3b_2/2$. At $c=\beta$, we have $h=0$. Thus a transition occurs from the structure depicted in Figure \ref{fig:b-ini} to that in Figure \ref{fig:d'-ini}.

We now proceed to prove the optimality of the structures depicted in Figures \ref{fig:d-ini}--\ref{fig:e'-ini}, i.e., parts 1(d)--(e) and 2(c)--2(d) of Theorem \ref{thm:consolidate}.

\begin{theorem}\label{thm:menu-4,5}
\begin{enumerate}
\item[(i)] Consider the case when $c\in[\beta,\frac{216b_1^2b_2}{108b_1^2-108b_1b_2-5b_2^2}]$, and $b_1\geq 3b_2/2$, where $\beta$ is as defined Theorem \ref{thm:consolidate}. Then the optimal mechanism is as depicted in Figure \ref{fig:d'-ini} (see also Figure \ref{fig:d'-new}). The values of $\delta_1$ and $\delta_2$ are computed by solving the following equations simultaneously.
\begin{align*}
&-\frac{3}{2}\delta_1\delta_2-c(\delta_1+\delta_2)+b_1b_2=0.\\&-\frac{2}{27}b_2^3+\frac{1}{2}\delta_1\delta_2(\delta_2-\delta_1)+\frac{c}{2}(\delta_2^2-\delta_1^2)+\frac{1}{16}b_2(2b_1-b_2)^2=0.
\end{align*}
The value of $a$ is given by $a=\frac{\delta_1}{\delta_1+\delta_2}$. If $c\geq\frac{216b_1^2b_2}{108b_1^2-108b_1b_2-5b_2^2}$, then the optimal mechanism is as depicted in Figure \ref{fig:e'-ini} (see also Figure \ref{fig:e'-new}).
\item[(ii)] Consider the case when $b_1\in[b_2,3b_2/2]$, and $c\in[\alpha_2,\frac{27b_1^2b_2^2}{4(b_1^3-b_2^3)}]$, where $c=\alpha_2$ is as defined in Theorem \ref{thm:consolidate}. Then the optimal mechanism is as depicted in Figure \ref{fig:d-ini} (see also Figure \ref{fig:d-new}). The values of $\delta_1$ and $\delta_2$ are computed by solving the following equations simultaneously.
\begin{align*}
&-\frac{3}{2}\delta_1\delta_2-c(\delta_1+\delta_2)+b_1b_2=0.\\&\frac{2}{27}(b_1^3-b_2^3)+\frac{1}{2}\delta_1\delta_2(\delta_2-\delta_1)+\frac{c}{2}(\delta_2^2-\delta_1^2)=0.
\end{align*}
The value of $a$ is given by $a=\frac{\delta_1}{\delta_1+\delta_2}$. If $c\geq\frac{27b_1^2b_2^2}{4(b_1^3-b_2^3)}$, then the optimal mechanism is as depicted in Figure \ref{fig:e-ini} (see also Figure \ref{fig:e-new}).
\end{enumerate}
\end{theorem}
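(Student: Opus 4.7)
The plan is to adhere to the four-step framework used in Theorems \ref{thm:menu-1}--\ref{thm:menu-3}, treating Figure \ref{fig:d-ini} first and then the parallel proofs for Figures \ref{fig:d'-ini}, \ref{fig:e-ini}, and \ref{fig:e'-ini}.

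\textbf{Steps 1 and 2 for Figure \ref{fig:d-ini}.} By Theorem \ref{thm:pav-2}, the allocation is constant on every $45^\circ$ line in $D\setminus Z$, so the proposed mechanism has three $45^\circ$-aligned separators---at $\delta=-b_2/3$, at the entry point $\delta^*_d\in(-\delta_2,\delta_1)$ where the exclusion boundary meets the lottery interior, and at $\delta=b_1/3$---partitioning $[-b_2,b_1]$ into four intervals on which $q_1\in\{0,1-a,1\}$. Matching the slope of the linear exclusion boundary $(c,c+\delta_2)\to(c+\delta_1,c)$ with $-q_2/q_1$ of the lottery forces $a=\delta_1/(\delta_1+\delta_2)$. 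I then compute $V(\delta)=\bar{\mu}(\{z:z_1-z_2\geq\delta\}\setminus Z)$ piecewise using \eqref{eqn:bar-mu}. A direct cancellation check, analogous to that for the boundary triangles in the proof of Theorem \ref{thm:menu-1}, shows that when $\delta_2\leq b_2/3$ and $\delta_1\leq b_1/3$ the triangular $(0,1)$- and $(1,0)$-regions lie entirely outside $Z$ and have $\bar{\mu}$-measure identically zero (their area, line, and point contributions telescope). Consequently $V(-b_2)=V(-b_2/3)=V(b_1/3)=V(b_1)=0$ hold automatically as soon as $\bar{\mu}(Z)=0$, and the only non-trivial equality conditions from parts~2(c), 3(c), 4(c) of Theorem \ref{thm:Myerson} reduce to
\[
  \bar{\mu}(Z)=0 \qquad\text{and}\qquad \int_{-b_2/3}^{b_1/3} V(\delta)\,d\delta=0,
\]
which, after substituting the point/line/area contributions of $\bar{\mu}$ on the triangle $Z=\{(z_1,z_2):(z_1-c)/\delta_1+(z_2-c)/\delta_2\leq 1\}$ and integrating the piecewise-quadratic $V$, yield the two polynomial equations stated in the theorem.

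\textbf{Step 3 (Existence).} I intend to eliminate $\delta_2$ from the first (quadratic) equation and substitute it into the second, producing a one-variable polynomial in $\delta_1$; the intermediate value theorem on $\delta_1\in[0,b_1/3]$ (equivalently $\delta_2\in[0,b_2/3]$) then gives a root, using the two boundary transitions as endpoint sign checks. At $c=\alpha_2$, Theorem \ref{thm:menu-3} furnishes $a_1+a_2\to 1$, so the two lottery strips of Figure \ref{fig:c-ini} merge into the single lottery of Figure \ref{fig:d-ini}; at the upper threshold $c=27b_1^2b_2^2/(4(b_1^3-b_2^3))$, solving the two equations with $\delta_2=0$ (so $a\to 1$) gives exactly this value of $c$, matching the transition to Figure \ref{fig:e-ini} where the degenerate lottery $(0,1)$ merges with the pre-existing $(0,1)$-region. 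The same scheme proves Figure \ref{fig:d'-ini}, with the second separator now at $\delta=b_1/2-b_2/4$: this is the unique zero of $V$ on the trapezoidal range $\delta\in[-b_2/3,b_1-b_2]$, where a similar cancellation gives $V(\delta)\propto 2\delta+b_2/2-b_1$ for $b_1\geq 3b_2/2$. The constant $\tfrac{1}{16}b_2(2b_1-b_2)^2$ appearing in Figure \ref{fig:d'-ini}'s second equation arises from integrating $V$ over this trapezoidal portion, and the same $\delta_2\to 0$ endpoint argument pins the upper threshold to $c=216b_1^2b_2/(108b_1^2-108b_1b_2-5b_2^2)$.

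\textbf{Step 4 and the no-exclusion cases.} The remaining inequality parts of Theorem \ref{thm:Myerson} are verified from the piecewise-affine structure of $V'(\delta)$ together with the four automatic zeros of $V$, exactly mirroring the sign bookkeeping in Theorems \ref{thm:menu-1}--\ref{thm:menu-3}. For Figures \ref{fig:e-ini} and \ref{fig:e'-ini}, $Z=\emptyset$ and the mechanism has only two regions separated by a single $45^\circ$ line (at $\delta=b_1/3$ or $\delta=b_1/2-b_2/4$); the verification reduces to showing $\int_{-b_2}^{\delta^\sharp}V(\delta)\,d\delta\leq 0$, and an explicit computation as a rational function of $c$ shows this integral vanishes precisely at the stated thresholds and is negative beyond them, giving the characterization in the theorem. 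The principal obstacle is condition~3(d) in the lottery case, $\int_{-b_2/3}^{x}V(\delta)\,d\delta\geq 0$ for \emph{every} $x\in[-b_2/3,b_1/3]$: because $V$ is not monotonic on the lottery interval, this requires locating its (at most two) sign changes via the bounds on $\delta_1,\delta_2$ from Step 3 and combining them with the fact that the total integral is zero. Some resulting polynomial inequalities may need Mathematica verification, as in Theorems \ref{thm:menu-2} and \ref{thm:menu-3}.
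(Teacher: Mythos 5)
Your four-step framework and Steps~1--3 match the paper's own proof, and your observation that $V(-b_2)=V(-b_2/3)=V(b_1/3)=V(b_1)=0$ (resp.\ $V(b_1/2-b_2/4)=0$ in case~(i)) hold automatically once $\bar{\mu}(Z)=0$ is a clean way to read off the two stated polynomial equations from conditions 2(c), 3(c), 4(c) of Theorem~\ref{thm:Myerson}. However, there is a genuine gap in Step~4 for part~(ii), which is exactly where the paper spends most of its effort. The paper splits into $c\ge 2b_2$ and $c\in[\alpha_2,2b_2]$. For $c<2b_2$ it does \emph{not} attempt a direct sign analysis of $V$; instead it compares the mechanism of Figure~\ref{fig:d-ini} (call it~IV) with that of Figure~\ref{fig:c-ini} (call it~III), first proving $(\delta_1^{IV},\delta_2^{IV})<(\delta_1^{III},\delta_2^{III})$, then showing $\int_{-b_2/3}^{x}(V_{IV}-V_{III})\,d\delta\ge 0$ for all $x$, and then inheriting 3(d) from the already-proved Theorem~\ref{thm:menu-3}. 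Your proposal replaces this with ``locating its (at most two) sign changes'' of $V$ on the lottery interval. But \emph{at most two} is not enough: with pattern $+,-,+$ or $-,+,-$ and $\int_{-b_2/3}^{b_1/3}V=0$, condition 3(d) \emph{fails} (the running integral dips below zero at the second crossing or starts negative). What is actually needed is a single-crossing $+,-$ pattern, and establishing this for $c<2b_2$ is delicate: when $b_1<3b_2/2$ one can have $\delta_1>b_1-b_2$, at which point $V'$ is no longer monotone on $[-\delta_2,\delta_1]$ (it increases up to $\delta=b_1-b_2$ and then \emph{decreases}, because the line $z_1-z_2=\delta$ starts exiting $D$ on the right boundary rather than the top). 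Your write-up does not address this geometric case distinction, nor the monotonicity of $q_1$ (i.e.\ $a_1+a_2\ge 1$) at the III$\to$IV transition which the paper verifies via Mathematica to justify the comparison. So while your plan is plausible in outline, the heart of Step~4(ii) is missing.

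Two smaller inaccuracies worth noting: the claim that $V(\delta)\propto 2\delta+b_2/2-b_1$ on $[-b_2/3,b_1-b_2]$ is correct only on the sub-interval $[\delta_1,b_1-b_2]$ where $V'=2b_2$ is constant --- on $[-b_2/3,-\delta_2]$ and $[-\delta_2,\delta_1]$ the function $V$ is genuinely quadratic, so ``unique zero'' and the displayed proportionality should be restricted accordingly. And in the no-exclusion cases (Figures~\ref{fig:e-ini}, \ref{fig:e'-ini}) the paper additionally needs the remark that replacing $q(z)=(0,0)$ by $(0,1)$ on the degenerate segment $Z=[c,c+b_1b_2/c]\times\{c\}$ changes nothing (since $\int_Z f=0$), which makes the formulation consistent; your proposal omits this step, though it is straightforward to add.
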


\begin{proof}
See \ref{app:b}.\qed
\end{proof}

\section{On Extending to Uniform Distributions on General Rectangles}
We have computed the optimal mechanism in the two-item unit-demand setting when $z\sim\mbox{Unif}[c,c+b_1]\times[c,c+b_2]$ for every nonnegative $(c,b_1,b_2)$. Our computation used the method based on the virtual valuation function designed in \cite{Pav11}. We can now ask if there is a generalization of this method for more general distributions, specifically for uniform distributions on rectangles $[c_1,c_1+b_1]\times[c_2,c_2+b_2]$, when $c_1\ne c_2$. We conjecture that the optimal mechanisms would have structures similar to the five structures as in the case of $c_1=c_2$. We now report some promising preliminary results that support this conjecture.
\begin{theorem}\label{thm:extension}
Consider the case when $b_1\geq b_2$. Let $c_2\geq 0$, $c_1\geq c_2$, and $2c_1-c_2\leq b_2$. Then, the optimal mechanism is as depicted in Figure \ref{fig:a-ini} (see also Figure \ref{fig:a-new}). The values of $\delta_1$ and $\delta_2$ are computed by solving the following equations simultaneously.
\begin{align*}
  &-3\delta_1\delta_2-c_2\delta_1-c_1\delta_2+b_1b_2=0.\\
  &-\frac{3}{2}\delta_2^2+2b_2\delta_2-\frac{b_2^2}{2}-d(b_2-\delta_2)+(c_2-2b_2+3\delta_2)\delta_1=0.
\end{align*}
\end{theorem}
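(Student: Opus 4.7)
The plan is to follow the four-step strategy of the proof of Theorem \ref{thm:menu-1}, tracking the modifications forced by the asymmetry $c_1 \neq c_2$. Write $d := c_1 - c_2 \geq 0$; then $\delta = z_1 - z_2$ ranges over $[d-b_2, d+b_1]$, the exclusion rectangle is $Z = [c_1, c_1+\delta_1]\times[c_2, c_2+\delta_2]$, and the $45^\circ$ line separating the $(1,0)$ and $(0,1)$ regions passes through $(c_1+\delta_1, c_2+\delta_2)$, so $\delta^* = d + \delta_1 - \delta_2$. In Step 1 I will compute $V(\delta)$ piecewise on the subintervals cut out by the break-points $d-\delta_2$, $\delta^*$, $d+(b_1-b_2)$, $d+\delta_1$; the coefficients now distinguish between contributions $-c_1/(b_1 b_2)$ from the left edge of $D$ and $-c_2/(b_1 b_2)$ from the bottom edge, so the analogue of (\ref{eqn:V-fig-a}) acquires explicit $d$-dependence in exactly the piece around $[d-\delta_2, \delta^*]$.

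In Step 2 I derive the two displayed equations. The mechanism has unknowns $\delta_1, \delta_2, \delta^*$ linked geometrically by $\delta^* = d + \delta_1 - \delta_2$. The first equation comes from $\bar{\mu}(Z) = 0$: using (\ref{eqn:bar-mu}), the point-mass contributes $+1$, the left edge of $Z$ contributes $-c_1\delta_2/(b_1 b_2)$, the bottom edge $-c_2\delta_1/(b_1 b_2)$, and the interior $-3\delta_1\delta_2/(b_1 b_2)$, summing to the first displayed equation. The second is $V(\delta^*) = 0$, which is condition 2(b) (equivalently 4(a)) of Theorem \ref{thm:Myerson} because $q_1$ jumps from $0$ to $1$ at $\delta^*$; substituting the Step 1 formula gives the displayed second equation, with the extra $-d(b_2 - \delta_2)$ term recording the horizontal offset of the left edge of $D$ relative to the bottom. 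Step 3 then establishes existence: solve the first equation for $\delta_1$ as a continuous strictly-decreasing function of $\delta_2$, substitute into the second to obtain a single polynomial in $\delta_2$, guess a trapping rectangle $[L_1,U_1] \times [L_2,U_2]$ generalizing the bounds $[b_1/2 - b_2/6, (2b_1-c)/3] \times [b_2/3, (2b_2-c)/3]$ of Theorem \ref{thm:menu-1} by absorbing the $d$-terms, and verify two sign-changes on the polynomial by substituting at the candidate corner points and invoking the intermediate value theorem. Finally Step 4 mirrors the original: compute $V'(\delta)$ on each subinterval, exhibit the monotonicity/sign pattern of $V$, and combine with the boundary zeros $V(d-b_2) = V(\delta^*) = V(d+b_1) = 0$ to obtain $V \leq 0$ on $[d-b_2, \delta^*]$ and $V \geq 0$ on $[\delta^*, d+b_1]$; the integral inequalities 2(c)--(d), 4(c)--(d) of Theorem \ref{thm:Myerson} follow automatically.

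The main obstacle is Step 3, and in particular pinpointing the exact role of the hypothesis $2c_1 - c_2 \leq b_2$. In the symmetric case, $c \leq b_2$ was what ensured $(2b_2 - c)/3 \geq b_2/3$, so that the candidate interval $[b_2/3, (2b_2-c)/3]$ for $\delta_2$ was non-degenerate. I expect the asymmetric analogue to replace $c$ by $2c_1 - c_2$, reflecting the combined $\mu_s$-contributions of the two $Z$-edges through $V'$ on $[d-\delta_2, \delta^*]$; the hypothesis $2c_1-c_2 \leq b_2$ is then precisely what keeps the corresponding upper bound valid and also keeps $V'$ from changing sign prematurely. Once the right interval is in hand, the sign-change verifications on a polynomial whose coefficients depend on the two free parameters $c_1, c_2$ may require a careful case split, or, as with the asterisked parts of Theorem \ref{thm:consolidate}, computer-aided algebraic verification.
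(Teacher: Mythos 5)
Your proposal matches the paper's proof in every essential: the same exclusion region $Z=[c_1,c_1+\delta_1]\times[c_2,c_2+\delta_2]$, the same two equations arising from $\bar{\mu}(Z)=0$ and $V(\delta^*)=0$, and the same trapping-rectangle existence argument followed by the $V'$ sign analysis, with the only difference being a harmless notational choice (the paper shifts $\delta=z_1-z_2-d$ so the domain stays $[-b_2,b_1]$ and the geometric relation simplifies to $\delta^*=\delta_1-\delta_2$). To settle your one worry about Step 3: the paper's rectangle is $[\frac{b_1}{2}-\frac{c_1}{3}+\frac{c_1c_2}{6b_2},\frac{2b_1-c_1}{3}]\times[\frac{b_2+2d}{3},\frac{2b_2-c_2}{3}]$, and every sign check (at the corner points of this rectangle and in the $V'$ analysis) goes through by hand without any computer assistance, because the hypothesis $2c_1-c_2\leq b_2$ (which together with $c_2\geq 0$ forces $c_1\leq b_2$ and $d\leq b_2/2$) is exactly the non-degeneracy condition $\frac{b_2+2d}{3}\leq\frac{2b_2-c_2}{3}$ on the $\delta_2$-interval, as you correctly anticipated.
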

\begin{proof}
See \ref{app:b}. The proof traces the same steps as in the proof of Theorem \ref{thm:menu-1}.\qed
\end{proof}

\section{Conclusion and Future Work}

We solved the problem of computing the optimal mechanism for the two-item one-buyer unit-demand setting, when the buyer's valuation $z\sim\mbox{Unif}[c,c+b_1]\times[c,c+b_2]$ for arbitrary nonnegative values of $(c,b_1,b_2)$. Our results show that a wide range of structures arise out of different values of $c$. When the buyer guarantees that his valuations for the items are at least $c$, the seller offers different menus based on the guaranteed minimum $c$ and the upper bounds $c+b_i$, $i = 1,2$.

Taking a cue from the solution method in the unrestricted setting \cite{TRN16}, we initially attempted to solve the problem using the duality approach in \cite{DDT15}, but constructing a dual measure in the unit-demand setting turned out to be intricate. We then used the virtual valuation method used in \cite{Pav11} to compute the solution. We now characterize the pros and cons of these approaches.

The duality approach could not be pursued systematically because the construction of a shuffling measure that both convex-dominates $0$ and spans over more than one line segment appears to be difficult. Observe that in both Examples 2 and 3, there exists some constant allocation region that is a part of both the top boundary and the right boundary of $D$. So the shuffling measure had to be constructed so that it spans over two line segments connected at the top-right corner of $D$. To get around this issue, we had to construct (i) a shuffling measure on the line $z_1+z_2=2c+\delta_2$ in Example 2, and (ii) a shuffling measure that transfers mass horizontally in Example 3. The problem of constructing a ``generalized'' shuffling measure that both convex-dominates $0$ and also spans over two segments, thereby rendering the dual approach practical, is a possible direction for future work.

The virtual valuation method on the other hand, did not pose any issue when constant allocation regions span over the top-right corner. The approach provides a generalized procedure to verify if a menu at hand is optimal or not, under the (only) constraint that the distribution satisfies the negative power rate condition (stated in Theorem \ref{thm:pav-1}). So unlike the duality approach, we cannot use this approach to solve the problem for general distributions. But our results for $z\sim\mbox{Unif}[c,c+b_1]\times[c,c+b_2]$ and the extension to general rectangles suggest that this approach can be used to solve the problem of computing the optimal mechanism for all distributions satisfying the negative power rate condition. The key challenge in solving these problems is to find the exclusion region $Z$ for arbitrary distributions, so that we can use Theorem \ref{thm:Myerson} to verify if the menu is optimal or not. Coming up with a generalized procedure to compute $Z$ is a possible direction for future work.

Our proofs used Mathematica to verify certain algebraic inequalities that turn out to be complicated functions of $(c,b_1,b_2)$ involving fifth roots and eighth roots of some expressions. This leads us to the following questions. From a rather abstract perspective, Pavlov's sufficient conditions lead to the identification of a family of polynomial equalities and inequalities in the variables $(h,\delta^*,\delta_1,\delta_2)$ in Figures \ref{fig:a-new}--\ref{fig:e'-new}, indexed by the parameters $(c,b_1,b_2)$. In a nutshell, our work is a careful analysis of the solution space, denoted $L_{c,b_1,b_2}$, associated with the polynomial equalities and inequalities. We argued that $L_{c,b_1,b_2}$ is nonempty for every parameter $(c,b_1,b_2)$. We also captured the transitions of $L_{c,b_1,b_2}$ as the parameters vary. Can this view provide a more systematic procedure to solve the case of uniform distribution on any rectangle in the positive quadrant, or more generally, the case of any distribution of valuations on the positive quadrant? Alternatively, can the procedure of this paper (both existence of solutions and capture of transitions) be automated on Mathematica or other similar tool? These are some computation related problems that might be of interest to the computer scientists.

\section*{Acknowledgements}
This work was supported by the Defence Research and Development Organisation [Grant no. DRDO0667] under the DRDO-IISc Frontiers Research Programme. The first author thanks Prof. G. Pavlov for a very informative discussion.

\appendix
\section*{Appendix}
\section{Proofs from Section 2}\label{app:a}
{\bf Proof of Lemma \ref{lem:cvx}:} We first compute the quantities $\bar{\alpha}^{(1)}([1.26,1.26+2/3]\times\{2.26\})$ and $\int_{0}^{2/3}t\,d\bar{\alpha}^{(1)}(1.26+t,2.26)$:
\begin{align*}
  \bar{\alpha}^{(1)}([1.26,1.26+2/3]\times\{2.26\}) & =\int_{0}^{2/3}(3t-1)\,dt=(3/2)(2/3)^2-2/3=0,\\
  \int_{0}^{2/3}t\,d\bar{\alpha}^{(1)}(1.26+t,2.26) & =\int_{0}^{2/3}t(3t-1)\,dt=\frac{2^3}{3^3}-\frac{1}{2}\cdot\frac{2^2}{3^2}\geq 0.
\end{align*}
We compute the same quantities for $\beta^{(1)}$:
\begin{align*}
  &\bar{\beta}^{(1)}([1.26+2/3,2.26]\times\{2.26\})\\&=\int_{2/3}^{43/63}(3t-1)\,dt+\int_{43/63}^{1}(t(1.0155)+(1.9845)(43/63)-2.26)\,dt\\
  &=(3/2)((43/63)^2-(2/3)^2)-1/63+1.0155(1-(43/63)^2)/2\\&\hspace*{2in}+(20/63)(1.9845(43/63)-2.26)=0,
\end{align*}
and
\begin{align*}
  &\int_{2/3}^{1}t\,d\bar{\beta}^{(1)}(1.26+t,2.26)\\&=\int_{2/3}^{43/63}t(3t-1)\,dt+\int_{43/63}^{1}t(t(1.0155)+(1.9845)(43/63)-2.26)\,dt\\
  &=(43/63)^3-(2/3)^3-((43/63)^2-(2/3)^2)/2+(1.0155)(1-(43/63)^3)/3\\&\hspace*{1.5in}+(1-(43/63)^2)(1.9845(43/63)-2.26)/2=0.
\end{align*}

Now consider $h$ to be the affine shift of any increasing convex function $g$ (i.e., $h=\theta_1g+\theta_2, \theta_1>0, \theta_2\in\mathbb{R}$) such that $h(t)=t$ for $t=43/63$ and $t=\frac{2.26-1.9845*43/63}{1.0155}\approx 0.891679$. Observe that $\beta^{(1)}(1.26+t,2.26)\geq 0$ when $t\in[2/3,43/63]\cup[0.891679,1]$, and $\beta^{(1)}(1.26+t,2.26)<0$ when $t\in(43/63,0.891679)$. So we have $h(t)\leq t$ when $\beta^{(1)}<0$, and $h(t)\geq t$ when $\beta^{(1)}>0$. Now,
\begin{align*}
  &\int_{2/3}^{1}g(t)\,d\bar{\beta}^{(1)}(1.26+t,2.26)\\&=\frac{1}{\theta_1}\int_{2/3}^{1}h(t)\,d\bar{\beta}^{(1)}(1.26+t,2.26)\\
  &= \frac{1}{\theta_1}\left(\int_{2/3}^{1}(h(t)-t+t)\,d\bar{\beta}^{(1)}(1.26+t,2.26)\right)\\
  &=\frac{1}{\theta_1}\left(\int_{2/3}^{1}(h(t)-t)\,d\bar{\beta}^{(1)}(1.26+t,2.26)\right)\\
  &\geq 0
\end{align*}
where the first equality follows from $\bar{\beta}^{(1)}([1.26+2/3,2.26]\times\{2.26\})=0$, the third equality follows from $\int_{2/3}^{1}t\,d\bar{\beta}^{(1)}(1.26+t,2.26)=0$, and the last inequality follows because $\sgn{(h(t)-t)}=\sgn{\left(\beta^{(1)}(t)\right)}$ for every $t\in[2/3,1]$. The proof of $\bar{\beta}^{(1)}\succeq_{cvx}0$ is similar. Hence the result.\qed

{\bf Proof of Theorem \ref{thm:eg-2}:} We define $q$ as given in Figure \ref{fig:illust-2}, and construct $u$ such that $\nabla u=q$. We now construct the shuffling measure $\bar{\lambda}+\sum_i(\bar{\alpha}^{(i)}+\bar{\beta}^{(i)})$ as follows. We define $\alpha^{(i)}$ and $\beta^{(i)}$ same as in (\ref{eqn:alpha}) and (\ref{eqn:beta}) respectively, but with $\delta_1=\delta_2=\frac{((3+\sqrt{33})/8)-1}{(27-3\sqrt{33})/32}>\delta_2'$ and $a=(27-3\sqrt{33})/32$. We define $\lambda:D\rightarrow\mathbb{R}$, as in (\ref{eqn:shuffle-eg-2}).

We now construct $\gamma$ as follows. Let $\gamma_1=\gamma_1^Z+\gamma_1^{D\backslash Z}$, with $\gamma_1^Z=\bar{\mu}^Z$ and $\gamma_1^{D\backslash Z}=(\bar{\mu}^{D\backslash Z}+\sum_i(\bar{\alpha}^{(i)}+\bar{\beta}^{(i)}))_++\bar{\lambda}_+$. This is supported on $Z\cup([1.5,2.5]\times\{2.5\})\cup(\{2.5\}\times[1.5,2.5])\cup\{z:\lambda(z)\geq 0\}$. We define $\gamma_1^s$ as the Radon-Nikodym derivative of $\gamma_1$ w.r.t. the surface Lebesgue measure. It is easy to see that $\gamma_1^s(z)=\mu_s(z)+\sum_i(\alpha^{(i)}(z)+\beta^{(i)}(z))+\lambda(z)$ when $z\in(Z\cap D)([1.5,2.5]\times\{2.5\})\cup(\{2.5\}\times[1.5,2.5])\cup\{z:\lambda(z)\geq 0\}$, and zero otherwise. Now we specify $\gamma(\cdot~|~x)$ for every $x$ in the support of $\gamma_1$.
\begin{enumerate}
 \item[(a)] For $x\in Z$, we define $\gamma(y~|~x)=\delta_x(y)$. This is interpreted as no mass being transferred.
 \item[(b)] For $x\in([1.5,2.5]\times\{2.5\})\cup(\{2.5\}\times[1.5,2.5])$, we define $\gamma(y~|~x)=(\mu(y)+\mu_s(y)+\lambda(y))_-/\gamma_1^s(x)$ when $y\in\{y\in QRS\delta_2P_2P_1\delta_1Q:y_1-y_2=x_1-x_2\}$, and zero otherwise (see Figure \ref{fig:segmentation}). (By an abuse of notation, we denote the values of $\delta_1$ and $\delta_2$ as points marked in the Figure.) This is interpreted as transfer of $\gamma_1^s(x)$ from the boundary to the above line segment.
 \item[(c)] For $\{x:\lambda(x)>0\}$, we define $\gamma(y~|~x)=(\mu(y)+\mu_s(y))_-/\lambda(x)$ when $y\in\{y\in(\delta_1P_1\delta_1'\delta_1)\cup(\delta_2P_2\delta_2'\delta_2):y_1-y_2=x_1-x_2\}$, and zero otherwise (see Figure \ref{fig:segmentation}). This is interpreted as transfer of $\lambda(x)$ from the point $x$ on the line $x_1+x_2=2c+\delta_2$ to the above line segment.
\end{enumerate}
We then define $\gamma(F)=\int_{(x,y)\in F}\gamma_1(dx)\gamma(dy~|~x)$ for any measurable $F\in D\times D$. It is now easy to check that $\gamma_2^Z=\bar{\mu}^Z$, and $\gamma_2^{D\backslash Z}=(\bar{\mu}^{D\backslash Z}+\sum_i(\bar{\alpha}^{(i)}+\bar{\beta}^{(i)}))_-+\bar{\lambda}_-$. Thus we have $(\gamma_1-\gamma_2)^Z=0$, and $(\gamma_1-\gamma_2)^{D\backslash Z}=\bar{\mu}^{D\backslash Z}+\sum_i(\bar{\alpha}^{(i)}+\bar{\beta}^{(i)})+\bar{\lambda}$.

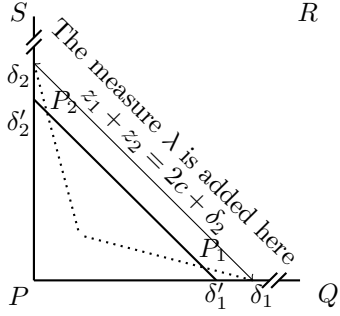
\begin{figure}[t]
\centering
\begin{tikzpicture}[scale=0.2,font=\small,axis/.style={very thick, -}]
\node at (-1,-1) {$P$};
\draw [axis,thick,-] (0,0)--(15.5,0);
\draw [axis,thick,-] (16,0.5)--(15,-0.5);
\draw [axis,thick,-] (16.6,0.5)--(15.6,-0.5);
\draw [axis,thick,-] (16.1,0)--(17.5,0);
\node [right] at (18,-1) {$Q$};
\draw [axis,thick,-] (0,0)--(0,15.5);
\draw [axis,thick,-] (0.5,16)--(-0.5,15);
\draw [axis,thick,-] (0.5,16.6)--(-0.5,15.6);
\draw [axis,thick,-] (0,16.1)--(0,17.5);
\node [above] at (18,16.5) {$R$};
\node [above] at (-1,16.5) {$S$};
\draw [axis,thick,-] (0,12)--(12,0);
\draw [thick,dotted] (0,14.4)--(3,3);
\draw [thick,dotted] (14.4,0)--(3,3);
\draw [thin,<->] (0,14.4)--(14.4,0);
\node [rotate=-45] at (8,8) {$z_1+z_2=2c+\delta_2$};
\node [rotate=-45] at (9.25,9.25) {The measure $\lambda$ is added here};
\node at (15,-1) {$\delta_1$};
\node at (12,-1) {$\delta_1'$};
\node at (11.8,2) {$P_1$};
\node [below] at (-1,15) {$\delta_2$};
\node [below] at (-1,12) {$\delta_2'$};
\node at (2,11.8) {$P_2$};
\end{tikzpicture}
\caption{The mechanism in Figure \ref{fig:illust-2} magnified near its left-bottom corner. The full support set $D$ is denoted by $PQRS$. The slope of the dotted lines equal $-(1-a)/a$ and $-a/(1-a)$, respectively, with $a=(27-3\sqrt{33})/32$. $P_1$, $P_2$ are points where the dotted lines intersect with the line denoted $\delta_1'\delta_2'$.}\label{fig:segmentation}
\end{figure}

The proof that $\gamma$ satisfies all the required conditions of Lemma \ref{lem:compslack} traces the same steps as in the proof of Theorem \ref{thm:eg-1}. The extra step here is to show that $\bar{\lambda}\succeq_{cvx}0$. We do this (i) by proving that both the measure $\bar{\lambda}$ and its mean vanish in its support set, and then (ii) by using the same arguments in Lemma \ref{lem:cvx}. We now compute
\begin{align*}
&\int_{1-\delta_2}^{1}\lambda(c+(t-1+\delta_2)/2,c+\delta_2-(t-1+\delta_2)/2)\,dt\\&=\int_{0}^{\delta_2-\sqrt{5/3}+1}(3at+c)\,dt\\&\hspace*{.25in}+\int_{1-(\sqrt{5/3}-1)}^{1}(3t(a-1/2)+3/2(1-(\sqrt{5/3}-1))-3a(1-\delta_2))\,dt\\&=3/2(\delta_2-\sqrt{5/3}+1)^2+c(\delta_2-\sqrt{5/3}+1)\\&\hspace*{1in}+3/2(a-1/2)(1-(2-\sqrt{5/3})^2)\\&\hspace*{1.5in}+(\sqrt{5/3}-1)(3/2(2-\sqrt{5/3})-3a(1-\delta_2))\\&=0
\end{align*}
where the last equality follows by putting in the values of $c$ and $\delta_2$. We also have
\begin{multline*}
  \int_{1-\delta_2}^{1}(t-1)\lambda(c+(t-1+\delta_2)/2,c+\delta_2-(t-1+\delta_2)/2)\,dt\\+\int_{1-\delta_1}^{1}(t-1)\lambda(c+\delta_1+(t-1+\delta_1)/2,c-(t-1+\delta_1)/2)\,dt=0
\end{multline*}
which follows because (i) $\lambda$ is symmetric about the line $t=1$, and (ii) $(t-1)$ is an odd function about the line $t=1$. The proof of $\bar{\lambda}\succeq_{cvx}0$ now traces the same steps as in Lemma \ref{lem:cvx}.\qed

{\bf Proof of Theorem \ref{thm:eg-3}:} We define $q$ as given in Figure \ref{fig:illust-3}, and construct $u$ such that $\nabla u=q$. Defining $\delta^*:=\delta_1-\delta_2$, we now construct the shuffling measure $\bar{\alpha}+\bar{\alpha}^{(o)}+\bar{\alpha}^{(h)}$, according to the terms defined in (\ref{eqn:shuffle-eg-3}), (\ref{eqn:shuffle-eg-3-o}), and (\ref{eqn:shuffle-eg-3-h}).

We now construct $\gamma$ as follows. Let  $\gamma_1=\gamma_1^Z+\gamma_1^{D\backslash Z}$, with $\gamma_1^Z=\bar{\mu}^Z$ and $\gamma_1^{D\backslash Z}=(\bar{\mu}^{D\backslash Z}+\bar{\alpha}+\bar{\alpha}^{(o)}+\bar{\alpha}^{(h)})_+$. This is supported on $Z\cup([0,1.2]\times\{1\})\cup(\{1.2\}\times[0,1])$. We define $\gamma_1^s$ as the Radon-Nikodym derivative of $\gamma_1$ w.r.t. the surface Lebesgue measure. It is easy to see that $\gamma_1^s(z)=\mu_s(z)+\alpha(z)+\alpha^{(o)}(z)+\alpha^{(h)}(z)$ when $z\in(Z\cap D)([0,1.2]\times\{1\})\cup(\{1.2\}\times[0,1])$, and zero otherwise. Now we specify $\gamma(\cdot~|~x)$ for every $x$ in the support of $\gamma_1$.
\begin{enumerate}
 \item[(a)] For $x\in Z$, we define $\gamma(y~|~x)=\delta_x(y)$. This is interpreted as no mass being transferred.
 \item[(b)] For $x\in([0,1+\delta^*]\times\{1\})\cup(\{1.2\}\times([0,\delta_1-0.2]\cup[2/3,1]))$, we define $\gamma(y~|~x)=(\mu(y)+\mu_s(y))_-/\gamma_1^s(x)$ when $y\in\{y\in D\backslash Z:y_1-y_2=x_1-x_2\}$, and zero otherwise. This is interpreted as transfer of $\gamma_1^s(x)$ from the boundary to the above line segment.
 \item[(c)] For $x\in([1+\delta^*,1.2]\times\{1\})$, we define $\gamma(y~|~x)=(\mu(y)+\mu_s(y))_-/\gamma_1^s(x)$ when $\{y_1-y_2=x_1-x_2,\,y_2\in[2/3,1]\}$, and zero otherwise. Again, this is interpreted as transfer of $\gamma_1^s(x)$ from the boundary to the above line segment.
 \item[(d)] For $x\in(\{1.2\}\times[\delta_1-0.2,2/3])$, we define $\gamma(y~|~x)=(\mu(y)+\mu_s(y))_-/\gamma_1^s(x)$, when $y\in\{y\in D\backslash Z:y_1-y_2=x_1-x_2\}\cup\{y_2=x_2,y_1-y_2\in[\delta^*,b_1-b_2]\}$, and zero otherwise. This is interpreted as a transfer of $\gamma_1^s(x)$ from the boundary to two line segments -- one is a $45^\circ$ line segment contained within $D\backslash Z$, and the other is a horizontal line contained within $\{y_1-y_2\in[\delta^*,b_1-b_2]\}$. The transfers occur respectively due to the shuffling measures $\alpha^{(o)}$, oblique transfer, and $\alpha^{(h)}$, horizontal transfer.
\end{enumerate}

We then define $\gamma(F)=\int_{(x,y)\in F}\gamma_1(dx)\gamma(dy~|~x)$ for any measurable $F\in D\times D$. It is now easy to check that $\gamma_2^Z=\bar{\mu}^Z$, and $\gamma_2^{D\backslash Z}=(\bar{\mu}^{D\backslash Z}+\bar{\alpha}+\bar{\alpha}^{(o)}+\bar{\alpha}^{(h)})_-$. Thus we have $(\gamma_1-\gamma_2)^Z=0$, and $(\gamma_1-\gamma_2)^{D\backslash Z}=(\bar{\mu}^{D\backslash Z}+\bar{\alpha}+\bar{\alpha}^{(o)}+\bar{\alpha}^{(h)})$.

The proof that $\gamma$ satisfies all the required conditions of Lemma \ref{lem:compslack} traces the same steps as in the proof of Theorem \ref{thm:eg-1}. The extra step here is to show that $\bar{\alpha}\succeq_{cvx}0$, $\bar{\alpha}^{(o)}+\bar{\alpha}^{(h)}\succeq_{cvx}0$. We show $\bar{\alpha}\succeq_{cvx}0$ by first proving that $\bar{\alpha}([0,1+\delta^*]\times\{1\})=0\leq\int_{0}^{1+\delta^*}t\,\bar{\alpha}(dt,1)$, and then by tracing the same steps as in the proof of Lemma \ref{lem:cvx}. The convex dominance for the other measure is also shown the same way. We now fill in the details. To obtain $\bar{\alpha}\succeq_{cvx}0$, we first verify that
\begin{align*}
  \bar{\alpha}([0,1+\delta^*]\times\{1\})&=\int_{0}^{1-\delta_2}(3t-1)\,dt+\int_{1-\delta_2}^{1+\delta^*}(2-3\delta_2)\,dt\\&=3/2(1-\delta_2)^2-(1-\delta_2)+\delta_1(2-3\delta_2)\\&=0,
\end{align*}
and then verify that
\begin{align*}
  &\int_{0}^{1+\delta^*}t\,\bar{\alpha}(dt,1)\\&=\int_{0}^{1-\delta_2}t(3t-1)\,dt+\int_{1-\delta_2}^{1+\delta^*}t(2-3\delta_2)\,dt\\&=(1-\delta_2)^3-(1-\delta_2)^2/2+(2-3\delta_2)((1+\delta^*)^2-(1-\delta_2)^2)/2\\&\approx0.103227\geq 0.
\end{align*}
We then complete the proof of $\bar{\alpha}\succeq_{cvx}0$ by tracing the same steps as in the proof of Lemma \ref{lem:cvx}. 

To prove that $\bar{\alpha}^{(o)}+\bar{\alpha}^{(h)}\succeq_{cvx}0$, we first verify that
\begin{align*}
  &(\bar{\alpha}^{(o)}+\bar{\alpha}^{(h)})(\{1.2\}\times[0,1])\\&=\int_{0}^{1.2-\delta_1}(3t-1.2)\,dt+\int_{1.2-\delta_1}^{1}(2.4-3\delta_1)\,dt\\&\hspace*{1.5in}+\int_{\delta_1-0.2}^{\delta_2}3(t-\delta_1+0.2)\,dt+\int_{\delta_2}^{2/3}3(0.2-\delta^*)\,dt\\&=3/2(1.2-\delta_1)^2-1.2(1.2-\delta_1)+(\delta_1-0.2)(2.4-3\delta_1)\\&\hspace*{1.5in}+3/2(0.2-\delta^*)^2+(2-3\delta_2)(0.2-\delta^*)\\&=0,
\end{align*}
and then verify that
\begin{align*}
  &\int_{0}^{1}t\,(\bar{\alpha}^{(o)}+\bar{\alpha}^{(h)})(1.2,dt)\\&=\int_{0}^{1.2-\delta_1}t(3t-1.2)\,dt+\int_{1.2-\delta_1}^{1}t(2.4-3\delta_1)\,dt\\&\hspace*{1.5in}+\int_{\delta_1-0.2}^{\delta_2}3t(t-\delta_1+0.2)\,dt+\int_{\delta_2}^{2/3}3t(0.2-\delta^*)\,dt\\&=(1.2-\delta_1)^3-0.6(1.2-\delta_1)^2+(1.2-3\delta_1/2)(1-(1.2-\delta_1)^2)\\&\hspace*{1in+(\delta_2^3-(\delta_1-0.2)^3)}-3/2(\delta_1-0.2)(\delta_2^2-(\delta_1-0.2)^2)\\&\hspace*{1.5in}+3/2(4/9-\delta_2^2)(0.2-\delta_1+\delta_2)\approx0.137171\geq 0.
\end{align*}
The proof of $\bar{\alpha}^{(o)}+\bar{\alpha}^{(h)}\succeq_{cvx}0$ is then completed by tracing the same steps of the proof of Lemma \ref{lem:cvx}.\qed

\section{Proofs from Section 3 and Section 4}\label{app:b}
{\bf Proof of Lemma \ref{lem:V-V'}:} Recall that the marginal profit function is defined as $\bar{V}(\delta)=\delta g(u_1(\delta),\delta)-\int_{\delta}^{b_1}g(u_1(\tilde{\delta}),\tilde{\delta})\,d\tilde{\delta}+\int_{\delta}^{b_1}(\tilde{\delta} q_1(\tilde{\delta})-u_1(\tilde{\delta}))\frac{\partial}{\partial u_1}g(u_1(\tilde{\delta}),\tilde{\delta})\,d\tilde{\delta}$. Consider the term $\int_{\delta}^{b_1}g(u_1(\tilde{\delta}),\tilde{\delta})\,d\tilde{\delta}$.
\begin{multline*}
  \int_{\delta}^{b_1}g(u_1(\tilde{\delta}),\tilde{\delta})\,d\tilde{\delta}=\int_{\delta}^{b_1}\int_{z:z\in D\backslash Z,z_1-z_2=\tilde{\delta}}f(z)\,dz\,d\tilde{\delta}\\=\int_{z:z\in D\backslash Z,z_1-z_2\geq\delta}f(z)\,dz.
\end{multline*}
We use integration by parts on $\int_X(z.\nabla h(z)-h(z))f(z)\,dz$, and we obtain $\int_X h(z)\nu(z)\,dz+\int_{\partial X}h(z)\nu_s(z)\,dz$. Here,
$$
   \nu(z):=-z\cdot\nabla f(z)-3f(z),\,z\in X;\,\,
 \nu_s(z):=(z\cdot n(z))f(z),\,z\in\partial X.
$$
We regard $\nu$ as the density of a measure that is absolutely continuous with two-dimensional Lebesgue measure, and $\nu_s$ as the density of a measure that is absolutely continuous with the surface Lebesgue measure. Defining the measure $\bar{\nu}(A):=\int_X\mathbf{1}_A(z)\nu(z)\,dz+\int_{\partial X}\mathbf{1}_A(z)\nu_s(z)\,dz$ for all measurable sets $A$ and substituting $h(z)=1\,\forall z\in X$, we get $\int_X(z.\nabla h(z)-h(z))f(z)\,dz=\int_X-f(z)\,dz=\bar{\nu}(X)$. So we have
$$
  \int_{\delta}^{b_1}g(u_1(\tilde{\delta}),\tilde{\delta})\,d\tilde{\delta}=\int_{z:z\in D\backslash Z,z_1-z_2\geq\delta}f(z)\,dz=-\bar{\nu}({z\in D\backslash Z: z_1-z_2\geq\delta}).
$$

We now compare the components of the measures $\bar{\mu}$ and $\bar{\nu}$ in some set $X\subseteq D$. The functions $\mu$ and $\nu$ are clearly equal. The function $\nu_s(z)$ is nonzero in every $z\in\partial X$, whereas $\mu_s(z)$ is nonzero only when $z\in(X\cap\partial D)$. In other words, $\nu_s(z)$ is also nonzero for every $z\in(\partial X\backslash\partial D)$, when compared with $\mu_s(z)$. We now show that the terms $\delta g(u_1(\delta),\delta)$ and $\int_{\delta}^{b_1}(\tilde{\delta} q_1(\tilde{\delta})-u_1(\tilde{\delta}))\frac{\partial}{\partial u_1}g(u_1(\tilde{\delta}),\tilde{\delta})\,d\tilde{\delta}$ cancel those ``extra'' nonzero values. In other words, we show that
\begin{multline}\label{eqn:app-1}
  \delta g(u_1(\delta),\delta)+\int_{\delta}^{b_1}(\tilde{\delta} q_1(\tilde{\delta})-u_1(\tilde{\delta}))\frac{\partial}{\partial u_1}g(u_1(\tilde{\delta}),\tilde{\delta})\,d\tilde{\delta}\\=(\bar{\mu}-\bar{\nu})({z:z\in D\backslash Z, z_1-z_2\geq\delta}),
\end{multline}
and this completes our proof. We show this for the mechanism depicted below in Figure \ref{fig:c-appendix}, with the exclusion region $Z$ being a convex, decreasing set. Observe that all the mechanisms depicted in Figures \ref{fig:a-ini}--\ref{fig:e'-ini} have this property.

\begin{figure}[t]
\centering
\begin{tikzpicture}[scale=0.3,font=\small,axis/.style={very thick, -}]
\draw [axis,thick,-] (0,0)--(15,0);
\draw [axis,thick,-] (0,0)--(0,12);
\draw [axis,thick,-] (0,12)--(15,12);
\draw [axis,thick,-] (15,0)--(15,12);
\draw [thin,-] (11.5,0)--(15,3.5);
\node [rotate=45] at (12.5,1.5) {\tiny case 1};
\node at (11.5,-1) {$\delta^{(1)}$};
\draw [axis,thick,-] (0,6) to[in=100,out=-10] (9,0);
\node at (-1.3,6) {$-\delta_2$};
\node at (-1.3,12) {$-b_2$};
\draw [thin,-] (0,8)--(4,12);
\node [rotate=45] at (1.25,10.25) {\tiny case 3};
\node at (-1.3,8) {$\delta^{(3)}$};
\draw [thin,-] (8.4,1.9)--(15,8.5);
\node [rotate=45] at (10.75,4.75) {\tiny case 2};
\draw [thin,dotted] (8.4,1.9)--(6.5,0);
\node at (6,-1) {$\delta^{(2)}$};
\node at (7.25,1.5) {$x$};
\draw [thick,dotted] (15,12)--(3,0);
\node at (1.5,-1) {$b_1-b_2$};
\node at (8.5,-1) {$\delta_1$};
\node at (15,-1) {$b_1$};
\node at (3.5,2) {\Large$Z$};
\end{tikzpicture}
\caption{The structure of a typical mechanism. The variables marked in the boundary denote the values of $\delta$.}\label{fig:c-appendix}
\end{figure}
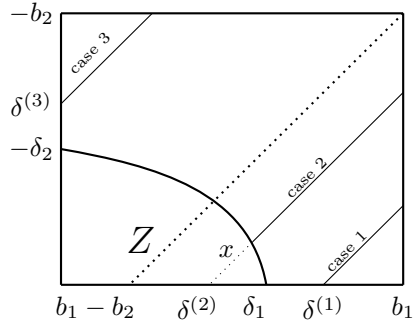

Define $z_2^*:[-\delta_2,\delta_1]\rightarrow[c,c+\delta_2]$ as
\begin{multline*}
  z_2^*(\delta)=\{z_2\in[c,c+b_2]:u(z_2+\delta,z_2)=0,\\u(z_2+\delta+h,z_2+h)>0\mbox{ for all }h>0\mbox{ sufficiently small}\}.
\end{multline*}
Observe that $z_2^*(\delta)$ is the value of $z_2$ in the curve $(-\delta_2\delta_1)$ that separates $Z$ and $D\backslash Z$, when $z_1-z_2=\delta$. So $z_2^*$ is decreasing with $\delta$, with $z_2^*(-\delta_2)=c+\delta_2$, and $z_2^*(\delta_1)=c$. Also, the curve $(-\delta_2\delta_1)$ can be represented by the points $\{(\delta+z_2^*(\delta),z_2^*(\delta)),\,\delta\in[-\delta_2,\delta_1]\}$. We now compute $u_1(\delta)$ for every $\delta\in[-b_2,b_1]$. We use the fact that $u(z)=0$ when $z\in Z$, and $u(z)=u_1(z_1-z_2)+z_2$.

$$
  u_1(\delta)=\begin{cases}-c-\delta_2-\int_{\delta}^{-\delta_2}q_1(\tilde{\delta})\,d\tilde{\delta}&\mbox{if }\delta\in[-b_2,-\delta_2]\\-z_2^*(\delta)&\mbox{if }\delta\in[-\delta_2,\delta_1]\\-c+\int_{\delta_1}^{\delta}q_1(\tilde{\delta})\,d\tilde{\delta}&\mbox{if }\delta\in[\delta_1,b_1].\end{cases}
$$

Using the values of $u_1(\delta)$, we now compute $g(u_1(\delta),\delta)$.
$$
  g(u_1(\delta),\delta)=\frac{1}{b_1b_2}\begin{cases}b_2+\delta&\mbox{if }\delta\in[-b_2,-\delta_2]\\b_2+c+u_1(\delta)&\mbox{if }\delta\in[-\delta_2,b_1-b_2]\\b_1-\delta+c+u_1(\delta)&\mbox{if }\delta\in[b_1-b_2,\delta_1]\\b_1-\delta&\mbox{if }\delta\in[\delta_1,b_1].\end{cases}
$$
For ease of notation, we drop the factor $1/(b_1b_2)$ in the rest of the proof.

To show (\ref{eqn:app-1}), we consider the following three cases: (i) $\delta\in[\delta_1,b_1]$, (ii) $\delta\in[-\delta_2,\delta_1]$, and (iii) $\delta\in[-b_2,-\delta_2]$ (see Figure \ref{fig:c-appendix}). In case (i), consider $\delta=\delta^{(1)}$. The measures $\bar{\mu}$ and $\bar{\nu}$ differ only in that $\bar{\nu}$ has an extra nonzero line measure on the line segment $\{z\in D\backslash Z:z_1-z_2=\delta^{(1)}\}$. We thus have
\begin{multline*}
  (\bar{\mu}-\bar{\nu})({z\in D\backslash Z:z_1-z_2\geq\delta^{(1)}})=-\int_{\substack{z\in D\backslash Z:\\z_1-z_2=\delta^{(1)}}}\nu_s(z)\,dz\\=\int_{\substack{z\in D\backslash Z:\\z_1-z_2=\delta^{(1)}}}(z_1-z_2)f(z)\,dz=\delta^{(1)}g(u_1(\delta^{(1)}),\delta^{(1)}).
\end{multline*}
Now observe that $\frac{\partial}{\partial u_1}g(u_1(\delta),\delta)=0$ when $\delta\in[\delta_1,b_1]$. So (\ref{eqn:app-1}) holds.

In case (ii), consider $\delta=\delta^{(2)}$. Then, $\bar{\nu}$ has an extra nonzero line measure on (i) the line segment $\{z\in D\backslash Z:z_1-z_2=\delta^{(2)}\}$, and (ii) the curve $\delta_1x$. Now we have
\begin{align*}
  &(\bar{\mu}-\bar{\nu})({z\in D\backslash Z:z_1-z_2\geq\delta^{(2)}})\\&=-\int_{\substack{z\in D\backslash Z:\\z_1-z_2=\delta^{(2)}}}\nu_s(z)\,dz-\int_{z\in\mbox{curve }\delta_1x}\nu_s(z)\,dz\\&=\int_{\substack{z\in D\backslash Z:\\z_1-z_2=\delta^{(2)}}}(z_1-z_2)f(z)\,dz-\int_{z\in\mbox{curve }\delta_1x}(z\cdot n(z))f(z)\,dz\\&=\delta^{(2)}g(u_1(\delta^{(2)}),\delta^{(2)})+\int_{\delta^{(2)}}^{\delta_1}((z_2^*(\delta)+\delta)q_1(\delta)+z_2^*(\delta)(1-q_1(\delta)))\,d\delta\\&=\delta^{(2)}g(u_1(\delta^{(2)}),\delta^{(2)})+\int_{\delta^{(2)}}^{\delta_1}(z_2^*(\delta)+\delta q_1(\delta))\,d\delta.
\end{align*}
where the third equality follows because (i) $\nabla u=q$, and (ii) $q_1(\delta)+q_2(\delta)=1$ for $z\in D\backslash Z$. Now observe that we have $\frac{\partial}{\partial u_1}g(u_1(\delta),\delta)=1$ when $\delta\in[-\delta_2,\delta_1]$. Therefore,
\begin{multline*}
  \delta^{(2)}g(u_1(\delta^{(2)}),\delta^{(2)})+\int_{\delta^{(2)}}^{b_1}(\delta q_1(\delta)-u_1(\delta))\frac{\partial}{\partial u_1}g(u_1(\delta),\delta)\,d\delta\\=\delta^{(2)}g(u_1(\delta^{(2)}),\delta^{(2)})+\int_{\delta^{(2)}}^{\delta_1}(z_2^*(\delta)+\delta q_1(\delta))\,d\delta.
\end{multline*}
Eq. (\ref{eqn:app-1}) thus holds for case 2.

In case (iii), consider $\delta=\delta^{(3)}$. Then, $\bar{\nu}$ has an extra nonzero line measure on (i) the line segment $\{z\in D\backslash Z:z_1-z_2=\delta^{(3)}\}$, and (ii) the curve $(-\delta_2\delta_1)$. Now by an analysis similar to case 2, it follows that
\begin{align*}
  &(\bar{\mu}-\bar{\nu})({z\in D\backslash Z:z_1-z_2\geq\delta^{(3)}})\\&=\delta^{(3)}g(u_1(\delta^{(3)}),\delta^{(3)})+\int_{-\delta_2}^{\delta_1}(z_2^*(\delta)+\delta q_1(\delta))\,d\delta\\&=\delta^{(3)}g(u_1(\delta^{(3)}),\delta^{(3)})+\int_{\delta^{(3)}}^{b_1}(\delta q_1(\delta)-u_1(\delta))\frac{\partial}{\partial u_1}g(u_1(\delta),\delta)\,d\delta.
\end{align*}
\qed

{\bf Proof of Theorem \ref{thm:menu-3}:}

{\bf Step 1:} We compute the virtual valuation function of the mechanism depicted in Figure \ref{fig:c-new}.
$$
  V(\delta)=\begin{cases}V(-\delta_2)-(c-2b_2+3\delta_2)(\delta+\delta_2)+\frac{3}{2}\frac{\delta_2-h}{\delta_2+\delta^*}(\delta+\delta_2)^2&\delta\in[-\delta_2,\delta^*]\\V(\delta^*)-(c-2b_2)(\delta-\delta^*)+\frac{3h}{2}\frac{(\delta_1-\delta)^2}{\delta_1-\delta^*}-\frac{3h}{2}(\delta_1-\delta^*)&\delta\in[\delta^*,b']\end{cases}
$$
where $b_1-b_2$ is indicated as $b'$. The expression for $V(\delta)$ when $\delta\in[-b_2,-\delta_2]\cup[\delta_1,b_1]$ remains the same as in (\ref{eqn:V-fig-a}), and the expression when $\delta\in[b',\delta_1]$ is given by
$$
  V(\delta)=V(b')-(c-2b_1)(\delta-b')-\frac{3}{2}(\delta^2-(b')^2-\frac{h}{\delta_1-\delta^*}((\delta_1-\delta)^2-(\delta_1-b')^2)).
$$

{\bf Step 2:} The mechanism has six unknowns: $h$, $\delta^*$, $\delta_1$, $\delta_2$, $a_1$, and $a_2$. Since $q=\nabla u$, a conservative field, we must have the slope of the line separating $(0,0)$ and $(1-a_2,a_2)$ allocation regions satisfying $-\frac{1-a_2}{a_2}=\frac{h-\delta_2}{h+\delta^*}$, which yields $a_2=\frac{h+\delta^*}{\delta_2+\delta^*}$. Similarly, the slope of the line separating $(0,0)$ and $(a_1,1-a_1)$ allocation regions must satisfy $-\frac{a_1}{1-a_1}=\frac{h}{\delta_1-\delta^*-h}$, which yields $a_1=\frac{h}{\delta_1-\delta^*}$.

We compute the other four unknowns by equating $\bar{\mu}(Z)=0$, $V(\delta^*)=0$, $\int_{-\frac{b_2}{3}}^{\delta^*}V(\delta)\,d\delta=0$, and $\int_{\delta^*}^{\frac{b_1}{3}}V(\delta)\,d\delta=0$. The latter three conditions follow from Theorem \ref{thm:Myerson} 3(b) and 3(c) because $q_1(\delta)=1-a_2$ for $\delta\in[-\frac{b_2}{3},\delta^*]$, and $q_1(\delta)=a_1$ for $\delta\in[\delta^*,\frac{b_1}{3}]$. We then have the following implications.

\begin{equation}
 \bar{\mu}(Z)=0\Rightarrow-(3h/2+c)(\delta_1+\delta_2)-3\delta_2\delta^*/2+b_1b_2=0.\label{eqn:fig-c-delta2}
\end{equation}
From (\ref{eqn:menu-1-V'}), we see that $V(\delta^*)$ is the negative of $\bar{\mu}$ measure of the nonconvex pentagon bound by $(c,c)$, $(c,c+b_2)$, $(c+b_2+\delta^*,c+b_2)$, $(c+h+\delta^*,c+h)$, and $(c+\delta_1,c)$. Thus
\begin{equation}
 V(\delta^*)=0\Rightarrow-(3h/2+c)(\delta_1-\delta^*)-2b_2\delta^*-b_2^2/2+b_1b_2=0.\label{eqn:fig-c-delta1}
\end{equation}
The expression for $\int_{-\frac{b_2}{3}}^{\delta^*}V(\delta)\,d\delta$ remains the same as in (\ref{eqn:fig-b-first-replica}).
\begin{equation}
 \int_{-\frac{b_2}{3}}^{\delta^*}V(\delta)\,d\delta=0\Rightarrow\frac{1}{54}(4b_2+3\delta^*)(b_2+3\delta^*)^2-\frac{c+h+\delta^*}{2}(\delta^*+\delta_2)^2=0.\label{eqn:fig-c-3}
\end{equation}
Next
\begin{align}
 &\int_{\delta^*}^{\frac{b_1}{3}}V(\delta)\,d\delta=0\nonumber\\&\Rightarrow\int_{\delta^*}^{b_1-b_2}V(\delta)\,d\delta+\int_{b_1-b_2}^{\delta_1}V(\delta)\,d\delta+\int_{\delta_1}^{\frac{b_1}{3}}V(\delta)\,d\delta=0\nonumber\\&\Rightarrow-\frac{2}{27}b_1^3+b_1b_2\delta_1+\frac{1}{2}(-b_2^2\delta_1+2b_2\delta^*(-2\delta_1+\delta^*)-(\delta_1-\delta^*)^2 (c+2h))=0\nonumber\\&\Rightarrow\frac{c+h}{2}(\delta_1-\delta^*)^2-b_2(\delta^*)^2+\frac{b_2\delta^*}{2}(2b_1-b_2)-\frac{2}{27}b_1^3+(\delta_1-\delta^*)V(\delta^*)=0.\label{eqn:fig-c-fourth}
\end{align}

The values of $h$, $\delta^*$, $\delta_1$ and $\delta_2$ can be obtained by solving these four equations simultaneously. We now proceed to prove that $(h,\delta^*)$ can be computed by solving (\ref{eqn:fig-b-second}) and (\ref{eqn:fig-c-second}) simultaneously.

We first find an expression for $\delta_2+\delta^*$. Rearranging (\ref{eqn:fig-c-delta2}), we have
\begin{equation}\label{eqn:fig-c-delta2-clear}
  \delta_2=\frac{b_1b_2-(3h/2+c)\delta_1}{3/2(h+\delta^*)+c}
\end{equation}
Similarly, rearranging (\ref{eqn:fig-c-delta1}), we have $\delta_1=\delta^*+\frac{b_1b_2-2b_2\delta^*-b_2^2/2}{3h/2+c}$. Substituting $\delta_1$ in (\ref{eqn:fig-c-delta2-clear}), we get
$$
  \delta_2+\delta^*=\frac{b_1b_2-(3h/2+c)(\delta_1-\delta^*)+\frac{3}{2}(\delta^*)^2}{3/2(h+\delta^*)+c}=\frac{(b_2+3\delta^*)(b_2+\delta^*)/2}{3/2(h+\delta^*)+c}.
$$
Plugging this into (\ref{eqn:fig-c-3}), we eliminate $\delta_2$. Similarly, plugging $(\delta_1-\delta^*)=\frac{b_1b_2-2b_2\delta^*-b_2^2/2}{3h/2+c}$ (obtained by rearranging (\ref{eqn:fig-c-delta1})) in (\ref{eqn:fig-c-fourth}), we eliminate $\delta_1$. We thus solve the following equations:
\begin{align}
 &27(c+h+\delta^*)(b_2+\delta^*)^2-4(4b_2+3\delta^*)(3/2(h+\delta^*)+c)^2=0\nonumber\\
 &(2b_1^3/27+b_2(\delta^*)^2-b_2\delta^*(b_1-b_2/2))(3h/2+c)^2\nonumber\\&\hspace*{1.5in}-\frac{(c+h)}{2}(2b_2\delta^*+b_2^2/2-b_1b_2)^2=0\nonumber
\end{align}
which are (\ref{eqn:fig-b-second}) and (\ref{eqn:fig-c-second}), respectively.

{\bf Step 3:} We now proceed to evaluate the bounds of the variables, in order to the prove the existence of a meaningful solution that solves (\ref{eqn:fig-b-second}) and (\ref{eqn:fig-c-second}). In Step 3a, we first prove that the condition $q_1$ increasing in Problem (\ref{eqn:optim-myerson}) is satisfied only when the left-hand side of (\ref{eqn:fig-c-third}) is nonnegative. In Steps 3b--3d, we prove the bounds on $(h,\delta^*)$, $\delta_1$ and $\delta_2$, respectively.

{\bf Step 3a:} We compute the values of $c$ where monotonicity of $q_1$ holds. Observe that monotonicity of $q_1$ holds when $1-a_2\leq a_1$, and that of $q_2$ holds when $1-a_1\leq a_2$. We thus verify if $a_1+a_2\geq 1$. On substituting the expressions for $a_1$ and $a_2$, we obtain
\begin{multline*}
  \frac{(h+\delta^*)(3/2(h+\delta^*)+c)}{(3/2(\delta^*)^2+2b_2\delta^*+b_2^2/2)}+\frac{h(3h/2+c)}{(b_1b_2-2b_2\delta^*-b_2^2/2)}\geq 1\\\Rightarrow (b_2^2+4b_2\delta^*-3\delta^*h)(b_2^2+4b_2\delta^*-2c\delta^*-3\delta^*h)\\-2b_1b_2(b_2^2+4b_2\delta^*-2c(\delta^*+h)-3h(2\delta^*+h))\geq 0.
\end{multline*}
The monotonicity condition thus amounts to verifying if the left-hand side of (\ref{eqn:fig-c-third}) is nonnegative. We verify via Mathematica that the expression is nonnegative for $c\in[b_2,\alpha_2]$, and that $\alpha_2\leq 2(t-1.4)(b_1-b_2)+1.4b_2$ (see \ref{app:c.2}(6--7)). We thus compute the bounds of $h$, $\delta^*$, $\delta_1$, and $\delta_2$ when $c\in[b_2,2(t-1.4)(b_1-b_2)+1.4b_2]$.

{\bf Step 3b:} We now evaluate the bounds on $\delta_1$ and $\delta_2$ in order to prove the existence of a meaningful solution that solves (\ref{eqn:fig-b-second}) and (\ref{eqn:fig-c-second}) simultaneously. Specifically, we now prove that there exists $(h,\delta^*)\in[0,\frac{2b_2-c}{3}]\times[0,b_1-b_2]$ that simultaneously solves (\ref{eqn:fig-b-second}) and (\ref{eqn:fig-c-second}). We show this using the same techniques as in Step 3 of proof of Theorem \ref{thm:menu-1}.

We first show that $\delta^*|_h$ is continuous in $h$, and decreases as $h$ increases. We rewrite (\ref{eqn:fig-c-second}) as follows.
\begin{multline*}
  ((3h/2+c)^2-2b_2(c+h))(b_2(\delta^*)^2-b_2(b_1-b_2/2)\delta^*)\\+\frac{2b_1^3}{27}(3h/2+c)^2-\frac{b_2^2}{2}(b_1-b_2/2)^2(c+h)=0
\end{multline*}
Solving this equation for $\delta^*$, we obtain
$$
  \delta^*|_h=\frac{b_1}{2}-\frac{b_2}{4}-\frac{3b_2-2b_1}{4}\frac{3h/2+c}{3}\sqrt{\frac{(8b_1-3b_2)/3}{2b_2^2(c+h)-b_2(3h/2+c)^2}}
$$
To prove the continuity of $\delta^*|_h$ in $h$, it suffices to show that the term $2b_2^2(c+h)-b_2(3h/2+c)^2$ is strictly positive for the desired values of $h$, since the expression is quadratic with negative coefficient on $h^2$. At $h=0$, the expression equals $b_2c(2b_2-c)>0$ for all $c\leq 2b_2$, and at $h=\frac{2b_2-c}{3}$, the expression equals $b_2/12(2b_2-c)(2b_2+3c)>0$. Thus $\delta^*|_h$ is continuous in $h$ when $h\in[0,\frac{2b_2-c}{3}]$.

To prove that $\delta^*|_h$ decreases in $h$, it suffices to prove that $\frac{c+h}{(3h/2+c)^2}$ decreases with $h$. We prove it by differentiating the term w.r.t. $h$, and proving that the numerator is nonpositive. The numerator of the derivative is $(3h/2+c)(-3h/2-2c)\leq 0$. We have thus shown that $\delta^*|_h$ decreases as $h$ increases.

We now show that $\delta^*|_{h=\frac{2b_2-c}{3}}\leq b_1-b_2$. From (\ref{eqn:fig-c-second}), we obtain
$$
  \delta^*|_{h=\frac{2b_2-c}{3}}=\frac{b_1}{2}-\frac{b_2}{4}-\frac{(3b_2-2b_1)}{4}\frac{(2b_2+c)}{3b_2}\sqrt{\frac{b_2(8b_1-3b_2)}{(2b_2-c)(2b_2+3c)}}.
$$
Observe that (a) $2b_2+c\geq 3b_2$ since $c\geq b_2$, and (b) $\frac{b_2(8b_1-3b_2)}{(2b_2-c)(2b_2+3c)}\geq 1$ since $\min_{b_1\in[b_2,3b_2/2]}(b_2(8b_1-3b_2))=5b_2^2$ and $\max_{c\geq b_2}((2b_2-c)(2b_2+3c))=5b_2^2$. We thus have
$$
\delta^*|_{h=\frac{2b_2-c}{3}}\leq\frac{b_1}{2}-\frac{b_2}{4}-\frac{3b_2-2b_1}{4}=b_1-b_2.
$$

We now show that $\delta^*|_{h=0}\geq 0$. From (\ref{eqn:fig-c-second}), we obtain
$$
  \delta^*|_{h=0}=\frac{b_1}{2}-\frac{b_2}{4}-\frac{(3b_2-2b_1)}{36}\sqrt{\frac{3c(8b_1-3b_2)}{b_2(2b_2-c)}}.
$$
When $b_1\in[b_2,3b_2/2]$, $\delta^*|_{h=0}$ decreases when $c$ increases from $b_2$ to $2b_2$. We thus obtain a lower bound on $\delta^*$ by substituting an upper bound on $c$. We use $c\leq 1.4b_2+2(1.75-1.4)(b_1-b_2)$ instead of $1.4b_2+2(t-1.4)(b_1-b_2)$ to simplify the calculation.
$$
  \delta^*|_{h=0}\geq\frac{2b_1-b_2}{4}-\frac{3b_2-2b_1}{36}\sqrt{\frac{2.1(b_1+b_2)(8b_1-3b_2)}{b_2(1.3b_2-0.7b_1)}}
$$
To prove that this expression is nonnegative, it suffices to prove that $(9(2b_1-b_2))^2(b_2(1.3b_2-0.7b_1))\geq(3b_2-2b_1)^2(2.1(b_1+b_2)(8b_1-3b_2))$. Simplifying this expression, we obtain
$$
  -67.2b_1^4-67.2b_1^3b_2+648b_1^2b_2^2-648b_1b_2^3+162b_2^4\geq 0
$$
which is true for $b_1\in[b_2,3b_2/2]$. We have thus shown that $\delta^*|_{h=0}\geq 0$.

We now proceed to prove that (a) substituting the entry points $(h|_{\delta^*=0},0)$ and $(\frac{2b_2-c}{3},\delta^*|_{h=\frac{2b_2-c}{3}})$ makes the expression nonpositive, and (b) substituting the exit points $(0,\delta^*|_{h=0})$ and $(h|_{\delta^*=b_1-b_2},b_1-b_2)$ on left-hand side of (\ref{eqn:fig-b-second}) makes the expression nonnegative.

We now consider the entry point $(h|_{\delta^*=0},0)$. Substituting $\delta^*=0$ on (\ref{eqn:fig-c-second}), we obtain $2b_1^3(3h/2+c)^2/27-(c+h)b_2^2(b_1-b_2/2)^2/2=0$. Let $h=h|_{\delta^*=0}$ solve this equation. Substituting $(h,\delta^*)=(h|_{\delta^*=0},0)$ in (\ref{eqn:fig-b-second}), we get $27(c+h|_{\delta^*=0})b_2^2-16b_2(3h|_{\delta^*=0}/2+c)^2$. We now prove that this expression is nonpositive.
\begin{align*}
  &27(c+h|_{\delta^*=0})b_2-16(3h|_{\delta^*=0}/2+c)^2\\
  &=27(c+h|_{\delta^*=0})b_2-\frac{108(c+h|_{\delta^*=0})b_2^2(b_1-b_2/2)^2}{b_1^3}\\&=\frac{27(c+h|_{\delta^*=0})b_2}{b_1^3}(b_1^3-4b_2(b_1-b_2/2)^2)\\&=\frac{27(c+h|_{\delta^*=0})b_2}{b_1^3}(b_1-b_2)(b_1^2-3b_1b_2+b_2^2)\leq 0
\end{align*}
when $b_1\in[b_2,3b_2/2]$. The first equality occurs since $h|_{\delta^*=0}$ solves $2b_1^3(3h/2+c)^2/27-(c+h)b_2^2(b_1-b_2/2)^2/2=0$.

We now consider the entry point $(\frac{2b_2-c}{3},\delta^*|_{h=\frac{2b_2-c}{3}})$. We first prove that $\delta^*|_{h=\frac{2b_2-c}{3}}\geq-\frac{2b_2}{3}$.
\begin{align*}
  &\delta^*|_{h=\frac{2b_2-c}{3}}\geq-\frac{2b_2}{3}\\&\Rightarrow\frac{b_1}{2}-\frac{b_2}{4}-\frac{(3b_2-2b_1)}{4}\frac{(2b_2+c)}{3b_2}\sqrt{\frac{b_2(8b_1-3b_2)}{(2b_2-c)(2b_2+3c)}}\geq-\frac{2b_2}{3}\\&\Rightarrow (6b_1+5b_2)^2b_2(2b_2-c)(2b_2+3c)\geq (3b_1-2b_2)^2(2b_2+c)^2(8b_1-3b_2)
\end{align*}
Consider $b_2\leq c\leq tb_2$, and $b_1\in[b_2,3b_2/2]$ (recall that $t=3(37+3\sqrt{465})/176$). The inequality then clearly holds, since we have (a) $6b_1+5b_2\geq 8b_1-3b_2$, (b) $(6b_1+5b_2)(2b_2-c)\geq (6b_1+5b_2)b_2/4\geq b_2^2\geq (3b_2-2b_1)^2$, (c) $b_2\geq 3b_2-2b_1$, and (d) $2b_2+3c\geq 2b_2+c$. We have thus shown that $\delta^*|_{h=\frac{2b_2-c}{3}}\geq-\frac{2b_2}{3}$.

Substituting $(h,\delta^*)=(\frac{2b_2-c}{3},\delta^*|_{h=\frac{2b_2-c}{3}})$ on the left-hand side of (\ref{eqn:fig-b-second}), we get
$$
  (b_2-c)(2b_2^2+4b_2c+3(b_2+c)\delta^*|_{h=\frac{2b_2-c}{3}})\leq 0
$$
for $c\leq b_2$, since $\delta^*|_{h=\frac{2b_2-c}{3}}\geq-\frac{2b_2}{3}$.

The expression is thus nonpositive at the entry points. We now proceed to prove that the expression is nonnegative at the exit points.

We now consider the exit point $(0,\delta^*|_{h=0})$. Substituting $h=0$ on the left-hand side of (\ref{eqn:fig-c-second}), we obtain
$$
  \delta^*|_{h=0}=\frac{b_1}{2}-\frac{b_2}{4}-\frac{(3b_2-2b_1)}{36}\sqrt{\frac{3c(8b_1-3b_2)}{b_2(2b_2-c)}}
$$

Substituting $(h,\delta^*)=(0,\delta^*|_{h=0})$ on the left-hand side of (\ref{eqn:fig-b-second}), we get
$$
  27b_2^2c-16b_2c^2+(27b_2^2+6b_2c-12c^2)\delta^*|_{h=0}+9(2b_2-c)(\delta^*|_{h=0})^2.
$$
From Mathematica, this is nonnegative when $c\in[b_2,2(t-1.4)(b_1-b_2)+1.4b_2]$ (see \ref{app:c.2}(1)).

We now consider the exit point $(h|_{\delta^*=b_1-b_2},b_1-b_2)$. Substituting $\delta^*=b_1-b_2$ in (\ref{eqn:fig-c-second}), we obtain
$$
  h|_{\delta^*=b_1-b_2}=\frac{9b_2^2-4c(b_1+3b_2)+3b_2\sqrt{9b_2^2+4c(b_1+3b_2)}}{6(b_1+3b_2)}.
$$

Substituting $(h,\delta^*)=(h|_{\delta^*=b_1-b_2},b_1-b_2)$ in (\ref{eqn:fig-b-second}), we get
$$
  27b_1^2(b_1-b_2+c+h|_{\delta^*=b_1-b_2})-(3b_1+b_2)(3b_1-3b_2+2c+3h|_{\delta^*=b_1-b_2})^2.
$$
From Mathematica, this is nonnegative when $c\in[b_2,2(t-1.4)(b_1-b_2)+1.4b_2]$ (\ref{app:c.2}(2)). The expression is thus nonnegative at the exit points.

We have thus shown that there exists a $(h,\delta^*)\in[0,\frac{2b_2-c}{3}]\times[0,b_1-b_2]$ that simultaneously solves (\ref{eqn:fig-b-second}) and (\ref{eqn:fig-c-second}), for all values of $(c,b_1,b_2)$ in the statement of the theorem.

{\bf Step 3c:} We now prove that $\delta_1\in[h+\delta^*,\frac{b_1}{3}]$. To prove $\delta_1\geq h+\delta^*$, we first assume the contrapositive, and do the following. (a) Solving (\ref{eqn:fig-b-first}) and (\ref{eqn:fig-b-second}) simultaneously, we obtain $(h_{II},\delta^*_{II})$ in the mechanism depicted in Figure \ref{fig:b-new}. We prove that $((h_{II},\delta^*_{II})<(h_{III},\delta^*_{III}))$; (b) We then show $\int_{\delta^*_{II}}^{\frac{b_1}{3}}V_{II}(\delta)\,d\delta\geq\int_{\delta^*_{III}}^{\frac{b_1}{3}}V_{III}(\delta)\,d\delta=0$. But $\int_{\delta^*_{II}}^{\frac{b_1}{3}}V_{II}(\delta)\,d\delta$ is negative for $c\in[\alpha_1,tb_2]$ (from \ref{app:c.1}(3)), which is a contradiction. We now proceed to prove our claim.

Observe that when $\delta_1<h+\delta^*$, we have $a_1>1$, and the mechanism appears as depicted (in solid lines) in Figures \ref{fig:c-proof-1} and \ref{fig:c-proof-2}. We now solve the problem for the parameters $(h_{II},\delta^*_{II})$ in the mechanism depicted in Figure \ref{fig:b-ini}. We first prove that if $(h,\delta^*)$ satisfies (\ref{eqn:fig-b-second}), then $h$ increases with increase in $\delta^*$. Solving (\ref{eqn:fig-b-second}) for $h$, we get
\begin{multline*}
  h=\frac{9b_2^2-16b_2c-6\delta^*(b_2+2c)-9(\delta^*)^2}{6(4b_2+3\delta^*)}\\+\frac{3(b_2+\delta^*)\sqrt{9b_2^2+16b_2c+6\delta^*(3b_2+2c)+9(\delta^*)^2}}{6(4b_2+3\delta^*)}.
\end{multline*}
Denoting $X:=9b_2^2+16b_2c+6\delta^*(3b_2+2c)+9(\delta^*)^2$, and differentiating with respect to $\delta^*$, we get
\begin{align*}
  \frac{\partial h}{\partial\delta^*}&=\frac{(4b_2+3\delta^*)\left(-6(b_2+2c+3\delta^*)+3\sqrt{X}+\frac{3(b_2+\delta^*)(9b_2+6c+9\delta^*)}{\sqrt{X}}\right)}{6(4b_2+3\delta^*)^2}\\&\hspace*{0.5in}-\frac{3\left(9b_2^2-16b_2c-6\delta^*(b_2+2c)-9(\delta^*)^2+3(b_2+\delta^*)\sqrt{X}\right)}{6(4b_2+3\delta^*)^2}\\&=\frac{-51b_2^2-72b_2\delta^*-27(\delta^*)^2+3b_2\sqrt{X}+\frac{9(4b_2+3\delta^*)(b_2+\delta^*)(3b_2+2c+3\delta^*)}{\sqrt{X}}}{6(4b_2+3\delta^*)^2}\\&\geq 0
\end{align*}
if $(9(\delta^*)^2+24b_2\delta^*+17b_2^2)\sqrt{X}\leq b_2\cdot X+3(4b_2+3\delta^*)(b_2+\delta^*)(3b_2+2c+3\delta^*)$. Squaring this expression on both sides and simplifying, we have
$$
  4(4b_2+3\delta^*)^2(b_2-c)(b_2^2(9b_2+25c)+6b_2\delta^*(3b_2+5c)+9(\delta^*)^2(b_2+c))\leq 0
$$
which clearly is true for $c\geq b_2$. This proves that if $(h,\delta^*)$ satisfies (\ref{eqn:fig-b-second}), then $h$ increases monotonically in $\delta^*$.

We now claim that $(h_{II},\delta^*_{II})<(h_{III},\delta^*_{III})$. This is because if not, then (i) $(h_{II},\delta^*_{II})>(h_{III},\delta^*_{III})$ must hold, since $(h,\delta^*)$ in both the mechanisms satisfies (\ref{eqn:fig-b-second}), and $h$ monotonically increases in $\delta^*$; (ii) If $(h_{II},\delta^*_{II})>(h_{III},\delta^*_{III})$, then $\bar{\mu}(Z)=0$ cannot be true for both the mechanisms simultaneously (see Figure \ref{fig:c-proof-1}). We have proved our claim.

We now evaluate $\int_{\delta^*_{II}}^{\frac{b_1}{3}}V_{II}(\delta)\,d\delta-\int_{\delta^*_{III}}^{\frac{b_1}{3}}V_{III}(\delta)\,d\delta$. From (\ref{eqn:V-alternate}), we have $V'(\delta)=-\bar{\mu}(z:z\in D\backslash Z,z_1-z_2=\delta\}$. Observe from Figure \ref{fig:c-proof-2} that $V'_{II}(\delta)<V'_{III}(\delta)$ when $\delta\in(l,\delta_1^{II})$ for some $l\in[\delta^*_{III},\delta_1^{III}]$, $V'_{II}(\delta)>V'_{III}(\delta)$ when $\delta\in[\delta^*_{II},l)$, and $V'_{II}(\delta)=V'_{III}(\delta)$ when $\delta\in[\delta_1^{II},\frac{b_1}{3}]\cup\{l\}$. Since we have $V_{II}(\delta^*_{II})=V_{III}(\delta^*_{III})=V_{II}(\frac{b_1}{3})=V_{III}(\frac{b_1}{3})=0$, we conclude that $V_{II}(\delta)\geq V_{III}(\delta)$ when $\delta\in[\delta^*_{II},\frac{b_1}{3}]$. Further, from $V'(\delta)=-(c-2b_2)-3(\delta_1^{II}-\delta)=-(c-2b_2+3\delta_2^{II})+3\delta$ when $\delta\in[\delta^*_{II},b_1-b_2]$, we have $V'_{II}(\delta)\geq 0$ in that interval, and thus $V_{II}(\delta)=V_{II}(\delta^*_{II})+\int_{\delta^*_{II}}^{\delta}V'_{II}(\tilde{\delta})\,d\tilde{\delta}\geq 0$. Therefore,
\begin{multline*}
  \int_{\delta^*_{II}}^{\frac{b_1}{3}}V_{II}(\delta)\,d\delta-\int_{\delta^*_{III}}^{\frac{b_1}{3}}V_{III}(\delta)\,d\delta\\=\int_{\delta^*_{III}}^{\frac{b_1}{3}}(V_{II}(\delta)-V_{III}(\delta))\,d\delta+\int_{\delta^*_{II}}^{\delta^*_{III}}V_{II}(\delta)\,d\delta\,d\delta\geq 0.
\end{multline*}

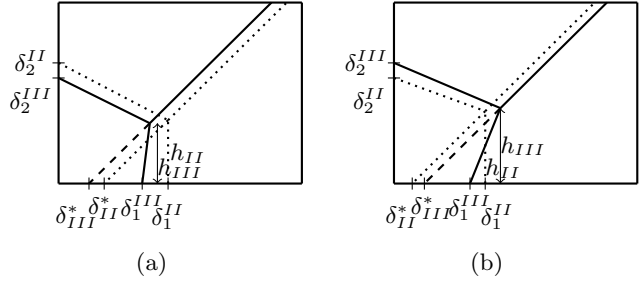
\begin{figure}[H]
\centering
\begin{tabular}{cc}
\subfloat[]{\label{fig:c-proof-1}\begin{tikzpicture}[scale=0.2,font=\footnotesize,axis/.style={very thick, -}]
\draw [axis,thick,-] (0,0)--(16,0);
\draw [axis,thick,-] (0,0)--(0,12);
\draw [axis,thick,-] (0,12)--(16,12);
\draw [axis,thick,-] (16,0)--(16,12);
\draw [thick,dotted] (0,8)--(7.2,4.2);
\draw [thin,-] (-0.4,8)--(0.4,8);
\node at (-1.8,8) {$\delta_2^{II}$};
\draw [axis,thick,-] (0,7)--(6,4);
\draw [thin,-] (-0.4,7)--(0.4,7);
\node [below] at (-1.8,7) {$\delta_2^{III}$};
\draw [thin,<->] (6.5,0)--(6.5,4);
\node at (8,1) {$h_{III}$};
\draw [thick,dotted] (7.2,0)--(7.2,4.2);
\node at (8.5,2) {$h_{II}$};
\draw [axis,thick,-] (6,4)--(5.5,0);
\draw [thin,-] (5.5,-0.4)--(5.5,0.4);
\node at (5.5,-1.6) {$\delta_1^{III}$};
\draw [thin,-] (7.2,-0.4)--(7.2,0.4);
\node at (7.2,-2.2) {$\delta_1^{II}$};
\draw [thick,dashed] (6,4)--(2,0);
\draw [thick] (6,4)--(14,12);
\draw [thin,-] (2,-0.4)--(2,0.4);
\node at (1.2,-2.2) {$\delta^*_{III}$};
\draw [thick,dotted] (7.2,4.2)--(3,0);
\draw [thick,dotted] (7.2,4.2)--(15,12);
\draw [thin,-] (3,-0.4)--(3,0.4);
\node at (3,-1.6) {$\delta^*_{II}$};
\end{tikzpicture}}&
\subfloat[]{\label{fig:c-proof-2}\begin{tikzpicture}[scale=0.2,font=\footnotesize,axis/.style={very thick, -}]
\draw [axis,thick,-] (0,0)--(16,0);
\draw [axis,thick,-] (0,0)--(0,12);
\draw [axis,thick,-] (0,12)--(16,12);
\draw [axis,thick,-] (16,0)--(16,12);
\draw [thick,dotted] (0,7)--(6,4.8);
\draw [thin,-] (-0.4,7)--(0.4,7);
\node [below] at (-1.6,7) {$\delta_2^{II}$};
\draw [axis,thick,-] (0,8)--(7,5);
\draw [thin,-] (-0.4,8)--(0.4,8);
\node at (-1.8,8) {$\delta_2^{III}$};
\draw [thin,<->] (7,0)--(7,5);
\node at (8.5,2.5) {$h_{III}$};
\draw [thick,dotted] (6,0)--(6,4.8);
\node at (7.2,1) {$h_{II}$};
\draw [axis,thick,-] (7,5)--(5,0);
\draw [thin,-] (5,-0.4)--(5,0.4);
\node at (5,-1.6) {$\delta_1^{III}$};
\draw [thin,-] (6,-0.4)--(6,0.4);
\node at (6.8,-2.2) {$\delta_1^{II}$};
\draw [thick,dashed] (7,5)--(2,0);
\draw [thick] (7,5)--(14,12);
\draw [thin,-] (2,-0.4)--(2,0.4);
\node at (2.5,-1.6) {$\delta^*_{III}$};
\draw [thick,dotted] (6,4.8)--(1.2,0);
\draw [thick,dotted] (6,4.8)--(13.2,12);
\draw [thin,-] (1.2,-0.4)--(1.2,0.4);
\node at (0.5,-2.2) {$\delta^*_{II}$};
\end{tikzpicture}}
\end{tabular}
\caption{Mechanism in Figure \ref{fig:b-ini} (in dotted lines) superimposed on mechanism in Figure \ref{fig:c-ini} (in solid lines). Figure indicates the case when (a) $(h_{II},\delta^*_{II})<(h_{III},\delta_{III})$; (b) $(h_{II},\delta^*_{II})<(h_{III},\delta_{III})$.}\label{fig:c-proof}
\end{figure}

\vspace*{10pt}
This proves that $\delta_1\geq h+\delta^*$, and also that $a_1\leq 1$. We then verify the upper bound $\delta_1\leq\frac{b_1}{3}$ via Mathematica (see \ref{app:c.2}(5)).

{\bf Step 3d:} We now prove that $\delta_2\in[\frac{2b_2-c}{3},\frac{b_2}{3}]$. Suppose that $\delta_2<\frac{2b_2-c}{3}$. Then, from $V(\delta)=-(c-2b_2+3\delta_2)+3\frac{\delta_2-h}{\delta_2+\delta^*}(\delta+\delta_2)$ when $\delta\in[-\delta_2,\delta^*]$, we have $V'(-\delta_2)=-(c-2b_2+3\delta_2)>0$. Also, $V'(\delta^*)\geq 0$ holds since $h\leq\frac{2b_2-c}{3}$. So we have $V'(\delta)\geq 0$ when $\delta\in[-\frac{b_2}{3},\delta^*]$. This implies that $V(\delta^*)>0$, a contradiction. So we have $\delta_2\geq\frac{2b_2-c}{3}\geq h$, and thus $a_2\leq 1$. The upper bound $\delta_2\leq\frac{b_2}{3}$ is verified via Mathematica (see \ref{app:c.2}(4)).

{\bf Step 4:} We now proceed to prove that the conditions in Theorem \ref{thm:Myerson} (2)--(4) are satisfied. Observe from the expressions of $V(\delta)$ it is nonpositive when $\delta\in[-b_2,\frac{-b_2}{3}]$ (i.e., in the interval where $q_1=0$), and nonnegative when $\delta\in[\frac{b_1}{3},b_1]$ (i.e., in the interval where $q_1=1$). This proves the conditions of Theorem \ref{thm:Myerson} (2) and \ref{thm:Myerson} (4).

We now prove the conditions in Theorem \ref{thm:Myerson} (3). The values of $V'(\delta)$ can be computed as
$$
  V'(\delta)=\begin{cases}-(c-2b_2+3\delta_2)+3\frac{\delta_2-h}{\delta_2+\delta^*}(\delta+\delta_2)&\delta\in(-\delta_2,\delta^*]\\-(c-2b_2)-3\frac{h}{\delta_1-\delta^*}(\delta_1-\delta)&\delta\in[\delta^*,b_1-b_2)\\-(c-2b_1+3\delta)-3\frac{h}{\delta_1-\delta^*}(\delta_1-\delta)&\delta\in(b_1-b_2,\delta_1)\end{cases}
$$
The values of $V'(\delta)$ when $\delta\in[-b_2,\delta_2)\cup(\delta_1,b_1]$ is the same as (\ref{eqn:menu-1-V'}).

The proof of $\int_{-\frac{b_2}{3}}^{x}V(\delta)\,d\delta\geq 0$ for every $x\in[-\frac{b_2}{3},\delta^*]$, is the same as that in the proof of Theorem \ref{thm:menu-2}. So we proceed to prove $\int_{\delta^*}^{x}V(\delta)\,d\delta\geq 0$ for every $x\in[\delta^*,\frac{b_1}{3}]$. Observe that $V'(\delta)$ is positive when $\delta\in[\delta^*,b_1-b_2)\cup(\delta_1,\frac{b_1}{3}]$. This is because (i) $V(\delta^*)\geq 0$ since $h\leq\frac{2b_2-c}{3}$, (ii) $V'(\delta)$ increasing in the interval $[\delta^*,b_1-b_2)$, and (iii) $\delta_1\leq\frac{b_1}{3}$. We now claim that $V'(\delta)\leq 0$ in some continuous subset of $[b_1-b_2,\delta_1]$. This is because (i) $V'(\delta)$ decreases in the interval $(b_1-b_2,\delta_1)$, and so when $V'(\delta)=0$ at some $l_1\in[b_1-b_2,\delta_1]$, then $V'(\delta)\leq 0$ for every $\delta\in[l_1,\delta_1)$; (ii) if $V'(\delta)>0$ for every $\delta\in(b_1-b_2,\delta_1)$, then $V(\delta)\geq 0$ throughout the interval, and thus $\int_{\delta^*}^{\frac{b_1}{3}}V(\delta)\,d\delta=0$ cannot be true. We have proved the claim.

Combining the fact that $V(\delta^*)=V(\frac{b_1}{3})=\int_{\delta^*}^{\frac{b_1}{3}}V(\delta)=\,d\delta=0$, with $V'(\delta)$ being nonnegative everywhere other than some continuous subset of $\delta\in(b_1-b_2,\delta_1)$, it is now easy to see that $\int_{\delta^*}^xV(\delta)\,d\delta\geq 0$ for all $x\in[\delta^*,\frac{b_1}{3}]$.
\qed

{\bf Proof of Theorem \ref{thm:menu-4,5} (i):}

{\bf Step 1:} We compute the virtual valuation function for the mechanism depicted in Figure \ref{fig:d'-new}.
$$
 V(\delta)=\begin{cases}\bar{\mu}(Z)+\frac{3}{2}\delta^2+2b_2\delta+\frac{b_2^2}{2}&\delta\in[-b_2,-\delta_2]\\V(-\delta_2)-(c-2b_2+3\delta_2)(\delta+\delta_2)+\frac{3}{2}\frac{\delta_2}{\delta_1+\delta_2}(\delta+\delta_2)^2&\delta\in[-\delta_2,\delta_1]\\V(\delta_1)+2b_2(\delta-\delta_1)&\delta\in[\delta_1,b']\\V(b')-b_1(\delta-b_1+b_2)-\frac{3}{2}((b_1-\delta)^2-b_2^2)&\delta\in[b',b_1]\end{cases}
$$
where $b_1-b_2$ is denoted as $b'$.

{\bf Step 2:} The mechanism has three unknowns -- $\delta_1$, $\delta_2$ and $a_1$. Since $q=\nabla u$, a conservative field, we must have the slope of the line separating $(0,0)$ and $(1-a,a)$ allocation regions satisfying $-\frac{1-a}{a}=-\frac{\delta_2}{\delta_1}$. This yields $a=\frac{\delta_1}{\delta_1+\delta_2}$.

We now compute the other two parameters by equating $\bar{\mu}(Z)=0$ and $\int_{-\frac{b_2}{3}}^{\frac{b_1}{2}-\frac{b_2}{4}}V(\delta)\,d\delta=0$. The latter condition follows from Theorem \ref{thm:Myerson} 3(c) because $q_1(\delta)=1-a\in(0,1)$ for $\delta\in[-\frac{b_2}{3},\frac{b_1}{2}-\frac{b_2}{4}]$.
\begin{equation}
\bar{\mu}(Z)=0\Rightarrow-\frac{3}{2}\delta_1\delta_2-c(\delta_1+\delta_2)+b_1b_2=0.\label{eqn:fig-d'-first}
\end{equation}
\begin{align}
&\int_{-\frac{b_2}{3}}^{\frac{b_1}{2}-\frac{b_2}{4}}V(\delta)\,d\delta=0\nonumber\\&\Rightarrow\int_{-\frac{b_2}{3}}^{-\delta_2}V(\delta)\,d\delta+\int_{-\delta_2}^{\delta_1}V(\delta)\,d\delta+\int_{\delta_1}^{\frac{b_1}{2}-\frac{b_2}{4}}V(\delta)\,d\delta=0\nonumber\\&\Rightarrow b_2(\delta_2^2-b_2^2/9)+\frac{1}{2}(b_2^3/27-\delta_2^3)+\frac{b_2^2}{2}(b_2/3-\delta_2)\nonumber\\&\hspace*{.1in}-(3\delta_1\delta_2/2+c(\delta_1+\delta_2)-b_2^2/2)(2b_1-b_2-4\delta_1)/4+b_2(2b_1-b_2)/4-b_2\delta_1^2\nonumber\\&\hspace*{.5in}-(2b_2\delta_2-3\delta_2^2/2-b_2^2/2)(\delta_1+\delta_2)-(c-2b_2+2\delta_2)\frac{(\delta_1+\delta_2)^2}{2}=0\nonumber\\&\Rightarrow\frac{2b_2^3}{27}+\frac{\delta_1-\delta_2}{2}(\delta_1\delta_2+c(\delta_1+\delta_2))-\frac{b_2}{16}(2b_1-b_2)^2-\frac{(2b_1-b_2)\bar{\mu}(Z)}{4}=0.\label{eqn:fig-d'-second}
\end{align}
The values of $\delta_1$ and $\delta_2$  can be obtained by solving (\ref{eqn:fig-d'-first}) and (\ref{eqn:fig-d'-second}) simultaneously.

{\bf Step 3:} We now evaluate the bounds on the variables in order to prove the existence of a meaningful solution that solves (\ref{eqn:fig-d'-first}) and (\ref{eqn:fig-d'-second}). Specifically, we show that there exists a $(\delta_1,\delta_2)\in[0,\frac{b_1}{2}-\frac{b_2}{4}]\times[0,\frac{b_2}{3}]$, as a simultaneous solution to (\ref{eqn:fig-d'-first}) and (\ref{eqn:fig-d'-second}). We show as in Step 3 of proof of Theorem \ref{thm:menu-1}.

We have $\delta_1|_{\delta_2}=\frac{b_1b_2-c\delta_2}{c+3\delta_2/2}$ and $\delta_2|_{\delta_1}=\frac{b_1b_2-c\delta_1}{c+3\delta_1/2}$ from (\ref{eqn:fig-d'-first}). It is clear that $\delta_2|_{\delta_1=x}$ is continuous in $x$, and also monotonically decreases in $x$. That $\delta_2|_{\delta_1=0}=b_1b_2/c\geq 0$ is also clear. We now verify if $\delta_2|_{\delta_1=\frac{b_1}{2}-\frac{b_2}{4}}\leq\frac{b_2}{3}$. Observe that $c\geq 3b_2/2$ from the statement of the theorem, since $c=3b_2/2$ makes the left-hand side of (\ref{eqn:menu-2-bound}) positive. We use $c\geq 3b_2/2$ crucially in the verification process.
$$
  \frac{b_1b_2-c(b_1/2-b_2/4)}{c+3b_1/4-3b_2/8}\leq\frac{b_1b_2-3b_2(b_1/2-b_2/4)/2}{3b_2/2+3b_1/4-3b_2/8}=\frac{b_1b_2/4+3b_2^2/8}{9b_2/8+3b_1/4}=\frac{b_2}{3},
$$
where the inequality occurs because $c\geq 3b_2/2$ from the statement of the theorem.

We now substitute $(\frac{b_1}{2}-\frac{b_2}{4},\delta_2|_{\delta_1=\frac{b_1}{2}-\frac{b_2}{4}})$ on the left-hand side of (\ref{eqn:fig-d'-second}), to obtain
\begin{multline*}
  -\left(\frac{(6b_1+b_2)^2}{864(6b_1-3b_2+8c)^2}\right)\\(72b_1^2b_2+144b_1b_2^2-90b_2^3+(-36b_1^2+84b_1b_2+399b_2^2)c-(96b_1+208b_2)c^2)
\end{multline*}
which is nonnegative for all $c\geq\beta$. We now substitute $(\delta_1|_{\delta_2=0},0)$ to obtain
$$
  \frac{1}{432}\frac{b_2}{c}(216b_1^2b_2-108b_1^2c+108b_1b_2c+5b_2^2c)\geq 0
$$
for all $c\leq\frac{216b_1^2b_2}{108b_1^2-108b_1b_2-5b_2^2}$. This shows that the left-hand side of (\ref{eqn:fig-d'-second}) is nonnegative at the entry points of the curve $(\delta_1|_{\delta_2},\delta_2)$ in the desired rectangle.

We now substitute $(0,\delta_2|_{\delta_1=0})$, and obtain
$$
  \frac{1}{432}\frac{b_2}{c}(108b_1b_2c+5b_2^2c-216b_1^2b_2-108b_1^2c)\leq 0
$$
because (i) $108b_1c(b_2-b_1)\leq 0$ for $b_1\geq b_2$, and (ii) $5b_2^2c\leq 216b_1^2b_2$ for $b_1\geq 3b_2/2$ and $c\leq 243b_2/38$. We now substitute $(\delta_1|_{\delta_2=\frac{b_2}{3}},\frac{b_2}{3})$, and obtain
$$
  \frac{b_2(6b_1+b_2)^2(5b_2^2+12b_2c-12c^2)}{432(b_2+2c)^2}\leq 0
$$
for $c\geq 3b_2/2$. Recall that $c\geq 3b_2/2$ holds from the theorem statement. This shows that the left-hand side of (\ref{eqn:fig-d'-second}) is nonpositive at the exit points of the curve $(\delta_1|_{\delta_2},\delta_2)$ in the desired rectangle.

We have thus shown that there exists $(\delta_1,\delta_2)\in[0,\frac{b_1}{2}-\frac{b_2}{4}]\times[0,\frac{b_2}{3}]$ that solves (\ref{eqn:fig-d'-first}) and (\ref{eqn:fig-d'-second}) simultaneously for all values of $(c,b_1,b_2)$ on the statement of the theorem.

{\bf Step 4:} We now proceed to prove that the conditions in Theorem \ref{thm:Myerson} (2)--(4) are satisfied. Observe that $V'(\delta)$ changes its sign from negative to positive only at $\delta=-\frac{2b_2}{3}$ in the interval where $q_1=0$, and from positive to negative only at $\delta=\max(\frac{2b_1}{3},b_1-b_2)$ in the interval where $q_1=1$. The proof of parts (2) and (4) now traces the same steps as in Theorem \ref{thm:menu-1}. Similarly, the proof that $\int_{-\frac{b_2}{3}}^{x}V(\delta)\,d\delta\geq 0$ holds for every $x\in[-\frac{b_2}{3},\frac{b_1}{2}-\frac{b_2}{4}]$, is the same as in Theorem \ref{thm:menu-2}. This completes the proof of optimality of the mechanism in Figure \ref{fig:d'-ini}.

At $c=\frac{216b_1^2b_2}{108b_1^2-108b_1b_2-5b_2^2}$, we obtain $\delta_2=0$, when we solve (\ref{eqn:fig-d'-first}) and (\ref{eqn:fig-d'-second}) simultaneously. A transition thus occurs from the structure depicted in Figure \ref{fig:d'-ini} to that in Figure \ref{fig:e'-ini}. We now show that the optimal mechanism as depicted in Figure \ref{fig:e'-new}, when $c\geq\frac{216b_1^2b_2}{108b_1^2-108b_1b_2-5b_2^2}$.

{\bf Step 1:} We first consider the zero allocation region to be $Z=([c,c+\frac{b_1b_2}{c}],c)$, and compute the virtual valuation function as follows.
$$
  V(\delta)=\begin{cases}\bar{\mu}(Z)+\frac{3}{2}\delta^2+2b_2\delta+\frac{b_2^2}{2}&\delta\in[-b_2,0]\\V(0)-(c-2b_2)\delta&\delta\in[0,\frac{b_1b_2}{c}]\\V(\frac{b_1b_2}{c})+2b_2(\delta-\frac{b_1b_2}{c})&\delta\in[\frac{b_1b_2}{c},b']\\V(b')-b_1(\delta-b_1+b_2)-\frac{3}{2}((b_1-\delta)^2-b_2^2)&\delta\in[b',b_2]\end{cases}
$$
where $b_1-b_2$ is denoted by $b'$. Observe that the zero allocation region only consists of a portion of the bottom boundary. So when we modify $q(z)$ to be $(0,1)$ instead of $(0,0)$ for every $z\in Z$, the utility $u(z)$ remains the same for every $z\in D$. The expected revenue, $\mathbb{E}_{z\sim f}[z\cdot q(z)-u(z)]$, also remains unchanged, because $\int_Zf(z)\,dz=0$. So we continue our analysis of the mechanism in Figure \ref{fig:e'-ini}, assuming the zero allocation region $Z$ to be non-empty.

The mechanism does not have any unknowns to compute. So steps 2 and 3 are not necessary. We move straightaway to step 4.

{\bf Step 4:} We now prove that the conditions in Theorem \ref{thm:Myerson} (2) and \ref{thm:Myerson} (4) are satisfied. Since $V'(\delta)$ changes sign from positive to negative only at $\delta=\max(\frac{2b_1}{3},b_1-b_2)$ in the interval where $q_1=1$, the proof for Theorem \ref{thm:Myerson} (4) traces the same steps as in Theorem \ref{thm:menu-1}. But $V'(\delta)$ changes sign at three values of $\delta$ in the interval where $q_1=0$. So proving the other condition needs more work.

We have $V(-b_2)=V(-\frac{b_2}{3})=0$, and $V(\delta)\leq 0$ when $\delta\in[-b_2,-\frac{b_2}{3}]$. We now evaluate $\int_{-\frac{b_2}{3}}^{\frac{b_1}{2}-\frac{b_2}{4}}V(\delta)\,d\delta$.
\begin{equation}\label{eqn:only-menu-v'}
  \int_{-\frac{b_2}{3}}^{\frac{b_1}{2}-\frac{b_2}{4}}V(\delta)\,d\delta=\frac{2}{27}b_2^3+\frac{b_1^2b_2^2}{2c}-\frac{1}{16}b_2(2b_1-b_2)^2\leq 0
\end{equation}
when $c\geq\frac{216b_1^2b_2}{108b_1^2-108b_1b_2-5b_2^2}$. So we have $\int_{-b_2}^{\frac{b_1}{2}-\frac{b_2}{4}}V(\delta)\,d\delta\leq 0$.

Observe that $V(\delta)$ is negative when $\delta\in[-b_2,-\frac{b_2}{3}]$, positive when $\delta\in[-\frac{b_2}{3},\frac{b_2^2}{2(c-2b_2)}]$, and negative again when $\delta\in[\frac{b_2^2}{2(c-2b_2)},\frac{b_1}{2}-\frac{b_2}{4}]$. So the integral $\int_{-b_2}^{x}V(\delta)\,d\delta$ thus attains its minimum either at $-\frac{b_2}{3}$ or at $\frac{b_1}{2}-\frac{b_2}{4}$. But from (\ref{eqn:only-menu-v'}), we have $\int_{-\frac{b_2}{3}}^{\frac{b_1}{2}-\frac{b_2}{4}}V(\delta)\,d\delta\leq 0$, and so the minimum cannot occur at $-\frac{b_2}{3}$. Therefore, $\int_{-b_2}^{x}V(\delta)\,d\delta\geq\int_{-\frac{b_2}{3}}^{\frac{b_1}{2}-\frac{b_2}{4}}V(\delta)\,d\delta$ holds for all $x\in[-b_2,\frac{b_1}{2}-\frac{b_2}{4}]$. Hence the result.\qed

\vspace*{10pt}
{\bf Proof of Theorem \ref{thm:menu-4,5}(ii):} Consider case (ii), where $b_1\in[b_2,3b_2/2]$. The values of $V(\delta)$ and the expression for $\bar{\mu}(Z)=0$ are the same as in the proof of Theorem \ref{thm:menu-4,5}(i). We thus skip step 1.

{\bf Step 2:} We now compute the expression for $-\int_{-\frac{b_2}{3}}^{\frac{b_1}{3}}V(\delta)\,d\delta=0$.
\begin{align}
  &-\int_{-\frac{b_2}{3}}^{\frac{b_1}{3}}V(\delta)\,d\delta=0\nonumber\\&\Rightarrow-\int_{-\frac{b_2}{3}}^{-\delta_2}V(\delta)\,d\delta-\int_{-\delta_2}^{\delta_1}V(\delta)\,d\delta-\int_{\delta_1}^{b_1-b_2}V(\delta)\,d\delta-\int_{b_1-b_2}^{\frac{b_1}{3}}V(\delta)\,d\delta=0\nonumber\\&\Rightarrow\frac{2}{27}(b_1^3-b_2^3)-\frac{\delta_1-\delta_2}{2}\left(b_1b_2-\frac{\delta_1\delta_2}{2}\right)-\left(b_1-b_2-\frac{\delta_1-\delta_2}{2}\right)\bar{\mu}(Z)=0.\label{eqn:fig-d-first}
\end{align}
The values of $(\delta_1,\delta_2)$ can be obtained by solving (\ref{eqn:fig-d'-first}) and (\ref{eqn:fig-d-first}) simultaneously.

{\bf Step 3:} We evaluate the bounds on the variables in order to prove the existence of a meaningful solution that solves (\ref{eqn:fig-d'-first}) and (\ref{eqn:fig-d-first}). Specifically, we show that there exists a $(\delta_1,\delta_2)\in[0,\frac{b_1}{3}]\times[0,\frac{b_2}{3}]$, as a simultaneous solution to (\ref{eqn:fig-d'-first}) and (\ref{eqn:fig-d-first}). Again, we show this as in Step 3 of proof of Theorem \ref{thm:menu-1}.

We have $\delta_1|_{\delta_2}=\frac{b_1b_2-c\delta_2}{c+3\delta_2/2}$ and $\delta_2|_{\delta_1}=\frac{b_1b_2-c\delta_1}{c+3\delta_1/2}$ from (\ref{eqn:fig-d'-first}).  It is clear that $\delta_2|_{\delta_1=x}$ is continuous in $x$, and also monotonically decreases in $x$. That $\delta_2|_{\delta_1=0}=b_1b_2/c\geq 0$ is also clear. We now verify if $\delta_2|_{\delta_1=\frac{b_1}{3}}\leq\frac{b_2}{3}$ for all $c\geq\alpha_2$.

We first observe from Mathematica that $\alpha_2\geq 1.36b_2+2(t-1.36)(b_1-b_2)$, with $t=3(37+3\sqrt{465})/176$ (see \ref{app:c.2}(7)). We now show that $1.36b_2+2(t-1.36)(b_1-b_2)\geq\frac{5}{2}\frac{b_1b_2}{b_1+b_2}$. This is same as showing that $4(t-1.36)b_1^2-2.28b_1b_2+(2.72-4(t-1.36))b_2^2\geq 0$, which is clearly true since (i) the roots of this quadratic expression are imaginary, and (ii) the coefficients of $b_1^2$ term and $b_2^2$ term are positive. We thus verify if $\delta_2|_{\delta_1=\frac{b_1}{3}}\leq\frac{b_2}{3}$ for all $c\geq\frac{5}{2}\frac{b_1b_2}{b_1+b_2}$.
$$
  \frac{b_1b_2-cb_1/3}{c+b_1/2}\leq\frac{b_1b_2-5b_1(b_1b_2)/(6(b_1+b_2))}{5b_1b_2/(2(b_1+b_2))+b_1/2}=\frac{b_1^2b_2/6+b_1b_2^2}{3b_1b_2+b_1^2/2}=\frac{b_2}{3},
$$
where the inequality occurs because of $c\geq\frac{5}{2}\frac{b_1b_2}{b_1+b_2}$.

We substitute $(\frac{b_1}{3},\delta_2|_{\delta_1=\frac{b_1}{3}})$ on the left-hand side of (\ref{eqn:fig-d-first}), to obtain
\begin{multline*}
  \frac{4b_1^5-16b_1b_2^3c-16b_2^3c^2+b_1^4(-6b_2+15c)}{54(b_1+2c)^2}\\+\frac{12b_1^3(3b_2^2-3b_2c+c^2)-4b_1^2b_2(b_2^2-27b_2c+18c^2)}{54(b_1+2c)^2}
\end{multline*}
We now verify if this expression is nonpositive for every $c\geq\alpha_2$, $b_1\in[b_2,3b_2/2]$. (Recall that $\alpha_2\geq 1.36b_2+2(t-1.36)(b_1-b_2)$). We now prove that the expression is nonpositive when $c=1.36b_2+2(t-1.36)(b_1-b_2)$, $b_1\in[b_2,3b_2/2]$, and that it is decreasing in $c$. Substituting $c=1.36b_2+2(t-1.36)(b_1-b_2)$, we have
\begin{multline*}
  (21.8931)b_1^5-(52.8447)b_1^4b_2+(33.1421)b_1^3b_2^2\\+(14.2829)b_1^2b_2^3-(24.4661)b_1b_2^4-(6.01705)b_2^5\leq 0
\end{multline*}
for $b_1\in[b_2,3b_2/2]$. Differentiating the numerator with respect to $c$, we have
$$
  -16b_1b_2^3-32b_2^3c+15b_1^4-36b_1^3b_2+108b_1^2b_2(b_2-c)+24b_1^2c(b_1-3b_2/2)\leq 0
$$
for every $c\geq b_2$, $b_1\in[b_2,3b_2/2]$. We now substitute $(\delta_1|_{\delta_2=0},0)$ in (\ref{eqn:fig-d-first}) to obtain $2/27(b_1^3-b_2^3)-b_1^2b_2^2/(2c)\leq 0$ when $c\leq\frac{27b_1^2b_2^2}{4(b_1^3-b_2^3)}$. This shows that the left-hand side of (\ref{eqn:fig-d-first}) is nonpositive at the entry points of the curve $(\delta_1|_{\delta_2},\delta_2)$ in the desired rectangle.

When $\delta_1=0$, we have $\delta_2=b_1b_2/c$, and thus substituting $(0,\delta_2|_{\delta_1=0})$ on the left-hand side of (\ref{eqn:fig-d-first}), we get $2(b_1^3-b_2^3)/27+(b_1b_2)^2/(2c)\geq 0$. We now substitute $(\delta_1|_{\delta_2=\frac{b_2}{3}},\frac{b_2}{3})$, to obtain
\begin{multline*}
  \frac{4b_1^3(b_2+2c)^2-36b_1^2b_2^2(b_2+3c)+6b_1b_2^2(b_2^2+6b_2c+12c^2)}{54(b-2+2c)^2}\\-\frac{b_2^3(4b_2^2+15b_2c+12c^2)}{54(b_2+2c)^2}
\end{multline*}
We now verify if this expression is nonnegative for every $c\geq\alpha_2$, $b_1\in[b_2,3b_2/2]$. We verify this for $c\geq 1.36b_2+2(t-1.36)(b_1-b_2)$, and prove that it is increasing in $c$. Substituting $c=1.36b_2+2(t-1.36)(b_1-b_2)$, we have
\begin{multline*}
  (8.92237)b_1^5+(26.6023)b_1^4b_2-(20.6703)b_1^3b_2^2\\-(16.0947)b_1^2b_2^3+(32.9614)b_1b_2^4-(17.7114)b_2^5\geq 0
\end{multline*}
for every $b_1\geq b_2$. Differentiating the numerator with respect to $c$, we get
$$
  16b_1^3(b_2+2c)+36b_1b_2^3-15b_2^4+108b_1b_2^2(c-b_1)+24b_2^2c(3b_1/2-b_2)\geq 0
$$
when $c\geq b_1\geq b_2$. This shows that the left-hand side of (\ref{eqn:fig-d-first}) is nonnegative at the exit points of the curve $(\delta_1|_{\delta_2},\delta_2)$ in the desired rectangle.

We have thus shown that there exists $(\delta_1,\delta_2)\in[0,\frac{b_1}{3}]\times[0,\frac{b_2}{3}]$ that solves (\ref{eqn:fig-d'-first}) and (\ref{eqn:fig-d-first}) simultaneously for all values of $(c,b_1,b_2)$ in the statement of the theorem.

{\bf Step 4:} We now prove the conditions of Theorem \ref{thm:Myerson}(b)--(d) are satisfied.  Observe that $V'(\delta)$ changes its sign from negative to positive only at $\delta=-\frac{2b_2}{3}$ in the interval where $q_1=0$, and from positive to negative only at $\delta=\frac{2b_1}{3}$ in the interval where $q_1=1$. The proof of parts (2) and (4) now traces the same steps as in Theorem \ref{thm:menu-1}.

It only remains to prove that $\int_{-\frac{b_2}{3}}^{x}V(\delta)\,d\delta\geq 0$ holds for every $x\in[-\frac{b_2}{3},\frac{b_1}{3}]$. We consider two cases: (a) $c\geq 2b_2$, (b) $c\in[\alpha_2,2b_2]$. Consider case (a). We have $V'(\delta)\geq 0$ when $\delta\in[-\frac{b_2}{3},-\delta_2]$, $V'(\delta)\leq 0$ when $\delta\in[-\delta_2,\delta_1]$ (since $c\geq 2b_2$), and $V'(\delta)\geq 0$ when $\delta\in[\delta_1,\frac{b_1}{3}]$. We also have $V(-\frac{b_2}{3})=V(\frac{b_1}{3})=0$. It follows that $V(\delta)\geq 0$ when $\delta\in[-\frac{b_2}{3},\frac{b_1}{3}]$, and $\int_{-\frac{b_2}{3}}^{x}V(\delta)\,d\delta\geq\int_{-\frac{b_2}{3}}^{\frac{b_1}{3}}V(\delta)\,d\delta\geq 0$ for every $x\in[-\frac{b_2}{3},\frac{b_1}{3}]$.

In case (b), we prove $\int_{-\frac{b_2}{3}}^{x}V(\delta)\,d\delta\geq 0$ holds for every $x\in[-\frac{b_2}{3},\frac{b_1}{3}]$, by comparing the mechanism in Figure \ref{fig:d-ini} with that in Figure \ref{fig:c-ini}. We first prove that $(\delta_1^{IV},\delta_2^{IV})$ obtained by solving (\ref{eqn:fig-d'-first}) and (\ref{eqn:fig-d-first}) in the mechanism depicted in Figure \ref{fig:d-ini}, is at most the value of $(\delta_1^{III},\delta_2^{III})$ values obtained in Figure \ref{fig:c-ini}. We then argue that $\int_{-\frac{b_2}{3}}^{x}(V_{IV}(\delta)-V_{III}(\delta))\geq 0$ for every $x\in[\frac{b_2}{3},\frac{b_1}{3}]$. Since we know that condition 3d in Theorem \ref{thm:Myerson} holds for the mechanism in Figure \ref{fig:c-ini}, the proof is complete.

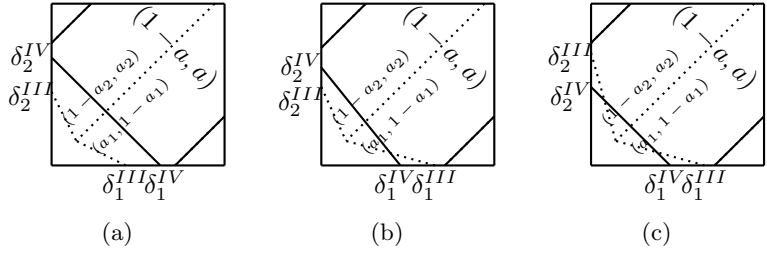
\begin{figure}[H]
\centering
\begin{tabular}{ccc}
\subfloat[]{\label{fig:menu IV - 1}\begin{tikzpicture}[scale=0.13,font=\small,axis/.style={very thick, -}]
\draw [axis,thick,-] (0,0)--(17.5,0);
\draw [axis,thick,-] (0,0)--(0,16.5);
\draw [axis,thick,-] (0,16.5)--(17.5,16.5);
\draw [axis,thick,-] (17.5,0)--(17.5,16.5);
\draw [axis,thick,-] (0,11)--(11,0);
\draw [thick,dotted] (0,12.5)--(4,16.5);
\draw [thick,dotted] (12.5,0)--(17.5,5);
\draw [thick,dotted] (0,7.5)--(2.5,2.5);
\draw [thick,dotted] (7.5,0)--(2.5,2.5);
\draw [axis,thick,-] (12.5,0)--(17.5,5);
\draw [axis,thick,-] (0,12.5)--(4,16.5);
\draw [thick,dotted] (16.5,16.5)--(2.5,2.5);
\node [rotate=45] at (5,8) {\tiny$(1-a_2,a_2)$};
\node [rotate=45] at (8,5) {\tiny$(a_1,1-a_1)$};
\node [rotate=-45] at (12.5,12) {$(1-a,a)$};
\node at (7.5,-2) {$\delta_1^{III}$};
\node at (11.5,-2) {$\delta_1^{IV}$};
\node at (-2,11) {$\delta_2^{IV}$};
\node at (-2,7) {$\delta_2^{III}$};
\end{tikzpicture}}&
\subfloat[]{\label{fig:menu IV - 2}\begin{tikzpicture}[scale=0.13,font=\small,axis/.style={very thick, -}]
\draw [axis,thick,-] (0,0)--(17.5,0);
\draw [axis,thick,-] (0,0)--(0,16.5);
\draw [axis,thick,-] (0,16.5)--(17.5,16.5);
\draw [axis,thick,-] (17.5,0)--(17.5,16.5);
\draw [axis,thick,-] (0,10)--(8,0);
\draw [axis,thick,-] (0,12)--(4.5,16.5);
\draw [axis,thick,-] (12.5,0)--(17.5,5);
\draw [thick,dotted] (0,8)--(2.5,2.5);
\draw [thick,dotted] (11.5,0)--(2.5,2.5);
\draw [thick,dotted] (12.5,0)--(17.5,5);
\draw [thick,dotted] (0,12)--(4.5,16.5);
\draw [thick,dotted] (16.5,16.5)--(2.5,2.5);
\node [rotate=45] at (5,8) {\tiny$(1-a_2,a_2)$};
\node [rotate=45] at (8,5) {\tiny$(a_1,1-a_1)$};
\node [rotate=-45] at (12.5,12) {$(1-a,a)$};
\node at (7.5,-2) {$\delta_1^{IV}$};
\node at (11.5,-2) {$\delta_1^{III}$};
\node at (-2,10) {$\delta_2^{IV}$};
\node at (-2,6.5) {$\delta_2^{III}$};
\end{tikzpicture}}&
\subfloat[]{\label{fig:menu IV - 3}\begin{tikzpicture}[scale=0.13,font=\small,axis/.style={very thick, -}]
\draw [axis,thick,-] (0,0)--(17.5,0);
\draw [axis,thick,-] (0,0)--(0,16.5);
\draw [axis,thick,-] (0,16.5)--(17.5,16.5);
\draw [axis,thick,-] (17.5,0)--(17.5,16.5);
\draw [axis,thick,-] (0,8)--(8,0);
\draw [axis,thick,-] (0,12.5)--(4,16.5);
\draw [axis,thick,-] (12.5,0)--(17.5,5);
\draw [thick,dotted] (0,11.5)--(2.5,2.5);
\draw [thick,dotted] (11.5,0)--(2.5,2.5);
\draw [thick,dotted] (12.5,0)--(17.5,5);
\draw [thick,dotted] (0,12.5)--(4,16.5);
\draw [thick,dotted] (16.5,16.5)--(2.5,2.5);
\node [rotate=45] at (5,8) {\tiny$(1-a_2,a_2)$};
\node [rotate=45] at (8,5) {\tiny$(a_1,1-a_1)$};
\node [rotate=-45] at (12.5,12) {$(1-a,a)$};
\node at (7.5,-2) {$\delta_1^{IV}$};
\node at (11.5,-2) {$\delta_1^{III}$};
\node at (-2,11) {$\delta_2^{III}$};
\node at (-2,7) {$\delta_2^{IV}$};
\end{tikzpicture}}
\end{tabular}
\caption{The mechanisms in Figures \ref{fig:c-ini} and \ref{fig:d-ini} superimposed on each other, when (a) $(\delta_1^{III},\delta_2^{III})<(\delta_1^{IV},\delta_2^{IV})$; (b) $\delta_1^{III}>\delta_1^{IV}$, and $\delta_2^{III}<\delta_2^{IV}$; (c) $(\delta_1^{III},\delta_2^{III})>(\delta_1^{IV},\delta_2^{IV})$. The mechanism in Figure \ref{fig:c-ini} is denoted by dotted lines, and that in Figure \ref{fig:d-ini} by solid lines.}\label{fig:menu IV}
\end{figure}

\vspace*{10pt}
We now prove that $(\delta_1^{IV},\delta_2^{IV})<(\delta_1^{III},\delta_2^{III})$. Suppose not. We have two cases: (i) $(\delta_1^{IV},\delta_2^{IV})>(\delta_1^{III},\delta_2^{III})$, (ii) One of $(\delta_1^{IV},\delta_2^{IV})$, say $\delta_2^{IV}$, is greater than $\delta_2^{III}$. From Mathematica, we have $a_1+a_2<1$ when $c\in[\alpha_2,2b_2)$ (see \ref{app:c.2}(6)). Thus the mechanisms depicted in Figures \ref{fig:b-ini} and \ref{fig:d-ini} appear as in Figure \ref{fig:menu IV - 1} for case (i), and as in Figure \ref{fig:menu IV - 2} for case (ii).

Consider case (i). We have $\bar{\mu}_{III}(Z)=\bar{\mu}_{IV}(Z)-\mbox{a negative number}>\bar{\mu}_{III}(Z)$. So $\bar{\mu}(Z)=0$ cannot hold simultaneously for both the mechanisms, a contradiction. Consider case (ii). We then have $V'_{III}(\delta)>V'_{IV}(\delta)$ for $\delta\in(-\delta_2^{IV},l_1)$ for some $l_1\in[-\delta_2^{III},\delta_1^{IV}]$, $V'_{III}(\delta)<V'_{IV}(\delta)$ for $\delta\in(l_1,\delta_1^{III})$, and $V'_{III}(\delta)=V'_{III}(\delta)$ for $\delta\in[-\frac{b_2}{3},-\delta_2^{IV}]\cup\{l_1\}\cup[\delta_1^{III},\frac{b_1}{3}]$. We also have $V_{III}(-\frac{b_2}{3})=V_{III}(\frac{b_1}{3})=V_{IV}(-\frac{b_2}{3})=V_{IV}(\frac{b_1}{3})=0$. So $V_{III}(\delta)-V_{IV}(\delta)=V_{III}(-\frac{b_2}{3})-V_{IV}(-\frac{b_2}{3})+\int_{-\frac{b_2}{3}}^{\delta}(V'_{III}(\tilde{\delta})-V'_{IV}(\tilde{\delta}))\,d\tilde{\delta}>0$ for all $\delta\in(-\frac{b_2}{3},\frac{b_1}{3})$. Thus $\int_{-\frac{b_2}{3}}^{\frac{b_1}{3}}(V_{III}(\delta)-V_{IV}(\delta))\,d\delta>0$, a contradiction. We have proved our claim.

We thus have $(\delta_1^{III},\delta_2^{III})>(\delta_1^{IV},\delta_2^{IV})$, and the mechanisms appear as in Figure \ref{fig:menu IV - 3}. We have $V'_{III}(\delta)<V'_{IV}(\delta)$ for $\delta\in(-\delta_2^{III},l_1)\cup(l_2,\delta_1^{III})$ for some $l_1\in[-\delta_2^{IV},\delta_1^{IV}]$ and $l_2\in[l_1,\delta_1^{IV}]$, $V'_{III}(\delta)>V'_{IV}(\delta)$ for $\delta\in(l_1,l_2)$, and $V'_{III}(\delta)=V'_{III}(\delta)$ for $\delta\in[-\frac{b_2}{3},-\delta_2^{III}]\cup\{l_1,l_2\}\cup[\delta_1^{III},\frac{b_1}{3}]$. We also have $V_{III}(-\frac{b_2}{3})=V_{III}(\frac{b_1}{3})=V_{IV}(-\frac{b_2}{3})=V_{IV}(\frac{b_1}{3})=\int_{-\frac{b_2}{3}}^{\frac{b_1}{3}}V_{III}(\delta)\,d\delta=\int_{-\frac{b_2}{3}}^{\frac{b_1}{3}}V_{IV}(\delta)\,d\delta=0$. We now have a series of observations.
\begin{itemize}
\item[$\bullet$] $V_{IV}(\delta)-V_{III}(\delta)=V_{IV}(-\frac{b_2}{3})-V_{III}(-\frac{b_2}{3})+\int_{-\frac{b_2}{3}}^{\delta}(V'_{IV}(\tilde{\delta})-V'_{III}(\tilde{\delta}))\,d\tilde{\delta}\geq 0$ when $\delta\in[-\frac{b_2}{3},l_1]$.
\item[$\bullet$] $V_{IV}(\delta)-V_{III}(\delta)=V_{IV}(\frac{b_1}{3})-V_{III}(\frac{b_1}{3})-\int_{\delta}^{\frac{b_1}{3}}(V'_{IV}(\tilde{\delta})-V'_{III}(\tilde{\delta}))\,d\tilde{\delta}\leq 0$ when $\delta\in[l_2,\frac{b_1}{3}]$.
\item[$\bullet$] By a similar argument, it is easy to see that $V_{IV}(\delta)-V_{III}(\delta)$ is nonnegative when $\delta\in[l_1,m]$ for some $m\in[l_1,l_2]$, and is nonpositive when $\delta\in[m,l_2]$. We thus have $V_{IV}(\delta)\geq V_{III}(\delta)$ when $\delta\in[-\frac{b_2}{3},m]$, and $V_{IV}(\delta)\leq V_{III}(\delta)$ when $\delta\in[m,\frac{b_1}{3}]$.
\item[$\bullet$] $\int_{-\frac{b_2}{3}}^{x}(V_{IV}(\delta)-V_{III}(\delta))\,d\delta\geq 0$ when $x\in[-\frac{b_2}{3},m]$.
\item[$\bullet$] $\int_{-\frac{b_2}{3}}^{x}(V_{IV}(\delta)-V_{III}(\delta))\,d\delta=-\int_x^{\frac{b_1}{3}}(V_{IV}(\delta)-V_{III}(\delta))\,d\delta\geq 0$ for any $x\in[m,\frac{b_1}{3}]$.
\item[$\bullet$] Notice that $\int_{-\frac{b_2}{3}}^{x}V_{III}(\delta)\,d\delta\geq 0$ holds for any $x\in[-\frac{b_2}{3},\frac{b_1}{3}]$, and thus $\int_{-\frac{b_2}{3}}^{x}V_{IV}(\delta)\,d\delta\geq 0$ now follows.
\end{itemize}
This completes the proof of optimality of the mechanism depicted in \ref{fig:d-ini}.

For the proof of optimality of the mechanism depicted in \ref{fig:e-ini} when $c\geq\frac{27b_1^2b_2^2}{4(b_1^3-b_2^3)}$, we note that the proof is exactly the same as in the proof of Theorem \ref{thm:menu-4,5}(i), except for the term $\int_{-\frac{b_2}{3}}^{\frac{b_1}{3}}V(\delta)\,d\delta=\frac{2}{27}(b_2^3-b_1^3)+\frac{b_1^2b_2^2}{2c}$. The expression clearly is negative when $c\geq\frac{27b_1^2b_2^2}{4(b_1^3-b_2^3)}$.\qed

\vspace*{10pt}
{\bf Proof of Theorem \ref{thm:extension}:} We fix $c_1-c_2=d$. Observe that the domain of the functions $(q_1,t)$ is the interval $[d-b_2,d-b_1]$. But it can be verified that all the results hold even for a shifted version of the domain. So we redefine $\delta=z_1-z_2-d$, and retain the domain to be $[-b_2,b_1]$.

{\bf Step 1:} We compute the virtual valuation function for the mechanism depicted in Figure \ref{fig:a-new}.
\begin{equation*}
V(\delta)=\begin{cases}\bar{\mu}(Z)+\frac{3}{2}\delta^2+2b_2\delta+\frac{b_2^2}{2}+d(\delta+b_2)&\delta\in[-b_2,-\delta_2]\\V(\delta_2)-(c_2-2b_2+3\delta_2)(\delta+\delta_2)&\delta\in[-\delta_2,\delta^*]\\V(\delta^*)-(c_2-2b_2)(\delta-\delta^*)+\frac{3}{2}((\delta_1-\delta)^2-\delta_2^2)&\delta\in[\delta^*,b_1-b_2]\\V(b_1-b_2)-(c_1-2b_1+3\delta_1)(\delta-b_1+b_2)&\delta\in[b_1-b_2,\delta_1]\\-\frac{3}{2}\delta^2+2b_1\delta-\frac{b_1^2}{2}-d(\delta-b_1)&\delta\in[\delta_1,b_1]\end{cases}
\end{equation*}

{\bf Step 2:} The mechanism has three unknowns -- $\delta^*$, $\delta_1$, and $\delta_2$. Observe that the line between the points $(c_1+b_2+\delta^*,c_2+b_2)$ and $(c_1+\delta^*,c_2)$ passes through $(c_1+\delta_1,c_2+\delta_2)$. So we have $\delta^*=\delta_1-\delta_2$.

We now proceed to compute $\delta_1$ and $\delta_2$. We do so by equating $\bar{\mu}(Z)=0$ and $V(\delta^*)=0$. The latter follows from Theorem \ref{thm:Myerson} because $q_1=0$ for $\delta\in[-b_2,\delta^*]$.
\begin{align}
  \bar{\mu}(Z)=0&\Rightarrow-3\delta_1\delta_2-c_2\delta_1-c_1\delta_2+b_1b_2=0.\label{eqn:fig-app-first}\\
  V(\delta^*)=0&\Rightarrow-\frac{3}{2}\delta_2^2+2b_2\delta_2-\frac{b_2^2}{2}-d(b_2-\delta_2)+(c_2-2b_2+3\delta_2)\delta_1=0.\label{eqn:fig-app-second}
\end{align}

The values of $\delta_1$ and $\delta_2$ can be computed by solving (\ref{eqn:fig-app-first}) and (\ref{eqn:fig-app-second}) simultaneously.

{\bf Step 3:} We now evaluate the bounds on $\delta_1$ and $\delta_2$ in order to prove the existence of a meaningful solution that solves (\ref{eqn:fig-app-first}) and (\ref{eqn:fig-app-second}) simultaneously. Specifically, we show that there exists a $(\delta_1,\delta_2)\in[\frac{b_1}{2}-\frac{c_1}{3}+\frac{c_1c_2}{6b_2},\frac{2b_1-c_1}{3}]\times[\frac{b_2+2d}{3},\frac{2b_2-c_2}{3}]$ as a simultaneous solution to (\ref{eqn:fig-app-first}) and (\ref{eqn:fig-app-second}). To show this, we do the following.
\begin{itemize}
\item We first show that there exists a $(\delta_1,\delta_2)\in[\frac{b_1}{2}-\frac{c_1}{3}+\frac{c_1c_2}{6b_2},\frac{2b_1-c_1}{3}]\times[\frac{b_2+2d}{3},\frac{2b_2-c_2}{3}]$ satisfying (\ref{eqn:fig-app-first}). We do this by showing that (a) $\delta_1|_{\delta_2=x}$ is continuous in $x$, and (b) $\delta_1|_{\delta_2=\frac{2b_2-c_2}{3}}=\frac{b_1}{2}-\frac{c_1}{3}-\frac{c_1c_2}{6b_2}$. We further show that in addition to continuity, $\delta_1|_{\delta_2=x}$ is also monotone; it decreases as $x$ increases.
\item It now suffices to show that the entry and the exit points of the curve $(\delta_1|_{\delta_2=x},x)$ in the rectangle $[\frac{b_1}{2}-\frac{c_1}{3}+\frac{c_1c_2}{6b_2},\frac{2b_1-c_1}{3}]\times[\frac{b_2+2d}{3},\frac{2b_2-c_2}{3}]$ changes sign when substituted on the left-hand side of (\ref{eqn:fig-a-second}). The entry point clearly is $(\frac{b_1}{2}-\frac{c_1}{3}+\frac{c_1c_2}{6b_2},\frac{2b_2-c_2}{3})$. The exit point is either $(\frac{2b_1-c_1}{3},\delta_2|_{\delta_1=\frac{2b_1-c_1}{3}})$ or $(\delta_1|_{\delta_2=\frac{b_2+2d}{3}},\frac{b_2+2d}{3})$. So we show that (a) substituting $(\frac{b_1}{2}-\frac{c_1}{3}+\frac{c_1c_2}{6b_2},\frac{2b_2-c_2}{3})$ on left-hand side of (\ref{eqn:fig-app-second}), makes the expression nonnegative, and (b) substituting $(\frac{2b_1-c_1}{3},\delta_2|_{\delta_1=\frac{2b_1-c_1}{3}})$ or $(\delta_1|_{\delta_2=\frac{b_2+2d}{3}},\frac{b_2+2d}{3})$ on left-hand side of (\ref{eqn:fig-app-second}), makes the expression nonpositive.
\end{itemize}

We now fill in the details. We have $\delta_1|_{\delta_2}=\frac{b_1b_2-c_1\delta_2}{3\delta_2+c_2}$ and $\delta_2|_{\delta_1}=\frac{b_1b_2-c_2\delta_1}{3\delta_1+c_1}$ from (\ref{eqn:fig-a-first}). It is clear that $\delta_1|_{\delta_2=x}$ is continuous in $x$, and also monotonically decreases in $x$. That $\delta_1|_{\delta_2=\frac{2b_2-c_2}{3}}=\frac{b_1}{2}-\frac{c_1}{3}-\frac{c_1c_2}{6b_2}$ can easily be computed by substituting $\delta_2=\frac{2b_2-c_2}{3}$.

We now consider the points $(\delta_1|_{\delta_2=\frac{b_2+2d}{3}},\frac{b_2+2d}{3})$ and $(\delta_1|_{\delta_2=\frac{2b_2-c_2}{3}},\frac{2b_2-c_2}{3})$. Substituting $\delta_1=\frac{b_1b_2-c_1\delta_2}{3\delta_2+c_2}$ from (\ref{eqn:fig-app-first}) in (\ref{eqn:fig-app-second}), we obtain
\begin{multline}\label{eqn:fig-app-delta_2}
  -9\delta_2^3-\delta_2^2(9c_2-12b_2)-\delta_2(2c_2^2-10b_2c_2+2b_2c_1+3b_2^2-6b_1b_2)\\-b_2^2c_2+2b_1b_2c_2-4b_1b_2^2-2b_2c_2(c_1-c_2)=0.
\end{multline}
When $\delta_2=\frac{2b_2-c_2}{3}$, the left-hand side of (\ref{eqn:fig-app-delta_2}) equals $\frac{2}{3}b_2(b_2+c_2)(b_2-2c_1+c_2)\geq 0$ for $2c_1-c_2\leq b_2$. Thus the expression is nonnegative at the entry point.

When $\delta_2=\frac{b_2+2d}{3}$, the left-hand side of (\ref{eqn:fig-app-delta_2}) equals $-\frac{2}{3}(3b_1b_2-c_1(b_2+2d))(b_2-2c_1+c_2)\leq 0$ for $c_1\leq b_2$, $d\leq b_2/2$, and $2c_1-c_2\leq b_2$. Thus the expression is nonpositive at the exit point $(\delta_1|_{\delta_2=\frac{b_2+2d}{3}},\frac{2b_2+2d}{3})$.

We now consider the point $(\frac{2b_1-c_1}{3},\delta_2|_{\delta_1=\frac{2b_1-c_1}{3}})$. Substituting $\delta_2=\frac{b_1b_2-c_2\delta_1}{3\delta_1+c_1}$ from (\ref{eqn:fig-app-first}) in (\ref{eqn:fig-app-second}), and obtain
\begin{multline}\label{eqn:fig-app-delta_1}
  -36b_2\delta_1^3+(9b_2(2b_1-b_2)+6b_2(c_2-7c_1)+3c_2^2)\delta_1^2\\+(12b_1b_2(b_2+c_1)-2b_2c_1(3b_2+8c_1)+8b_2c_1c_2+2c_1c_2^2)\delta_1\\-3b_1^2b_2^2+4b_1b_2^2c_1+2b_1b_2c_1^2-b_2^2c_1^2-2b_2c_1^3-2b_1b_2c_1c_2+2b_2c_1^2c_2=0.
\end{multline}
When $\delta_1=\frac{2b_1-c_1}{3}$, the left-hand side of (\ref{eqn:fig-app-delta_1}) equals $\frac{1}{3}(-8b_1^3b_2+2b_1b_2c_1c_2-c_1^2c_2^2+b_1^2(3b_2^2-8b_2c_1+8b_2c_2+4c_2^2))$. We now prove that this expression is nonpositive for all $c_1,c_2$ under consideration.
\begin{multline*}
 -8b_1^3b_2+2b_1b_2c_1c_2-c_1^2c_2^2+b_1^2(3b_2^2-8b_2(c_1-c_2)+4c_2^2)\\\leq -8b_1^3b_2+2b_1b_2c_1c_2-c_1^2c_2^2+b_1^2(3b_2^2+4c_2^2)\\\leq -8b_1^3b_2+4b_1^2b_2^2+4b_1^2c_2^2\leq -8b_1^2b_2(b_1-b_2)\leq 0
\end{multline*}
where the first inequality follows from $c_1\geq c_2$; the second inequality occurs because the expression is maximized when $c_1=\frac{b_1b_2}{c_2}$; the third inequality follows because when $c_2\in[0,b_2]$, the expression is maximized at $c_2=b_2$; and the final inequality occurs since $b_1\geq b_2$. Thus the expression is nonpositive at the exit point $(\frac{2b_1-c_1}{3},\delta_2|_{\delta_1=\frac{2b_1-c_1}{3}})$.

We have thus shown that there exists a $(\delta_1,\delta_2)\in[\frac{b_1}{2}-\frac{c_1}{3}+\frac{c_1c_2}{6b_2},\frac{2b_1-c_1}{3}]\times[\frac{b_2+2d}{3},\frac{2b_2-c_2}{3}]$ as a simultaneous solution to (\ref{eqn:fig-app-first}) and (\ref{eqn:fig-app-second}), when the values of $(c_1,c_2,b_1,b_2)$ satisfy the conditions in the statement of the theorem.

{\bf Step 4:} We now proceed to prove parts (c) and (d) in Theorem \ref{thm:Myerson} (2) and \ref{thm:Myerson} (4). We first compute $V'(\delta)$ for almost every $\delta\in[-b_2,b_1]$.

$$
  V'(\delta)=\begin{cases}3\delta+2b_2+d&\delta\in(-b_2,-\delta_2)\\-(c_2-2b_2+3\delta_2)&\delta\in(-\delta_2,\delta^*]\\-(c_2-2b_2)-3(\delta_1-\delta)&\delta\in[\delta^*,b_1-b_2)\\-(c_1-2b_1+3\delta_1)&\delta\in(b_1-b_2,\delta_1)\\-3\delta+2b_1-d&\delta\in(\delta_1,b_1)\end{cases}
$$

Observe that $V'(\delta)$ is negative when $\delta\in[-b_2,-\frac{2b_2+d}{3}]$, and positive when $\delta\in[-\frac{2b_2+d}{3},-\delta_2]$ (follows because $\delta_2\leq\frac{2b_2-c_2}{3})$. We also have $V(-b_2)=V(\delta^*)=0$. So $V(\delta)=V(-b_2)+\int_{-b_2}^{\delta}V'(\tilde{\delta})\,d\tilde{\delta}\leq 0$ for all $\delta\in[-b_2,\delta^*]$, and hence $\int_{-b_2}^{\delta^*}V(\delta)\,d\delta\leq 0$, and $\int_{-b_2}^xV(\delta)\,d\delta\geq\int_{-b_2}^{\delta^*}V(\delta)\,d\delta$ for all $x\in[-b_2,\delta^*]$.

We now claim that $V'(\delta)$ is positive when $\delta\in[\delta^*,\frac{2b_1-d}{3}]$, and negative when $\delta\in[\frac{2b_1-d}{3},b_1]$. Observe that $V'(\delta)$ is continuous at $\delta=\delta^*$, and that it increases in the interval $[\delta^*,b_1-b_2]$. So $V'(\delta)\geq 0$ when $\delta\in[\delta^*,b_1-b_2]$. Also, $V'(\delta)\geq 0$ when $\delta\in[b_1-b_2,\delta_1]$ because $\delta_1\leq\frac{2b_1-c_1}{3}$. That $V'(\delta)$ is positive when $\delta\in[\delta_1,\frac{2b_1-d}{3}]$, and negative when $\delta\in[\frac{2b_1-d}{3},b_1]$ is obvious. We have proved our claim.

Since we also have $V(b_1)=0=V(\delta^*)$, it follows that $V(\delta)=V(\delta^*)+\int_{\delta^*}^{\delta}V'(\tilde{\delta})\,d\tilde{\delta}\geq 0$ for all $\delta\in[\delta^*,b_1]$. So we have $\int_{\delta^*}^{b_1}V(\delta)\,d\delta\geq 0$ and $\int_{\delta^*}^xV(\delta)\,d\delta\leq\int_{\delta^*}^{b_1}V(\delta)\,d\delta$ for all $x\in[\delta^*,b_1]$.\qed

\section{The Weak Duality Result}\label{app:new}
In this section, we show the weak duality relationship between (\ref{eqn:dual}) and (\ref{eqn:primal}). Take the primal problem
$$
\max_{\substack{{u(z)-u(z')\leq\|z-z'\|_\infty}\\{u\mbox{ cont, conv, inc}}}}\int_Du(z)\,d\bar{\mu}(z)
$$
and rewrite it as
$$
\max_{(u\mbox{ cont, conv, inc})}\min_{\gamma\geq 0}\int_Du(z)\,d\bar{\mu}(z)+\int_{D\times D}(\|z-z'\|_\infty-u(z)+u(z'))\,d\gamma(z,z').
$$
We can do this because if $u(z)-u(z')>\|z-z'\|_\infty$, then the minimizer can choose an adverse $\gamma$ to make the second integral approach $-\infty$. The maximizer would not want this to happen and would hence choose $u$ to ensure that $u(z)-u(z')\leq\|z-z'\|_\infty$ for all pairs $z,z'$ in $D$. The quantity $d\gamma(z,z')$ is then a price measure for violating the constraint $u(z)-u(z')\leq\|z-z'\|_\infty$.

Let us now write the dual:
$$
\min_{\gamma\geq 0}\max_{(u\mbox{ cont, conv, inc})}\int_Du(z)\,d\bar{\mu}(z)+\int_{D\times D}(\|z-z'\|_\infty-u(z)+u(z'))\,d\gamma(z,z').
$$
Define $\gamma_1(z)=\int_D\gamma(z,dz')$ and $\gamma_2(z')=\int_D\gamma(dz,z')$. Now rewrite the dual as
$$
\min_{\gamma\geq 0}\max_{(u\mbox{ cont, conv, inc})}\int_Du(z)\,d(\bar{\mu}(z)-(\gamma_1(z)-\gamma_2(z)))+\int_{D\times D}\|z-z'\|_\infty\,d\gamma(z,z')
$$
and recognize it to be
$$
\min_{\gamma:\gamma_1-\gamma_2\succeq_{cvx}\bar{\mu}}\int_{D\times D}\|z-z'\|_\infty\,d\gamma(z,z').
$$
This is because if $\gamma$ satisfied $\gamma_1-\gamma_2\nsucceq_{cvx}\bar{\mu}$, then the maximizer can choose an adversarial $u$ with $\int_Du(z)\,d(\bar{\mu}(z)-(\gamma_1(z)-\gamma_2(z)))>0$, and drive the first integral to $\infty$.

This establishes weak duality and provides us with some understanding of how the dual arises and why $\gamma$ may be interpreted as prices for violating the primal constraint.

\section{Proofs using Mathematica}\label{app:c}
\subsection{Expressions used in Theorem \ref{thm:menu-2}}\label{app:c.1}
\begin{enumerate}
\item We first find the expression for $h$ that solves (\ref{eqn:fig-b-first}) and (\ref{eqn:fig-b-second}) simultaneously.\\

{\em Mathematica Input:}

$\bm{\delta^*=(b_1b_2-b_2^2/2-3h^2/2-ch)/(2b_2);}$\\
$\bm{\mbox{Solve}[27(c+h+\delta^*)(b_2+\delta^*)^2-4(4b_2+3\delta^*)(3/2(h+\delta^*)+c)^2==0,\,h]}$\\
     
{\em Mathematica Output:}

$\{\{h\rightarrow\mbox{ Root}[-72b_1^2b_2^3-144b_1b_2^4+90b_2^5+36b_1^2b_2^2c-84b_1b_2^3c-399b_2^4c+96b_1b_2^2c^2+208b_2^3c^2+(108b_1^2b_2^2+36b_1b_2^3-477b_2^4+432b_1b_2^2c+768b_2^3c-72b_1b_2c^2+84b_2^2c^2-96b_2c^3) \# 1+(432b_1b_2^2+684b_2^3-324b_1b_2c+90b_2^2c-504b_2c^2+36c^3) \# 1^2+(-324b_1b_2-54b_2^2-864b_2c+216c^2) \# 1^3+(-486b_2+405c) \# 1^4+243 \# 1^5\,\&,\,1]\}\mbox{ (all five roots)}\}$\\

In this subsection, we verify (i) $\delta_2\leq b_2/3$ when $b_1\geq b_2$, $c\in[b_2,2b_2]$, (ii) the left-hand side of (\ref{eqn:fig-b-third}) is nonnegative when $b_1\in[b_2,3b_2/2]$, $c\in[b_2,\alpha_1]$, and (iii) $2(t-1)(b_1-b_2)+b_2\geq\alpha_1$, where $t=3(37+3\sqrt{465})/176$. We will use bullet (1) above.

\item We now proceed to verify if $\delta_2\leq b_2/3$. From (\ref{eqn:fig-b-delta2-clear}), we have $\delta_2=\frac{b_1b_2-(3h/2+c)(h+\delta^*)}{3/2(h+\delta^*)+c}$. Observe that this is in terms of $(h,\delta^*)$ that are obtained by solving (\ref{eqn:fig-b-first}) and (\ref{eqn:fig-b-second}). We thus initialize the values of $h$ and $\delta^*$ using expressions from bullet (1) above, and then find the values of $(c,b_1,b_2)$ for which $\delta_2\leq\frac{b_2}{3}$ holds.\\

{\em Mathematica Input:}

$\bm{h=\mbox{Root}[-72b_1^2b_2^3-144b_1b_2^4+90b_2^5+36b_1^2b_2^2c-84b_1b_2^3c-399b_2^4c+96b_1b_2^2c^2+208b_2^3c^2+(108b_1^2b_2^2+36b_1b_2^3-477b_2^4+432b_1b_2^2c+768b_2^3c-72b_1b_2c^2+84b_2^2c^2-96b_2c^3)\# 1+(432b_1b_2^2+684b_2^3-324b_1b_2c+90b_2^2c-504b_2c^2+36c^3)\# 1^2+(-324b_1b_2-54b_2^2-864b_2c+216c^2)\# 1^3+(-486b_2+405c)\# 1^4+243\# 1^5\,\&,\,3];}$

$\bm{\delta^*=(b_1b_2-3/2h^2-ch-b_2^2/2)/(2b_2);\delta_2=b_1b_2(3h/2+c)(h+\delta^*)/(3(h+\delta^*)/2+c);}$

$\bm{\mbox{Reduce}[\delta_2\leq b_2/3\,\&\&\,0\leq b_2 \leq b_1\,\&\&\,b_2\leq c\leq 2b_2,\,\{b_2,b_1,c\}]}$\\

{\em Mathematica Output:}

$b_2>0\,\&\&\,b_1\geq b_2\,\&\&\,b_2\leq c\leq 2b_2$\\

The output indicates that $\delta_2\leq b_2/3$ holds for every $b_1\geq b_2$, $c\in[b_2,2b_2]$.

\item We then find the values of $c$ for which the left-hand side of (\ref{eqn:fig-b-third}) is nonnegative.\\

{\em Mathematica Input:}

$\bm{\mbox{Reduce}[-2b_1^3/27-b_2(\delta^*)^2+b_2\delta^*(b_1-b_2/2)+(c+h)h^2/2\geq 0\,\&\&\,0\leq b_2\leq b_1\leq 3b_2/2\,\&\&\,b_2\leq c\leq tb_2,\,\{b_2,b_1,c\}]}$\\

{\em Mathematica Output:}

$b_2>0\,\&\&\,b_2\leq b_1\leq 1.5b_2\,\&\&\,b_2\leq c\leq\mbox{ Root}[f_{c-II}(c)\,\&,\,2]$\\

Here, $f_{c-II}(c)$ is a polynomial of degree $12$. We have not written it here since it is too long. Let $\alpha_1=\mbox{ Root}[f_{c-II}(c)\,\&,\,2]$. Then this proves that the left-hand side of (\ref{eqn:fig-b-third}) is nonnegative for every $c\in[b_2,\alpha_1]$.

\item To prove that $\alpha_1\leq 2(t-1)(b_1-b_2)+b_2$, with $t=3(37+3\sqrt{465})/176$, we again find the values of $c$ for which the left-hand side of (\ref{eqn:fig-b-third}) is nonnegative, but with $c$ restricted to $c\in[b_2,2(t-1)(b_1-b_2)+b_2]$.\\

{\em Mathematica Input:}

$\bm{t=\frac{3}{176}(37+3\sqrt{465});\mbox{Reduce}[-2b_1^3/27-b_2(\delta^*)^2+b_2\delta^*(b_1-b_2/2)+(c+h)h^2/2\geq 0\,\&\&\,0\leq b_2\leq b_1\leq 3b_2/2\,\&\&\,b_2\leq c\leq 2(t-1)(b_1-b_2)+b_2,\,\{b_2,b_1,c\}]}$\\

{\em Mathematica Output:}

$b_2>0\,\&\&\,b_2\leq b_1\leq 1.5b_2\,\&\&\,b_2\leq c\leq\mbox{ Root}[f_{c-II}(c)\,\&,\,2]$\\

This proves that $\alpha\leq 2(t-1)(b_1-b_2)+b_2$. This completes all the proofs in Theorem \ref{thm:menu-2}.
\end{enumerate}

\subsection{Expressions used in Theorem \ref{thm:menu-3}:}\label{app:c.2}
\begin{enumerate}
\item We first prove that the left-hand side of (\ref{eqn:fig-b-second}) is nonnegative when $(h,\delta^*)=(0,\delta^*|_{h=0})$.\\

{\em Mathematica Input:}

$\bm{\delta^*=\frac{b_1}{2}-\frac{b_2}{4}-\frac{(3b_2-2b_1)\sqrt{3b_2c(8b_1-3b_2)(2b_2-c)}}{(36b_2(2b_2-c))};}$

$\bm{t=3(37+3\sqrt{465})/176;\mbox{Reduce}[27b_2^2c-16b_2c^2+(27b_2^2+6b_2c-12c^2)\delta^*+(18b_2-9c)(\delta^*)^2\geq 0\,\&\&\,0\leq b_2\leq b_1 \leq 1.5b_2\,\&\&\,b_2\leq c\leq 2(t-1.4b_2)(b_1-b_2)+1.4b_2,\,\{b_2,b_1,c\}]}$\\

{\em Mathematica Output:}

$b_2>0\,\&\&\,b_2\leq b_1\leq 1.5b_2\,\&\&\,b_2\leq c\leq 0.66676b_1+0.73324b_2$\\

\item We now prove that the left-hand side of (\ref{eqn:fig-b-second}) is nonnegative when $(h,\delta^*)=(h|_{\delta^*=b_1-b_2},b_1-b_2)$.\\

{\em Mathematica Input:}

$\bm{h=\frac{9b_2^2-4c(b_1+3b_2)+3b_2\sqrt{9b_2^2+4c(b_1+3b_2)}}{6(b_1+3b_2)};}$\\
$\bm{\mbox{Reduce}[(27b_1^2(b_1-b_2+c+h)-(3b_1+b_2)(3b_1-3b_2+2c+3h)^2)\geq 0\,\&\&\,0\leq b_2\leq b_1\leq 1.5b_2\,\&\&\,b_2\leq c\leq 2(t-1.4)(b_1-b_2)+1.4b_2,\{b_2,b_1,c\}]}$\\

{\em Mathematica Output:}

$b_2>0\,\&\&\,b_2\leq b_1\leq 1.5b_2\,\&\&\,b_2\leq c\leq 0.66676b_1+0.73324b_2$\\

\item We now find the expression for $\delta^*$ that solves (\ref{eqn:fig-b-second}) and (\ref{eqn:fig-c-second}) simultaneously.

{\em Mathematica Input:}

$\bm{h=\frac{9b_2^2-16b_2c-6\delta^*(b_2+2c)-9(\delta^*)^2}{6(4b_2+3\delta^*)}}\\+\bm{\frac{3(b_2+\delta^*)\sqrt{9b_2^2+16b_2c+6\delta^*(3b_2+2c)+9(\delta^*)^2}}{6(4b_2+3\delta^*)};}$

$\bm{\mbox{Solve}\left[\frac{b_1}{2}-\frac{b_2}{4}-\frac{(3b_2-2b_1)(2c+3h)}{12}\sqrt{\frac{(8b_1-3b_2)}{(3b_2(8b_2(c+h)-(2c+3h)^2))}}\right.}\\\bm{\left.==\delta^*,\delta^*\right]}$\\

{\em Mathematica output:}

$\{\{\delta^*\rightarrow-4b_2/3\}\mbox{(twice)},\{\delta^*\rightarrow\mbox{Root}[-16(b_1-b_2)^2b_2^4 (b_1^2-3b_1b_2+b2^2)^2c^2+(-4(b_1-b_2)b_2^3(b_1^2-3b_1b_2+b_2^2)(108b_1^2b_2^3-108b_1b_2^4+27b_2^5+24b_1^3b_2c+96b_1^2b_2^2c-96b_1b_2^3c+24b_2^4c+16b_1^3c^2-24b_1^2b_2c^2+44b_1b_2^2c^2-16b_2^3c^2))\# 1+(-144b_1^6b_2^4+4176b_1^4b_2^6-13752b_1^3b_2^7+15660b_1^2b_2^8-7218b_1b_2^9+1152b_2^10-256b_1^6b_2^3c+512b_1^5b_2^4c+4384b_1^4b_2^5c-18064b_1^3b_2^6c+21888b_1^2b_2^7c-10368b_1b_2^8c+1680b_2^9c-96b_1^6b_2^2c^2+320b_1^5b_2^3c^2-144b_1^4b_2^4c^2-1392b_1^3b_2^5c^2+1884b_1^2b_2^6c^2-768b_1b_2^7c^2+96b_2^8c^2)\# 1^2+(-192b_1^6b_2^3+624b_1^5b_2^4+4128b_1^4b_2^5-24828b_1^3b_2^6+42984b_1^2b_2^7-27432b_1b_2^8+5580b_2^9-224b_1^6b_2^2c+768b_1^5b_2^3c+4960b_1^4b_2^4c-31904b_1^3b_2^5c+58680b_1^2b_2^6c-38808b_1b_2^7c+8064b_2^8c-64b_1^6b_2c^2+96b_1^5b_2^2c^2+1216b_1^4b_2^3c^2-4120b_1^3b_2^4c^2+7632b_1^2b_2^5c^2-5056b_1b_2^6c^2+1016b_2^7c^2)\\\# 1^3+(-64b_1^6b_2^2+288b_1^5b_2^3+2400b_1^4b_2^4-20496b_1^3b_2^5+55512b_1^2b_2^6-52650b_1b_2^7+14364b_2^8-64b_1^6b_2c+288b_1^5b_2^2c+2688b_1^4b_2^3c-23560b_1^3b_2^4c+69120b_1^2b_2^5c-70320b_1b_2^6c+20040b_2^7c-16b_1^6c^2+1344b_1^4b_2^2c^2-3280b_1^3b_2^3c^2+2592b_1^2b_2^4c^2-4020b_1b_2^5c^2+1424b_2^6c^2)\# 1^4+(576b_1^4b_2^3-6864b_1^3b_2^4+33696b_1^2b_2^5-56628b_1b_2^6+24696b_2^7+576b_1^4b_2^2c-7152b_1^3b_2^3c+37584b_1^2b_2^4c-70380b_1b_2^5c+33360b_2^6c+432b_1^4b_2c^2-1560b_1^3b_2^2c^2-5184b_1^2b_2^3c^2+10224b_1b_2^4c^2-2616b_2^5c^2)\# 1^5+(-576b_1^3b_2^3+7776b_1^2b_2^4-32832b_1b_2^5+27396b_2^6-576b_1^3b_2^2c+7776b_1^2b_2^3c-36720b_1b_2^4c+34272b_2^5c-432b_1^3b_2c^2-2916b_1^2b_2^2c^2+15876b_1b_2^3c^2-9729b_2^4c^2)\\\# 1^6+(-108b_2^2(2b_2+3c)(36b_1b_2-76b_2^2-18b_1c+29b_2c))\# 1^7+(972b_2^2(2b_2-c)(2b_2+3c))\# 1^8\,\&,\,1]\}\mbox{ (all eight roots)}\};$\\

We verify (i) $\delta_2\leq b_2/3$ when $b_1\in[b_2,3b_2/2]$, $c\in[b_2,2b_2]$, (ii) $\delta_1\leq b_1/3$ when $b_1\in[b_2,3b_2/2]$, $c\in[b_1,2b_2]$, (iii) the left-hand side of (\ref{eqn:fig-c-third}) is nonnegative when $b_1\in[b_2,3b_2/2]$, $c\in[b_2,\alpha_2]$, and (iv) $2(t-1.36)(b_1-b_2)+1.36b_2\leq\alpha_2\leq 2(t-1.4)(b_1-b_2)+1.4b_2$, where $t=3(37+3\sqrt{465})/176$. We will use bullet (3) above.

\item We now verify if $\delta_2\leq\frac{b_2}{3}$. From the statement of Theorem \ref{thm:menu-3}, we have $\delta_2=\frac{b_2^2/2+(2b_2-c-3h/2)\delta^*}{3(h+\delta^*)/2+c}$. We now initialize $(h,\delta^*)$ as in bullet (3), and find the values of $(c,b_1,b_2)$ for which $\delta_2\leq b_2/3$.

{\em Mathematica Input:}

$\bm{\delta^*=\mbox{Root}[-16(b_1-b_2)^2b_2^4 (b_1^2-3b_1b_2+b2^2)^2c^2+(-4(b_1-b_2)b_2^3(b_1^2-3b_1b_2+b_2^2)(108b_1^2b_2^3-108b_1b_2^4+27b_2^5+24b_1^3b_2c+96b_1^2b_2^2c-96b_1b_2^3c+24b_2^4c+16b_1^3c^2-24b_1^2b_2c^2+44b_1b_2^2c^2-16b_2^3c^2))\# 1+(-144b_1^6b_2^4+4176b_1^4b_2^6-13752b_1^3b_2^7+15660b_1^2b_2^8-7218b_1b_2^9+1152b_2^10-256b_1^6b_2^3c+512b_1^5b_2^4c+4384b_1^4b_2^5c}\\\bm{-18064b_1^3b_2^6c+21888b_1^2b_2^7c-10368b_1b_2^8c+1680b_2^9c-96b_1^6b_2^2c^2+320b_1^5b_2^3c^2-144b_1^4b_2^4c^2-1392b_1^3b_2^5c^2+1884b_1^2b_2^6c^2-768b_1b_2^7c^2+96b_2^8c^2)\# 1^2+(-192b_1^6b_2^3+624b_1^5b_2^4+4128b_1^4b_2^5-24828b_1^3b_2^6+42984b_1^2b_2^7-27432b_1b_2^8+5580b_2^9-224b_1^6b_2^2c+768b_1^5b_2^3c}\\\bm{+4960b_1^4b_2^4c-31904b_1^3b_2^5c+58680b_1^2b_2^6c-38808b_1b_2^7c+8064b_2^8c-64b_1^6b_2c^2+96b_1^5b_2^2c^2+1216b_1^4b_2^3c^2-4120b_1^3b_2^4c^2+7632b_1^2b_2^5c^2-5056b_1b_2^6c^2+1016b_2^7c^2)\# 1^3+(-64b_1^6b_2^2+288b_1^5b_2^3+2400b_1^4b_2^4-20496b_1^3b_2^5+55512b_1^2b_2^6-52650b_1b_2^7+14364b_2^8-64b_1^6b_2c+}\\\bm{288b_1^5b_2^2c+2688b_1^4b_2^3c-23560b_1^3b_2^4c+69120b_1^2b_2^5c-70320b_1b_2^6c+20040b_2^7c-16b_1^6c^2+1344b_1^4b_2^2c^2-3280b_1^3b_2^3c^2+2592b_1^2b_2^4c^2-4020b_1b_2^5c^2+1424b_2^6c^2)\# 1^4+(576b_1^4b_2^3-6864b_1^3b_2^4+33696b_1^2b_2^5-56628b_1b_2^6+24696b_2^7+576b_1^4b_2^2c-7152b_1^3b_2^3c+37584b_1^2b_2^4c-70380b_1b_2^5c+33360b_2^6c+432b_1^4b_2c^2-1560b_1^3b_2^2c^2-5184b_1^2b_2^3c^2+10224b_1b_2^4c^2-2616b_2^5c^2)\# 1^5+(-576b_1^3b_2^3+7776b_1^2b_2^4-}\\\bm{32832b_1b_2^5+27396b_2^6-576b_1^3b_2^2c+7776b_1^2b_2^3c-36720b_1b_2^4c+34272b_2^5c-432b_1^3b_2c^2-2916b_1^2b_2^2c^2+15876b_1b_2^3c^2-9729b_2^4c^2)}\\\bm{\# 1^6+(-108b_2^2(2b_2+3c)(36b_1b_2-76b_2^2-18b_1c+29b_2c))\# 1^7+}\\\bm{(972b_2^2(2b_2-c)(2b_2+3c))\# 1^8\,\&,\,5];}$

$\bm{h=\frac{9b_2^2-16b_2c-6\delta^*(b_2+2c)-9(\delta^*)^2}{6(4b_2+3\delta^*)}}\\\bm{+\frac{3(b_2+\delta^*)\sqrt{9b_2^2+16b_2c+6\delta^*(3b_2+2c)+9(\delta^*)^2}}{6(4b_2+3\delta^*)};}$

$\bm{\delta_2=\frac{b_2^2/2+(2b_2-c-3h/2)\delta^*}{(3(h+\delta^*)/2+c)};}$

$\bm{\mbox{Reduce}[\delta_2-b_2/3\leq 0\,\&\&\,0\leq b_2<b_1<1.5b_2\,\&\&\,b_2\leq c<2b_2,\,\{b_2,b_1,c\}]}$

{\em Mathematica Output:}

$b_2>0\,\&\&\,b_2<b_1<1.5b_2\,\&\&\,b_2\leq c<2b_2$\\

\item We now verify if $\delta_1\leq b_1/3$. We use $\delta_1=\delta^*+\frac{b_1b_2-2b_2\delta^*-b_2^2/2}{3h/2+c}$ from the statement of Theorem \ref{thm:menu-3}.

{\em Mathematica Input:}

$\bm{\delta_1=\delta^*+\frac{b_1b_2-2b_2\delta^*-b_2^2/2}{3h/2+c};\mbox{Reduce}[\delta_1-b_1/3\leq 0\,\&\&\,0\leq b_2<b_1<1.5b_2\,\&\&\,b_1\leq c<2b_2,\,\{b_2,b_1,c\}]}$

{\em Mathematica Output:}

$b_2>0\,\&\&\,b_2<b_1<1.5b_2\,\&\&\,b_1\leq c<2b_2$\\

\item We now verify the monotonicity of $q$, i.e., verify if the left-hand side of (\ref{eqn:fig-c-third}) is nonnegative when $c\in[b_2,\alpha_2]$.\\

{\em Mathematica Input:}

$\bm{\mbox{Reduce}[(b_2^2+4b_2\delta^*-3\delta^*h)(b_2^2+4b_2\delta^*-2c\delta^*-3\delta^*h)-2b_1b_2(b_2^2+4b_2\delta^*-2c(\delta^*+h)-3h(2\delta^*+h))\geq 0\,\&\&\,0\leq b_2\leq b_1\leq 1.5b_2\,\&\&\,b_2\leq c<2b_2,\{b_2,b_1,c\}]}$\\

{\em Mathematica Output:}

$b_2>0\,\&\&\,b_2\leq b_1\leq 1.5b_2\,\&\&\,b_2\leq c\leq\mbox{ Root}[f_{c-III}(c)\,\&,\,3]$\\

Here, $f_{c-III}(c)$ is a humongous polynomial running for several pages. Define $\alpha_2:=\mbox{ Root}[f_{c-III}(c)\,\&,\,3]$. Then this proves that the left-hand side of (\ref{eqn:fig-c-third}) is nonnegative for every $c\in[b_2,\alpha_2]$.

\item We finally verify the bounds on $\alpha_2$. We again find the values of $c$ for which the left-hand side of (\ref{eqn:fig-c-third}) is nonnegative, but with $c$ restricted to $[2(t-1.36)(b_1-b_2)+1.36b_2,2(t-1.4)(b_1-b_2)+1.4b_2]$.

{\em Mathematica Input:}

$\bm{t=\frac{3}{176}\left(37+3\sqrt{465}\right);\mbox{Reduce}[(b_2^2+4b_2\delta^*-3\delta^*h)(b_2^2+4b_2\delta^*-2c\delta^*-3\delta^*h)-2b_1b_2(b_2^2+4b_2\delta^*-2c(\delta^*+h)-3h(2\delta^*+h))\geq 0\,\&\&\,0\leq b_2\leq b_1\leq 1.5b_2\,\&\&\,2(t-1.36)(b_1-b_2)+1.36b_2\leq c\leq 2(t-1.4)(b_1-b2)+1.4b_2,\{b_2,b_1,c\}]}$\\

{\em Mathematica Output:}

$b_2>0\,\&\&\,b_2\leq b_1\leq 1.5b_2\\\&\&\,0.746758b_1+0.613242b_2\leq c\leq\mbox{ Root}[f_{c-III}(c)\,\&,\,3]$\\

This completes all the proofs in Theorem \ref{thm:menu-3}.

\end{enumerate}

\section*{References}
\bibliographystyle{elsarticle-harv}
\bibliography{unit-demand}

\end{document}